\let\mathbb\varmathbb
\crefname{lemma}{Lemma}{Lemmas}
\crefname{fact}{Fact}{Facts}
\crefname{theorem}{Theorem}{Theorems}
\crefname{corollary}{Corollary}{Corollaries}
\crefname{claim}{Claim}{Claims}
\crefname{example}{Example}{Examples}
\crefname{problem}{Problem}{Problems}
\crefname{definition}{Definition}{Definitions}
\crefname{exercise}{Exercise}{Exercises}
\newtheorem{theorem}{Theorem}[section]
\newtheorem*{theorem*}{Theorem}
\newtheorem{lemma}[theorem]{Lemma}
\newtheorem*{lemma*}{Lemma}
\newtheorem{fact}[theorem]{Fact}
\newtheorem*{fact*}{Fact}
\newtheorem{proposition}[theorem]{Proposition}
\newtheorem*{proposition*}{Proposition}
\newtheorem{corollary}[theorem]{Corollary}
\newtheorem*{corollary*}{Corollary}
\newtheorem*{hypothesis*}{Hypothesis}
\newtheorem{conjecture}[theorem]{Conjecture}
\newtheorem*{conjecture*}{Conjecture}
\theoremstyle{definition}
\newtheorem{definition}[theorem]{Definition}
\newtheorem*{definition*}{Definition}
\newtheorem*{construction*}{Construction}
\newtheorem*{example*}{Example}
\newtheorem*{question*}{Question}
\newtheorem*{assumption*}{Assumption}
\newtheorem*{problem*}{Problem}
\newtheorem*{openquestion*}{Open Question}
\theoremstyle{remark}
\newtheorem{claim}[theorem]{Claim}
\newtheorem*{claim*}{Claim}
\newtheorem{remark}[theorem]{Remark}
\newtheorem*{remark*}{Remark}
\newtheorem*{observation*}{Observation}
\let\originalleft\left
\let\originalright\right
\renewcommand{\left}{\mathopen{}\mathclose\bgroup\originalleft}
\renewcommand{\right}{\aftergroup\egroup\originalright}
\let\latexparagraph\paragraph
\RenewDocumentCommand{\paragraph}{som}{%
  \IfBooleanTF{#1}
    {\latexparagraph*{#3}}
    {\IfNoValueTF{#2}
       {\latexparagraph{\maybe@addperiod{#3}}}
       {\latexparagraph[#2]{\maybe@addperiod{#3}}}%
  }%
}
\newcommand{\maybe@addperiod}[1]{%
  #1\@addpunct{.}%
}
\newcommand{\ind}[1]{\mathbf{1}_{\Brac{#1}}}
\DeclareMathOperator*{\E}{\Esymb}
\renewcommand{\Pr}{\ProbOp}
\newcommand{\mper}{\,.}
\newcommand\bdot\bullet
\newcommand{\cA}{\mathcal A}
\renewcommand{\leq}{\leqslant}
\renewcommand{\le}{\leqslant}
\renewcommand{\geq}{\geqslant}
\renewcommand{\ge}{\geqslant}
\let\epsilon=\varepsilon
\numberwithin{equation}{section}
\newcommand\MYcurrentlabel{xxx}
\newcommand{\MYstore}[2]{%
  \global\expandafter \def \csname MYMEMORY #1 \endcsname{#2}%
}
\newcommand{\MYload}[1]{%
  \csname MYMEMORY #1 \endcsname%
}
\newcommand{\MYnewlabel}[1]{%
  \renewcommand\MYcurrentlabel{#1}%
  \MYoldlabel{#1}%
}
\newcommand{\MYdummylabel}[1]{}
\newcommand{\torestate}[1]{%
  \let\MYoldlabel\label%
  \let\label\MYnewlabel%
  #1%
  \MYstore{\MYcurrentlabel}{#1}%
  \let\label\MYoldlabel%
}
\newcommand{\restatedef}[1]{%
  \let\MYoldlabel\label
  \let\label\MYdummylabel
  \begin{definition*}[Restatement of \cref{#1}]
    \MYload{#1}
  \end{definition*}
  \let\label\MYoldlabel
}
\newcommand{\restatetheorem}[1]{%
  \let\MYoldlabel\label
  \let\label\MYdummylabel
  \begin{theorem*}[Restatement of \cref{#1}]
    \MYload{#1}
  \end{theorem*}
  \let\label\MYoldlabel
}
\newcommand{\restatelemma}[1]{%
  \let\MYoldlabel\label
  \let\label\MYdummylabel
  \begin{lemma*}[Restatement of \cref{#1}]
    \MYload{#1}
  \end{lemma*}
  \let\label\MYoldlabel
}
\newcommand{\restateprop}[1]{%
  \let\MYoldlabel\label
  \let\label\MYdummylabel
  \begin{proposition*}[Restatement of \cref{#1}]
    \MYload{#1}
  \end{proposition*}
  \let\label\MYoldlabel
}
\newcommand{\restatefact}[1]{%
  \let\MYoldlabel\label
  \let\label\MYdummylabel
  \begin{fact*}[Restatement of \cref{#1}]
    \MYload{#1}
  \end{fact*}
  \let\label\MYoldlabel
}
\newcommand{\restate}[1]{%
  \let\MYoldlabel\label
  \let\label\MYdummylabel
  \MYload{#1}
  \let\label\MYoldlabel
}
\newcommand{\eps}{\epsilon}
\newcommand{\om}{\om}
\newcommand{\matA}{\mathbf{A}}
\newcommand{\ignore}[1]{}
\newcommand{\RIGone}{{\textsc{RIG}(n, d, p, q)}}
\renewcommand{\E}{\ensuremath{\mathbb{E}}}
\renewcommand{\Pr}[1]{\ensuremath{\mathbb{P}\left(#1\right)}}
\newcommand{\Prnop}[1]{\ensuremath{\mathbb{P}(#1)}}
\newcommand{\Expected}[1]{\ensuremath{\mathbb{E}\left[#1\right]}}
\newcommand{\Expectedsub}[2]{\ensuremath{\mathbb{E}_{#1}\left[#2\right]}}
\newcommand{\Expectednop}[1]{\ensuremath{\mathbb{E}[#1]}}
\newcommand{\Expectedtildesub}[2]{\ensuremath{\widetilde{\mathbb{E}}_{#1}\left[#2\right]}}
\newcommand{\Expectedtildesubtwo}[2]{\ensuremath{\widetilde{\mathbb{E}}_{#1}\bigg[#2\bigg]}}
\newcommand{\Expectedtildesubnop}[2]{\ensuremath{\widetilde{\mathbb{E}}_{#1}[#2]}}
\newcommand{\Expectedsubnop}[2]{\ensuremath{\mathbb{E}_{#1}[#2]}}
\newcommand{\Vartildesub}[2]{\ensuremath{\widetilde{\mathbb{V}\text{ar}}_{#1}\left[#2\right]}}
\newcommand{\kbip}{k_{(\text{bip})}}
\newcommand{\Gnp}[1]{G(n,{#1})}
\newcommand{\USR}{U_{S,R}}
\newcommand{\USRgood}{U_{S,R}^{(\text{good})}}
\newcommand{\USRbad}{U_{S,R}^{(\text{bad})}}
\newcommand{\modified}{\ensuremath{\epsedge k}}
\newcommand{\neigh}[1]{N_G({#1})}
\newcommand{\assumptionoldslct}{\ensuremath{ \max \big\{ \log\big( \frac{1-q}{1-p} \big), q \big\} \le \frac{1}{2} - \varepsilon }}
\newcommand{\cliqueset}[1]{\mathcal{B}_{#1}}
\newcommand{\mkfoundsizetwo}{4\epsnode / \smallc}
\newcommand{\mkfoundsize}{5\epsnode / \smallc}
\newcommand{\kfoundsizetwo}{(1 - \mkfoundsizetwo )}
\newcommand{\kfoundsize}{(1-\mkfoundsize)}
\newcommand{\adv}[1]{\widetilde{#1}}
\newcommand{\maxbal}[1]{ \max \left\{ \frac{1 - p}{p}, \frac{p}{1-p} \right\}^{#1} }
\newcommand{\maxbalall}[1]{ \left( \frac{1 - p}{p} \right)^{#1} }
\newcommand{\maxbalallcompact}[1]{ \left( (1 - p)/p \right)^{#1} }
\newcommand{\maxbalcomp}[1]{ \max \left\{ (1 - p)/p, p/(1-p) \right\}^{#1} }
\newcommand{\Hg}[1]{H_{#1}(T, \ell)}
\newcommand{\Hgtil}[1]{\widetilde{H}_{#1}(T, \ell, \modified)}
\newcommand{\Hgtilprime}[1]{H_{#1}(T, \ell, \modified)}
\newcommand{\Hgtilprimeprime}[1]{H'_{#1}(T, \ell, \modified)}
\title{Robust Algorithms for Finding Cliques in Random Intersection Graphs via Sum-of-Squares}
\author{Andreas Göbel \and Janosch Ruff \and Leon Schiller}
\date{\small Hasso Plattner Institute, University of Potsdam\\ \texttt{\{firstname.lastname\}@hpi.de}}
\newcommand{\qplus}{q \!\! \uparrow }
\newcommand{\qminus}{q \!\! \downarrow }
\newcommand{\kapprox}{(1 - 2\epsnode)k}
\newcommand{\ktil}{\adv{k}}
\newcommand{\sbm}{\text{MMSB}(n, s, \varepsilon, d, \beta)}
\newcommand{\dirichlet}{\text{Dir}(\beta)}
\newcommand{\overlap}[1]{\mathcal{R}(T)}
\newcommand{\leftt}{S^{\text{(in)}}_\ell(T)}
\newcommand{\rightt}{S^{\text{(out)}}_\ell(T)}
\newcommand{\lefttadv}{\adv{S}^{\text{(in)}}_\ell(T)}
\newcommand{\righttadv}{\adv{S}^{\text{(out)}}_\ell(T)}
\newcommand{\Madv}{M_{\ge \epsedge}}
\renewcommand{\Gnp}{G(n,p)}
\newcommand{\ksplit}{k_{\text{(bip)}}}
\newcommand{\bicliqueaxioms}[1]{\mathcal{B}(H(T), #1)}
\newcommand{\bicliqueaxiomsalgo}{\mathcal{B}(H(T), \ksplit)}
\newcommand{\bicliqueaxiomsalgocompact}{\mathcal{B}(H(T), \ksplit)}
\newcommand{\bicliqueaxiomsalgoapprox}{\widetilde{\mathcal{B}}(G, A, B, k, \gamma)}
\newcommand{\bicliqueaxiomsalgoapproxconcrete}{\widetilde{\mathcal{B}}(\adv{G}, U, V, \ktil, \epsedge k) \cup \{w_T = 1\}}
\newcommand{\bicliqueaxiomsalgoapproxconcretereduced}{\widetilde{\mathcal{B}}(\adv{G}, U, V, \ktil, \epsedge k)}
\newcommand{\degreeapprox}{O(t)}
\newcommand{\degreeexact}{12}
\newcommand{\epsedge}{\varepsilon^{\text{(deg)}}}
\newcommand{\epsnode}{\varepsilon^{\text{(node)}}}
\newcommand{\Asub}{A_{\text{(sub)}}}
\newcommand{\smallc}{\rho}
\newcommand{\Ssize}{\frac{1}{2}k}
\newcommand{\deltap}{\Delta_p}
\newcommand{\pspc}{-0.2cm}
\newcommand{\spiraks}{Nikoletseas, Raptopoulos, and Spirakis \cite{Spirakis-MFCS-20212} }
\newcommand{\spiraksj}{Nikoletseas, Raptopoulos, and Spirakis \cite{nikoletseas2021maximum} }
\newcommand{\BKSl}{Buhai--Kothari--Steurer (BKS) }
\newcommand{\BKSs}{BKS }
\newcommand{\BKSss}{BKS}
\newcommand{\BKS}{Buhai, Kothari, and Steurer \cite{Kothari-STOC-2023} }
\newcommand{\ERgraph}{Erdős--Rényi graph}
\newcommand{\ERgraphs}{Erdős--Rényi graphs}
\begin{document}
\maketitle
\begin{abstract}

We study efficient algorithms for recovering cliques in dense random intersection graphs (RIGs). In this model, $d = n^{\Omega(1)}$ cliques of size approximately $k$ are randomly planted by choosing the vertices to participate in each clique independently with probability $\delta$. While there has been extensive work on recovering one, or multiple disjointly planted cliques in random graphs, the natural extension of this question to recovering overlapping cliques has been, surprisingly, largely unexplored. Moreover, because every vertex can be part of polynomially many cliques, this task is significantly harder than in case of disjointly planted cliques (as recently studied by Kothari, Vempala, Wein and Xu [COLT'23]).

In this work we obtain the first efficient algorithms for recovering the community structure of RIGs both from the perspective of exact and approximate recovery. Our algorithms are further robust to noise, monotone adversaries, a certain, optimal number of edge corruptions, and work whenever $k \gg \sqrt{n \log(n)}$. Our techniques follow the proofs-to-algorithms framework utilizing the sum-of-squares hierarchy. An essential component are certificates for the absence of large cliques outside of the ground-truth. Instead of spectral certificates, a central ingredient are modified versions of the biclique certificates, recently used for semi-random planted clique by Buhai, Kothari and Steurer [STOC'23]. To turn these certificates into robust and efficient algorithms that do not produce ``false positives'', we rely on an extremely sharp concentration property for pseudo-distributions which might be of independent interest.

Our techniques further extend to the related task of efficient \emph{refutation}, and lead to algorithms that can not only recover the ground-truth, but also certify the optimality of this clustering.
\end{abstract}

\thispagestyle{empty} 

\newpage
\thispagestyle{empty} 

\tableofcontents
\thispagestyle{empty} 

\newpage

\section{Introduction}\label{sec:intro}
\setcounter{page}{1}
Finding cliques in graphs is a notoriously difficult and intensely studied problem in theoretical computer science. It was one of the first problems shown to be NP-complete in the worst-case \cite{karp1975computational}. This hardness extends even to the problem of finding $n^\varepsilon$-sized cliques in graphs containing $n^{1-\varepsilon}$ sized cliques \cite{hastad99, zuckerman2006linear}, which has led to the emergence of \emph{average-case} variants of the problem, the most famous of which is   \emph{planted clique} \cite{jerrum1992large, kuvcera1995expected} where a clique of size $k$ is planted in an \ERgraph{} $G(n,p)$. 

While the planted clique can be uniquely identified whenever $k \gg 2\log(n)$, all efficient algorithms require $k \gg \sqrt{n}$. 
In fact, if $k \gg \sqrt{n\log n}$, then the problem can be solved by simply taking the vertices of largest degree, while a series of algorithms based on spectral and semidefinite programming methods achieve the conjecturally optimal threshold $k=\Omega(\sqrt{n})$ \cite{Alon_Krivelevich_Sudakov_1998,Feige_Krauthgamer_2000,mcsherry2001spectral,ames2011nuclear}.\footnote{There are also algorithms based on other techniques see e.g.~\cite{feige2010finding,dekel2014finding,deshpande2015finding,chen2016statistical}.} Lower bounds against several  classes of algorithms such as low-degree polynomials \cite{schramm2022computational}, statistical-query algorithms \cite{feldman2017statistical} and in particular the sum-of-squares hierarchy \cite{meka2015sum,deshpande2015improved,hopkins2018integrality,Barak_Hopkins_Kelner_Kothari_Moitra_Potechin_2019} have led to the belief that the problem exhibits a \emph{computational-statistical gap} in the regime $k \ll \sqrt{n}$.
This planted clique conjecture has further proved crucial for establishing computational-statistical gaps in other statistical inference problems \cite{hajek2015computational,brennan2018reducibility,brennan2020reducibility,bresler2023algorithmic,kothari2023planted,lee2025fundamental}.

While generalizations of planted clique have e.g. focused on ``semirandom'' settings \cite{Kothari-STOC-2023, blasiok2024semirandom, blasiok2024semirandom}, recovering dense-subgraphs \cite{brennan2018reducibility, brennan2020reducibility}, and recovering multiple, yet disjoint cliques \cite{chen2016statistical,kothari2023planted}, another very natural extension of planted clique remains poorly understood: what if many (i.e. $n^{\Omega(1)}$) cliques of size $k$ are planted in a random graph \emph{without} forcing them to be disjoint? 
Even though we might not expect this setting to be fundamentally different from the case of disjoint cliques as considered in recent work \cite{kothari2023planted}, a closer look reveals that all simple, efficient algorithms considered in previous work actually \emph{fail to carry over} to this novel regime. Hence, a natural (and surprisingly un-answered) question that arises is whether recovery in this model is \emph{strictly harder} than planted clique, or whether \emph{new algorithms} are required.

\paragraph{Random Intersection Graphs}
It turns out that this setting of randomly planting many, overlapping cliques is captured by a random graph model called \emph{random intersection graphs} (RIGs) that has received some attention in the literature. We define a noisy version of RIGs as follows. 
\begin{definition}[Random Intersection Graphs with Noise]
    Define a random graph $\textsc{RIG}(n, d, p, q)$ on vertex set $[n]$ as follows. Every vertex $v$ draws a set $M_v \subseteq [d]$ by including each $i \in [d]$ independently with probability $\delta = \delta(d, p, q)$. Then, for every pair $u, v \in [n]$, we include an edge with probability \begin{align*}
        \Pr{u \sim v} = \begin{cases}
            1 & \text{ if } M_v \cap M_u \neq \emptyset\\
            q & \text{ otherwise.}
        \end{cases}
    \end{align*} The probability $\delta = \delta(d, p, q)$ is chosen such that $\Pr{u \sim v} = p$ for all $u,v\in [n], u \neq v$. Notice that $p > q$. 
\end{definition}
We refer to the elements in $[d]$ as \emph{labels}. For every label $\ell \in [d]$, it is easy to see that the set $S_\ell \coloneqq \{v \in [n] \mid \ell \in M_v\}$ is a clique of (expected) size $k \coloneqq \delta n$. Hence, $\RIGone$ can equivalently be seen as a graph in which we plant $d$ cliques $S_1, \ldots, S_d$ of size roughly $k$ uniformly at random, while edges outside of these ``ground-truth'' communities appear independently with probability $q$ while the \emph{overall edge-density} is $p$. 

A natural algorithmic task consists in recovering the sets $S_1, \ldots, S_d$ given only $G \sim \RIGone$. The main focus of this work is the so-called \emph{high-dimensional dense case} where $p$ and $p-q$ are both constants, and where $d \ge n^{\Omega(1)}$. Here, every vertex is part of roughly $\delta d \approx \sqrt{d}$ sets $S_\ell$, each of which has size $\approx n/\sqrt{d}$. Moreover, each $S_\ell$, $S_{\ell'}$ intersects with high probability (w.h.p.)\footnote{Here w.h.p. refers to probability at least $1-O(1/\mathrm{poly(n)})$.} in a polynomial number of vertices. While simple efficient algorithms for recovery in the case $p \le n^{-c}$ for sufficiently large $c$ exist \cite{Behrisch_Taraz_2006}, the dense case turns out to be very challenging\footnote{This is in line with classical planted clique, which is only really challenging if $p$ is a constant, while for $p \le n^{-\varepsilon}$, simple algorithms exist, even if $k \ll \sqrt{n}$. See e.g. the discussion in \cite{Feige_Grinberg_2024}.}, even in the noiseless ($q = 0$) case, and no efficient algorithms are known, even if $k \gg \sqrt{n} \ \text{polylog}(n)$. 

\paragraph{``Fragility'' of simple algorithms}
While somewhat simple algorithms that succeed at recovery for $k \gtrsim \sqrt{n \log(n)}$ can be constructed without the need of advanced techniques such as semidefinite programming\footnote{One such algorithm can for example be built by using the properties that our SoS-certificates rely on in a more direct, combinatorial way. However, it is still much more involved than simple algorithms known for planted clique and planted colouring, which are typically based on taking the vertices of highest degree or counting common neighbours.}, these have the important drawback of \emph{not being robust}. Instead, they can be considered ``fragile'' in the sense that it is easy to deceive them by only introducing a small number of adversarial changes, even if these changes are ``helpful'' in the sense that they only concern edges that are present due to noise. An adversary that is allowed to introduce such ``helpful'' changes is called a \emph{monotone adversary}, and designing algorithms that are robust against its actions turns out to be a much more challenging task. This has also been observed in previous works on planted clique and its variants, and it initiated the search for new algorithms that remain \emph{robust} in semirandom settings as described above, see for example \cite{feige2000finding, Moitra_Perry_Wein_2016, Kothari-STOC-2023}. Moitra, Perry, and Wein \cite{Moitra_Perry_Wein_2016} have even shown that a monotone adversary can \emph{shift} the information-theoretic threshold for weak recovery in the stochastic block model, thereby rendering the problem \emph{strictly harder} even though the adversary only makes ``helpful'' changes.



\paragraph{Prior Work}
From an algorithmic viewpoint, our setting was previously studied by \spiraksj who gave a simple \emph{inefficient} algorithm for recovering the community structure of a dense RIG, thus showing that exact recovery is \emph{information-theoretically} feasible. Subsequently, Christodoulou, Nikoletseas, Raptopoulos, and Spirakis~\cite{Christodoulou_Nikoletseas_Raptopoulos_Spirakis_2023} experimentally studied a spectral heuristic for this problem, but they were not able to give provable performance guarantees. Surprisingly, our results in \Cref{sec:spectralfailure} show that this approach \emph{does provably not work}, at least if $d$ is sufficiently large. In~\cite{nikoletseas2021maximum} and~\cite{Christodoulou_Nikoletseas_Raptopoulos_Spirakis_2023}, finding an efficient algorithm for exact recovery with provable performance guarantees was therefore left as an open problem. It was further suggested as a future direction in~\cite[Section 3, (2)]{bresler2025partial}.
In the sufficiently sparse regime (where $p = n^{-c}$ for sufficiently large exponent $c$), some simple efficient combinatorial algorithms are known \cite{behrisch2009coloring,Behrisch_Taraz_2006}. However, all these approaches completely break down in the dense case. 

\subsection{Results}

In this work, we resolve the open question in \cite{nikoletseas2021maximum} and give the first polynomial time algorithm for recovering the community structure of a dense RIG.
We further put emphasis on \emph{robustness} of our algorithms and show that they continue to work in a natural ``semi-random'' version of our model\footnote{Similar to the planted-clique semi-random model of Feige and Krauthgamer \cite{feige2001heuristics}.} where a \emph{monotone adversary} can delete an arbitrary number of edges that appeared ``due to noise''. We further give optimal algorithms for approximate recovery under the presence of up to $\varepsilon k^2$ additional \emph{edge corruptions} (i.e. arbitrary additions and deletions). In both settings, we focus on the following notions of recovery.
\begin{definition}[Exact and approximate recovery in RIGs]\label{def:recovery}
    Given a $G \sim \RIGone$, possibly after being modified by an adversary. We consider an algorithm $\textsc{Alg}$ that on input $G$ outputs a list $\mathcal{L}$ that consists of exactly $d$ subsets of $[n]$. In this setting, we consider the following recovery guarantees.
    \begin{itemize}
        \setlength\itemsep{.0001em}
        \item We say that $\textsc{Alg}$ achieves \textbf{exact recovery} if for all $\ell \in [d]$ there exists \emph{exactly one} $S \in \mathcal{L}$ such that 
        $S = S_\ell$.
        \item Given some $\rho \in \mathbb{N}$, we say that $\textsc{Alg}$ achieves $\rho$-\textbf{approximate recovery} if for all $\ell \in [d]$, there is \emph{exactly one} $S \in \mathcal{L}$ such that 
        $
            |S \triangle S_\ell| 
            \le \rho
        $ where $A \triangle B$ denotes the \emph{symmetric difference} of two sets $A, B$.
    \end{itemize}
\end{definition}

\vspace{-.7cm}
\paragraph{Exact recovery against a monotone adversary}
Our first main result is a polynomial time algorithm for exact recovery under presence of a \emph{monotone adversary}. Formally, in this ``semirandom'' setting, the adversary is allowed to delete an arbitrary number of edges $\{u,v\}$ such that $M_u \cap M_v = \emptyset$ from $G$ (i.e. edges due to noise). 
Our first main result states that we achieve exact recovery in this setting whenever the expected set size of any $S_\ell$ is $k \gg \sqrt{n \log(n)}$.

\begin{theorem}[Exact recovery against a monotone adversary]\label{thm:exactrecovery}
    There exists a polynomial time algorithm that on input $G \sim \RIGone$, possibly modified by a monotone adversary achieves exact recovery whenever the parameters $p,d,q$ are such that $d \ge n^{\Omega(1)}$\footnote{In this work, the notation $d \gg n^{\Omega(1)}$ means that the results hold whenever $d \ge n^{\alpha}$ for \emph{every}, arbitrarily small constant $\alpha > 0$.}, $k \gg \sqrt{n \log(n)}$ and $p$ is bounded away from $1$. This holds with high probability over the draw of $G$.
\end{theorem}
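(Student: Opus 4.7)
The plan is to follow the sum-of-squares proofs-to-algorithms paradigm, using biclique-refutation axioms inspired by \BKSl as the key certificate separating ground-truth cliques from ``imposter'' cliques. I would split the algorithm into an approximate recovery phase followed by an exact cleanup phase, with the threshold $k \gg \sqrt{n \log n}$ being exactly what the cleanup step requires.

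For approximate recovery, I would first choose a random tuple $T \subseteq [n]$ of constant size $t$ so that $T \subseteq S_\ell$ with non-trivial probability for some label $\ell$, but that at most one ground-truth clique contains $T$ w.h.p. Restricting attention to $\neigh{T}$ is helpful because this set contains $S_\ell$ in full: the monotone adversary cannot delete any edge of $S_\ell$, since every such edge is forced to exist by $\ell \in M_u \cap M_v$. I then write an SoS program over $x \in \{0,1\}^{\neigh{T}}$ with the clique axioms $x_u x_v = 0$ for non-edges $\{u,v\}$, $\sum_u x_u \approx k$, $x_u = 1$ for $u \in T$, together with biclique-refutation axioms in the spirit of \BKSss that exclude large bipartite witnesses between a purported clique and its complement. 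The heart of the analysis is to establish, against the semi-random distribution, a low-degree SoS proof that every feasible pseudo-solution satisfies $\pE[x_v] \approx \mathbf{1}[v \in S_\ell]$.

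Turning this into an algorithm hinges on a sharp concentration property: one must show that $\pE[x_v]$ only takes values near $0$ or near $1$, so that rounding by thresholding outputs $\hat S_\ell$ with $|\hat S_\ell \triangle S_\ell| = o(k)$. This is the step I expect to be the main obstacle. Unlike in classical planted clique, two genuine cliques $S_\ell, S_{\ell'}$ in $\RIGone$ can share polynomially many vertices and have polynomially many common neighbours, so a priori a pseudo-distribution could place nontrivial mass on a ``mixed'' set aggregating vertices from several ground-truth communities. The biclique certificate must therefore be strong enough, likely through a refinement of the BKS analysis that accounts for the overlap structure between different $S_\ell$, to rule out such mixtures and force concentration onto a single ground-truth clique. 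The robustness to the monotone adversary should essentially come for free from the BKS-style framework, since deleting noise edges only strengthens the non-edge constraints and leaves the planted clique structure intact.

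For exact recovery, given any $\hat S_\ell$ with $|\hat S_\ell \triangle S_\ell| \le \varepsilon k$, I would run a degree cleanup: every $v \in S_\ell$ has at least $(1-\varepsilon)k$ neighbours in $\hat S_\ell$ deterministically, whereas for $v \notin S_\ell$ the number of neighbours in $\hat S_\ell$ concentrates around $p \cdot |\hat S_\ell|$ up to $O(\sqrt{k \log n})$. The gap $(1-p-\varepsilon)k - O(\sqrt{k \log n}) = \Omega(k)$ is positive precisely when $k \gg \sqrt{n \log n}$, so thresholding the $\hat S_\ell$-degree recovers $S_\ell$ exactly with high probability after a union bound over the $n$ vertices and $d = \poly(n)$ communities. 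Finally, I would iterate over enough tuples $T$ to hit every label (a coupon-collector argument suffices since every $S_\ell$ is hit by a $\delta^t$-fraction of tuples) and de-duplicate identical sets to produce the list of $d$ ground-truth cliques.
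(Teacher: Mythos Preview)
Your high-level architecture---random $t$-tuples, SoS with BKS-style biclique certificates, threshold rounding, degree cleanup, coupon-collector iteration---matches the paper's, and your observation that the monotone adversary only tightens the non-edge constraints is exactly right. But two concrete pieces are missing, and one of them is load-bearing.

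\textbf{The bipartite split.} The paper does not simply restrict to $\neigh{T}$ and solve a clique SoS there. It first splits $[n]$ into two random halves $U,V$ and works with the \emph{biclique} axioms on $H(T)=G[U\uplus V]\cap\neigh{T}$, forcing the pseudo-distribution to put at least $\tfrac{1-\varepsilon}{2}k$ mass on each side. This is not cosmetic: without the split, the BKS-style certificate on the graph $G[A_i \uplus (\neigh{T}\setminus A_i)]$ must refute bicliques where the ``right'' side could be as small as the balancedness $\Delta$, and summing over a partition $\{V_i\}$ of size $O(n/k)$ only gives $|x|_{\text{outside}} \le O(n/k)\cdot n^{2/r}$. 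To make this $o(k)$ you need $k \gg n^{1/2+\varepsilon}$ with degree $O(1/\varepsilon)$. The bipartite split guarantees $\Omega(k)$ mass on the right side \emph{deterministically}, so the certificate in \Cref{lem:sos-left-certificate} kicks in with $r=3$ and degree $12$, all the way down to $k\gg\sqrt{n\log n}$.

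\textbf{Where the threshold actually comes from.} Your cleanup analysis is off: the degree of $v\notin S_\ell$ into a size-$k$ set concentrates to $\pm O(\sqrt{k\log n})$, so the gap $(1-p-\varepsilon)k - O(\sqrt{k\log n})$ is positive whenever $k\gg\log n$, not $k\gg\sqrt{n\log n}$. The true bottleneck is the \emph{balancedness} itself: for the certificate to bite you need the bipartite graph $G[(U\setminus S_\ell)\uplus(V\cap\neigh{T})]$ to have $r$-fold balancedness $\Delta=o(k)$, and the Bernstein bound in the balancedness proof gives $\Delta=O(\sqrt{n\log n})+\text{(terms from overlapping cliques)}$. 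That $\sqrt{n\log n}$ is the origin of the threshold. The paper further has to show that the ``overlapping cliques'' contribution is also $o(k)$, which is where the labels $\mathcal{B}_T$ appearing in $\ge 3$ vertices of $T$ enter: one must strip out $V(\mathcal{B}_T)$ before balancedness holds, and then argue separately (via the intersection bounds of \Cref{sec:cliqueintersections}) that the few cliques in $\mathcal{B}_T$ cannot produce false positives. You flagged the false-positive step as the main obstacle, which is accurate, but the mechanism for resolving it---balancedness after removing $V(\mathcal{B}_T)$, plus a biclique-size bound on $S_\ell$-boundary intersections---is the content you would still need to supply.
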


\begin{remark}[Optimality of the recovery guarantees] We remark that the assumption $k \gg \sqrt{n \log(n)}$ is optimal up to a factor of $O(\sqrt{\log(n)})$ since we can show that efficient recovery is impossible if $k \ll \sqrt{n}$ under the planted clique hypothesis (\Cref{thm:hardness}). 
\end{remark}

\vspace{\pspc}
\paragraph{Approximate recovery under edge corruptions}

In addition to a monotone adversary, we consider a so-called \emph{bounded adversary} that is allowed to make up to $\varepsilon k^2$ arbitrary edge insertions or deletions (called \emph{edge corruptions}) in $G$.
\noindent In this setting, we achieve approximate recovery under the same parameter assumptions as before. 
\begin{theorem}[Approximate recovery under edge corruptions]\label{thm:approxrecovery}
    There exists a polynomial time algorithm that on input $G \sim \RIGone$ possibly modified by a monotone adversary and with up to $\varepsilon k^2$ edge corruptions, achieves $O(\varepsilon k)$-approximate recovery whenever the parameters $p,d,q$ are such that $d \ge n^{\Omega(1)}$, $k \gg \sqrt{n \log(n)}$ and $p$ is bounded away from $1$. This holds with high probability over the draw of $G$ and for all $\varepsilon < \varepsilon_0$ where $\varepsilon_0$ is a constant that depends only on $p$.
\end{theorem}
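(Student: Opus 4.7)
The plan is to parallel the SOS-based framework used for Theorem 1.1 (exact recovery) but with a relaxed constraint system that absorbs the adversary's corruption budget. Concretely, I would work with the approximate biclique axioms $\bicliqueaxiomsalgoapproxconcretereduced$, which ask for vertex subsets $U, V$ of size $\ktil \approx k$ in the corrupted graph $\adv{G}$ such that at most $\edgefoundguarantee$ edges are missing from $U \times V$. The key motivating observation is that any planted community $S_\ell$ remains feasible after adversarial action: the monotone adversary only removes edges between pairs outside a shared community (irrelevant inside $S_\ell \times S_\ell$), while the bounded adversary can destroy at most $\varepsilon k^2$ of the $\binom{k}{2}$ edges inside $S_\ell$, which is absorbed by the slack $\edgefoundguarantee$.

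The central technical step is to certify, within low-degree SOS, that every pseudo-distribution supported on feasible $(U, V)$ is $O(\varepsilon k)$-close in symmetric difference to the indicator of some ground-truth $S_\ell$. Building on the modified Buhai--Kothari--Steurer biclique certificates used for exact recovery, I would first argue over the uncorrupted $\RIGone$: since non-shared-community edges appear only with probability $q < 1$, any approximate biclique of near-optimal size must concentrate on vertices simultaneously contained in some ground-truth $S_\ell$, and in the high-dimensional regime $k \gg \sqrt{n\log n}$ the pairwise intersections $|S_\ell \cap S_{\ell'}| \approx k^2/n \ll k$ prevent a ``hybrid'' biclique drawn from two different communities. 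Since the certificate is a bounded-degree polynomial in the edge indicators, perturbing the graph by $\varepsilon k^2$ arbitrary edges only shifts it by $O(\varepsilon k^2)$, which upgrades the conclusion to ``close to some $S_\ell$ up to $O(\varepsilon k)$ vertices''---exactly the slack demanded by the theorem.

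The algorithm then extracts the list $\mathcal{L}$ by a seeded rounding procedure: for each randomly chosen tuple $T$, condition the pseudo-distribution on the event that $T$ lies inside $U$ and apply the sharp pseudo-distribution concentration property promised in the abstract to produce a rounded set $\hat{S}_T$ satisfying $|\hat{S}_T \triangle S_{\ell(T)}| = O(\varepsilon k)$ for the unique community $S_{\ell(T)}$ whose neighbourhood contains $T$. Running this over sufficiently many seeds to hit every community w.h.p.\ and de-duplicating (merging pairs within $O(\varepsilon k)$ symmetric difference---unambiguous since distinct $S_\ell, S_{\ell'}$ have symmetric difference $\gg \varepsilon k$ in our parameter regime) yields exactly $d$ candidates, one per $S_\ell$.

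The main obstacle I anticipate is the stability step: controlling how $\varepsilon k^2$ adversarial edges can distort the modified biclique certificate without producing spurious low-degree witnesses for non-planted vertex sets. This is precisely where the sharp pseudo-distribution concentration has to be invoked carefully, because even a small fraction of ``false positive'' pseudo-mass concentrated on a structured non-community vertex set could blow up the symmetric-difference guarantee beyond $O(\varepsilon k)$; ensuring that the conditioning-and-rounding step commutes with the corruption perturbation, and that the de-duplication does not merge two genuine communities by accident, is the most delicate part of the argument.
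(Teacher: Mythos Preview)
Your proposal has the right high-level shape but contains a genuine gap in the stability step, and it stems from a misreading of the axioms.

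You describe the relaxed constraint system as allowing a \emph{global} budget of $O(\varepsilon^{1/4}k^2)$ missing edges in $U\times V$. The paper's axioms $\widetilde{\mathcal{B}}(\adv{G},U,V,\ktil,\epsedge k)$ are different: they introduce edge-variables $z_{u,v}$ encoding which missing edges to ``add back'', with the constraint $\sum_{v} z_{u,v}\le \epsedge k$ \emph{for each vertex} $u$. This per-vertex slack is essential, and your stability heuristic (``the certificate is a bounded-degree polynomial in edge indicators, so $\varepsilon k^2$ flips shift it by $O(\varepsilon k^2)$'') does not survive without it. The balancedness certificate bounds sums of the form $\sum_{v\in B}\prod_{u\in S}H_p(u,v)\prod_{u\in R}H_p(u,v)$ over all $r$-subsets $S,R\subseteq A$. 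If the adversary concentrates its entire $\varepsilon k^2$ budget on a \emph{single} vertex $u_0\in A$, then for every $S\ni u_0$ the inner product changes in $\Theta(\varepsilon k^2)$ coordinates, which can move the balancedness by $\Theta(\varepsilon k^2)\gg k$ simultaneously for $\Theta(|A|^{r-1})$ choices of $S$. The certificate is not Lipschitz in the way you need.

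The paper's fix is to first pass to a stronger adversarial model with two parameters: at most $\epsedge k$ corruptions incident to \emph{every} vertex, plus at most $\epsnode k$ fully corrupted vertices (the set $\Madv$). An $\varepsilon k^2$ edge-adversary is a special case with $\epsnode=O(\varepsilon/\epsedge)$. Balancedness is then shown to survive on the subgraph with $\Madv$ removed from the left side (Lemma~7.3), and the modified certificate (Lemma~7.2) absorbs the per-vertex slack $\epsedge k$ directly through the $z$-variables. The resulting pseudo-concentration statement is correspondingly weaker than what you aim for: one can only conclude $S_T\subseteq S_\ell\cup \Madv$, not $|S_T\triangle S_\ell|=O(\varepsilon k)$ directly; the $O(\varepsilon k)$ bound then follows because $|\Madv|\le \epsnode k = O(\varepsilon k)$. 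A second point you would need to address is that the neighbourhood reduction cannot be done ``hard'' (restricting to $N_{\adv G}(T)$) when $T$ may contain corrupted vertices; the paper computes the pseudo-distribution on all of $\adv G$ and only enforces $w_T=1$ as a constraint, so that SoS can still ``see'' uncorrupted $t$-tuples inside the solution even when the seed $T$ is bad.
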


\begin{remark}[Optimality of the recovery guarantees]
It is not hard to see that exact recovery in this stronger adversarial setting is impossible since the adversary can arbitrarily modify up to a constant (depending on $\varepsilon$) fraction of any ground-truth clique such that the exact original labelling becomes unrecognizable. For example, given any fixed $S_\ell$ it can add all edges between $S_\ell$ and up to $\Omega(\varepsilon k)$ vertices outside of $S_\ell$. Then, it becomes impossible to tell which vertices in this ``extended clique'' are adversarially added. Therefore, $O(\varepsilon k)$-approximate recovery is the best we can hope for. The fact that we require $\varepsilon < \varepsilon_0$ where $\varepsilon_0$ depends on $p$ is also necessary since in the example just mentioned, more vertices can be added to $S_\ell$ the closer $p$ gets to $1$. Moreover, as before, our assumption on $k$ is optimal up to a factor of $O(\sqrt{\log(n)})$.
\end{remark}

\vspace{\pspc}
\paragraph{Improving the ``single label clique theorem''}
Apart from our algorithmic contributions, we also improve upon the understanding of (noisy) random intersection graphs in a more structural sense. Concretely, we prove the natural property that every clique of size at least $(1-\varepsilon)k$ only appears within some ground-truth clique $S_\ell$, i.e., is spanned by a single label. This is called the ``single label clique theorem'' as introduced in \cite{nikoletseas2021maximum} (without efficient algorithms). In fact, our algorithms are in a sense built upon proving this theorem \emph{within the constant-degree sum-of-squares} proof system. This \emph{proof of identifiability} is not only the underpinning of our algorithms, but it also holds in a more general range of parameters as considered in previous work. Concretely, our version extends to all relevant noise levels, and much higher dimensions $d$ as considered in \cite{nikoletseas2021maximum}.

\begin{restatable}[Single label clique theorem]{theorem}{slct}{Single label clique theorem}\label{thm:slct}
    Consider a $\RIGone$. Recall that $k \coloneqq n\delta$ is the expected clique size. Assume further that there are arbitrarily small constants $\alpha, \varepsilon > 0$ such that one of the following parameter assumptions holds.
    \begin{enumerate}
        \setlength\itemsep{.0001em}
        \item[(i)] We have $n^{\alpha} \ll d \ll n^{1-\alpha}$ while $p, q, d$ are such that $k \gg n^{\varepsilon}$.
        \item[(ii)] We have $n^{\alpha} \ll d \ll n^{2-\alpha}$ while $p, q, d$ are such that $k \gg n^{\varepsilon}$ and 
        $
            \assumptionoldslct.
        $
    \end{enumerate}
    Then, the following holds with high probability over the draw of $G \sim \RIGone$. For any clique $K$ of size at least $(1-\varepsilon)k$, there is some $\ell \in [d]$ such that $K \subseteq S_\ell$.
\end{restatable}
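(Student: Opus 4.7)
The plan is to perform a first-moment union bound: for each $m \geq (1-\varepsilon)k$, I would enumerate over all $K \subseteq [n]$ with $|K| = m$, upper-bound $\Pr{K \text{ is a clique and } K \not\subseteq S_\ell \text{ for any } \ell}$, and show that this expectation is $o(1)$ after summing. The convenient reformulation, obtained by conditioning on the label-incidence matrix $Z = (\1[\ell \in M_v])_{v \in K,\,\ell \in [d]}$, is that $\Pr{K \text{ is a clique} \mid Z} = q^{\binom{m}{2} - e_L(K)}$, where $e_L(K)$ is the number of pairs in $K$ sharing at least one label, and that the non-coverage event $K \not\subseteq S_\ell$ for all $\ell$ corresponds exactly to no column of $Z$ being all-ones. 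The $\binom{n}{m} \leq (en/m)^m$ many choices of $K$ then have to be beaten by the per-$K$ probability.

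The main task is to show that, conditional on no column of $Z$ being all-ones, $q^{\binom{m}{2} - e_L(K)}$ is overwhelmingly small, and I would do this by partitioning according to the extremal label $\ell^{*} \in \argmax_{\ell} |S_\ell \cap K|$ and the intersection size $a \coloneqq |S_{\ell^{*}} \cap K|$, which satisfies $a \leq m-1$ under non-coverage. Setting $A \coloneqq K \cap S_{\ell^{*}}$ and $B \coloneqq K \setminus S_{\ell^{*}}$ with $b \coloneqq m - a \geq 1$, every $v \in B$ must be connected in $G$ to every vertex of $A$ and to every other vertex of $B$ through labels other than $\ell^{*}$ or through noise; that is, the restriction of $K$ to the ``sub-RIG'' obtained by removing label $\ell^{*}$ must itself be a clique on the missing $\binom{m}{2} - \binom{a}{2}$ pairs. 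This sub-RIG has $d-1$ labels with essentially the same $\delta$, and a careful expansion (summing over $\ell^{*}$, over $a \in \{0,1,\dots,m-1\}$, over the identities of $A$ and $B$, and over the structure of the non-$\ell^{*}$ cover) reduces $\Pr{K \text{ is bad}}$ to a product that is bounded by $d\delta^{m}$ times a factor that decays geometrically in $m$. Combined with $\binom{n}{m} \leq (en/m)^m$ and $\delta \asymp \sqrt{-\ln(1-p)/d}$, the union bound collapses to $o(1)$ as soon as $k \gg n^{\varepsilon}$ and we are in one of the stated $d$-regimes.

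The principal obstacle is that pair-edge events in $K$ are \emph{positively correlated} through shared labels, so the naive ``independent edges'' estimate $p^{\binom{m}{2}}$ does \emph{not} upper-bound $\Pr{K \text{ is a clique}}$; in fact the latter is dominated by the single-label-coverage event of probability $\approx d\delta^{m}$, and the difficulty is to show that once that dominant scenario is stripped away, the remaining ``multi-label'' ways of forming a large clique contribute only a negligible fraction, despite the considerable freedom in how labels can jointly cover $\binom{K}{2}$. Parameter regime~(i) with $d \ll n^{1-\alpha}$ is sparse enough that the decoupling between $\ell^{*}$ and the remaining labels is straightforward and the residual is cleanly bounded. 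Regime~(ii) reaches up to $d \ll n^{2-\alpha}$ with $k$ possibly as small as $n^{\varepsilon}$, where the same extremal-label argument would otherwise fail; the extra hypothesis $\max\{\log((1-q)/(1-p)), q\} \leq 1/2 - \varepsilon$ is precisely what is needed to push through in this denser regime, since the bound on $q$ makes the noise factor $q^{\binom{m}{2}-e_L}$ decay geometrically against the combinatorial explosion, while the bound on $\log((1-q)/(1-p))$ controls the moment generating function of the label-degrees $|M_v|$ and hence the fluctuations of the ``sub-RIG'' edge probability seen by each vertex of $B$.
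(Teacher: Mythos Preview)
Your approach—a direct first-moment union bound over all $K$ of size $m\approx(1-\varepsilon)k$, peeling off an extremal label—differs from the paper's, which splits into two quite different arguments: a combinatorial one (for regime~(ii) and the sub-case of~(i) where $p-q$ is small) that first establishes as a high-probability event that any bad clique satisfies $|K\cap S_\ell|\le k^\gamma$ for \emph{every} $\ell$, then extracts a \emph{duplicate-free} sub-clique $M\subset K$ of size $k^c$ (each pair in $M$ uses a distinct label or noise) and runs a first-moment only over such $M$; and an SoS/balancedness argument for the remaining sub-case of~(i) where $p$ and $p-q$ are both constants bounded away from zero.

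The main gap is this last sub-case. Take $p=0.9$, $q=0.8$, $\alpha\in(0,1)$: here $\log\frac{1-q}{1-p}=\log 2>\tfrac12$ and $q>\tfrac12$, so the hypothesis of~(ii) fails, yet we are inside regime~(i). Any first-moment argument hits the obstacle that edge events are positively correlated, so the bound $\Pr{K\text{ clique}}\le p^{\binom{m}{2}}$ is invalid; and any decorrelation device---your extremal-label peeling or the paper's duplicate-free reduction---replaces the per-pair factor by something at least $d\delta^2+q\approx\log\frac{1-q}{1-p}+q$, which here exceeds $1$ and blows up the union bound. This is exactly why the paper abandons the combinatorial route for this parameter range and uses balancedness certificates instead. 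Your claim that regime~(i) is ``sparse enough that the decoupling is straightforward'' misidentifies the difficulty: it is the values of $p,q$, not the size of $d$, that break the first-moment.

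Two smaller points. First, even where a first-moment is viable, your claimed bound $\Pr{K\text{ bad}}\le d\delta^m\cdot c^m$ would need $c<(1-\varepsilon)/e$ just to beat $\binom{n}{m}\le(en/m)^m$ (which is $\approx e^m$ when $m\approx k=n\delta$), and you give no source for such a strong factor; the paper sidesteps this entirely by union-bounding over duplicate-free subsets of size $k^c\ll k$ rather than over full-size $K$. Second, your reading of the hypothesis in~(ii) is off: $\log\frac{1-q}{1-p}$ is not controlling a moment generating function of $|M_v|$; rather, since $d\delta^2\approx\log\frac{1-q}{1-p}$, the hypothesis is precisely the condition $d\delta^2+q<1-2\varepsilon$ that makes the duplicate-free first-moment sum $n^{|M|}(d\delta^2+q)^{\binom{|M|}{2}}$ converge.
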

While in case $k \lesssim \sqrt{n \log(n)}$, the proof uses a similar (yet strengthened) combinatorial approach as in \cite{Spirakis-MFCS-20212}\footnote{This is the conference version of~\cite{nikoletseas2021maximum} which contains a different proof of the single label clique theorem than the journal version.}, our proof for $k \gtrsim \sqrt{n \log(n)}$ uses an entirely different approach based on certifiable pseudo-random properties in suitably chosen sub-graphs of $\RIGone$.

\vspace{\pspc}
\paragraph{Computational-statistical and detection-recovery gaps} Another interesting consequence of \Cref{thm:slct} is that exact recovery remains \emph{information-theoretically feasible} if $n^\varepsilon \ll k \ll \sqrt{n}$, even though \emph{efficient recovery} is impossible. This uncovers a \emph{computational-statistical gap} as recently shown in case of disjoint communities in \cite{kothari2023planted}. While this is not surprising given the planted clique hypothesis, it also uncovers a similar \emph{detection-recovery gap} as observed in \cite{kothari2023planted} for planted coloring if $n^{1 + \alpha} \ll d \ll n^{2-\alpha}$. This is because in this regime, distinguishing $\RIGone$ from the Erd\H{o}s--R\'enyi model $G(n,p)$ is easy and can be accomplished using a simple statistic such as counting \emph{signed triangles} as shown in \cite{Brennan_Bresler_Nagaraj_2020}.\footnote{After adapting their proof to our noisy case.}

\vspace{\pspc}
\paragraph{Efficient refutation of large cliques} 

What is further remarkable about our new proof is that in case $k \gtrsim \sqrt{n \log(n)}$, it can fully be phrased as a constant-degree sum-of-squares proof. Besides the fact that this is interesting in its own right, this further yields a polynomial time \emph{refutation algorithm} to certify almost tight bounds on the clique number of a given $G \sim \RIGone$. Precisely, for every $\varepsilon > 0$ and parameters such that $k \gg n^{\frac{1}{2} + \varepsilon}$, we can certify in degree-$O(1/\varepsilon)$ SoS that $G$ contains no clique of size larger than $(1 + o(1))k$. This is particularly interesting in light of the fact that simple spectral certificates provably fail at this task as we show in \Cref{sec:spectralfailure}. Moreover, given the ground truth communities as input, there is a refutation algorithm that for some sufficiently small $\varepsilon' > 0$  can certify in degree-$O(1)$ SoS that \emph{all cliques} of size at least $(1-\varepsilon')k$ are \emph{fully} contained in some ground truth community $S_\ell$. This has the further consequence that our algorithm for exact recovery from \Cref{thm:exactrecovery} can not only find every ground-truth $S_\ell$, but also efficiently \emph{certify its own correctness}, i.e. that no further communities besides the ones that were found exist. Details are found in \Cref{sec:refutation}.

\vspace{\pspc}
\paragraph{Related Work}

With regard to community detection, it is further important to point out that RIGs have some resemblance with \emph{mixed membership stochastic block models} (MMSBs) which have been studied as a model for overlapping community detection before, see in particular \cite{Hopkins_Steurer_2017, Anandkumar_Ge_Hsu_Kakade_2013}. We discuss the similarities and differences between these two models in more detail in \Cref{sec:mmsb}. For now, we emphasize that the main differences between community detection in MMSBs and our setting are that (i) we achieve stronger recovery guarantees as previous work (exact/approximate recovery for \emph{every} community instead of only weak recovery in \cite{Anandkumar_Ge_Hsu_Kakade_2013, Hopkins_Steurer_2017}), (ii) we focus on the high-dimensional setting where the number of communities grows polynomially in $n$, and (iii) we put emphasis on the robustness of our algorithms, while the known algorithms for MMSBs are somewhat fragile and cannot easily handle adversaries. 

We would further like to point out that RIGs have also been studied from rather different perspectives and serve as an increasingly popular model for random graphs with underlying latent structure in various settings. For examples, we refer to the remarkable work of Brennan, Bresler, and Nagaraj \cite{Brennan_Bresler_Nagaraj_2020} who characterize the thresholds for distinguishability of RIGs from Erdős–Rényi graphs as $d \rightarrow \infty$. As a further example, we refer to Liu and Austern \cite{Liu_Austern_2025} who recently studied a noisy version of RIGs under the lens of \emph{graph matching}. In the context of recovering potentially overlapping cliques random graphs, we further find it important to point to the recent works \cite{bresler2025partial, bresler2024thresholds} where the re-constructibility of a random hypergraph from its graph projection is studied. As the authors note, this model and the reconstruction problem considered there is highly similar to our setting. However, in \cite{bresler2025partial, bresler2024thresholds} this problem is studied in a very different regime of parameters since the focus is on the case where $k$ is \emph{constant}, while we consider large $k$ in the order of at least $\sqrt{n}$. 

\section{Techniques}\label{sec:techniques}


Our algorithms are an instantiation of the so-called \emph{proofs-to-algorithms framework} (for an excelent exposition see for example \cite{Fleming_Kothari_Pitassi_2019}) that utilizes the \emph{sum-of-squares} hierarchy. The general idea is that an algorithm follows from two ingredients. First, a mechanism for \emph{certifying} ``goodness'' of a solution candidate, and second, a \emph{proof of identifiability} which states that any solution that passes our certification step is close to the underlying ground-truth set of parameters we are trying to estimate. 

While in general, this only immediately yields an \emph{inefficient} algorithm (simply enumerate all solution candidates and check whether they pass our test of goodness), it turns out that this approach can be made efficient if our certification and proof of identifiability can be phrased as a \emph{low-degree sum-of-squares (SoS) proof}.\footnote{See \Cref{sec:sosprelims} for a more detailed exposition.}

\subsection{Simple, inefficient algorithm}\label{sec:inefficientalgo}

Our approach towards a simple, correct (yet inefficient) algorithm follows a similar high-level idea as employed in previous work for average-case clustering problems (see for example \cite{kothariclustering, Fleming_Kothari_Pitassi_2019}). Specifically, we enumerate over all possible clusters (in our case subsets of roughly $k$ vertices) and check if they ``seem like one of the clusters we are looking for''.
Then, we wish to find a proof of identifiability which states that every cluster that passes our test is close to one of the ground-truth cliques we wish to find.

While in the case of clustering data in $\mathbb{R}^d$ the certification of ``goodness'' typically involves more complex properties like higher-order moments or anti-concentration properties, in our case we simply need to check if our solution candidate is a clique. 
Given this certificate, our proof of identifiability is given precisely by the \emph{single label clique theorem} (\Cref{thm:slct}). Hence, given that we can show \Cref{thm:slct}, a simple correct (yet inefficient) algorithm consists in simply enumerating all maximal cliques of size at least $(1 - \varepsilon)k$.



\paragraph{Combinatorial proof of identifiability}

Let us start by describing the approach of \spiraks towards proving the single label clique theorem, as it appeared in the conference version of the paper. To show that any clique $K$ of size $(1-\varepsilon)k$ is contained in some ground-truth set $S_\ell$ (i.e. spanned by a single label), we assume that $K$ is not spanned by a single label and proceed in two natural steps to arrive at a contradiction. The first step consists in proving that for every $S_\ell$, the clique $K$ cannot intersect $S_\ell$ in more than $\approx k/\sqrt{d}$ vertices. This can be done using relatively standard arguments for bounding the probability that a biclique forms within the boundary of $S_\ell$. These arguments are presented in \Cref{sec:cliqueintersections} and imply that $K$ must be ``scattered'' across the entire graph in the sense that no $S_\ell$ contains more than $\approx k/\sqrt{d}$ vertices of $K$. In a second step, one could hope to show that no such ``scattered'' clique can exist since it would imply a decently large clique in which every edge is formed by a distinct label. 

While the proof of this second step in the conference version \cite{Spirakis-MFCS-20212} contained an error,\footnote{This error was somewhat addressed in \cite{nikoletseas2021maximum} but it does not easily translate to the noisy case.} the above approach works if $p, q$ are sufficiently small and yields a proof of identifiability, which we present in a strengthened and corrected form in \Cref{sec:slctalpha02}. However, translating this very approach into an SoS-proof seems almost impossible due to its combinatorial character and the many case distinctions involved. 

\subsection{Towards efficient algorithms}

Hence, to obtain an efficient algorithm, the main challenge in making the above ideas work is to find new ways of ``certifying'' that \emph{any sufficiently large clique in $G$ has large overlap with some set $S_\ell$}. This will be the main focus for the rest of this section.

\subsubsection{ Failure of spectral certificates }\label{sec:spectralfailure}

A natural approach towards certifying the absence of cliques in random graphs since the seminal work of Alon, Krivelevich and Sudakov \cite{Alon_Krivelevich_Sudakov_1998} uses the \emph{spectral norm} $\| \cdot \|$ of its centred adjacency matrix $\mathbf{A}$. Focussing on the case of $p = 1/2$ for now, this matrix is defined as $\mathbf{A}_{u,v} = 1$ if $\{u, v\} \in E(G)$ and $\mathbf{A}_{u,v} = -1$ otherwise. By standard results from random matrix theory, it is well known that for an \ERgraph{} $\Gnp$, we have $\| \mathbf{A} \| \le O(\sqrt{n})$ w.h.p. Moreover, it is easy to see that in any graph $G$ that contains a clique of size $k$, we have $\|\mathbf{A}\| \ge k$. These two simple facts directly yield a SoS-certificate for the absence of a clique of size $k \gg \sqrt{n}$ in a $G(n,1/2)$.

In our case, we know that there are cliques of size roughly $k = n\delta$, so the spectral norm of $\RIGone$ will be at least $k$, but we could hope that a spectral certificate yields the absence of cliques of size $ck$ for some constant $c$. However, as we show in \cref{sec:spectrallowerbound}, in dense RIGs this is very far from the truth, even in the noiseless case. 
\begin{restatable}{theorem}{SpectralLB}
    Let $\mathbf{A}$ be the centered adjacency matrix of a $G \sim \RIGone$ with $p = 1/2$ and $q = 0$. Then there is a constant $C$ such that, w.h.p.,
    $\| \mathbf{A}\| \ge C n d^{-1/4} = \Omega(d^{{1/4}} k) \gg k$.
\end{restatable}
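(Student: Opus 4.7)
The plan is to exhibit a test vector $x$ for which $\|\mathbf{A} x\|$ is provably large, and the natural candidate is the all-ones vector $x = \mathbf{1}/\sqrt{n}$. Even though the entries of $\mathbf{A}$ are centered for $p = 1/2$, I expect the degree fluctuations in $G \sim \RIGone$ to be substantially larger than in the \ERgraph{} $\Gnp$ because of the shared-label structure, and this is exactly what $\mathbf{A}\mathbf{1}$ measures. Since $\|\mathbf{A}\|^2 \geq \|\mathbf{A}\mathbf{1}\|^2/n$ by the variational characterization of the spectral norm, it suffices to establish $\|\mathbf{A}\mathbf{1}\|^2 = \Omega(n^3/\sqrt{d})$ with high probability; using $k = n\delta = \Theta(n/\sqrt{d})$, this translates exactly to the target bound $\|\mathbf{A}\| = \Omega(n/d^{1/4}) = \Omega(k\, d^{1/4})$.

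To pinpoint the source of the extra $d^{-1/4}$ factor, observe that $(\mathbf{A}\mathbf{1})_v = 2\deg(v) - (n-1)$. Conditioning on the label set $M_v$, the edge indicators $\mathbf{1}[v \sim w]$ for $w \neq v$ become independent Bernoullis with parameter $q_v \coloneqq 1 - (1-\delta)^{|M_v|}$, so $\mathbb{E}[(\mathbf{A}\mathbf{1})_v \mid M_v] = (n-1)\, Y_v$ where $Y_v \coloneqq 1 - 2(1-\delta)^{|M_v|}$. A direct computation using the elementary identity $\mathbb{E}[(1-\delta)^{c|M_v|}] = ((1-\delta)+\delta(1-\delta)^c)^d$ gives $\mathbb{E}[Y_v] = 0$ (precisely because the calibration $(1-\delta^2)^d = 1/2$ is what fixes $p = 1/2$) and $\mathbb{E}[Y_v^2] = -1 + 4(1-2\delta^2+\delta^3)^d = \Theta(1/\sqrt{d})$. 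The key cancellation is that the leading terms $2d\delta^2 = 2\ln 2$ in the two exponents agree exactly, leaving only the correction $d\delta^3 = \Theta(1/\sqrt{d})$. Summing $\mathbb{E}[(2\deg(v)-(n-1))^2] = (n-1)^2 \mathbb{E}[Y_v^2] + 4(n-1)\mathbb{E}[q_v(1-q_v)]$ over all $v$ then gives $\mathbb{E}[\|\mathbf{A}\mathbf{1}\|^2] = \Theta(n^3/\sqrt{d})$ throughout the relevant regime $d \ll n^2$.

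To promote this expected value into a with-high-probability lower bound, I plan to compute $\mathrm{Var}(\|\mathbf{A}\mathbf{1}\|^2)$ by expanding $\|\mathbf{A}\mathbf{1}\|^2 = \sum_{u,v,w}\mathbf{A}_{u,w}\mathbf{A}_{w,v}$ and bounding the covariances of the resulting fourth-order products. The crucial structural fact is that $\mathbf{A}_{v,w}$ depends only on $(M_v, M_w)$, so contributions indexed by vertex-disjoint patterns are fully independent; the remaining terms can be classified by the coincidences between the two triples $(u,v,w)$ and $(u',v',w')$, each class admitting an elementary bound. Provided $\mathrm{Var}(\|\mathbf{A}\mathbf{1}\|^2) = o(n^6/d)$, Chebyshev closes the argument. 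As a cheaper but sub-optimal fallback, McDiarmid applied to the independent coordinates $M_1, \ldots, M_n$ yields bounded differences of order $n^2$ and hence fluctuations at scale $n^{5/2}\sqrt{\log n}$, which already covers the sub-range $d = o(n/\log n)$.

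The main obstacle is precisely this concentration step in the full range $d = n^{\Omega(1)}$, which, since $k \geq 1$, has the natural ceiling $d \leq n^2$. Near the upper end $d \approx n^2$ the signal $\mathbb{E}[\|\mathbf{A}\mathbf{1}\|^2] \approx n^2$ gets uncomfortably close to the $n(n-1)$ diagonal contribution of $\mathrm{tr}(\mathbf{A}^2)$, and bounded-differences arguments become too weak, so the explicit variance calculation is the only genuinely non-routine piece. All remaining steps—the algebraic manipulation for $\mathbb{E}[Y_v^2]$, the reduction $\|\mathbf{A}\| \geq \|\mathbf{A}\mathbf{1}\|/\sqrt{n}$, and the translation between $k$, $n$, and $d$—are routine.
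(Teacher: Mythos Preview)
Your approach is sound and exploits the same underlying phenomenon as the paper—the fluctuations in $|M_v|$ (of order $\sqrt{d\delta}$) inflate the row sums $r_v = (\mathbf{A}\mathbf{1})_v$ to size $\Theta(n/d^{1/4})$—but the execution differs in where the concentration burden falls. You compute $\mathbb{E}\|\mathbf{A}\mathbf{1}\|^2$ in closed form via the neat identity for $\mathbb{E}[Y_v^2]$ and then must concentrate the whole sum $\sum_v r_v^2$; your McDiarmid fallback only covers $d = o(n/\log n)$, and the full variance calculation you outline, while feasible, is genuine work. The paper sidesteps this entirely by introducing a \emph{second} test vector. It applies Berry--Esseen to the i.i.d.\ sum $|M_v|$ to get $\Pr[\,|M_v| \ge d\delta + \sqrt{d\delta}\,] = \Omega(1)$, so the set $L$ of such vertices has $|L| = \Theta(n)$ by Chernoff on independent indicators; then, conditioning on $M_u$ for each $u \in L$, a second Chernoff bound on the conditionally independent edge indicators gives $r_u \ge \Omega(n/d^{1/4})$ uniformly over $L$. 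With $\mathbf{y}$ the normalized indicator of $L$, the bilinear form $\mathbf{x}^\top \mathbf{A}\,\mathbf{y} = |L\,n|^{-1/2}\sum_{u\in L} r_u$ lower-bounds $\|\mathbf{A}\|$ directly. This two-stage decomposition (label counts, then degrees given labels) reduces all concentration to off-the-shelf Chernoff/Berry--Esseen and requires no variance of a fourth-order polynomial. Your route is more analytic and pins down $\mathbb{E}[Y_v^2] = \Theta(d^{-1/2})$ as the exact source of the exponent; the paper's is more structural and buys a much cheaper concentration argument.
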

This shows that the spectral norm of $\mathbf{A}$ exceeds $k$ by a factor of $\Omega(d^{1/4})$ with high probability. The reason is that the eigenspace of $\mathbf{A}$ is dominated by a somewhat \emph{de-localized} eigenvector supported on $\Omega(n)$ vertices that have chosen an above-average number of labels. We remark that even after a neighbourhood reduction (which is discussed in the next sub-section), a similar argument shows that the spectral norm still exceeds $k$ by a polynomial factor, so a simple spectral certificate fails nonetheless.




While prior work (see for example \cite[Chapter 2]{Le_2016} or \cite{Chen_Ding_Hua_Tiegel_2025}) has employed so-called regularization techniques to deal with degree fluctuations in sparse graphs with independent edges, we are dealing with dense graphs and significant dependence where no regularization techniques known to us are easily applicable. Furthermore, the known approaches for obtaining spectral norm bounds in dependent random graphs (see e.g. \cite{Li_Schramm_2024, Bangachev_Bresler_2024, Baguley_Goebel_Pappik_Schiller_2025}) are typically too lossy for our setting. In addition, even if a simple regularization methods such as removing high-degree vertices worked, then the obtained certificates would likely only apply certain sub-graphs of $G$, which gives us no hope of further \emph{refuting} the presence of cliques globally. 


\subsubsection{ Biclique certificates via pseudo-randomness }

Instead of relying on spectral methods, we use certificates that rely on different \emph{pseudo-random} properties in suitably chosen bipartite sub-graphs of a $G \sim \RIGone$.

\vspace{\pspc}
\paragraph{ Certificates based on balancedness by \BKSl }
Recent work by \BKS introduced a $O(1/\varepsilon)$-degree SoS-certificate for the absence of $n^{\varepsilon} \times k$ bicliques in random bipartite graphs. This certificate relies on a pseudo-random property of bipartite graphs called \emph{balancedness}.

Concretely, a bipartite graph $H = (A \uplus B, E)$ with $\{\pm 1\}$ adjacency matrix is said to have $r$-fold balancedness $\Delta$ if for every $S \subseteq A$ with $|S| = 2r$, we have that $|\sum_{u\in B} \prod_{v \in B} \mathbf{A}(u, v)| \le \Delta$ (the formal, more general definition is found in \Cref{sec:balancedness}). It follows from standard concentration bounds that a random bipartite (Erdős–Rényi) graph with $p = 1/2$ has $r$-fold balancedness $O(\sqrt{n\log(n)})$ for every fixed $r$ w.h.p. 

\BKS have shown that in a balanced bipartite graph with such $r$-fold balancedness, if there were a biclique of size $\Omega(n/k)^{1/r} \times k$, one could extract $\Omega(n/k)$ vectors $v_1, \ldots, v_m$ in $\mathbb{R}^n$ with pairwise negative correlation $ \langle v_i, v_j \rangle \le - ck$ for some constant $c > 0$. However, a simple argument shows that the number of such vectors in $n$ dimensions is bounded by $O(n/k)$, which yields a contradiction. All of this can be done in degree $O(r)$-SoS and gives meaningful certificates for the absence of cliques in any bipartite graph with $r$-fold balancedness $\le ck$ where $c$ is any constant strictly smaller than one.

We remark that our approach in its simplest form uses essentially these very certificates as introduced in \cite{Kothari-STOC-2023}, however, a notable difference is that the graphs we consider do provably not have the same (strong) balancedness properties as plain \ERgraphs{} as considered in \cite{Kothari-STOC-2023}. Therefore, we actually use a generalized version of these \BKSss-certificates that work in a setting where not every pair of distinct vectors $v_1, \ldots, v_m$ has negative correlation, but only most of them. This comes at the cost of slightly weaker certificates. 

\vspace{\pspc}
\paragraph{When do balanced subgraphs appear in $\RIGone$?}

To use \BKSss-inspired certificates, in RIGs, we wish to understand where balanced sub-graphs appear in a $G \sim \RIGone$.
A natural candidate as considered in \cite{Kothari-STOC-2023} would be the bipartite graph $G[S_\ell \uplus ([n] \setminus S_\ell)]$ for any fixed $S_\ell$. However, it turns out that this graph is far from being sufficiently well balanced. The reason for this is that \emph{all other} ground-truth cliques $S_{\ell'}$ intersect this bipartition with high probability, which prevents good balancedness properties to emerge because large bicliques actually \emph{do exist}.


Another approach would be to ``flip around'' the bipartition by considering the graph $G[([n] \setminus S_\ell) \uplus S_\ell]$.\footnote{Note that balancedness is not a symmetric property, the graph $G[A \uplus B]$ can have poor balancedness, while $G[B \uplus A]$ has significantly better balancedness properties.} In this case, we get $r$-fold balancedness roughly $k/\sqrt{d}$. While this would be sufficient for ruling out bicliques of size $n^{\varepsilon} \times k/\sqrt{d}$, it is impossible to go below this ``barrier'' of $k/\sqrt{d}$ since $k \times k/\sqrt{d}$ bicliques actually do exist in this bipartition as well.\footnote{This is because $k/\sqrt{d}$ the expected intersection between any two ground-truth cliques, so any $S_{\ell'}$ will actually cause a $k \times k /\sqrt{d}$ sized biclique in $G[[n]\setminus S_\ell \uplus S_\ell]$. } This allows us to conclude that every clique $K$ of size $k$ can only intersect every $S_\ell$ in at most $k/\sqrt{d}$ vertices, however, it does not rule out the case that $K$ is ``scattered'' across many $S_\ell$, so we end up in the same situation we encountered when discussing the combinatorial proof of identifiability in \Cref{sec:inefficientalgo}. 

\vspace{\pspc}
\paragraph{Using balancedness after a neighbourhood reduction}

This suggests that we need a different way of ruling out ``scattered'' cliques in $G$. 
To overcome this challenge, we observe that the main issue with the approaches considered so far is that we only considered bipartite sub-graphs $G[A \uplus B]$ where $A \cup B = [n]$. In this situation, as seen above, it can easily happen that one of the ground-truth cliques intersects our bipartition and thus makes good balancedness properties impossible. 

However, this is very different after applying a \emph{neighbourhood reduction} to $G$. 
That is, given a suitable constant $t \in \mathbb{N}$ and a $t$-tuple $T \in \binom{[n]}{t}$, what can we say about where a clique $K$ can appear \emph{if we force it to contain $T$}? Assuming for now that $T$ is a ``typical''$t$-tuple, then each individual label $\ell \in [d]$ should not appear in too many vertices of $T$. Hence, for a given vertex $v$ outside of $T$, the probability of being adjacent to all $u \in T$ should decay exponentially in $t$, so we can expect all sets $S_\ell$ to shrink by a constant factor (which grows as a function of $t$) when restricting our attention to the neighbourhood $\neigh{T}$. Bearing in mind that all we need is $r$-fold balancedness at most $ck$ for some constant $c < 1$, we can now expect to rule out the existence of sufficiently large cliques in $\neigh{T}$ entirely!  

\vspace{\pspc}
\paragraph{Concentration to few cliques in $\neigh{T}$}
Concretely, we can use the following balancedness-based argument to show identifiability. We define the set $\cliqueset{T}$ as the set of (bad) labels that appear in at least three vertices of $T$, and we wish to prove that only very few vertices in any clique $K$ of size $(1-\varepsilon)k$ that contains $T$ are placed outside of $V(\cliqueset{T})$. By simple concentration bounds, we can show that $|\cliqueset{T}| = O(1)$ w.h.p. for all $T$, so our statement implies that $K$ ``concentrates'' only on constantly many ground-truth cliques with vertex set $V(\cliqueset{T}) = \bigcup_{\ell \in \cliqueset{T}} S_\ell$, as depicted in \Cref{fig:algorithm}. This can afterwards be strengthened into a full proof of identifiability.

To show our ``concentration'' statement, we simply fix a partition $V_1, V_2, \ldots, V_m$ of $[n]$ such that each $V_i$ has size at most $k/2$. A somewhat technical proof then shows that for every $t$-tuple $T$, the graph $G[A_i \uplus B_i]$, where $A_i \coloneqq V_i \setminus V(\cliqueset{T})$ and $B_i = \neigh{T} \cap ( \ [n] \setminus V_i \ )$, has $r$-fold balancedness at most $c k$ for some $c  < \frac{1}{2}$ provided $t$ is sufficiently large. 
 Then, every clique of size $k$ must put mass at least $k/2$ on $[n] \setminus V_i$, because $|V_i| \le k/2$ and placing $\adv{k}$ vertices in $V_i$ would imply a $\adv{k} \times k/2$ biclique in $G[A_i \uplus B_i]$. For any $\adv{k} \ge n^{\Omega(2/r)}$, this can be ruled out using a (modified version) of the BKS-certificate \cite{Kothari-STOC-2023} due to balancedness. Summing over all $V_i$ yields that the number of vertices outside of $V(\cliqueset{T})$ is $O\big(n^{1 + 2/r} / k \big) = o(k)$ whenever $k \gg n^{\frac{1}{2} + \varepsilon}$ and $r$ are chosen sufficiently large (in the order of $1/\varepsilon$).

\paragraph{Balancedness within $V(\cliqueset{T})$ yields concentration to one $S_\ell$.}

After having established concentration to $V(\cliqueset{T})$, we can use balancedness properties among the (constantly many) cliques in $V(\cliqueset{T})$ to ``bootstrap'' this concentration into the significantly stronger statement that $K$ \emph{is entirely contained} in one of the $S_\ell$ for $\ell \in \cliqueset{T}$. Fixing some $\ell \in \cliqueset{T}$ and denoting by $\{x_i\}_{i \in S_\ell}$ the $0/1$-valued variables indicating membership in $S_\ell$, and by $\{y_i\}_{i \in [n] \setminus S_\ell}$ the variables indicating membership in $[n] \setminus S_\ell$ while $|x|_\ell = \sum_{i \in S_\ell} x_i$ and $|y|_\ell = \sum_{i \in [n] \setminus S_\ell} y_i$ is the number of vertices in $S_\ell$ and outside of $S_\ell$, respectively, this is achieved using an SoS-certificate of the form\footnote{See \Cref{sec:sos} and \Cref{lem:slctimproved} for the precise statements and definitions.} 
$$
     \cA(G, (1-\varepsilon)k) \sststile{x,y}{O(t)} \big\{ |x|_\ell^{2t} |y|_{\ell} \le O(k/\sqrt{d})^{2t} \big\}
$$
for any $i \in [n]\setminus S_\ell$. Intuitively, this tells us that $K$ cannot simultaneously intersect $S_\ell$ and $[n] \setminus S_\ell$ by a large amount. Enforcing that $K$ is not a proper subset of any $S_\ell$ amounts to requiring $|y|_\ell \ge 1$ for all $\ell \in \cliqueset{T}$, and under this assumption, the above certificate yields that the total mass on $V(\cliqueset{T})$ is $O(k /\sqrt{d}) = o(k)$, which (combined with the statement from step one) contradicts the fact that $K$ contains $\Omega(k)$ vertices. 

Given the constraints $|y|_\ell \ge 1$ for all $\ell \in [d]$, this entire proof can be phrased as a constant degree \emph{SoS-refutation}, i.e., a derivation of the contradiction $-1 \ge 0$. Given the ground-truth labels, this can directly be used for a polynomial-time refutation algorithm to certify that every clique of size $(1-\varepsilon)k$ is contained in some $S_\ell$.
For recovering $S_\ell$, we clearly cannot use the constraints $|y|_\ell \ge 1$ as input, but the ideas used up to this point are nonetheless useful algorithmically, as will be the focus of the following sections.

\subsection{ Reaching $k \gtrsim \sqrt{n\log(n)} $ }

While the ideas presented so far can be turned into an algorithm for recovery, we can only approach the barrier $k = \sqrt{n}$ up to a factor of $n^\varepsilon$ while using degree-$O(1/\varepsilon)$ SoS-proofs, which makes the ideas used so far less efficient as $\varepsilon \rightarrow 0$. The reason for this is that in step one, we had to rule out $n^{\varepsilon}$-sized cliques in each $V_i$, so summing over $O(n/k)$ many such sets only rules out cliques of size $\Omega(k)$ if $k$ exceeds $\sqrt{n}$ by a factor of $n^\varepsilon$. This is because the \BKSs-inspired certificates are only really effective in bipartite graph of balancedness $\Delta$ if the clique we wish to refute contains more than $\Delta$ vertices on the right, which in our case is $\Omega(k)$. However, a clique of size $k$ could very well put much less than $k$ vertices into $V(\cliqueset{T})$ while ``hiding'' $\Omega(k)$ vertices on the left. The way we solved this issue so far was to simply partition $[n] \setminus V(\cliqueset{t})$ into sufficiently small sets $V_i$ such that $\Omega(k)$ vertices are always forced outside of each $V_i$, which, however, comes at the cost of many $V_i$ and leads to the problem described above.

\paragraph{Proof of identifiability for a fixed bipartition}
If we are only interested in recovering the $S_\ell$ instead of entirely \emph{refuting} the existence of further cliques, there is an easy way to avoid this issue: we simply split the entire graph into two equally sized parts $U$ and $V$, and search for a clique that puts at least $\frac{1-\varepsilon}{2} k$ vertices into both $U$ and $V$. Since for a randomly chosen bipartition $U,V$, this is met for all ground-truth $S_\ell$ w.h.p., this is indeed sufficient for recovery.

A proof of identifiability can then be accomplished using similar ideas as before: take a $t$-tuple $T$ and restrict your attention to the bipartite graph $G[A_i \uplus B_i]$ where now $A_i = U \setminus V(\cliqueset{T})$ and $B_i = \neigh{T} \cap V$ as shown in \Cref{fig:algorithm}.
Again, this graph is sufficiently well balanced to show that any $\frac{1-\varepsilon}{2}k\times \frac{1-\varepsilon}{2}k$-sized biclique in $G[U \uplus V]$ can only intersect $[n]\setminus V(\cliqueset{T})$ in $o(k)$ many vertices. However, since $\Omega(k)$ vertices are forced into both $U$ and $V$, the \BKSs-inspired certificates are now significantly more powerful and work all the way down to $k \gtrsim \sqrt{n\log(n)}$, while degree $\degreeexact$-SoS suffices for all such $k$.

\vspace{\pspc}
\paragraph{Algortihm for exact recovery}

This approach also directly translates into a simple algorithm for exact recovery: for a sufficiently large constant $t$, sample a $t$-tuple $T \in \binom{[n]}{t}$ uniformly at random and compute a degree-$\degreeexact$ pseudo-distribution in $H(T) \coloneqq G[U \uplus V] \cap \neigh{T}$ that respects the biclique axioms $\bicliqueaxioms{\adv{k}}$ (defined in \Cref{sec:axioms}), which force at least $\adv{k} = \frac{1-\varepsilon}{2}k$ vertices into both $U$, and $V$.
Then, for any such pseudo-distribution $\mu$ and sufficiently small $\varepsilon > 0$, construct the set $S_T \coloneqq \{i \in [n] \mid \Expectedtildesub{\mu}{w_i} \ge 1-\varepsilon \}$ as a solution candidate.

We show that for each $S_\ell$ and most $T\subseteq S_\ell$ with $|T| = t$, the set $S_T$ is almost identical to $S_\ell$. Moreover, we show that for \emph{every} $t$-tuple $T \subseteq [n]$, the set $S_T$ is mostly contained in $V(\cliqueset{T})$ by virtue of our proof of identifiability and the duality between SoS-proofs and pseudo-distributions. A combinatorial clean-up step based on the observations in \Cref{sec:cliqueintersections} and a check if $S_T$ is a clique then finally yields the correctness of this approach. A simple argument further shows that this algorithm is robust against a monotone adversary. The details are found in \Cref{sec:analysisexact}.

\subsection{ Identifiability under corruptions and recovery via pseudo-concentration }

A further upshot of our techniques is that they are not limited to recovering cliques, but even yield optimal recovery guarantees under up to $\varepsilon k^2$ edge corruptions whenever $\varepsilon < \varepsilon_0$ where $\varepsilon_0$ is some small constant.\footnote{This means that $\varepsilon$ is actually allowed to depend on $n$, we just require that $\varepsilon < \varepsilon_0$ for all sufficiently large $n$. This allows in particular all $\varepsilon = o(1)$ and all sufficiently small constants.} In this setting, up to a constant fraction of edges within any fixed clique can be missing and the goal is still to output \emph{exactly} $d$ sets of vertices $\mathcal{L} = \{\overline{S}_1, \ldots, \overline{S}_d\}$ such that for every $S_\ell$ there is exactly one $\overline{S} \in \mathcal{L}$ such that $|S_\ell \triangle \overline{S}| \le C\varepsilon k$ for some fixed constant $C$ (that does not depend on $\varepsilon$).

\vspace{\pspc}
\paragraph{A more powerful adversary}

To show this, we actually consider a more general adversarial framework that contains $\varepsilon k^2$ edge corruptions as a special case. Concretely, we allow two types of adversarial modifications. First, there can be up to $\epsedge k$ corruptions incident to \emph{every} vertex, and second, we allow up to $\epsnode k$ \emph{corrupted vertices} whose neighbourhood is \emph{entirely arbitrary}. If we fix some value $\epsedge$, then after $\varepsilon k^2$ edge corruptions, it is easy to see that there are at most $\frac{2\varepsilon}{\epsedge} k$ vertices with more than $\epsedge k$ incident corruptions, so edge corruptions are a special case of this adversarial model for suitable parametrization. 

\vspace{\pspc}
\paragraph{Condition of goodness} To adapt our proof of identifiability to this adversarial setting, we first need to change our condition of ``goodness'' since the clusters we are looking for are not necessarily cliques any longer. Instead, our goal is to recover the \emph{non-corrupted vertices} of all ground-truth communities\footnote{This is the best we can hope for since the neighborhood of corrupted vertices is completely arbitrary.} $S_{\ell}$. Therefore our new condition for goodness is met on a set of vertices if we can ``add back'' up to $\epsedge k$ edges incident to every vertex in order to obtain a clique. To model this in SoS, we introduce an additional $0/1$-variable $z_{u,v}$ for every $u, v$,  which encodes what edges to ``add back'' into our graph. Details are found in \Cref{sec:stronger-adv}.

\vspace{\pspc}
\paragraph{Proof of identifiability}
While the high level approach towards our proof of identifiability remains the same (use a neighbourhood reduction and certify the absence of ``cliques'' in the reduced graph), there are several further challenges to be overcome.

Concretely, a particular problem in light of the ideas underlying our algorithm for exact recovery is that after sampling a random $t$-tuple $T$ and computing a pseudo-distribution purely on the neighbourhood $\neigh{T}$, we have hardly any control over how this distribution behaves due to corruptions incident to $T$. At the very least, we cannot hope to get a similar ``concentration to few cliques'' phenomenon as used in our previous analysis since the neighbourhood of $T$ can be entirely arbitrary. To overcome this issue, it turns out to be important not to enforce the neighbourhood reduction a priori, but rather ``within SoS''. To achieve this, we now compute a pseudo-distribution on \emph{all of} $G$, require enough mass on each side of our bipartition $U, V$, and simply \emph{force the variables corresponding to vertices in $T$} to be one.

This, enables a ``soft'' neighbourhood reduction that allows for a certain mass outside of $\neigh{T}$, and it crucially leads to a pseudo-distribution that factors in more ``global'' information in $G$, which is important for evading the corrupted vertices. The key observation in our analysis is that whenever we have a $t$-tuple $T$ in which no vertex is corrupted, the graph $G[(U \setminus V(\cliqueset{T})) \uplus (V \cap \neigh{T})]$ is still sufficiently well balanced, even if we allow for a certain, adversarial number of edge insertions. To show that this is captured by low-degree SoS, we adapt the \BKSs-inspired certificates to this adversarial setting. Moreover, we exploit the fact that SoS ``knows'' that every sufficiently dense sub-graph $K$ containing $T$ must put $\Omega(k)$ mass on $V \cap \neigh{T}$, which together with the balancedness of $G[(U \setminus V(\cliqueset{T})) \uplus (V \cap \neigh{T})]$ leads to a similar certificate as before for asserting that only few vertices of $K$ are outside of $V(\cliqueset{T})$ (provided not too many corruptions happen incident to $T$). Remarkably, this statement now applies to the whole set $[n] \setminus \cliqueset{T}$ and not just to its restriction to $\neigh{T}$, as before. 

Of course, a concentration statement like this can no longer be true if many corruptions happen incident to $T$. However, in this case we use the fact that our pseudo-distribution has a sufficiently ``global'' view on $G$, so even if forcing $T$ to be contained in our sub-graph $K$, it still ``knows'' that $K$ also has to contain some $t$-tuple which only consists of non-corrupted vertices (even without knowing which vertices are corrupted). For such ``good'' $t$-tuples, a similar concentration phenomenon as described above applies and this yields a proof of identifiability.

\vspace{\pspc}
\paragraph{Algorithm for approximate recovery}
It still remains to exploit the above observations algorithmically. As mentioned above, our algorithm samples $t$-tuples $T$ at random and computes a pseudo-distribution $\mu_T$ in ``relaxed'' clique axioms on $G$ that ensures enough mass on both $U$ and $V$ while adding the constraint $\{ w_T = 1 \}$. Again, we can show that at least a constant fraction of all $t$-tuples in $S_\ell$ are such that this procedure forces concentration on $S_\ell$ implying that we can approximately recover $S_\ell$ by computing $S_T \coloneq \{i \in [n] \mid \Expectedtildesub{\mu_T}{w_i} \ge 1-\smallc \}$ for a sufficiently small constant $\smallc$. All details are found in \Cref{sec:stronger-adv}. The main challenge lies in analysing this procedure for the case that $T$ is not one of these ``good'' tuples in order to show that it does not yield ``false positives'', i.e., that $S_T$ never fails to capture one of the ground-truth $S_\ell$ up to $O(\varepsilon k)$ vertices.

\vspace{\pspc}
\paragraph{Recovery via pseudo-concentration}
To this end, we rely on our proof of identifiability by showing that it implies an \emph{ extremely sharp concentration property} for pseudo-distributions. In particular, if $M$ denotes the set of corrupted vertices, then we can show that \emph{for any} $t$-tuple $T$ (even if $T$ contains corrupted vertices) and sufficiently small $\smallc$, that if $|S_T| = \Omega(k)$, \emph{it must be a subset} of $S_\ell \cup M$ for some $\ell \in [d]$. This implies that every sufficiently large $S_T$ we find at any point during the algorithm is a good approximation for some ground-truth $S_\ell$. Moreover, the guarantee $S_T \in S_\ell \cup M$ is the best we can hope for since the edges incident to $M$ are entirely arbitrary. 

We prove this \emph{pseudo-concentration property} by relying on our proof of identifiability. Concretely, we use the fact that every sufficiently large $S_T$ must always contain some $t$-tuple $R$ in which no vertex is corrupted and in which further $\Expectedtildesub{\mu_T}{w_{R}} \ge \Omega(1)$. Relying on the duality between SoS-proofs and pseudo distributions, this implies that $S_T$ can only put a small number of vertices outside of $V(\cliqueset{R})$, which is a preliminary concentration statement that---in a second step---can be ``bootstrapped'' into the much stronger statement that $S_T$ \emph{must be entirely contained} in $S_\ell \cup M$ for some $\ell \in \cliqueset{R}$. The certificates needed for the analysis of both steps rely on our ``robustified'' version of the \BKSs-certificates based on balancedness properties.

\vspace{\pspc}
\paragraph{Exact recovery under a degree-restricted adversary}

Another interesting consequence of the above analysis is that in case we allow no corrupted vertices (i.e. $M = \emptyset$), but continue to allow an adversary to arbitrarily corrupt up to $\epsedge k$ edges incident to \emph{every} vertex in $[n]$, then our algorithm still achieves \emph{exact recovery} and our pseudo-concentration implies that $S_T$ is always a subset of some $S_\ell$.\footnote{We can further guarantee that some $S_T$ is always \emph{exactly equal} to $S_\ell$ for all $S_\ell$ by computing a pseudo-distribution that maximizes the expectation over all variables $w_i$.} This is interesting not only because it yields a more robust algorithm for exact recovery as described previously, but also because the used techniques might prove useful for finding algorithms in other noisy settings, including a more noisy version of RIGs where not only additional edges outside of the ground-truth cliques can form, but also some edges within the $S_\ell$ might be missing due to noise. We leave studying this case open for future work, but believe that our techniques form a good starting point. We refer to \Cref{sec:discussion} for more details.

\vspace{\pspc}
\paragraph{Exact recovery under a degree-restricted adversary}

Another interesting consequence of the above analysis is that in case we allow no corrupted vertices (i.e. $M = \emptyset$), but continue to allow an adversary to arbitrarily corrupt up to $\epsedge k$ edges incident to \emph{every} vertex in $[n]$, then our algorithm still achieves \emph{exact recovery} and our pseudo-concentration implies that $S_T$ is always a subset of some $S_\ell$.\footnote{We can further guarantee that some $S_T$ is always \emph{exactly equal} to $S_\ell$ for all $S_\ell$ by computing a pseudo-distribution that maximizes the expectation over all variables $w_i$.} This is interesting not only because it yields a more robust algorithm for exact recovery as described previously, but also because the used techniques might prove useful for finding algorithms in other noisy settings, including a more noisy version of RIGs where not only additional edges outside of the ground-truth cliques can form, but also some edges within the $S_\ell$ might be missing due to noise. We leave studying this case open for future work, but believe that our techniques form a good starting point. We refer to \Cref{sec:discussion} for more details.

\begin{figure}[t]
\centering
\includegraphics[height=0.35\textheight]{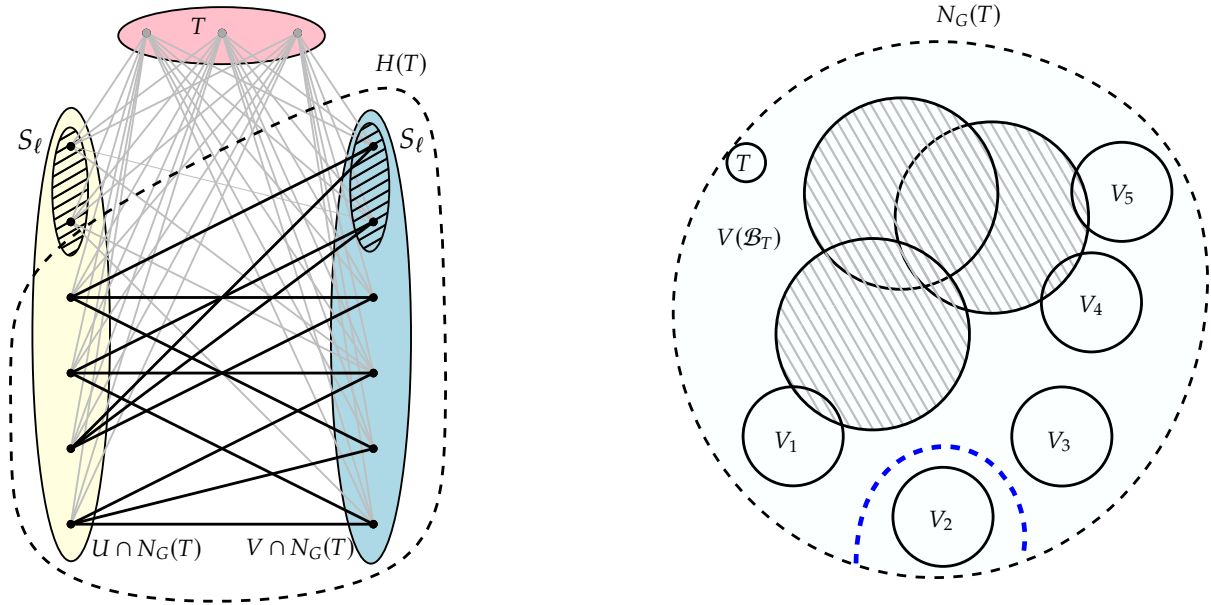}

\caption{Sketch for \Cref{alg:splitting-one-sided}: (left) The Bipartite subgraph $H(T) = G[(U\setminus S_{\ell}) \uplus V] \cap N_G(T)$ is \emph{balanced} shown in \Cref{lem:general-balancedness}. (right) Sketch of step one of the proof of identifiability for our SoS-based proof of \Cref{thm:slct} and our refutation algorithm (\Cref{thm:algorithmicrefutation}). } 
    \label{fig:algorithm}

\end{figure}

\newcommand{\pFK}{p_{\text{FK}}}
\subsection{Comparison to known algorithms for the FK-model}\label{sec:for-reviwer-2}
Finally, we discuss how our model relates to the semi-random setting considered by \cite{Kothari-STOC-2023}, where algorithms robust agains a monotone adversary for the so-called \emph{Feige-Kilian model} (FK-model) are introduced. Here, a clique $S$ of size $k$ is planted in an initially empty graph, while the edges in $S \times [n] \setminus S$ are present independently with probability $\pFK$, and edges in $[n] \setminus S \times [n] \setminus S$ are entirely arbitrary. The goal is to find a list of $(1+o(1))n/k$ cliques that includes $S$. 

One may wonder if our model can be seen as a special case of the FK-model if the edges in $[n] \setminus S \times [n] \setminus S$ are chosen appropriately and $\pFK$ is sufficiently large. While the models are certainly related, the biggest difference to our setting is that in the FK model for $\pFK$ being bounded away from $1$, any clique can intersect the ground-truth $S$ in at most $O(\log(n))$ vertices. In $\RIGone$, the typical overlap between any two $S_\ell, S_{\ell'}$ is $\approx k\delta$, which is of order $n^{\Omega(1)}$ if $k \ge n^{\frac{1}{2} + \varepsilon}$. To potentially be able to map between FK and $\RIGone$, we would therefore need to choose $\pFK$ quite close to one in order to achieve $\delta k \times k$-sized bicliques in $G[S \uplus ([n] \setminus S)]$. Specifically, this would require $(1 - \pFK)^{-1} \gtrsim \delta k /\log(n)$, as a simple calculation reveals. 

However, the algorithms in \cite{Kothari-STOC-2023} require $k \ge n^{\frac{1}{2} + \varepsilon} (1-\pFK)^{-\frac{1}{\varepsilon}}$, while the run time is $n^{O(1/\varepsilon)}$ and $\varepsilon > 0$ is a parameter, see \cite[Theorem 5.6]{Kothari-STOC-2023}. To apply their results to our setting, we would hence need $k \ge n^{\frac{1}{2} + \varepsilon} (\delta k /\log(n))^{\frac{1}{\varepsilon}}$ or equivalently (using the canonical choice $p, p-q =\Omega(1)$ such that $k = \Theta(n/\sqrt{d})$ and $\delta = \Theta(1/\sqrt{d})$), \vspace{-0.1cm}
$$
    1 \gtrsim n^{\varepsilon} \log(n)^{-1/\varepsilon} \left(n/d\right)^{\frac{1}{\varepsilon} - \frac{1}{2}}.
$$\vspace{-0.1cm}
Now,  the restriction $k \gtrsim \sqrt{n \log(n)}$ corresponds to $d \lesssim n/\log(n)$, so the RHS above is at least of order $n^{\varepsilon} \log(n)^{-\frac{1}{2}} = n^{\Omega(1)}$ and the above inequality cannot be satisfied for any constant $\varepsilon > 0$.
This means that whenever we are in the regime where one may potentially represent $\RIGone$ in the FK-model, the algorithms in \cite{Kothari-STOC-2023} will not be applicable. Even if they were, they would require $k \ge n^{\frac{1}{2}+\varepsilon}$, while the run time is $n^{\Omega(1/\varepsilon)}$. Our algorithms overcome these restrictions while seamlessly extending all the way down to $k \gtrsim \sqrt{n\log(n)}$, while run time is $n^{O(1)}$ where the exponent does not grow as a function of $\varepsilon$. Moreover, our techniques extend to the task of refutation (if $k \ge n^{1+\varepsilon}$ ), and our recovery guarantees extend to a stronger adversarial model on top of monotone deletions. 






\section{Preliminaries}

We use $G, H, F$ to denote graphs while $V(G)$ is the set of vertices and $E(G)$ is the set of edges of a given graph $G$. A bipartite graph $G$ with bipartition $A, B$ and set of edges $E$ is denoted as $G = (A \uplus B, E)$. Moreover, given a set $S \subseteq V(G)$, we denote by $G[S]$ the graph induced by the vertices in $S$. The induced graph $G[S]$ is further denoted by $G \cap S$, and these two notations refer to the same object. For sets $A, B \subseteq [n]$, the graph $G[A \uplus B]$ is the \emph{bipartite graph} induced by the bipartition $A, B$, i.e., the graph $G' = (A \uplus B, E')$ with $E' = \{ \{u, v\} \in E(G) \mid u \in A, v\in B \}$. Given a set $T \subseteq V(G)$, we denote by $\neigh{T}$ the neighbourhood of $T$ in $G$, i.e., the set of vertices $v \in V(G) \setminus T$ such that for all $u \in T$, the edge $\{u, v\}$ is part of $G$.

For any $w \in \mathbb{R}^n$ with coordinates $x_i$ for $i \in [n]$, and for any $S \subseteq [n]$, we define $w_S \coloneqq \prod_{i \in S}w_i$. Logarithms are to be understood w.r.t. base $e$ unless specified otherwise. Furthermore the following version of Bernstein's inequality will be indispensable for this work. 
\begin{lemma}[Bernstein's Concentration Bound]\label{lem:bernstein}
    Let $X_1, \ldots, X_n$ be independent, zero-mean random variables such that $|X_i| \le C$ for all $i$, almost surely. Then for every $t > 0$, \begin{align*}
        \Pr{\left|\sum_{i=1}^n X_i\right| \ge t} \le 2\exp \left( \frac{-t^2}{2\left(\sum_{i=1}^n \Expected{X_i^2} + \frac{1}{3}Ct\right)} \right)
    \end{align*}
\end{lemma}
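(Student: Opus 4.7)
The plan is to execute the standard Cram\'er--Chernoff (exponential moment) strategy, which is by far the cleanest way to obtain Bernstein-type bounds. Write $S = \sum_{i=1}^n X_i$ and $V = \sum_{i=1}^n \Expected{X_i^2}$. First I would handle the one-sided tail $\Pr{S \ge t}$ and recover the two-sided statement at the end by applying the same argument to $-X_1, \ldots, -X_n$ and a union bound. The classical opening move is Markov's inequality applied to $e^{\lambda S}$ for a free parameter $\lambda > 0$, combined with independence, yielding
\begin{equation*}
    \Prnop{S \ge t} \;\le\; e^{-\lambda t}\, \Expectednop{e^{\lambda S}} \;=\; e^{-\lambda t} \prod_{i=1}^n \Expectednop{e^{\lambda X_i}}.
\end{equation*}

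The main technical step is bounding each moment generating factor. Here I would use the Taylor expansion, the zero-mean assumption, and the almost-sure bound $|X_i| \le C$ to estimate higher moments by $\Expectednop{|X_i|^k} \le C^{k-2}\,\Expectednop{X_i^2}$. The key elementary inequality is $k! \ge 2 \cdot 3^{k-2}$ for $k \ge 2$, which lets me sum the geometric-like series: for $0 < \lambda C < 3$,
\begin{equation*}
    \Expectednop{e^{\lambda X_i}} \;=\; 1 + \sum_{k \ge 2} \frac{\lambda^k \Expectednop{X_i^k}}{k!} \;\le\; 1 + \frac{\lambda^2 \Expectednop{X_i^2}}{2} \sum_{k \ge 2} \left(\frac{\lambda C}{3}\right)^{k-2} \;=\; 1 + \frac{\lambda^2 \Expectednop{X_i^2}}{2(1 - \lambda C/3)}.
\end{equation*}
Using $1 + x \le e^x$ and independence then gives $\prod_i \Expectednop{e^{\lambda X_i}} \le \exp\!\big(\lambda^2 V / (2(1 - \lambda C/3))\big)$, so that
\begin{equation*}
    \Prnop{S \ge t} \;\le\; \exp\!\left( -\lambda t + \frac{\lambda^2 V}{2(1 - \lambda C / 3)} \right).
\end{equation*}

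The final step is to optimize $\lambda$. The standard and most convenient choice is $\lambda = t/(V + Ct/3)$, which satisfies $\lambda C < 3$ and after a short computation makes $1 - \lambda C/3 = V/(V + Ct/3)$, so the exponent collapses to
\begin{equation*}
    -\lambda t + \frac{\lambda^2 V}{2(1 - \lambda C/3)} \;=\; -\frac{t^2}{V + Ct/3} + \frac{t^2}{2(V + Ct/3)} \;=\; -\frac{t^2}{2(V + Ct/3)},
\end{equation*}
yielding the one-sided Bernstein bound. Applying the same argument to $-S$ and union-bounding gives the factor of $2$ and the absolute value. I do not expect a genuine obstacle; the only place demanding mild care is the estimate on the tail of the Taylor series (the factorial-vs-geometric comparison) and the algebra of the optimization, both of which are standard. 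One could alternatively appeal to the Bernstein condition $\Expectednop{|X_i|^k} \le \tfrac{k!}{2} \Expectednop{X_i^2} C^{k-2}$ as a black box, but deriving it in line as above keeps the proof self-contained.
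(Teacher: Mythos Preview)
Your proof is correct and is the standard Cram\'er--Chernoff derivation of Bernstein's inequality. The paper itself does not prove this lemma at all; it is stated in the preliminaries as a known concentration bound and used as a black box throughout. So there is nothing to compare against, and your argument (Markov on the MGF, the factorial estimate $k! \ge 2\cdot 3^{k-2}$ to control the Taylor tail, then the choice $\lambda = t/(V + Ct/3)$) is exactly the textbook route one would cite.
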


\subsection{Parameter assumptions}\label{sec:parameterassumptions}

An important assumption with regard to the parameters in all $G \sim \RIGone$ we consider is that $d \ge n^{\alpha}$ for some (arbitrarily small) constant $\alpha > 0$, and that $p$ is bounded away from $1$, i.e., that there is some (arbitrarily small) $\varepsilon > 0$ such that $p \le 1 - \varepsilon$. Constants hidden in Landau-notation typically depend linearly on $\alpha$ and $\varepsilon$ and generally get worse as $\alpha, \varepsilon \rightarrow 0$. With regard to the assumption $p \le 1 - \varepsilon$, this is at many places necessary as distinguishing a clique from the rest of the graph generally becomes harder the denser the overall graph becomes. On the other hand, we generally believe that the assumption $\alpha > 0$ is not necessary, as our problem should generally become easier as $k$ gets larger, while the hard regime is reached when $k \ll \sqrt{n}$ (as it is the case for other variants of planted clique as well, see e.g. \cite{Kothari_Vempala_Wein_Xu_2023, Barak_Hopkins_Kelner_Kothari_Moitra_Potechin_2019}).

We assume $\alpha > 0$ (which is also a common assumption in existing literature on RIGs, see e.g. \cite{Spirakis-MFCS-20212}) for the sake of simplicity and readability, since this assumption simplifies some of our proofs. Changing to the case of $\alpha = o(1)$ would require adapting some statements that in our case hold for all $t$-tuples $T \subseteq [n]$ to statements that now only hold for ``most'' $t$-tuples. On the other hand, for small $d$, the number of planted cliques is quite small and $k$ is quite large, which simplifies other steps. While we believe an extension of our analysis to the case of any $d$ or at least sufficiently large $d$ to be possible, we do not expect this to yield significant new insights. 

\subsection{ From dense to sparse graphs } 

Throughout the rest of this paper, we generally make the assumption that the overall edge-density $p$ is \emph{a constant}, while both our algorithms and \Cref{thm:slct} continue to hold if $p = o(1)$ as well. The reason we can still assume $p = \Omega(1)$ is that these \emph{dense} graphs form the hard instances for our problem, and everything we do in the dense regime actually \emph{carries over} directly to the sparse case. 

Concretely, we achieve this because our algorithms are robust to a \emph{monotone adversary}, and since we have a simple coupling argument showing that any RIG with small $p$ can be transformed into an RIG with $p = 1/2$ (or any other constant) simply by ``ramping up'' the noise levels given by $q$ while all ground-truth cliques are kept intact. Our algorithms and theorems then apply to this ``densified'' graph, which can be turned back into the original one using a monotone adversary. This is why it is not a restriction to assume $p \ge 1/2$ throughout the rest of this work. Our coupling is formally captured in the following lemma, which is proved in \Cref{sec:I-have-a-small-p}.

\begin{restatable}{lemma}{CouplingSmallP}\label{lem:couling-subset}
     Consider $G \sim \RIGone$ with $n^\alpha \ll d \ll n^{2-\alpha}$ for some arbitrarily small $\alpha > 0$ and arbitrary $p, q$, $q < p$. Then, for every $\frac{1}{2} \ge  p' > p$, there exists some $q' > q$ such that $G$ can be coupled to $G' \sim RIG(n,d,p',q')$.
     such that $G \subseteq G'$ while for every $\ell \in [d]$, the ground-truth cliques $S_\ell$ are identical in both $G$ and $G'$.
\end{restatable}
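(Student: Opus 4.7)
The key observation is that a $\RIGone$ is completely specified by the label sets $(M_v)_{v \in [n]}$ (which depend only on the parameter $\delta$) together with independent $\mathrm{Bernoulli}(q)$ noise on the non-label pairs. Since the ground-truth cliques $S_\ell = \{v : \ell \in M_v\}$ are a deterministic function of $(M_v)_{v \in [n]}$, the strategy is to keep the label sets common to $G$ and $G'$ and only monotonically ``boost'' the noise parameter. For this it suffices to identify the implicit relationship between $p$, $q$ and $\delta$, pick $q'$ so that $\delta$ is preserved when $p$ is replaced by $p'$, and then execute the obvious edgewise coupling.

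\textbf{Step 1: derive the relationship and compute $q'$.} For any two distinct vertices $u, v$,
\[
   \Pr{u \sim v} \;=\; \Pr{M_u \cap M_v \neq \emptyset} + q\cdot \Pr{M_u \cap M_v = \emptyset} \;=\; 1 - (1-q)(1-\delta^2)^d,
\]
so the defining identity between $(\delta, d, p, q)$ is $(1-\delta^2)^d = (1-p)/(1-q)$. In particular, $\delta \in (0,1)$ whenever $0 < q < p$, which is the standing assumption. To preserve $\delta$ while changing $p \to p'$ with $p < p' \le 1/2$, set
\[
   q' \;\coloneqq\; 1 - (1-q)\cdot\frac{1-p'}{1-p}.
\]
A short calculation gives $q' - q = (1-q)(p'-p)/(1-p) > 0$, so $q' > q$, and plugging $(q',p')$ back into the defining identity yields $(1-p') = (1-q')(1-\delta^2)^d < (1-q')$, so $q' < p'$, and $q' \in (0,1)$. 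Hence $(n,d,p',q')$ are valid RIG parameters yielding the \emph{same} $\delta$.

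\textbf{Step 2: the coupling.} Sample the label sets $(M_v)_{v \in [n]}$ once, according to the common $\delta$. For every unordered pair $\{u,v\}$ with $M_u \cap M_v = \emptyset$, draw an independent $U_{uv} \sim \mathrm{Unif}[0,1]$ and place $\{u,v\}$ in $E(G)$ iff $U_{uv} \le q$, and in $E(G')$ iff $U_{uv} \le q'$. For every pair with $M_u \cap M_v \neq \emptyset$, include $\{u,v\}$ in both $E(G)$ and $E(G')$. By Step 1, the marginals of $G$ and $G'$ are exactly $\RIGone$ and $\textsc{RIG}(n,d,p',q')$; the label sets (and hence the ground-truth cliques $S_\ell$) are identical in the two graphs by construction; and $E(G) \subseteq E(G')$ because $q \le q'$ and the two graphs agree on the non-noise edges.

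\textbf{Main obstacle.} There is essentially no technical obstacle; the entire argument is algebraic once the implicit identity $(1-\delta^2)^d = (1-p)/(1-q)$ is extracted from the model definition. The only points requiring care are verifying that the prescribed $q'$ is a valid probability and still satisfies $q' < p'$ (so that $\textsc{RIG}(n,d,p',q')$ is well-defined), both of which fall out immediately from $p < p' \le 1/2$ and $\delta > 0$. The parameter range $n^\alpha \ll d \ll n^{2-\alpha}$ plays no role in the coupling itself; it only ensures that $\delta$ lies in a non-degenerate regime matching the rest of the paper.
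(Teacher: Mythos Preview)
Your proof is correct and follows essentially the same approach as the paper: both choose $q' = 1 - (1-q)\frac{1-p'}{1-p}$ to keep $\delta$ fixed, verify $q' > q$, and then couple via common label sets together with a uniform random variable on each non-label pair thresholded at $q$ and $q'$ respectively. Your additional check that $q' < p'$ (so the new parameters are valid) is a nice touch that the paper leaves implicit.
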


\subsection{Comparison with Mixed Membership Stochastic Block Models (MMSBs)}\label{sec:mmsb}

In this section, we provide a comparison between our model and algorithms to the results obtained in previous work on overlapping community detection.

Detecting overlapping communities has received less attention (at least from a rigorous mathematical standpoint), however, most existing work has focused on so called \emph{mixed membership stochastic block models} (MMSBs) that share some similarities with RIGs (see e.g. \cite{coja2010graph,krzakala2013spectral,Anandkumar_Ge_Hsu_Kakade_2013,bordenave2015non,abbe2015community,mossel2015reconstruction,guedon2016community,montanari2016semidefinite,yun2016optimal,Hopkins_Steurer_2017,10.1214/17-AOS1545,Mao02102021} but also the survey by Abbe~\cite{abbe2017community} and references therein). The purpose of this section is to discuss the similarities and differences of these models and to contrast the algorithmic guarantees considered in this and previous works. We start with the definition of MMSBs.

\begin{definition}
   We define $\sbm$ as the distribution over graphs on vertex set $n$ obtained using the following procedure. 
    \begin{enumerate}
\setlength\itemsep{.0001em}
        \item For each $v \in [n]$, independently draw a $d$-dimensional vector $\sigma_v \sim \dirichlet$ where $\dirichlet$ is the symmetric $d$-dimensional Dirichlet distribution with parameter $\beta$.
        \item For every pair $u, v \in [n], u \neq v$, draw the edge $\{u, v\}$ with probability $\frac{s}{n} \cdot \left( 1 + (\langle \sigma_u, \sigma_v \rangle - \frac{1}{d})\varepsilon \right)$.
    \end{enumerate}
\end{definition}
Here, the vectors $\sigma_v$ represent the \emph{fractional} community memberships of $v$. The probability that two vertices $u, v$ are adjacent is larger if the two vectors $\sigma_i, \sigma_j$ share similar features. In total, this gives rise to $d$ communities where community $\ell \in [d]$ is defined by all vertices $v$ for which $\sigma_v(\ell)$ is  ``large''. The parameter $\beta$ controls the amount of overlap between communities. Specifically, in the limit $\beta \rightarrow 0$, the vectors $\sigma_v$ will only be supported on a single coordinate giving rise to \emph{disjoint} communities like in the classical stochastic block model. On the other hand, for non-zero $\beta$ the  ``large'' values of $\sigma_v$ concentrate on roughly $\beta$ uniformly random coordinates of each $\sigma_v$. This has the intuitive interpretation that each vertex is a member of approximately $\beta$ communities.

\paragraph{Stronger notions of recovery}
RIGs are conceptually similar if we think of $\{0, 1\}^d$-valued community membership vectors $\sigma_v$ that arise as the \emph{characteristic vectors} of the sets $M_v$. Two vertices are then adjacent if and only if $\langle \sigma_u, \sigma_v\rangle \neq 0$ (in the noiseless case). Hence, the main difference between RIGs and MMSBs is that RIGs do not allow for fractional community memberships.
This allows a much less ambiguous and combinatorial definition of which vertices participate in a given community and thus enables us to study stronger notions of recovery (formalized in \Cref{def:recovery}): we require our algorithms to find exactly $d$ communities such that for \emph{every} ground-truth community $K$ we recover almost all vertices that participate in $K$. This is a key difference to previous work \cite{Anandkumar_Ge_Hsu_Kakade_2013, Hopkins_Steurer_2017}, where the goal is \emph{weak recovery} that merely requires finding a community assignment which is non-trivially correlated with the ground-truth. Here, a community assignment can have high correlation with the ground-truth even if entire communities are completely ``ignored''. This is especially true if the number of communities tends to infinity with $n$. Our algorithms achieve performance guarantees that explicitly exclude such outcomes.

\paragraph{Limitations of tensor-based approaches in RIGs}
Another important difference between MMSBs and RIGs is that RIGs do not make a distinction between edges that appear in only one community versus edges that appear in two or more communities. Specifically, if we consider a set of vertices $S$, then in a MMSB, the graph induced by $S$ will become denser the more shared communities the vertices in $S$ participate in. In RIGs on the other hand, there is no distinction between the vertices in $S$ having one versus many communities in common. Mathematically, this amounts to a certain ``non-linearity'' when estimating moments of the hidden variables given the posterior (i.e. the graph arising from them). 

To understand why this matters, it is important to point out the key idea underlying the algorithms in \cite{Anandkumar_Ge_Hsu_Kakade_2013} and \cite{Hopkins_Steurer_2017}. Here, the high-level approach is to evaluate a suitable low-degree polynomial on $G$ that serves as an estimate for the moments of the hidden variables, which is afterwards decomposed into a suitable instantiation of the hidden variables. Concretely, if we define centered community indicator vectors $y_1, \ldots, y_d \in \mathbb{R}^n$ with $y_\ell(v) = (\sigma_v(\ell) - \frac{1}{d})$, then in an MMSB, a centered \emph{3-star count} $\sum_{v \in [n]}(\mathds{1}_{i, v} - \frac{s}{n})(\mathds{1}_{j, v} - \frac{s}{n})(\mathds{1}_{k, v} - \frac{s}{n})$ serves as an \emph{unbiased estimator} for $\sum_{\ell \in [d]} y_\ell(i)y_\ell(j)y_\ell(k)$. Doing this for all $i, j, k \in [n]$, this means that we can estimate the tensor $T = \sum_{\ell \in [d]} y_\ell^{\otimes 3}$\footnote{in order to reduce the variance, this is in fact done for length-$O(\log(n))$-armed 3-stars. However, the estimated tensor remains $\sum_{\ell \in [d]}y_\ell^{\otimes 3}$. }. Now, using a similar estimator for the second moment matrix, this can be approximately orthogonalized after which a sufficiently strong tensor decomposition algorithm can be used to estimate the vectors $y_\ell$.

Concretely, in $\sbm$, we have
\begin{align*}
    &\Expectedsub{\sigma_v}{ \left. \prod_{u \in \{i, j, k\}}\left(\mathds{1}_{u, v} - \frac{s}{n}\right) \right| \sigma_i, \sigma_j, \sigma_k } = \left( \frac{s\varepsilon}{n} \right)^3 \Expectedsub{\sigma_v}{  \prod_{u \in \{i, j, k\}}\left(\langle \sigma_v, \sigma_u \rangle - \frac{1}{d}\right) }\\
    &\hspace{4cm}= \left( \frac{s\varepsilon}{n} \right)^3 \Expectedsub{\sigma_v}{  \prod_{u \in \{i, j, k\}}\left\langle \sigma_v - \frac{1}{d}\mathds{1}, \sigma_u  - \frac{1}{d}\mathds{1} \right\rangle} = C \left( \frac{s\varepsilon}{n} \right)^3 \sum_{\ell \in [d]} y_\ell(i)y_\ell(j)y_\ell(k),
\end{align*} 
so the estimator is indeed unbiased. The reason this works is that the model is \emph{linear}, i.e., the probability that two vertices are adjacent scales linearly with the inner product $\langle \sigma_v, \sigma_u \rangle$. This is crucially not the case in $\RIGone$, where instead of making $u, v$ adjacent with probability $p + (q-p)\langle\sigma_v, \sigma_u \rangle$, we use something like $p + (q-p)\mathds{1}(\sigma_v, \sigma_u)$ (if $\sigma_u, \sigma_v$ now denote the characteristic vectors of $M_v$, $M_u$, respectively). This replacement of the inner product by an indicator function destroys the unbiasedness of the the 3-star (and more generally the length-$\ell$-armed 3-star), so the quantity being estimated no longer nicely resembles the moments of our hidden variables.

\paragraph{Handling many communities and achieving robustness}

In addition to that, even if the algorithms from \cite{Anandkumar_Ge_Hsu_Kakade_2013, Hopkins_Steurer_2017} were applicable in our setting, they suffer from several other drawbacks. Specifically, besides the fact that they only achieve weak recovery, they have trouble scaling to the so-called ``high-dimensional'' case where the number $d$ of communities scales polynomially in $n$ (i.e. $d \ge n^{\Omega(1)}$). Specifically, the analysis in \cite{Hopkins_Steurer_2017} is only conducted under the restriction that $d \le n^{o(1)}$, while the algortithm of Anandkumar et al. \cite{Anandkumar_Ge_Hsu_Kakade_2013} needs the assumption $n \gg d^2 (\beta + 1)^2$. Considering for example our model $\RIGone$ with constant $p$ and $p - q = \Omega(1)$ (the dense case), which corresponds most closely to $\sbm$ with $s, \varepsilon= \Theta(n)$ and $\beta = \Theta(\sqrt{d})$, this translates roughly to $d \ll n^{1/3}$, which is much much more restrictive than the guarantees of our \Cref{thm:exactrecovery,thm:approxrecovery} that work for any $d \ll n/\log(n)$. 


In addition to that, our algorithms are provably robust against both monotone and bounded adversaries, which previous algorithms based on polynomial estimators and tensor decompositions cannot easily guarantee, even if they are a poly-logarithmic factor away from the generalized Kesten--Stigum (KS) threshold. 
While in the case of disjoint communities, robust algorithms achieving weak recovery that use a similar high-level approach (low-degree estimators + decomposition algorithm) exist \cite{Ding_DOrsi_Nasser_Steurer_2022, Mohanty_Raghavendra_Wu_2024}, the analysis of these algorithms only allows for an adversary that can modify a sufficiently small constant fraction of edges, which on its own is too little for handling a monotone adversary. Our algorithms rely on different techniques which are robust against monotone adversaries, while even withstanding optimal\footnote{optimal if we require our stronger notions of approximate recovery from \Cref{def:recovery}} bounded adversaries on top. 

\paragraph{Refutation and certifiable absence of cliques}

Another advantage of our technique over the approach in \cite{Anandkumar_Ge_Hsu_Kakade_2013, Hopkins_Steurer_2017} is that it uses the proofs-to-algorithms framework and thus relies on the \emph{certification} of the absence of cliques in certain subgraphs of RIGs. This has the consequence that our algorithms not only recover the community structure of a RIG, but they also \emph{prove} that there exists no further clique of size at least $\varepsilon k$ that is not contained in one of the recovered communities. Previous, tensor-based algorithms are much more heuristic in nature and---while they might succeed at (weakly) finding communities---the used techniques cannot refute the existence of further cliques.

\section{Combinatorial Properties of Noisy RIGs}

\subsection{Parameters and their relationship}\label{sec:parametersandtheirrelationship}

We start by introducing some auxiliary lemmas describing the size and relationship of the parameters used in our models.
\begin{lemma}\label{lem:sizeofdelta}
    For a graph sampled from $\RIGone$ with $d = \omega(1)$, we have that $$\delta = \sqrt{\log\left(\frac{1-q}{1-p}\right)\frac{1}{d}}\left( 1 \pm O\left( \frac{1}{d} \right) \right).$$
\end{lemma}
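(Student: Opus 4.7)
The plan is to set up an exact equation relating $\delta$ to $d, p, q$ from the definition of the model, and then solve it asymptotically to extract the stated expansion.

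First, I would compute $\Pr[u \sim v]$ explicitly. Since each label $i \in [d]$ is included in $M_u$ and $M_v$ independently with probability $\delta$, and the events across labels are independent, we have $\Pr[M_u \cap M_v = \emptyset] = (1-\delta^2)^d$. Therefore
\begin{equation*}
    \Pr[u \sim v] \;=\; 1 - (1-\delta^2)^d \;+\; q \,(1-\delta^2)^d \;=\; 1 - (1-q)(1-\delta^2)^d.
\end{equation*}
Setting this equal to $p$, rearranging, and taking logarithms yields the exact relation
\begin{equation*}
    d \log(1-\delta^2) \;=\; -\log\!\left(\frac{1-q}{1-p}\right).
\end{equation*}

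Next, I would invert this relation via the Taylor expansion $-\log(1-x) = x + x^2/2 + x^3/3 + \cdots$ for $x = \delta^2$. Writing $L \defeq \log\big(\tfrac{1-q}{1-p}\big)$, the relation becomes
\begin{equation*}
    \delta^2\Big(1 + \tfrac{\delta^2}{2} + \tfrac{\delta^4}{3} + \cdots\Big) \;=\; \frac{L}{d}.
\end{equation*}
Since the paper assumes $p$ is bounded away from $1$ (and $q<p$), the quantity $L$ is bounded, so this already forces $\delta^2 = O(1/d)$. Inserting this crude bound back into the bracketed factor gives $1 + \delta^2/2 + O(\delta^4) = 1 + O(1/d)$, hence
\begin{equation*}
    \delta^2 \;=\; \frac{L}{d}\bigl(1 + O(1/d)\bigr)^{-1} \;=\; \frac{L}{d}\bigl(1 \pm O(1/d)\bigr).
\end{equation*}
Taking square roots and using $\sqrt{1 \pm O(1/d)} = 1 \pm O(1/d)$ yields the claimed expression
\begin{equation*}
    \delta \;=\; \sqrt{\frac{L}{d}}\,\bigl(1 \pm O(1/d)\bigr).
\end{equation*}

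I do not expect a real obstacle here: the only small subtlety is ensuring the Taylor expansion is justified, which follows immediately from the bootstrap argument $\delta^2 = O(1/d)$ (valid once $d$ is large enough, guaranteed by $d = \omega(1)$) together with the constant upper bound on $L$ coming from the standing assumption that $p$ is bounded away from $1$.
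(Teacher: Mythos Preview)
Your proposal is correct and follows essentially the same approach as the paper: derive the exact relation $p = 1 - (1-q)(1-\delta^2)^d$, then solve asymptotically via a first-order Taylor expansion and take square roots. The only cosmetic difference is that the paper expands $\exp(-L/d)$ whereas you expand $-\log(1-\delta^2)$; these are inverse manipulations of the same identity, and your bootstrap to justify $\delta^2 = O(1/d)$ is, if anything, slightly more careful than the paper's treatment.
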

\begin{proof}
    By definition, we have that \begin{align}\label{eq:delta}
        p = 1 - (1-q)(1 - \delta^2)^{d}, \text{ so } \delta = \sqrt{1 - \exp\left( -\log\left(\frac{1-q}{1-p}\right)\frac{1}{d} \right)}.
    \end{align}
    Moreover, by a Taylor series expansion, we get that \begin{align*}
        \exp\left( -\log\left(\frac{1-q}{1-p}\right)\frac{1}{d} \right) &= 1 - \log\left(\frac{1-q}{1-p}\right)\frac{1}{d} \pm O\left(\frac{1}{d^2}\right)\\
        &=  1 - \left(\log\left(\frac{1-q}{1-p}\right)\frac{1}{d} \right)\left( 1 \pm O\left(\frac{1}{d}\right)\right).
    \end{align*}
    The lemma follows after noting that $\sqrt{ 1 \pm O\left(\frac{1}{d}\right)} = 1 \pm O\left(\frac{1}{d}\right)$.
\end{proof}

We further use the following lemma stating that the probability that two vertices are adjacent remains close to $p$ if the labels of one vertex are fixed and their number if sufficiently close to its expectation $\delta d$.
\begin{lemma}\label{lem:pconcentration}
        Assume that  $u, v$ are vertices with fixed $M_u$ such that $|M_u| = \delta d \pm a$ and $a = o(\delta d)$. Then, over the randomness of $M_v$, $\left|\Pr{u \sim v} - p\right| = O(\delta a).$
\end{lemma}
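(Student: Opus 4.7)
The plan is to expand $\Pr{u \sim v \mid M_u}$ in closed form and compare it to the identity $p = 1 - (1-q)(1-\delta^2)^d$ derived in the proof of \Cref{lem:sizeofdelta}. By independence of the membership events $\{i \in M_v\}$ over $i \in [d]$, I obtain $\Pr{M_u \cap M_v = \emptyset \mid M_u} = (1-\delta)^{|M_u|}$, so that $\Pr{u \sim v \mid M_u} - p = (1-q)\big[(1-\delta^2)^d - (1-\delta)^{|M_u|}\big]$. The remaining task is to bound the bracketed quantity by $O(\delta a)$.

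To do this, I would add and subtract $(1-\delta)^{\delta d}$ and handle the two pieces separately. For $(1-\delta)^{\delta d} - (1-\delta)^{|M_u|}$, the mean value theorem applied to $f(x) = (1-\delta)^x$ yields $\big|(1-\delta)^{\delta d} - (1-\delta)^{|M_u|}\big| \le a \cdot |\log(1-\delta)|$, and since $\delta = o(1)$ by \Cref{lem:sizeofdelta}, we have $|\log(1-\delta)| = O(\delta)$, giving the desired $O(\delta a)$. For the other piece $(1-\delta^2)^d - (1-\delta)^{\delta d}$, I would Taylor-expand the two logarithms: $d\log(1-\delta^2) = -d\delta^2 - d\delta^4/2 + O(d\delta^6)$ while $\delta d\log(1-\delta) = -d\delta^2 - d\delta^3/2 - d\delta^4/3 + O(d\delta^5)$. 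The quadratic terms cancel, and invoking $d\delta^2 = \Theta(1)$ from \Cref{lem:sizeofdelta}, the difference of logs becomes $d\delta^3/2 + O(d\delta^4) = O(\delta)$; exponentiating (both quantities are $\Theta(1)$) yields $(1-\delta^2)^d - (1-\delta)^{\delta d} = O(\delta)$.

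The main subtlety is precisely that this second contribution is $O(\delta)$ rather than $O(\delta a)$. Because $|M_u|$ is integer-valued while $\delta d$ generally is not, the hypothesis $|M_u| = \delta d \pm a$ forces $a = \Omega(1)$ for the conclusion to be non-trivial (or, equivalently, the bound should be read as $O(\delta(a+1))$, absorbing the $O(\delta)$ term into $O(\delta a)$). Under this reading the proof is complete. Aside from this minor bookkeeping, the argument reduces to a routine first-order Taylor expansion and I do not foresee further difficulty; the only genuinely essential input from the rest of the paper is the estimate $\delta = \Theta(1/\sqrt{d})$, which both gives $|\log(1-\delta)| = O(\delta)$ for the mean-value estimate and controls the higher-order Taylor terms $d\delta^3, d\delta^4$ in the second piece.
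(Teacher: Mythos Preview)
Your argument is correct and reaches the same conclusion as the paper, including the same implicit caveat that the bound should really be read as $O(\delta(a+1))$; the paper's proof also incurs an $O(\delta)$ error from its third factor $(1\pm O(\delta^2))^{\delta d\pm a}=1\pm O(1/\sqrt d)$ and silently absorbs it.

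The technical route differs slightly. The paper does not split at $(1-\delta)^{\delta d}$; instead it writes $1-\delta=e^{-\delta}(1\pm O(\delta^2))$ and factors $(1-\delta)^{\delta d\pm a}$ multiplicatively into $e^{-\delta^2 d}\cdot e^{\pm\delta a}\cdot(1\pm O(\delta^2))^{\delta d\pm a}$, then identifies $e^{-\delta^2 d}=\tfrac{1-p}{1-q}(1\pm O(1/d))$ via \Cref{lem:sizeofdelta} and bounds the remaining two factors as $1\pm O(\delta a)$ and $1\pm O(1/\sqrt d)$ respectively. Your additive decomposition (mean value theorem for the $a$-dependence, Taylor expansion of the logs for the $(1-\delta^2)^d$ vs.\ $(1-\delta)^{\delta d}$ comparison) is equally clean and arguably makes the $O(\delta)$ vs.\ $O(\delta a)$ issue more transparent; the paper's multiplicative factoring hides it inside a product of $(1\pm\cdot)$ terms. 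Neither approach has any real advantage over the other.
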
\begin{proof}
        Using that $|M_u| = \delta d \pm a$, the probability that $u \sim v$ is 
        $$
            \Pr{u \sim v} = 1 - (1-q)(1-\delta)^{\delta d \pm a}, \text{ while } p = 1 - (1-q)(1-\delta^2)^{d}.
        $$
        To compare these two terms, we use a Taylor series and $\delta = o(1)$ to get that 
        $
            \exp\left(-\delta\right) = 1 - \delta \pm O(\delta^2) = (1 - \delta)(1 \pm O(\delta^2)) \text{, so in fact } 1 - \delta = \exp(-\delta)(1 \pm O(\delta^2)). 
        $
        Hence,
        \begin{align*}
            (1-\delta)^{\delta d \pm a} &= \exp\big(-\delta^2d \big)\exp\big(\pm \delta a\big)\big(1 \pm O(\delta^2)\big)^{\delta d \pm a}.
        \end{align*} We handle the three factors separately. First, note that by \Cref{lem:sizeofdelta}, 
    \begin{align*}
             \exp(-\delta^2 d) = \exp\left(-\log\left( \frac{1-q}{1-p} \right) \left( 1 \pm O\left(\frac{1}{d}\right)\right)\right) = \frac{1-p}{1-q}\left( 1 \pm O\left(\frac{1}{d}\right)\right).
        \end{align*}    
        Moreover, $\delta a = o(1)$, so 
        $$
            \exp\big(\pm \delta a\big) = \big( 1 \pm O(\delta a)\big).
        $$
        Finally, since $\delta d \pm a = \Theta(\delta d)$, 
        $$
            (1 \pm O(\delta^2))^{\delta d \pm a} = (1 \pm O(\delta^3d)) = \left(1 \pm O\left(\frac{1}{\sqrt{d}}\right)\right).
        $$
        Combining all the error terms then yields the lemma.
\end{proof}
To apply the above, we frequently use the following lemma that yields concentration for the number of labels chosen by any $v \in [n]$.

\begin{lemma}\label{lem:labelconcentration}
    Given a subset $M \subseteq [n]$, after drawing the labels in $M$, it holds for any vertex $v \in [n]$ with probability at least $1 - n^{-\omega(1)}$ that $|M_v| = \delta d \pm \log(n)\sqrt{\delta d}$.
\end{lemma}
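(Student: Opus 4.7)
The plan is to prove this as a direct application of Bernstein's inequality (\Cref{lem:bernstein}), since the randomness that determines $|M_v|$ is a sum of $d$ independent Bernoulli$(\delta)$ indicators and is entirely independent of whatever labels are drawn for vertices in $M$. In particular, the conditioning on the labels inside $M$ is irrelevant: for any realization of those labels, $M_v$ is still produced by the independent coin flips for vertex $v$, so we may ignore that conditioning.

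First, I would write $|M_v| = \sum_{i=1}^{d} Y_i$ where $Y_i = \mathbf{1}[i \in M_v] - \delta$ is zero-mean with $|Y_i| \le 1$ and $\mathbb{E}[Y_i^2] = \delta(1-\delta) \le \delta$, hence $\sum_i \mathbb{E}[Y_i^2] \le \delta d$. Then, setting $t = \log(n)\sqrt{\delta d}$, Bernstein's inequality yields
\begin{align*}
    \Pr{\bigl| |M_v| - \delta d \bigr| \ge t}
    \le 2\exp\!\left( \frac{-t^2}{2(\delta d + t/3)} \right).
\end{align*}

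The next step is to check that $\delta d$ dominates $t$ so that the denominator is $(1+o(1))\cdot 2\delta d$. By \Cref{lem:sizeofdelta} we have $\delta d = \Theta(\sqrt{d\log((1-q)/(1-p))})$, and since the parameter assumptions of \Cref{sec:parameterassumptions} guarantee $d \ge n^{\alpha}$ and $p$ bounded away from $1$, we get $\delta d = \Omega(n^{\alpha/2})$, which is $\omega(\log^2 n)$. Consequently $t = \log(n)\sqrt{\delta d} = o(\delta d)$, and plugging into Bernstein yields the bound $2\exp(-\Omega(\log^2 n)) = n^{-\omega(1)}$, as required.

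I do not expect a real obstacle here: the only mild subtlety is verifying that $\delta d$ is large enough for the sub-Gaussian regime of Bernstein to kick in (i.e.\ $t$ does not exceed $\delta d$), and this follows immediately from the standing assumption $d = n^{\Omega(1)}$ and the asymptotic formula for $\delta$ given in \Cref{lem:sizeofdelta}. No upper tail vs.\ lower tail issue arises, since Bernstein bounds $|\sum Y_i|$ symmetrically.
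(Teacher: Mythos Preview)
Your proposal is correct and takes essentially the same approach as the paper: both apply Bernstein's inequality (\Cref{lem:bernstein}) with deviation $t = \log(n)\sqrt{\delta d}$, note that $t = o(\delta d)$ so the denominator is $O(\delta d)$, and conclude the bound is $2\exp(-\Omega(\log^2 n)) = n^{-\omega(1)}$. Your writeup is in fact slightly more careful than the paper's in justifying $\delta d = \omega(\log^2 n)$ via \Cref{lem:sizeofdelta} and the assumption $d \ge n^{\alpha}$.
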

\begin{proof}
To bound the number of labels per vertex, we note that the expected number of labels is $\delta d$, and the probability that the number of labels differs from this expectation by at least $t = \log(n) \sqrt{\delta d}$ is at most \begin{align*}
            \Pr{|M_v - \delta d| \ge t} \le 2 \exp\left( -\frac{t^2}{2 d\delta + \frac{2}{3}t} \right) \le 2 \exp\left( -\frac{\log^2(n) \delta d}{4 d\delta } \right) \le n^{-\omega(1)}
        \end{align*} if $n$ is large enough. A union bound over all $d \le n$ labels and $|M| \le n$ vertices shows that for all $v \in M$, we have $|M_v| = \delta d \pm \log(n)\sqrt{\delta d}$ with probability $\ge 1 - n^{-\omega(1)}$. 
\end{proof}

\subsection{Balancedness of bipartite subgraphs}\label{sec:balancedness}

A crucial property of dense RIGs that our algorithms rely on is that they contain many cuts with the following \emph{balancedness} property. It resembles \cite[Definition 4.4]{Kothari-STOC-2023} with the only difference being that we have two parameters $s$ and $r$ where $r$ makes a statement about the actual quality of the balancedness (like in \cite[Definition 4.4]{Kothari-STOC-2023} ) and $s$ controls the size of the left-sided set, which we need additionally since we do not get the same balancedness criteria for all $s$.

\begin{definition}[Balancedness]\label{def:balancedness}
    Let $H = (A \uplus B, E)$ be a bipartite graph. Given a set $S \subseteq A$ and a vertex $v \in B$, define \begin{align*}
        u_{S, p}(v) \coloneqq \prod_{u \in S} H_p(u, v) \text{ and } H_p(u, v) \coloneqq \begin{cases}
            \sqrt{\frac{1-p}{p}} &\text{if } \{u, v\} \in E(H)\\
            -\sqrt{\frac{p}{1-p}} &\text{otherwise.}
        \end{cases}
    \end{align*} We say that $H$ has \emph{$r$-fold balancedness} $\Delta$ if for all $S, R \subseteq A$ with $|S| = |R| = r$ and $|S \triangle R| \ge 3$, we have $\left|\sum_{v \in B} u_{S,p}(v)u_{R,p}(v)\right| \le \Delta$. 
\end{definition}
In the simplest case of $p = \frac{1}{2}$, $H_p(u, v)$ and $u_{S,p}(v)$ take values in $\{+1, -1\}$ and $H$ is balanced if the alternating sum $\sum_{v \in B} u_{S,p}(v)$ is sufficiently close to $0$ for all small $S \subseteq A$ (see also \Cref{fig:algorithm}b).

We introduce multiple criteria under which a bipartite subgraph of $G$ is balanced. All of them are in their core based on the following \Cref{lem:balancedness-core} stating how balancedness arises if the left side labels are fixed and meet certain criteria, while most of the right-sided vertices are yet to be uncovered. Before introducing said lemma, we must formalize some of the criteria the left sided labels must meet.

\begin{definition}\label{def:event-phase4}
    Given set $A \subseteq [n]$,
    let $\mathcal{E}(A, t, b)$ be the event that the following occurs.
    \begin{enumerate}
        \item For all $v \in A$, we have that $|M_v| \le \delta d \pm O(\log(n)\sqrt{\delta d})$.
        \item For all $2 \le r \le t$ and all $S, R \subseteq A$ with $|S| = |R| = r$, the number of labels $\ell \in [d]$ such that at least $2$ vertices in $S \cup R$ have chosen $\ell$ is at most $Ct\log(n)$ where $C > 0$ is an absolute constant. For any set $Q \subseteq A$, we call such labels\footnote{i.e. the labels appearing in at least two vertices in $Q$} ``duplicates'' and denote the set of such labels by $\mathcal{D}_{Q}$.
        \item For all $2 \le r \le t$ and all $S, R \subseteq A$ with $|S| = |R| = r$, the number of labels $\ell \in [d]$ such that at least $3$ vertices in $S \cup R$ have chosen $\ell$ is at most $b$. For any set $Q \subseteq A$, we call such labels\footnote{i.e. the labels appearing in at least three vertices in $Q$} ``bad labels'' and denote the set of such labels by $\mathcal{B}_{Q}$.
    \end{enumerate}
\end{definition}

To show that the above is likely, we start with the following lemma bounding the number of bad labels.
\begin{lemma}\label{lem:boubdbadlabels}
    Given any fixed $t \in \mathbb{N}$, then for any $b \ge \frac{4t}{\alpha}$, the probability that there are at least $b$ bad labels in any fixed $T \in \binom{[n]}{t}$ (i.e. labels that appear in at least three vertices in $T$) is $O(n^{-2t})$.
\end{lemma}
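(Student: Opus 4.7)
The statement is a moment/union-bound calculation. Fix $T \in \binom{[n]}{t}$ and, for each label $\ell \in [d]$, let $X_\ell$ be the indicator that $\ell$ is bad for $T$, i.e.\ that $\ell \in M_v$ for at least three $v \in T$. Since the events $\{\ell \in M_v\}$ are independent Bernoulli$(\delta)$ across both $v$ and $\ell$, the count of vertices in $T$ containing label $\ell$ is distributed as $\mathrm{Bin}(t,\delta)$, and the $X_\ell$ are independent across $\ell$. Hence
$$p_\star := \Pr[X_\ell = 1] \;\le\; \binom{t}{3}\delta^3.$$

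Next, I would control $B := \sum_{\ell \in [d]} X_\ell$ by the standard moment inequality
$$\Pr[B \ge b] \;\le\; \mathbb{E}\!\left[\binom{B}{b}\right] \;=\; \binom{d}{b}\, p_\star^{\,b} \;\le\; \left(\frac{e\, d\, t^3 \delta^3}{b}\right)^{\!b},$$
where the equality uses independence of the $X_\ell$ and the symmetry that all of them have the same marginal. This is essentially a union bound over $b$-subsets of labels being simultaneously bad.

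Third, I would insert the size bound on $\delta$ from \Cref{lem:sizeofdelta}. Since $p$ is bounded away from $1$ (and without loss of generality $q < p$), the factor $\log\!\big(\tfrac{1-q}{1-p}\big)$ is bounded by a constant $C_{p,q}$, so $\delta^2 \le C_{p,q}/d\cdot(1+o(1))$ and therefore $d\delta^3 \le C'/\sqrt{d}$. Combined with the assumption $d \ge n^\alpha$, this yields
$$\Pr[B \ge b] \;\le\; \left(\frac{C''\, t^3}{b}\right)^{\!b} n^{-\alpha b/2}.$$
Plugging in $b \ge 4t/\alpha$ drives the exponent to $n^{-2t}$, while the prefactor $\big(C'' t^3/b\big)^b$ is a constant independent of $n$ (since $t,\alpha,b$ are all constants), and is absorbed into the $O(\cdot)$. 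Note the same bound continues to hold for all larger $b$ since $B\ge b$ is monotone and the right-hand side is decreasing in $b$ once $b \ge e C'' t^3 / n^{\alpha/2}$, which holds for $n$ large.

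I do not expect any real obstacles: everything reduces to applying \Cref{lem:sizeofdelta}, exploiting the independence across labels (which comes directly from the construction of the $M_v$'s), and a one-line moment estimate. The only subtle point worth explicit mention is that the cubic power $\delta^3$ (rather than $\delta^2$) is what produces the $1/\sqrt{d} \le n^{-\alpha/2}$ saving per extra bad label, and it is this saving that, raised to the power $b = 4t/\alpha$, delivers the $n^{-2t}$ bound.
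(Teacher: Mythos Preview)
Your proposal is correct and follows essentially the same approach as the paper: both bound $p_\star \le \binom{t}{3}\delta^3 = O(d^{-3/2})$, then apply a union bound over $b$-subsets of labels (the paper writes this as $\sum_{i\ge b}\binom{d}{i}p_\star^i$, you use the equivalent factorial-moment form $\binom{d}{b}p_\star^b$), and finally plug in $d\ge n^\alpha$ and $b\ge 4t/\alpha$ to get $O(n^{-2t})$.
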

\begin{proof}
    Note that for fixed $T$, and a fixed label $\ell \in [d]$, the probability that at least three vertices in $S \cup R$ choose $\ell$ is at most 
        $$
            \binom{|T|}{3} \delta^3 = O\left( \left( \frac{1}{\sqrt{d}} \right)^3 \right).
        $$
        Hence, the probability that this happens for at least $a \in \mathbb{N}$ labels at the same time is at most 
        $$
            \sum_{i = a}^d \binom{d}{i} O\left(\frac{1}{\sqrt{d}}\right)^{3i} \le \sum_{i = a}^d O\left(\frac{1}{\sqrt{d}}\right)^{i} \le O\left(\frac{1}{\sqrt{d}}\right)^{a}.
        $$
        Since $d \ge n^{\alpha}$, the probability that $T$ contains at least $b$ bad labels is at most 
        $$
            O\left(\left(\frac{1}{\sqrt{d}}\right)^{b} \right) \le O \big( n^{- \frac{b\alpha}{2}} \big) \le O( n^{-2t} )
        $$ if $b \ge \frac{4t}{\alpha}$.
\end{proof}

It now follows that $\mathcal{E}(A, t, \frac{8t}{\alpha})$ occurs with high probability when choosing the left sided labels.
\begin{lemma}\label{lem:E-is-likely}
    Given a fixed set $A \subseteq [n]$ and a fixed constant $t$, the event $\mathcal{E}(A, t, \frac{8t}{\alpha})$ occurs with probability $1 - O(n^{-2t})$ over the draw of labels in $A$. 
\end{lemma}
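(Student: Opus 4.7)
The plan is to verify each of the three items in \Cref{def:event-phase4} separately, each with failure probability at most $O(n^{-2t})$, and then take a union bound. Item~(1) is an immediate invocation of \Cref{lem:labelconcentration}, which gives a much stronger $n^{-\omega(1)}$ bound for every $v \in A$, so after a union bound over the at most $n$ vertices of $A$, the event in (1) still fails only with probability $n^{-\omega(1)}$. The bulk of the work lies in (2) and (3).

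For item (3), the point is that we only care about the set $Q = S \cup R$, whose size is at most $2t$. So we first union-bound over all subsets $Q \subseteq A$ with $|Q| \le 2t$ (there are $O(n^{2t})$ such sets) and then apply \Cref{lem:boubdbadlabels} with the ``ambient'' set size $t' = 2t$. The lemma then says that the probability of seeing at least $b \ge 4t'/\alpha = 8t/\alpha$ bad labels in $Q$ is at most $O(n^{-2t'}) = O(n^{-4t})$. Summing over the $O(n^{2t})$ candidate sets $Q$ gives failure probability $O(n^{-2t})$, as required.

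For item (2), the strategy of \Cref{lem:boubdbadlabels} does not suffice: the probability that a single label $\ell$ is chosen by at least two vertices of a fixed $Q$ of size $\le 2t$ is only $O(\delta^2) = O(1/d)$, so summing against $\binom{d}{a}$ gives a trivial $O(1)^a$ bound. Instead I would use independence of the label choices: writing $X_\ell = \mathbb{1}\{\ell \in \mathcal{D}_Q\}$, the variables $X_1,\dots,X_d$ are independent Bernoullis with $\Pr{X_\ell = 1} \le \binom{|Q|}{2}\delta^2 = O(t^2/d)$, so $\mu := \E\left[\sum_\ell X_\ell\right] = O(t^2)$. Applying Bernstein's inequality (\Cref{lem:bernstein}) with deviation $s = Ct\log n$ gives, for $C$ a sufficiently large absolute constant, a tail bound of the form $\exp(-\Omega(Ct\log n)) = n^{-\Omega(Ct)}$, which can be made smaller than $n^{-4t}$. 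Union-bounding over the $O(n^{2t})$ sets $Q$ of size at most $2t$ then yields failure probability $O(n^{-2t})$.

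The main technical obstacle is the mismatch between the bound needed in item (2) and the very weak per-label probability $O(1/d)$: only by exploiting the \emph{independence across labels} (via Bernstein, rather than a union bound over labels as in \Cref{lem:boubdbadlabels}) can the count of duplicates be controlled sharply enough to beat the $n^{2t}$ union-bound factor over subsets of $A$. Combining the three failure probabilities and absorbing constants completes the proof.
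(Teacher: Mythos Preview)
Your proposal is correct and follows essentially the same approach as the paper's proof: item~(1) via \Cref{lem:labelconcentration}, item~(2) via Bernstein's inequality on the independent indicators $X_\ell$ followed by a union bound over the $O(n^{2t})$ candidate sets, and item~(3) via \Cref{lem:boubdbadlabels} applied to sets of size $\le 2t$ with the rescaled threshold $b = 8t/\alpha$ and a union bound. Your explicit observation that only $Q = S \cup R$ matters (and that Bernstein is genuinely needed for item~(2) because the per-label probability $O(1/d)$ is too weak for a union-bound-over-labels argument) is exactly the content of the paper's argument.
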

\begin{proof}
 \textbf{Number of labels} 
Using \Cref{lem:labelconcentration} it immediately follows that $|M_v| = \delta d \pm \log(n)\sqrt{\delta d}$ with probability $1 -n^{-\omega(1)}$.

        \textbf{Duplicate labels}
        Given fixed $S, R \subseteq A$ with $|S| = |R| = r$, what is the probability that a label $\ell \in [d]$ is a duplicate in $S \cup R$? Certainly, using a union bound, at most $|S \cup R|^2 \delta^2 \le 4t^2\delta^2$, which is $O(t^2/d)$ by \Cref{lem:sizeofdelta}. Hence, the expected number of duplicates in $S$ is $O(t^2)$ and by an application of Bernstein's inequality (\Cref{lem:bernstein}), the probability that there are more than $\Expected{|\mathcal{D}_{S \cup R}|} + z$ duplicate labels in $S \cup R$ is at most \begin{align*}
            \Pr{|\mathcal{D}_{S \cup R}| \ge \Expected{|\mathcal{D}_{S \cup R}|} + z} \le 2\exp\left( \frac{-z^2}{8t^2\delta^2d + \frac{2}{3}z} \right).
        \end{align*} Setting $z = Ct\log(n)$ for a suitable constant $C > 0$ yields that the above is at most $n^{-4t}$. A union bound over all $S, R \subseteq A$ with $|S| = |R| \le t$ then yields that the number of duplicate labels is bounded by $Ct\log(n)$ for all such pairs $S,  R$ with probability $1 - n^{-2t}$.

        \textbf{Bad labels}
        Note that for any fixed $S, R$, we get from \Cref{lem:boubdbadlabels} and a union bound that the probability that there are at least $b \ge \frac{8t}{\alpha}$ bad labels in some $S \cup R$ is at most \begin{align*}
            O\left(n^{2t} \left(\frac{1}{\sqrt{d}}\right)^{b} \right) \le O \left( n^{2t - \frac{b\alpha}{2}} \right) \le O( n^{-2t} ).
        \end{align*}
\end{proof}

Given this, our core balancedness lemma is stated as follows.
\begin{lemma}[Balancedness of neighbourhood-restricted bipartite subgraphs with fixed labels on the left]\label{lem:balancedness-core}
    Fix three integers $r, t, \eta \in \mathbb{N}, r\le t$.
    Assume we are given disjoint subsets $A, B, T$ of $[n]$ such that $|T| = t$. We further make the following assumptions:
    \begin{enumerate}
    \setlength\itemsep{.01em}  
        \item All labels in $A$ and $T$ have been revealed, while for vertices in $B$, only a subset $L \subseteq [d]$ of labels has been revealed, with $|L| = O(1)$.
        \item No vertex in $A$ has chosen a label in $L$.
        \item For every $\ell \in [d] \setminus L$, there are at most $\eta$ vertices in $T$ that have chosen $\ell$.
        \item The labels in $A$ and $T$ are such that $\mathcal{E}(A, t, b)$ and $\mathcal{E}(T, t, b)$ are met, respectively.
        \item For every $S, R \subseteq A$ with $|S| = |R| \le r$, the number of duplicates in $ S \cup R \cup T$ is at most $2\log(n)^2$.
        \item For every $\ell \in L$, the number of vertices in $B$ that have chosen $\ell$ is at most $2k$.
    \end{enumerate}
    Then, after revealing the missing labels in $B$, the graph $H = G[ A \uplus ( B \cap N_G(T) ) ] $ has $r$-fold balancedness 
    $$
    (1+o(1))b \maxbal{r} \cdot  p^{t-\eta}k + o(k)
    $$
    for every $3 \le r \le t$ with probability $1 - n^{-2t}$.
\end{lemma}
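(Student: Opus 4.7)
The plan is to fix a pair $S, R \subseteq A$ with $|S|=|R|=r$ and $|S\triangle R|\geq 3$, bound
$\Sigma_{S,R}:=\sum_{v\in B\cap N_G(T)}u_{S,p}(v)u_{R,p}(v)$
with failure probability $\leq n^{-4t}$, and union-bound over the $\leq n^{2r}\leq n^{2t}$ pairs and the $\leq t$ values of $r$ to get overall failure probability $\leq n^{-2t}$. The key structural observation is that, writing $\Sigma_{S,R}=\sum_{v\in B}Y_v$ with $Y_v := u_{S,p}(v)u_{R,p}(v)\mathbb{1}[v \in N_G(T)]$, the $Y_v$'s are mutually independent: each depends only on the random labels $M_v\cap([d]\setminus L)$ and on the noise edges incident to $v$, both of which are drawn independently across $v\in B$. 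Moreover each $|Y_v|$ is deterministically bounded by $\Gamma_r := \maxbalcomp{r}$. The result will then follow by identifying the dominant contribution and applying Bernstein's inequality (\Cref{lem:bernstein}).

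To produce the leading term, I would partition $B = B_\mathrm{forced}\sqcup B_\mathrm{free}$ with $B_\mathrm{forced} := \{v \in B : M_v\cap(\mathcal{B}_T \cup L)\neq\emptyset\}$. By Assumption 6 for $\ell\in L$ and standard concentration of $\mathrm{Bin}(|B|,\delta)$ for each of the $\leq b$ labels in $\mathcal{B}_T\setminus L$ once they are revealed on $B$, one has $|B_\mathrm{forced}|\leq(1+o(1))\,b\,k$ with probability $1 - n^{-\omega(1)}$ (using $|L|=O(1)$). For $v \in B_\mathrm{forced}$, a single shared label $\ell \in \mathcal{B}_T\cup L$ makes $v$ adjacent to at most $\eta$ vertices of $T$ by Assumption 3, so $v$ still must be adjacent to the remaining $\geq t-\eta$ vertices; since $\ell\in M_v$ barely perturbs $|M_v|$, \Cref{lem:pconcentration} and conditional independence of edges given $M_v$ give $\Pr{v\in N_G(T)\mid v\in B_\mathrm{forced}}\leq(1+o(1))\,p^{t-\eta}$. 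Combined with $|Y_v|\leq \Gamma_r$, this yields $|\mathbb{E}[\sum_{v\in B_\mathrm{forced}} Y_v]|\leq(1+o(1))\,b\,\Gamma_r\,p^{t-\eta}\,k$, and Bernstein applied to the $\Theta(bk)$ independent summands of size at most $\Gamma_r$ adds only an $o(k)$ fluctuation with probability $\geq 1-n^{-4t}$.

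The harder step, and the main obstacle, is to show $|\sum_{v\in B_\mathrm{free}}Y_v|=o(k)$ with the same tail. For $v\in B_\mathrm{free}$, condition on $M_v$ and use $A\cap T=\emptyset$ together with independence of noise edges to factor
$\mathbb{E}[Y_v\mid M_v] = \Pr{v\in N_G(T)\mid M_v}\cdot\prod_{u\in S\cap R}\mathbb{E}[H_p(u,v)^2\mid M_v]\cdot\prod_{u\in S\triangle R}\mathbb{E}[H_p(u,v)\mid M_v]$.
Each factor in the $S\triangle R$ product equals $\sqrt{(1-p)/p}$ if $M_v\cap M_u\neq\emptyset$ and $(q-p)/\sqrt{p(1-p)}$ otherwise, and the marginal identity $\mathbb{E}[H_p(u,v)]=0$ produces systematic cancellation upon integrating over $M_v$. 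The residuals are governed by correlations among the indicators $\chi_u:=\mathbb{1}[M_v\cap M_u\neq\emptyset]$ for $u\in S\cup R\cup T$; because $|S\triangle R|\geq 3$, every non-cancelling contribution requires at least a \emph{triple} correlation, which under $\mathcal{E}(A,t,b)\cap\mathcal{E}(T,t,b)$ and Assumption 5 is supported on only the $\leq b$ bad labels plus the $\leq 2\log(n)^2$ duplicates inside $S\cup R\cup T$. Quantifying these residuals precisely, together with a matching second-moment estimate, lets Bernstein's inequality close the argument at the desired $n^{-4t}$ tail. The delicate part is exactly this bookkeeping: naive pairwise cancellations would leave covariance-scale residuals, so it is essential that the ``duplicate'' and ``bad-label'' bounds from $\mathcal{E}$ be leveraged to control the triple-and-higher correlations tightly enough to push the free contribution below $o(k)$.
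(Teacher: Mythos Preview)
Your high-level machinery is exactly right and matches the paper: the $Y_v$'s are independent given the revealed data, one fixes $S,R$ and applies Bernstein, then union-bounds over the $\le n^{2t}$ pairs. The gap is in your decomposition. You split on $\mathcal{B}_T\cup L$, but the main term $(1+o(1))\,b\,\Gamma_r\,p^{t-\eta}k$ does \emph{not} originate there: it comes from vertices $v$ that pick a label in $\mathcal{B}_{S\cup R}$, the labels that appear in at least three vertices of $S\cup R$. For such $v$ at least three factors $H_p(u,v)$ in $u_{S,p}(v)u_{R,p}(v)$ are deterministically $\sqrt{(1-p)/p}$, so the product is genuinely biased and the trivial bound $|u_{S,p}u_{R,p}|\le\Gamma_r$ is essentially tight. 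There is no reason for $\mathcal{B}_{S\cup R}\subseteq\mathcal{B}_T\cup L$, so these vertices sit in your $B_{\mathrm{free}}$ and contribute $\Theta(b\,\Gamma_r\,p^{t-\eta}k)$ to that sum, contradicting your target $|\sum_{v\in B_{\mathrm{free}}}Y_v|=o(k)$. Conversely, a vertex $v$ with $M_v\cap\mathcal{B}_T\neq\emptyset$ has elevated $\Pr{v\in N_G(T)}$ but, since no vertex of $A$ chose a label in $L$ and $\mathcal{B}_T$ need not intersect $\bigcup_{u\in S\cup R}M_u$ much, the product $u_{S,p}(v)u_{R,p}(v)$ is still essentially mean-zero; your use of the trivial bound on $B_{\mathrm{forced}}$ throws this cancellation away. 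There is also a smaller issue: your claim ``a single shared label $\ell\in\mathcal{B}_T\cup L$ makes $v$ adjacent to at most $\eta$ vertices of $T$ by Assumption~3'' fails for $\ell\in L$, since Assumption~3 only covers $[d]\setminus L$; a label in $L$ may sit in all $t$ vertices of $T$, so $\Pr{v\in N_G(T)\mid v\in B_{\mathrm{forced}}}\le p^{t-\eta}$ is unjustified.

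The paper's fix is to split on $\mathcal{B}_{S\cup R}$ instead. For the ``bad'' part $U_{S,R}^{(\mathrm{bad})}$ (vertices with $M_v\cap\mathcal{B}_{S\cup R}\neq\emptyset$) one bounds $|u_{S,p}u_{R,p}|\le\Gamma_r$ trivially and controls the number of such $v$ in $N_G(T)$ by showing $|S_\ell\cap N_G(T)|\le(1+o(1))p^{t-\eta}k$ for each of the $\le b$ labels $\ell\in\mathcal{B}_{S\cup R}$; here one uses that $\ell\notin L$ (Assumption~2), that $\ell$ appears in $\le\eta$ vertices of $T$ (Assumption~3), and that $v$ rarely picks a duplicate label of $T$. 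Vertices in $B_L$ are handled inside this estimate via $p_L\le\Pr{v\in S_\ell}=\delta$. For the ``good'' part $U_{S,R}^{(\mathrm{good})}$ one does exactly the cancellation you describe, but organized as a case split on how many labels of $\mathcal{D}_{S\cup R\cup T}$ the vertex $v$ picked ($0$, $1$, or $\ge 2$): with zero duplicates all edges to $S\cup R\cup T$ are conditionally independent and three centered factors from $S\triangle R$ give $O(\gamma^3)$; with one duplicate at most two edges to $S\cup R$ are fixed so at least one centered factor from $S\triangle R$ survives, weighted by $\Pr{\mathcal{E}_{(1)}}=O(\log^2(n)\,\delta)$; the $\ge 2$ case has probability $O(\log^4(n)\,\delta^2)$ and the trivial bound suffices. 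Summing gives $\mathbb{E}[U_{S,R}^{(\mathrm{good})}]=o(k)$, and Bernstein with deviation $O(\sqrt{n\log n})=o(k)$ finishes. Your ``triple-correlation'' intuition is the right picture for $U_{S,R}^{(\mathrm{good})}$; it just has to be applied after the $\mathcal{B}_{S\cup R}$-split, not the $\mathcal{B}_T$-split.
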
 
\begin{proof}
    Fix an arbitrary pair $S, R \subseteq A$ with $|S| = |R| = r$. We wish to bound $\USR \coloneqq \sum_{v\in B} \mathds{1}(v \in N_G(T)) \cdot u_{S,p}(v)u_{R,p}(v)$. At this point, recall from \Cref{def:event-phase4} that $\mathcal{B}_{S \cup R}$ denotes the set of ``bad'' labels, i.e., those that appear in at least $3$ vertices in $S \cup R$. With this in mind, we split $\USR$ into the contribution of vertices in $B$ that have chosen a bad label and those that have not. 
    \begin{align*}
        \USR = \underbrace{\sum_{\substack{v \in B \\ M_v \cap \mathcal{B}_{S \cup R} \neq \emptyset}} \mathds{1}(v \in N_G(T)) u_{S,p}(v)u_{R,p}(v) }_{\eqqcolon \USRbad} + \underbrace{\sum_{\substack{v \in B \\ M_v \cap \mathcal{B}_{S \cup R} = \emptyset}} \mathds{1}(v \in N_G(T)) u_{S,p}(v)u_{R,p}(v) }_{\eqqcolon \USRgood}.
    \end{align*}
    Now, we show separately that both sums are small with high probability.

    \paragraph{Estimating $\USRbad$.}
    For the first sum, we use that the number of bad labels in $T$ is small due to the occurrence of $\mathcal{E}(T, t, b)$ combined with the fact that for every label $\ell \in [d] \setminus L$, the number of vertices that have chosen $\ell$ while still being in $N_G(T)$ drops exponentially in $t = |T|$. Precisely, using that $|u_{S,p}(v)u_{R,p}(v)| \le \maxbal{r}$ by definition of $u_{S,p}, u_{R,p}$ (\Cref{def:balancedness}), we bound \begin{align*}
        |\USRbad| \le \maxbal{r} \sum_{\ell \in \mathcal{B}_{S \cup R}} | S_{\ell} \cap N_G(T) |
    \end{align*} and proceed by estimating $| S_{\ell} \cap N_G(T) |$ given a fixed $\ell \in \mathcal{B}_{S \cup R}$. To this end, we consider the probability that a given $v \in B$ is in $S_{\ell} \cap N_G(T)$. 
    
    To this end, denote by $B_L$ the set of all $v \in B$ that have chosen a label in $L$, by $p_L$ a bound on the probability that a $v \in B_L$ is in $S_{\ell} \cap N_G(T)$, and by $p_{\overline{L}}$ a bound on the probability that a $v \in B \setminus B_L$ is in $S_{\ell} \cap N_G(T)$. Then,
    $$
        \Expected{ | S_{\ell} \cap N_G(T) | } \le |B_L| p_L + |B \setminus B_L|  p_{\overline{L}}.
    $$ 
    To estimate the first term, we use $p_L \le \Pr{v \in S_\ell} \le \delta$, and $|B_L| \le 2|L|k$ (which holds by assumption 6). Thus, $\Expected{ | S_{\ell} \cap N_G(T) | } \le 2\delta k |L| + |B \setminus B_L|  p_{\overline{L}}$, so we can focus on estimating $p_{\overline{L}}$.
    
    To this end, observe that for any $v \in B \setminus B_L$, we have $\Prnop{v \in S_{\ell} \cap N_G(T)} = \delta \Prnop{v \in N_G(T) \mid v \in S_{\ell}}$. When conditioning on $v \in S_{\ell}$, this determines at most $\eta$ edges between $v$ and $T$ because of assumption 3. To estimate the probability that the remaining edges between $v$ and $T$ are also present, we use the fact that the number of duplicate labels $\mathcal{D}_T$ is small so it is unlikely for $v$ to choose such a label and the contribution of this event to $p_{\overline{L}}$ is negligible. If no label in $\mathcal{D}_T$ is chosen, then (over the remaining randomness) the remaining edges between $v$ and $T$ are independent since each label in $[d] \setminus \mathcal{D}_T$ is chosen by at most one vertex in $T$.
    
    Formally, we distinguish the following two cases. In the first case, $v$ choses a label in $\mathcal{D}_T$. By the bound on $|\mathcal{D}_T|$ due to the occurrence of $\mathcal{E}(A, t, b)$, this happens with probability at most $Ct \log(n) \delta$. In the second case, no label in $\mathcal{D}_T$ is chosen, and thus (using \Cref{lem:pconcentration}), the probability that $v \sim T$ is at most $\left(p\bigl(1 + O\bigl(\log(n) d^{-1/4} \bigr) \bigr)\right)^{t - \eta}$. Both cases together yield the bound
    $$
        \Prnop{v \in N_G(T) \mid v \in S_{\ell}} \le Ct\log(n)\delta + \Bigl(p\bigl(1 + O\bigl(\log(n) d^{-1/4} \bigr) \bigr)\Bigr)^{t - \eta} \le (1+o(1)) p^{t-\eta} + o(1).
    $$
    In total (recalling that $k = n\delta$), this yields that 
    $$
       \Expected{ | S_{\ell} \cap N_G(T) | } \le 2\delta k |L| +  n\delta \bigl( (1+o(1)) p^{t-\eta} + o(1) \bigr) \le (1+o(1)) p^{t-\eta}k + o(k).
    $$
    Further, we get from a Chernoff/Bernstein bound and a subsequent union bound over all $\ell \in [d] \setminus L$ that $|S_{\ell} \cap N_G(T)| \le (1+o(1)) p^{t-\eta} k + o(k)$ holds for all such labels with probability $1 - n^{-\omega(1)}$. 
    
    Finally, since $|\mathcal{B}_{S\cup R}| \le b$ by the occurrence of $\mathcal{E}(A, t, b)$, this implies that \begin{align*}
        |\USRbad| &\le \maxbal{r} \sum_{\ell \in \mathcal{B}_{S \cup R}} | S_{\ell} \cap N_G(T) |\\ &\le (1+o(1))b \maxbal{r} \cdot  p^{t-\eta}k + o(k)
    \end{align*}
    with probability $1 - n^{-\omega(1)}$. Using a union bound, the above holds for all $S, R \subseteq A$ with $|S| = |R| = r$ with high probability. 

    \paragraph{Estimating $\USRgood$}
    To estimate $\USRgood$, we afford a union bound over all $S, R \subseteq A$ with $|S| = |R| = r$. To this end, first consider an arbitrary fixed pair of sets $S, R \subseteq A$ with $|S| = |R| = r$. We note that 
    \begin{align*}
        \Expectednop{\USRgood} = \sum_{v \in B}  \Expected{\mathds{1} (M_v \cap \mathcal{B}_{S \cup R} = \emptyset) \cdot \mathds{1}(v \in N_G(T)) \cdot u_{S,p}(v)u_{R,p}(v)}.
    \end{align*}
    To estimate the above expectation, we fix a $v \in B$ and condition on the event that $M_v \cap \mathcal{B}_{S \cup R} = \emptyset$ since the case $M_v \cap \mathcal{B}_{S \cup R} \neq \emptyset$ is covered by our bounds on $\USRbad$. In this conditional probability space $\Omega$, we first draw the labels in $\mathcal{D}_{S \cup R \cup T}$ for all vertices in $B$, and afterwards draw all the remaining labels for vertices in $B$. Regarding the first step, denote by $\mathcal{E}_{(0)}, \mathcal{E}_{(1)}$, and $\mathcal{E}_{(> 1)}$ the event that $v$ draws zero, exactly one, or strictly more than one label from $\mathcal{D}_{S \cup R \cup T}$, respectively. Using the law of total expectation, we bound \begin{align*}
        \Expectedsubnop{\Omega}{\mathds{1}(v \in N_G(T)) \cdot u_{S,p}(v)u_{R,p}(v) } &\le \maxbal{r} \Prnop{\mathcal{E}_{(>1)}} \\
        & \hspace{1cm}+ \Prnop{\mathcal{E}_{(1)}}\Expectedsubnop{\Omega}{ u_{S,p}(v)u_{R,p}(v) \mid \mathcal{E}_{(1)} \cap \{ v \in N_G(T) \} }\\
        &\hspace{1cm}+ \Expectedsubnop{\Omega}{ u_{S,p}(v)u_{R,p}(v) \mid \mathcal{E}_{(0)} \cap \{ v \in N_G(T) \} }.
    \end{align*} Since $|\mathcal{D}_{S \cup R \cup T}| \le 2\log(n)^2$ by assumption 4, we get that 
    $$
        \Prnop{\mathcal{E}_{(>1)}}  \le \binom{|\mathcal{D}_{S \cup R \cup T}|}{2} \delta^2 \le O\left( \log(n)^4 \delta^2 \right)  \text{ and } \Prnop{\mathcal{E}_{(1)}} = O\left( \log(n)^2\delta \right).
    $$
    Since $n\Prnop{\mathcal{E}_{(>1)}} = o(k)$, that is, smaller than our desired $\Delta_s$, we can essentially restrict our attention to the other two cases and assume that at most one label in $\mathcal{D}_{S \cup R \cup T}$ and no label in $\mathcal{B}_{S \cup R}$ was chosen by $v$. By construction, all labels not in $\mathcal{D}_{S \cup R \cup T}$ are chosen by at most one vertex in $S \cup R \cup T$, so after revealing the labels in $\mathcal{D}_{S \cup R \cup T}$, the random variables $\mathds{1}(v \in N_G(T))$ and $u_{S,p}(v)u_{R,p}(v)$ are independent, and every edge between $v$ and $S \cup T$ appears independently. By \Cref{lem:pconcentration}, for each such edge $e = \{u, v\}$, this happens with probability $p \pm \gamma$, for some $\gamma = O(\log(n) \delta \sqrt{\delta d})$. Hence, $$
        \Expected{H_p(u, v)} \le \sqrt{\frac{p}{1-p}} (1 - p +\gamma) - \sqrt{\frac{1 - p}{p}}(p - \gamma) \le 2\gamma \sqrt{\maxbal{}}.
    $$ Now, assume that $\mathcal{E}_{(0)}$ occurs after drawing the labels in $\mathcal{D}_{S \cup R \cup T}$. Since $|S \triangle R| \ge 3$, there are at least three edges between $S \triangle R$ and $v$, whose contribution to $\Expectedsubnop{\Omega}{ u_{S,p}(v)u_{R,p}(v) \mid \mathcal{E}_{(0)} \cap \{ v \in N_G(T) \} }$ is $O\bigl(\log(n) \delta\sqrt{\delta d} \bigr)^3 \sqrt{\maxbalcomp{3}}$. All remaining edges have a contribution of at most $\sqrt{\maxbalcomp{2r - 3}}$. In total, 
    \begin{align*}
        &\Expectedsubnop{\Omega}{ u_{S,p}(v)u_{R,p}(v) \mid \mathcal{E}_{(0)} \cap \{ v \in N_G(T) \} } \\&\hspace{3cm} \le O\bigl(\log(n) \delta\sqrt{\delta d} \bigr)^3 \sqrt{\maxbal{3}} \sqrt{\maxbal{2r - 3}} 
        \\&\hspace{3cm}= O \bigl( \log(n)^3 \delta^3 (\delta d)^{3/2} \bigr) \maxbal{r}.
    \end{align*}
    Moreover, note that if $\mathcal{E}_{(1)}$ occurs, this determines at most two edges between $v$ and $S \cup R$. Since $|S \triangle R| \ge 3$, after this happens, there is still some ``randomness'' left for at least one further edge in $S \triangle R$, which is now (i.e. after revealing the labels of $v$ in $\mathcal{D}_{S \cup R \cup T}$) independent of $\mathds{1}(v \in N_G(T))$. Therefore, using the same argument as before,
    \begin{align*}
        &\Expectedsubnop{\Omega}{ u_{S,p}(v)u_{R,p}(v) \mid \mathcal{E}_{(1)} \cap \{ v \in N_G(T) \} }
        \\&\hspace{3cm} \le O\bigl(\log(n) \delta\sqrt{\delta d} \bigr) \sqrt{\maxbal{}} \sqrt{\maxbal{2r-1}} \\&\hspace{3cm} = O\bigl(\log(n) \delta\sqrt{\delta d} \bigr) \maxbal{r}. 
    \end{align*}
    In total, this yields that 
    \begin{align*}
        \Expectednop{\USRgood} \le \maxbal{r} \left( O\left( \log(n)^4 n\delta^2 \right)  + O\bigl( \log(n)^3 n\delta^2\sqrt{\delta d} \bigr) + O\bigl(\log(n)^3 n\delta^3 (\delta d)^{3/2}\bigr) \right).
    \end{align*}
    Using \Cref{lem:sizeofdelta} and the assumption $d \ge n^{\alpha}$, it follows that $\Expectednop{\USRgood} \le o(k) ((1-p)/p)^r$.
    Now, using a Bernstein bound as in \Cref{lem:bernstein} and setting $z = C \sqrt{n\log(n)}$ for $C$ sufficiently large, we get that \begin{align*}
        &\Pr{|\USRgood| - \Expectednop{\USRgood}| \ge z } \le \\
        &\hspace{2cm} 2 \exp \left( -z^2 / \left(8\maxbal{2r}n + \frac{4}{3} \maxbal{r}z \right) \right)\le n^{-2t - 2r}
    \end{align*}
    Using a union bound over all $S, R \subseteq A$ with $r$ vertices yields that for all such $S, R$, we have \begin{align*}
        \USRgood = \bigl( o(k) + O(\sqrt{n\log(n)}) \bigr) \maxbal{r} = o(k) \maxbal{r}
    \end{align*} since $k \gg \sqrt{n\log(n)}$, and this holds with probability $1 - n^{-2t}$.

    \paragraph{Plugging it together} In total, we have shown that \begin{align*}
        |\USR| \le |\USRgood| + |\USRbad| &\le o(k) \maxbal{r} + (1+o(1))b \maxbal{r} \cdot  p^{t-\eta}k + o(k) \\
        &\le (1+o(1))b \maxbal{r} \cdot  p^{t-\eta}k + o(k)
    \end{align*} holds for all $S, R \subseteq A$ with $|S| = |R| = r$ and $|S \triangle R| \ge 3$ with probability $1 - n^{-2t}$, as desired.
\end{proof}

\subsection{When do balanced subgraphs appear?}

\newcommand{\Gi}{G[A_i \uplus (B_i \cap \neigh{T})]}

After having proved \Cref{lem:balancedness-core}, it remains to give a criterion for when balanced subgraphs actually appear in a $G \sim \RIGone$. We formalize this in \Cref{lem:general-balancedness-generalized} which on a high level states the following. If we fix a $T \in \binom{[n]}{t}$ and use a procedure that constructs a bipartite sub-graph $G' = G[A \uplus B]$ of $G$ while only having access to the information of whether a vertex has chosen a label in a set $L \subseteq [d]$ or not, then $G[A \uplus (B \cap \neigh{T})]$ has good balancedness properties if no vertex in $A$ has chosen a label in $L$, and not too many vertices in $T$ have chosen any given $\ell \in [d] \setminus L$.

As a natural example consider the case where $L$ only contains a single label $\ell$. Then we can think of the procedure described above as some algorithm that outputs a bipartition $A, B$ such that $A \cap S_\ell = \emptyset$. If every label $\ell ' \in [d] \setminus \{\ell\}$ now appears in only a limited number of vertices in $T$, then $G[A \uplus (B \cap \neigh{T})]$ has good balancedness. 

Formally, we represent a procedure for choosing a such bipartition as an algorithm $\textsc{Alg}$ that takes as input two disjoint sets ($S_\ell$, and $[n] \setminus S_\ell$) and outputs a bipartition $A \uplus B$ such that $S_\ell \cap A = \emptyset$ or $S_\ell \cap B = \emptyset$. We write $\textsc{Alg}(S_\ell, [n] \setminus S_\ell)$ to denote this bipartition $A \uplus B$.

\begin{definition}[Procedures for Choosing a Bipartition]\label{def:procedure-bipartition}
    Let $m \in \mathbb{N}$, fix a subset $M \subseteq [n]$, and let $\{ \textsc{Alg}_i \}_{i \in [m]}$ be any collection of (possibly randomized) algorithms that each on the input of a set $S \subseteq [n]$ and $[n]$ output a bipartition of $M$. Denote the bipartition computed by  $\textsc{Alg}_i$ by $A_i \uplus B_i$ or alternatively $\textsc{Alg}_i(S, [n])$.
    If $A_i \cap S = \emptyset$ for all $i\in [m]$, we call the collection $\{ \textsc{Alg}_i \}_{i \in [m]}$ a \emph{valid} collection of procedures for choosing a bipartiton.
\end{definition}

The following lemma states that given any fixed collection of such algorithms $\{ \textsc{Alg}_i \}_{i \in [m]}$ and some assignment from $t$-tuples $T \in \binom{[n]}{t}$ to sets $L(T) \subseteq [d]$, it holds w.h.p. over the draw of $G \sim \RIGone$, that the bipartite graphs $\{G[A_i \uplus (B_i \cap \neigh{T})]\}_{i \in [m]}$ are all simultaneously balanced for all $t$-tuples $T$ such that any label $\ell \in [d] \setminus L(T)$ only appears in at most $\eta$ vertices of $T$.

\begin{lemma}[General Criterion for Balancedness of sub-graphs of $G \sim \RIGone$]\label{lem:general-balancedness-generalized}
    Let $t, \eta \in \mathbb{N}, t \ge 3$ be fixed constants and let further $\{ \textsc{Alg}_i \}_{i \in [m]}$ be a valid collection of procedures for choosing a bipartition according to \Cref{def:procedure-bipartition} for some $m \le n$. Moreover, associate to each $T \in \binom{[n]}{t}$ a set $L(T) \subseteq [d]$, and denote by $S(T) \coloneqq \bigcup_{\ell \in L(T)} S_\ell$. For a given $T$ and $i \in [m]$, denote by $A_i \uplus B_i$ the bipartition output by $\textsc{Alg}(S(T), [n])$.
    
    Then, with high probability over the draw of $G \sim \RIGone$ and the randomness in $\{ \textsc{Alg}_i \}_{i \in [m]}$, the following holds. For all $t$-tuples $T \in \binom{[n]}{t}$ such that no label in $[d] \setminus L(T)$ appears in more than $\eta$ vertices of $T$, the bipartite graphs $\Gi$ have $r$-fold balancedness 
    $$
        \frac{16p^{-\eta} }{\alpha} \maxbal{r}  \cdot tp^{t}k + o(k)
    $$
    for all $i \in [m]$ and $3 \le r \le t$. 
\end{lemma}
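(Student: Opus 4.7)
The plan is to apply the core balancedness lemma \Cref{lem:balancedness-core} once for each pair $(T, i)$ and then union bound over all such pairs. The bulk of the work lies in arranging the probability space so that the six hypotheses of the core lemma are satisfied simultaneously.

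First I would establish a small collection of high-probability \emph{clean events} that hold jointly over the randomness of the labels, uniformly for all relevant $T$ and (random) $A_i$: (1) $\mathcal{E}([n], t, b)$ with $b = 8t/\alpha$, which holds with probability $1 - O(n^{-2t})$ by \Cref{lem:E-is-likely} applied to $A = [n]$ and immediately yields $\mathcal{E}(A_i, t, b)$ and $\mathcal{E}(T, t, b)$ for every $A_i \subseteq [n]$ and every $T \in \binom{[n]}{t}$, since the three conditions defining $\mathcal{E}$ are monotone under taking subsets; (2) $|S_\ell| \le 2k$ for every $\ell \in [d]$, via a Bernstein bound (\Cref{lem:bernstein}) and a union bound over $\ell$; (3) for every $T \in \binom{[n]}{t}$ and every $S, R \subseteq [n]$ with $|S|, |R| \le t$, the number of duplicate labels in $S \cup R \cup T$ is at most $2 \log(n)^2$, which follows from the same Bernstein-style argument as the duplicate-label bound in the proof of \Cref{lem:E-is-likely} together with a union bound over the at most $n^{3t}$ such triples. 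Under the lemma's hypothesis, taking $L(T)$ equal to the set of labels appearing in at least $\eta + 1$ vertices of $T$ gives $|L(T)| = O(1)$: for $\eta \ge 2$ this is a direct consequence of $\mathcal{E}(T, t, b)$, and the cases $\eta \in \{0,1\}$ can be handled by a separate easy bound.

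Next, for each fixed pair $(T, i)$ I would reveal the labels in the following staged order: (a) reveal $M_v \cap L(T)$ for every $v \in [n]$, determining $S(T) = \bigcup_{\ell \in L(T)} S_\ell$; (b) reveal the full label sets $M_v$ for $v \in T$; (c) execute $\textsc{Alg}_i(S(T), [n])$, producing $A_i \uplus B_i$ with $A_i \cap S(T) = \emptyset$ by validity of $\textsc{Alg}_i$; (d) reveal the remaining labels $M_v \setminus L(T)$ for $v \in A_i$, keeping the labels $M_v \setminus L(T)$ for $v \in B_i$ unrevealed. After stage (d), all six hypotheses of \Cref{lem:balancedness-core} are met: condition (1) holds because $|L(T)| = O(1)$; condition (2) follows from validity of $\textsc{Alg}_i$, which guarantees $A_i \cap S(T) = \emptyset$ so that no vertex in $A_i$ has chosen a label in $L(T)$; condition (3) is exactly the hypothesis on $T$; and conditions (4)--(6) follow from the three clean events above. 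The core lemma then yields that $\Gi$ has $r$-fold balancedness
\[
(1+o(1))\, b \maxbal{r} \cdot p^{t-\eta} k + o(k) \;\le\; \frac{16 p^{-\eta}}{\alpha} \maxbal{r} \cdot t p^{t} k + o(k)
\]
with probability at least $1 - n^{-2t}$ over the remaining randomness on the $B_i$-labels.

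Finally I would union bound over the at most $n^t$ valid choices of $T$, the at most $m \le n$ choices of $i$, and the constantly many choices of $3 \le r \le t$, yielding total failure probability $O(n^{t+1} \cdot n^{-2t}) = O(n^{1-t}) = o(1)$ for $t \ge 3$. The main obstacle will be the careful conditioning argument: the $1 - n^{-2t}$ guarantee of \Cref{lem:balancedness-core} is taken over the remaining randomness on $B_i$, whereas we wish to invoke it after conditioning on the three clean events, which constrain \emph{all} labels. Since each of those events holds with probability $1 - o(1)$, conditioning perturbs the law of the unrevealed $B_i$-labels only negligibly, and the Bernstein-type bounds inside the core lemma survive on the conditional measure at the cost of an additive $o(1)$ term in the failure probability. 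A secondary subtlety is that $A_i$ is itself a random set determined by stages (a)--(c), so clean event (3) must be quantified uniformly over all subsets of $[n]$ rather than for a fixed $A_i$, which is exactly how we phrased it.
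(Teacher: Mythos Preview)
Your plan tracks the paper's proof closely: stage the label revelation, verify the six hypotheses of \Cref{lem:balancedness-core}, apply it, and union bound over $(T,i)$. The paper does exactly this, with essentially the same phases as your (a)--(d).

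The one genuine gap is your handling of the conditioning issue you correctly flag. Your global clean events $E$ (in particular $\mathcal{E}([n],t,b)$) are measurable only with respect to \emph{all} labels, including the $B_i$-labels that must remain fresh for the core lemma. Your proposed fix---that conditioning on $E$ ``perturbs the law negligibly'' and costs ``an additive $o(1)$ term in the failure probability''---does not survive the union bound: $n^{t+1}$ applications times an additive $o(1)$ is not small. The clean repair avoids conditioning entirely. Let $H_{T,i}$ be the event that hypotheses (1)--(6) hold for $(T,i)$; this \emph{is} measurable with respect to your stages (a)--(d). Since $E \Rightarrow H_{T,i}$ and $\Prnop{E^c} = O(n^{-2t})$, you get $\Prnop{H_{T,i}^c} \le O(n^{-2t})$. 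Then $\Prnop{B_{T,i}^c} \le \Prnop{H_{T,i}^c} + n^{-2t} = O(n^{-2t})$ directly, and the union bound over $n^{t+1}$ pairs gives $O(n^{-t+1})$. The paper sidesteps this entirely by never introducing global events: it verifies hypotheses (1)--(6) for each fixed $(T,i)$ separately, using only the Phase 1--3.2 randomness (your stages (a)--(d)), which is already disjoint from the $B_i$-labels, so \Cref{lem:E-is-likely} applies directly to the specific $A_i$ output by $\textsc{Alg}_i$.

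A minor point: your treatment of $|L(T)| = O(1)$ redefines $L(T)$, but $L(T)$ is given as input to the lemma and cannot be changed without also changing $S(T)$ and hence the bipartition. The paper does not address this explicitly either; in every application $|L(T)|$ is bounded by a constant (either $|L(T)|=1$ or $L(T)=\cliqueset{T}$ with $|\cliqueset{T}| \le 8t/\alpha$), so it is best read as an implicit hypothesis.
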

\begin{proof}
    First, fix some $T \in \binom{[n]}{t}$ (we shall later afford a union bound over all $T$) and think of sampling $G \sim \RIGone$ in 3 phases as follows.
    \begin{enumerate}
        \setlength\itemsep{.01em}
        \item \textbf{Phase 1:}
            Reveal the labels in $T$. This uncovers the set $L(T)$.
        \item \textbf{Phase 2:}
            Reveal for all vertices $v \in [n]$ the labels in $L(T)$. This uncovers the set $S(T)$.
        \item \textbf{Phase 3:} Reveal all the remaining labels.
    \end{enumerate}

    Conditioned on the event that no label in $[d]$ appears in more than $\eta$ vertices in $T$ after Phase~1, we now show that all the $\Gi$ are balanced after Phase~3 w.h.p.
    
    We will later use a union bound over all $m$ algorithms in $\{ \textsc{Alg}_i \}_{i \in [m]}$, so it suffices that the desired balancedness properties hold for any fixed $\textsc{Alg}$ in the collection $\{ \textsc{Alg}_i \}_{i \in [m]}$ after Phase~3 with sufficiently high probability. To this end, let $\textsc{Alg}$ be one fixed such algorithm and split Phase~3 into three sub-phases as follows.
    \begin{enumerate}
        \setlength\itemsep{.01em}
        \item \textbf{Phase 3.1:}
            Compute the bipartition $A \uplus B = \textsc{Alg}(S(T), [n])$.
        \item \textbf{Phase 3.2:}
            Reveal all the remaining labels for vertices in $A \setminus T$.
        \item \textbf{Phase 3.3:}
            Reveal all the remaining labels for vertices in $B\setminus T$.
    \end{enumerate}
    To show that $G[A \uplus (B \cap \neigh{T})]$ is balanced, after these phases, we wish to use \Cref{lem:balancedness-core}, and proceed by showing that the prerequisites for this are met after Phase~3.2 with probability $1 - O(n^{-2t})$.
    \begin{enumerate}
        \setlength\itemsep{.01em}
        \item For condition one of \Cref{lem:balancedness-core}, we note that after Phase~3.2 all labels in $A \cup T$ are revealed, while in $B$ only the labels in $L(T)$ have been uncovered. Thus, prerequisite one is met for $L = L(T)$.
        \item Regarding the second condition, note that no vertex in $A$ has chosen label $L(T)$, by the fact that $\textsc{Alg}$ is valid according to \Cref{def:procedure-bipartition}.
        \item The third condition is met since we condition on the event that that no label in $[d]$ appears in more than $\eta$ vertices in $T$ after Phase~1.
        \item For condition four, we get from \Cref{lem:E-is-likely} that $\mathcal{E}(A, t, \frac{8t}{\alpha})$ and $\mathcal{E}(T, t, \frac{8t}{\alpha})$ are met with  probability $1 - O(n^{-2t})$.
        \item For condition five, we note that by the occurrence of $\mathcal{E}(T, t, \frac{8t}{\alpha})$, the number of duplicate labels in $T$ is already at most $\log(n)^2$, and the total number of labels is $\le 2t\delta d$. Hence, for every fixed $S, R\subseteq A$ with $|S| = |R| = r$, the expected number of additional duplicate labels in $S \cup R \cup T$ is at most $2t\delta^2d + |S \cup R|^2\delta^2d = O(1)$. Using a Chernoff bound like in the proof of \Cref{lem:E-is-likely}, we get that the probability that there are at least $\log(n)^2$ additional duplicates is $n^{-\omega(1)}$.
        \item Finally, for condition 6, we note that a simple Chernoff bound with subsequent union bound in Phase 2 yields that for all labels $\ell \in L(T)$, the number of vertices that choose $\ell$ in step 2 is at most $2k$ with probability $1 - n^{-2t}$. 
    \end{enumerate}

    Hence, applying \Cref{lem:balancedness-core} after Phase~3.2 (over the randomness in Phase~3.3), we get that $\Gi$ has $r$-fold balancedness \begin{align*}
        (1+o(1)) \frac{8t}{\alpha} p^{-\eta} \maxbal{r} \cdot tp^{t}k + o(k) \le \frac{16p^{-\eta} }{\alpha} \maxbal{r}  \cdot tp^{t}k + o(k)
    \end{align*} for all $3 \le r \le t$ with probability $1 - n^{-\omega(1)}$. Taking a union bound over all $m \le n$ elements in $\{ \textsc{Alg}_i \}_{i \in [m]}$, shows that all the $\Gi$ have the desired balancedness properties after Phase~3 given that no label in $[d] \setminus L(T)$ appears in more than $\eta$ vertices of $T$, and this holds with probability $1 - O(m n^{-2t})$.
    A final union bound over all $T$ yields that our lemma holds with probability $1 - O(n^{-t+1})$, as desired.
\end{proof}

A useful implication of the above used multiple times when analysing our algorithms is the following statement telling us that for \emph{most} $t$ tuples in any $S_\ell$, the graph $G[([n] \setminus S_\ell) \uplus ([n] \cap \neigh{T})]$ has good balancedness.

\begin{lemma}[Balancedness for most $T \subseteq S_\ell$]\label{lem:general-balancedness}
    Let $t \in \mathbb{N}, t \ge 6$ be any fixed constant and let further $\{ \textsc{Alg}_i \}_{i \in [m]}$ be a valid collection of procedures for choosing a bipartition according to \Cref{def:procedure-bipartition}. For a given $\ell \in [d]$ and $T \in \binom{S_\ell}{t}$ and $i \in [m]$, denote by $A_{i,\ell} \uplus B_{i, \ell}$ the bipartition output by $\textsc{Alg}(S_\ell, [n])$. 
    
    Then, it holds with probability $1-o(1)$ over the randomness of $G \sim \RIGone$ and $\{ \textsc{Alg}_i \}_{i \in [m]}$ that for all labels $\ell \in [d]$, a $1-o(1)$ fraction of all $t$-tuples $T \in \binom{S_\ell}{t}$ is such that $G[A_{i,\ell} \uplus (B_{i, \ell}]$ has $r$-fold balancedness 
    $$
        \frac{16p^{-6} }{\alpha} \maxbal{r}  \cdot tp^{t}k + o(k)
    $$
    for all $i \in [m]$ and $3 \le r \le t$. 
\end{lemma}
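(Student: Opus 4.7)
The plan is to deduce \Cref{lem:general-balancedness} from \Cref{lem:general-balancedness-generalized} by setting $L(T) \equiv \{\ell\}$ and $\eta = 6$ for each fixed $\ell$, taking a union bound over the $d \le n$ labels, and complementing this with a straightforward combinatorial calculation showing that only an $o(1)$ fraction of $T \in \binom{S_\ell}{t}$ is ``bad'' in the sense that some label $\ell' \neq \ell$ appears in seven or more of its vertices.

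First, for each fixed $\ell \in [d]$ I would apply \Cref{lem:general-balancedness-generalized} with the constant assignment $L(T) \equiv \{\ell\}$, so that $S(T) \equiv S_\ell$ and the bipartition output by $\textsc{Alg}_i(S(T), [n])$ coincides with $A_{i,\ell} \uplus B_{i,\ell}$. With $\eta = 6$ the lemma yields that, except on an event of probability $O(n^{-t+1})$ over $G$ and the algorithmic randomness, for every $T \in \binom{[n]}{t}$ in which no label in $[d] \setminus \{\ell\}$ appears in more than six vertices, the graph $G[A_{i,\ell} \uplus (B_{i,\ell} \cap \neigh{T})]$ has $r$-fold balancedness $\frac{16 p^{-6}}{\alpha}\maxbal{r} \cdot tp^t k + o(k)$ for all $i \in [m]$ and all $3 \le r \le t$. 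A union bound over the $d \le n$ possible choices of $\ell$ produces a single event $\mathcal{G}$ of probability $1 - O(n^{-t+2}) = 1 - o(1)$ on which this conclusion holds simultaneously for every $\ell$.

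Second, I would count the ``bad'' $T$'s. Fix $\ell \in [d]$ and for $\ell' \neq \ell$ let $m_{\ell'} := |S_\ell \cap S_{\ell'}|$. The fraction of $T \in \binom{S_\ell}{t}$ containing at least seven vertices of $S_{\ell'}$ is at most $\binom{m_{\ell'}}{7}\binom{|S_\ell| - 7}{t - 7}/\binom{|S_\ell|}{t} = \binom{t}{7}\binom{m_{\ell'}}{7}/\binom{|S_\ell|}{7}$. Since the labels different from $\ell$ are chosen independently across vertices in $S_\ell$, every fixed $7$-subset of $S_\ell$ lies entirely inside $S_{\ell'}$ with probability exactly $\delta^7$; hence, conditional on $|S_\ell|$, $\Expected{\binom{m_{\ell'}}{7}} = \binom{|S_\ell|}{7}\delta^7$. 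Writing $f_\ell$ for the total bad fraction and summing over $\ell' \neq \ell$ gives $\Expected{f_\ell} \le (d-1)\binom{t}{7}\delta^7 = O(d\,\delta^7) = O(d^{-5/2})$ using $\delta = \Theta(1/\sqrt{d})$ from \Cref{lem:sizeofdelta}.

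Finally, Markov's inequality gives $\Pr{f_\ell \ge d^{-1/2}} = O(d^{-2})$, and a union bound over $\ell \in [d]$ shows that with probability $1 - O(d^{-1}) = 1 - o(1)$, $f_\ell \le d^{-1/2} = o(1)$ simultaneously for every $\ell$. Intersecting this event with $\mathcal{G}$ completes the argument. The mildly delicate point is that the balancedness event and the ``fraction of good $T$'s'' statement both depend on the same $G$ and must be combined rather than asserted independently; this is handled comfortably because \Cref{lem:general-balancedness-generalized} already affords polynomially small failure probability (easily absorbing the union bound over $\ell$), while the fraction bound only requires Markov plus a second union bound over $\ell$.
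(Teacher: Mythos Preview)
Your proof is correct and follows essentially the same approach as the paper: apply \Cref{lem:general-balancedness-generalized} with $L(T)\equiv\{\ell\}$ and $\eta=6$, union bound over $\ell$, then bound the expected fraction of ``bad'' $T$'s via the probability that some $\ell'\neq\ell$ is shared by too many vertices of $T$, and finish with Markov plus a union bound over $\ell$. The only cosmetic differences are that the paper counts ``at least $6$'' rather than your (tighter) ``at least $7$'' shared vertices, and phrases the expectation calculation directly over the label randomness rather than via $m_{\ell'}=|S_\ell\cap S_{\ell'}|$; both routes yield the same $O(d\,\delta^{O(1)})$ bound and the same conclusion.
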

\begin{proof}
    First of all, applying \Cref{lem:general-balancedness-generalized} with $\eta = 6$ and $L(T) = \{\ell\}$ and $S(T) = [n] \setminus \{\ell\}$ and union bounding over all $\ell$ yields that our conclusion holds with probability $1 - O(dn^{-t + 1})$ for all $\ell \in [d]$ and $T \in \binom{S_\ell}{t}$ such that at most $6$ vertices in $T$ have chosen any $\ell' \in [d] \setminus \{\ell\}$.

    Hence, it only remains to show that a $(1-o(1))$ fraction of all $t$-tuples have this property. To this end, we simply note that when uncovering the labels of any given $T \subseteq S_\ell$, the probability that there is a label in $\ell' \in [d] \setminus \{\ell\}$ that is chosen by at least $6$ vertices in $T$ is at most 
    $$
        d \binom{t}{6} \delta^6 = O\left( \frac{ dt^6}{\sqrt{d}^{6}} \right) = O\left( \frac{1}{d^{2}} \right).
    $$
    Hence, the expected fraction of $t$ tuples in $S_\ell$ for which this bad event occurs is $O(1/d^2)$. By Markov's inequality, we get that with probability $ 1 - O(1/(d\sqrt{d}))$, at most a $1/(d\sqrt{d}) = o(1)$ fraction of such $t$-tuples exists in $S_\ell$. Union bounding over all $\ell \in [d]$ yields that this holds for all $\ell$ with probability $1 - O(1/\sqrt{d}) = 1 - o(1)$, as desired.
\end{proof}

\subsection{Cliques Intersecting other cliques}\label{sec:cliqueintersections}

Another crucial set of statements used in multiple places throughout this work concerns showing that cliques do (with high probability) not occur in certain parts of a $G \sim \RIGone$. 


To this end, we are crucially interested in the following event 
\begin{definition}\label{def:Eab}
    Define the event $\mathcal{E}_{a,b}$ as the event that there exist two sets $S, T \subseteq [n]$ in a $G \sim \RIGone$ with the following properties (see also \Cref{fig:bad-event}). 
    \begin{enumerate}
        \setlength\itemsep{.01em}
        \item $S \cap T = \emptyset$, $|S| = a, |T| = b$
        \item There is a label $\ell \in [d]$ such that all vertices in $S$ have chosen $\ell$, while \emph{none} of the vertices in $T$ has chosen $\ell$.
        \item All edges in $S \times T$ are part of $G \sim \RIGone$.
    \end{enumerate}
\end{definition}

We show that $\mathcal{E}_{a,b}$ is extremely unlikely when choosing some small $b = \text{polylog}(n)$ and $a \ge \frac{n^{\varepsilon} k}{\sqrt{d}}$ for arbitrarily small $\varepsilon > 0$. We remark that this order of magnitude is almost optimal for $a$, since  for all $\ell_1, \ell_2 \in [d]$, among all $\approx k$ vertices that have chosen label $\ell_1$, there will be a set of $\approx k/\sqrt{d}$ vertices which has also chosen $\ell_2$. Accordingly, choosing $S$ as this subset and $T$ as an arbitrary set of vertices which have chosen $\ell_2$ but not $\ell_1$ contradicts $\overline{\mathcal{E}_{a,b}}$ for $a \approx k/\sqrt{d}$ and all $b \ll k$ w.h.p. 

\begin{figure}[t]
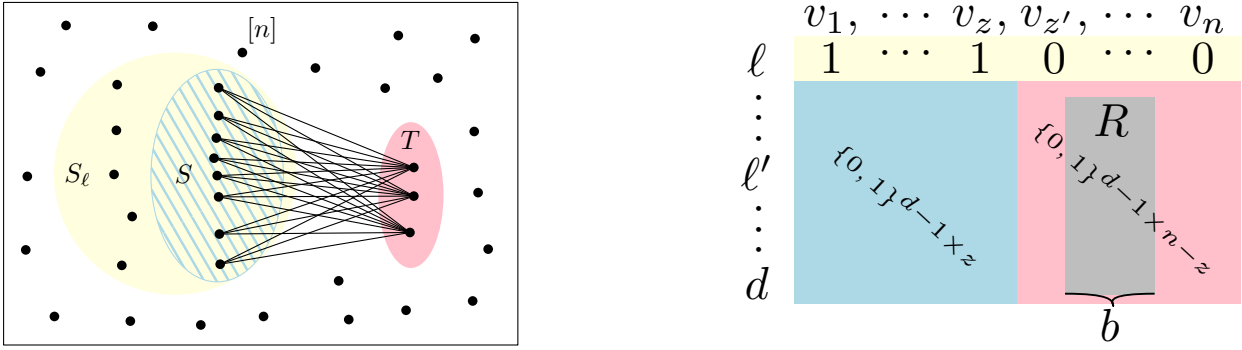

\centering
\includegraphics[height=0.2\textheight]{figures/bad-event.pdf}\hfill
\includegraphics[height=0.2\textheight]{figures/3-phase-duplicates.pdf}

\caption{(a) Sketch of event $\mathcal{E}_{a,b}$. Vertices $S_\ell$ (yellow area) is the set of vertices with label $\ell$. The set of vertices $S$ (hatched area) is a subset of $S_{\ell}$, while no vertex of $T$ (red area) has label $\ell$. Any vertex of $T$ has all possible edges the set $S$. (b) Illustration of revealing three label revealing phases in \Cref{lem:bad-event} 1 (yellow), 2 (red) and 3 (blue). The grey area represents a set $R$ with $b$ vertices with bounded duplicate labels in \Cref{claim:duplicates}.}
    \label{fig:bad-event}

\end{figure}

\begin{lemma}\label{lem:bad-event}
    For any $\alpha \in (0,1)$ and $d = n^{\alpha}$, we get that $\Pr{\mathcal{E}_{a,b}} = n^{-\omega(1)}$ for all $a \ge \frac{n^{\varepsilon} k}{\sqrt{d}}$ and $b = \lfloor \log(n)^2 \rfloor$.
\end{lemma}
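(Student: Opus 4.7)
The plan is to apply a union bound over all triples $(S, T, \ell)$ with $|S|=a$, $|T|=b$, $S \cap T = \emptyset$, and $\ell \in [d]$, and then to bound $\Pr{\mathcal{E}_{S,T,\ell}}$ for each fixed triple by a three-phase reveal of the labels, matching the figure caption. I would start from the decomposition
\[
\Pr{\mathcal{E}_{S,T,\ell}} = \Pr{S \subseteq S_\ell} \cdot \Pr{T \cap S_\ell = \emptyset \mid S \subseteq S_\ell} \cdot \Pr{\text{all } ab \text{ edges of } S \times T \text{ in } G \mid S \subseteq S_\ell,\, T \cap S_\ell = \emptyset},
\]
whose first two factors are $\delta^a$ and at most $1$ respectively, leaving the third as the main object of study.

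For the edge probability I would reveal labels in three phases: phase 1 reveals the labels of $S$ (forcing $\ell \in M_u$ for $u \in S$ with the remaining labels i.i.d.\ $\mathrm{Bernoulli}(\delta)$ on $[d] \setminus \{\ell\}$); phase 2 reveals the labels of an auxiliary set $R$ of size $b$ and uses a ``bounded duplicates'' claim, analogous to \Cref{lem:boubdbadlabels} but applied to pairs, to control the pairwise overlaps $|M_v \cap M_{v'}|$ for $v \neq v' \in T$; phase 3 reveals the labels of $T$ together with the independent noise. The crucial consequence is that, after phase 3, the events ``$u$ is adjacent to every $v \in T$'' are independent across $u \in S$, because the relevant randomness (namely $M_u$ and the noise variables incident to $u$) is independent across $u$. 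Hence, conditional on the labels $M_T$ of $T$, the edge probability factorizes into $a$ identical terms.

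For each $u \in S$ I would then rewrite the conditional probability that $u$ is adjacent to every $v \in T$ as $\Expected{q^{W_u} \mid M_T}$, where $W_u = |\{v \in T : M_u \cap M_v = \emptyset\}|$, since every non-sharing pair $(u,v)$ must be supplied by the independent noise of probability $q$. Using \Cref{lem:sizeofdelta} together with \Cref{lem:labelconcentration}, the indicators $\eta_v \coloneqq \mathbf{1}[M_u \cap M_v = \emptyset]$ satisfy $\Expected{\eta_v \mid M_v} = (1-\delta)^{|M_v|} = \frac{1-p}{1-q}(1 \pm o(1))$ for typical $M_T$, so marginally $\Expected{q^{\eta_v}} = p(1\pm o(1))$ and an ``independent'' computation gives $\Expected{q^{W_u}} = p^b(1\pm o(1))^b$. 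The duplicates bound from phase 2 limits the pairwise overlap $|M_v \cap M_{v'}|$, so the higher cumulants of $W_u$ differ from the independent case by only $O(b^2 \delta) = o(1)$; a cumulant-expansion argument therefore gives $\Expected{q^{W_u} \mid M_T} \le p^b \cdot e^{o(1)}$ uniformly over typical $M_T$, while atypical $M_T$ occurs with probability $n^{-\omega(1)}$ by standard concentration. Raising to the $a$-th power yields $\Pr{\text{all edges} \mid S \subseteq S_\ell, T \cap S_\ell = \emptyset} \le p^{ab} e^{o(ab)}$.

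Plugging everything into the union bound, $\Pr{\mathcal{E}_{a,b}} \le d \binom{n}{a}\binom{n}{b}\, \delta^a\, p^{ab}\, e^{o(ab)}$. Taking logarithms, the positive contributions amount to $O(a\log n + b\log n + \log d)$, and with $a \ge n^{\varepsilon} k/\sqrt{d} = \Omega(n^{\varepsilon+1-\alpha})$ (using $k = \Theta(n/\sqrt{d})$) and $b = \log^2 n$, one has $ab = \Omega(n^{\varepsilon+1-\alpha}\log^2 n)$, which dominates the positive contributions by a factor of $\Omega(\log n)$. The dominant negative term is $ab\log p = -\Theta(ab) = -\omega(\log n)$, so $\log \Pr{\mathcal{E}_{a,b}} \le -\omega(\log n)$, as required. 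The main obstacle I foresee is the cumulant-expansion step bounding the correlated expectation $\Expected{q^{W_u} \mid M_T}$ by $p^b \cdot e^{o(1)}$: the indicators $\eta_v$ are positively correlated through the shared randomness in $M_u$, and turning this into a joint bound tight enough to survive exponentiation to the $a$-th power is precisely the role of the auxiliary set $R$ and the duplicate-labels claim in the three-phase reveal.
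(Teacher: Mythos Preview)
Your three-phase reveal and the independence of $\{u \sim T\}_{u \in S}$ conditional on $M_T$ are fine, and the factorization $\Pr{\mathcal{E}_{S,T,\ell}} \le \delta^a \cdot \Expected{q^{W_u}\mid M_T}^a$ matches the paper's setup (the paper phrases it as revealing $S_\ell$ first and unioning over $S\subseteq S_\ell$, which gives $\binom{2k}{a}$ in place of your $\binom{n}{a}\delta^a$; these are equivalent up to constants). The genuine gap is the cumulant step.

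The claim $\Expected{q^{W_u}\mid M_T}\le p^b e^{o(1)}$ for typical $M_T$ is false. Write $\Pr{u\sim T\mid M_T}$ by splitting on whether $u$ picks some duplicate label $j^*$ (a label in at least two $M_v$). Conditional on $X_{j^*}=1$ for a label of multiplicity $c$, the $c$ incident $\eta_v$ vanish and the remaining $b-c$ behave essentially independently, so the contribution is $\approx \delta\, p^{\,b-c}$. The ratio to $p^b$ is $\delta p^{-c}$, which exceeds $1$ once $c\gtrsim (\alpha/2)\log n/\log(1/p)=\Theta(\log n)$. Ruling out multiplicity $\ge c$ for a fixed $T$ costs only $e^{-\Theta(c\log n)}$, and this does \emph{not} survive the union over $\binom{n}{b}\approx e^{\log^3 n}$ choices of $T$ unless $c=\Theta(\log^2 n)$. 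So ``typical'' $M_T$ (bounded total overlap or bounded number of duplicates) unavoidably permits multiplicities in the range $[\Theta(\log n),\Theta(\log^2 n)]$, and on such $M_T$ your bound fails by a factor $e^{\Theta(\log^2 n)}$. The cumulant heuristic misses this because the effect is a large-deviation one: the low-order cumulants of $W_u$ indeed shift by only $O(b^2\delta)=o(1)$, but $\Expected{q^{W_u}}$ is governed by the probability of $W_u$ being very small, which a single duplicate selection inflates.

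The fix is exactly what the paper does: accept the cruder bound $\Pr{u\sim T\mid M_T}\le C\log^4(n)\,\delta + (1+o(1))p^b = O(\log^4 n/\sqrt{d})$, obtained by the blunt split ``$u$ picks some duplicate (probability $\le D\delta$) or not (independent edges, $\approx p^b$)''. Plugging this into your union bound gives a base of
\[
\frac{en}{a}\cdot\delta\cdot O\Big(\frac{\log^4 n}{\sqrt d}\Big)=O\Big(\frac{k\log^4 n}{a\sqrt d}\Big)=O\Big(\frac{\log^4 n}{n^{\varepsilon}}\Big),
\]
using $a\ge n^{\varepsilon}k/\sqrt d$, and raising to the $a$-th power kills the remaining $d\binom{n}{b}=e^{O(\log^3 n)}$ factor. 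So your overall plan is salvageable and then essentially coincides with the paper's proof; only the intermediate bound has to be weakened from $p^b$ to $O(\mathrm{polylog}(n)\,\delta)$.
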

\begin{proof}
    For any label $\ell \in [d]$, denote by $\mathcal{E}_{a,b,\ell}$ the event that there exist sets $S \subseteq S_\ell, T \subseteq [n] \setminus S_\ell$ with $|S| = a, |T| = b$ such that all edges in $S \times T$ are part of $G$. Clearly, if we show that $\Pr{\mathcal{E}_{a,b,\ell}} \le n^{-\omega(1)}$ for all $\ell \in [d]$, then a simple union bound implies that $\Pr{\mathcal{E}_{a,b}} \le n^{-\omega(1)}$ and we are done. Hence, for the rest of the proof, we fix an arbitrary $\ell \in [d]$ and show that $\Pr{\mathcal{E}_{a,b,\ell}} \le O(1/d^2)$.
    
    Given a fixed $\ell$, uncover the labels of $G \sim \RIGone$ in the following three phases

          \begin{enumerate}\label{enum:labelreveal}
        \setlength\itemsep{.01em}
        \item\label{item:phase1} \textbf{Phase 1:} Uncover the label $\ell$ for all $v \in [n]$ and thus reveal the set $S_\ell$.
        \item\label{item:phase2} \textbf{Phase 2:} Uncover the missing labels for vertices in $[n] \setminus S_\ell$.
        \item\label{item:phase3} \textbf{Phase 3:} Uncover the missing labels for vertices in $S_\ell$.
    \end{enumerate}

    First, uncover the set $S_\ell$, then uncover the missing labels of all vertices in $[n] \setminus S_\ell$, and finally uncover the missing labels of all vertices in $S_\ell$. We show that the following event occurs w.h.p. after Phase~2 (see also \Cref{fig:bad-event}) asserting that the number of duplicate labels is small in every set of vertices of size $b$.

    \begin{claim}\label{claim:duplicates}
        There is a constant $C > 0$ such that with probability $1 - n^{-\omega(1)}$, the following occurs after Phase~2. For all sets $R \subseteq [n] \setminus S_\ell$ with $|R| = b$, there are at most $C\log(n)^4$ labels that were chosen by at least two vertices in $R$, such labels are called \emph{duplicates}. Moreover, for every $v \in [n] \setminus S_\ell$, $|M_v| = \delta d \pm \log(n)\sqrt{\delta d}$.
    \end{claim}\begin{proof}
        Given a fixed label $\ell' \in [d] \setminus \{\ell\}$, the probability that at least 2 vertices in any fixed $R$ choose $\ell'$ is at most 
        $$
            \binom{b}{2} \ \delta^2 = O\left( \frac{\log(n)^2}{\sqrt{d}} \right)^2.
        $$
        Hence the expected number of labels for which this happens is $O(\log(n)^4)$. Now, applying a Chernoff bound yields that the probability that the number of such labels is $\omega(\log(n)^4)$ is at most $n^{-\log(n)^3}$.
       
        Finally, a union bound over all $\binom{n}{b} \le n^{\log(n)^2}$ sets yields the first part of the claim. The second part follows directly by \Cref{lem:labelconcentration}.
    \end{proof}

    Given this fact, we now bound the probability that all edges between some $T \subseteq [n]\setminus S_\ell$ with $|T| = b$ and some $S \subseteq S_\ell$ with $|S| \ge a$ exist after Phase~3. For a fixed $v \in S$, the probability that all edges between $v$ and $S$ exist is at most 
    $$
        \Pr{v \sim T} \le C\log(n)^4\delta + \big(p + O(\log(n)\delta\sqrt{d\delta})\big)^{b}
    $$ where the first term accounts for the possibility that $v$ choses a duplicate in $T$. The second term accounts for the event that this does not happen, here we applied \Cref{lem:pconcentration} and noted that all edges between $v$ and $T$ are independent if no duplicate is chosen. Noting that $\log(n)\delta\sqrt{d\delta} = n^{-\Omega(1)}$ since $d \ge n^\alpha$, and that $p \le 1/2$ yields that $\Pr{v \sim T} \le C\log(n)^4\delta + (1+o(1))p^{b}$. Since (conditional on the fixed lables in $T$), the event $v \sim T$ is independent for all $v \in S_\ell$, we get that 
    \begin{align*}
         \Pr{S \sim T} \le \big( C\log(n)^4\delta + (1+o(1))p^{b}\big)^{a} = O\left(\frac{\log(n)^4}{\sqrt{d}}\right)^a
    \end{align*} since $\delta = O(1/\sqrt{d})$ and $p^{b} = o(1/\sqrt{d})$ because $b = \Omega(\log(n)^2)$. Now, noting that $|S_\ell| \le 2k$ with probability $1 - n^{-\omega(1)}$ (by using a Chernoff bound in conjunction with expected number of vertices with label $\ell$ being $\delta n = k = n^{\Omega(1)}$) after Phase~1, and employing a union bound over all $S$ and $T$ in Phase~3 yields (by the law of total probability) that 
    \begin{align*}
        \Pr{\mathcal{E}_{a, b, \ell}} - n^{-\omega(1)} &\le \sum_{m = a}^{2k} \binom{2k}{m} \binom{n}{b}  O\left(\frac{\log(n)^4}{\sqrt{d}}\right)^m \\
        &\le \sum_{m = a}^{2k} \left( \frac{4e k}{m} \right)^m  \left( \frac{2e n}{b} \right)^b  O\left(\frac{\log(n)^4}{\sqrt{d}}\right)^m \\
        &\le \sum_{m = a}^{2k} O\left(\frac{k\log(n)^4}{m\sqrt{d}}\right)^m  \left( \frac{2e n}{b} \right)^b \\
        &\le \sum_{m = a}^{2k} \exp \left( m\log\left(\frac{k}{m\sqrt{d}}\right)+ O(m \log\log(n)) + O(b\log(n)) \right).
    \end{align*}
   Since $k/(m\sqrt{d}) \le n^{-\varepsilon}$ by our assumption $a \ge n^\varepsilon k / \sqrt{d}$ and $a \gg b$, we get that the above is at most $$
        \Pr{\mathcal{E}_{a, b, \ell}} \le \sum_{m = a}^{2k} \exp(-(1-o(1)) \varepsilon m \log(n)) \le n^{-\omega(1)}.
    $$ A final union bound over all $\ell \in [d]$ then finishes the proof.
\end{proof}
We further bound the maximum degree any vertex outside of a ground-truth clique $S_\ell$ can have into $S_\ell$.

\begin{lemma}\label{lem:degreebound}
    With high probability over the draw of $G \sim \RIGone$, the following holds. For all $\ell \in [d]$ and all $v \in [n] \setminus S_\ell$, the number of neighbors of $v$ in $S_\ell$ is at most $(1+o(1))pk$. 
\end{lemma}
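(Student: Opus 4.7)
I would fix an arbitrary pair $(\ell, v)$ with $v \in [n] \setminus S_\ell$, show that $|N_G(v) \cap S_\ell| \le (1+o(1))pk$ holds with probability $1 - n^{-\omega(1)}$, and then union-bound over the at most $dn \le n^2$ such pairs. The key idea is to expose the randomness in two rounds so that the relevant edge events become independent Bernoullis, at which point a Chernoff/Bernstein bound delivers the desired concentration around the ``correct'' expectation $pk$.

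\textbf{Independence via staged revelation.} First I would reveal the label set $M_v$, conditioning on the defining event $\ell \notin M_v$ and on the high-probability event $|M_v| = \delta d \pm \log(n)\sqrt{\delta d}$ from \Cref{lem:labelconcentration}. Now, for each $u \in [n] \setminus \{v\}$, consider the pair of indicators $X_u = (\mathbf{1}[u \in S_\ell], \mathbf{1}[u \sim v])$. The first indicator depends only on whether $\ell \in M_u$; the second depends on $M_u \cap M_v$ (which, since $\ell \notin M_v$, lives in $[d] \setminus \{\ell\}$) together with an independent noise coin. Because the label sets $M_u$ are independent across $u$ and the noise coins are independent of everything else, the pairs $\{X_u\}_{u \neq v}$ are mutually independent, and hence so are the products $Y_u \coloneqq \mathbf{1}[u \in S_\ell] \cdot \mathbf{1}[u \sim v]$, each of which is a $\{0,1\}$-valued variable.

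\textbf{Mean computation and concentration.} The quantity we care about is $Y \coloneqq \sum_{u \neq v} Y_u = |N_G(v) \cap S_\ell|$, and its mean is
\[
    \E[Y \mid M_v] \;=\; (n-1)\,\delta \cdot \Pr{u \sim v \mid u \in S_\ell,\,M_v} \;=\; (n-1)\,\delta \cdot \bigl(1 - (1-q)(1-\delta)^{|M_v|}\bigr).
\]
Under our conditioning on $|M_v|$, the same Taylor-expansion argument as in the proof of \Cref{lem:pconcentration} yields $1 - (1-q)(1-\delta)^{|M_v|} = p \pm O(\delta \log(n)\sqrt{\delta d}) = (1\pm o(1))p$, so $\E[Y \mid M_v] = (1+o(1))\,pk$. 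Since the $Y_u$ are independent Bernoullis and $pk = n^{\Omega(1)}$ (using $p = \Theta(1)$ and $k = \Theta(n/\sqrt{d})$ with $d \le n^{1-\Omega(1)}$, or in general $k \gg \sqrt{n\log n}$), a standard multiplicative Chernoff bound with deviation $\varepsilon_n = \log(n)/\sqrt{pk} = o(1)$ gives
\[
    \Pr{Y \ge (1+o(1))pk \,\big|\, M_v} \;\le\; \exp\!\bigl(-\Omega(\varepsilon_n^2\, pk)\bigr) \;=\; \exp(-\Omega(\log^2 n)) \;=\; n^{-\omega(1)}.
\]
Removing the conditioning on $M_v$ loses only the $n^{-\omega(1)}$ probability from \Cref{lem:labelconcentration}, and a union bound over the $\le n^2$ pairs $(\ell, v)$ finishes the proof.

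\textbf{Main obstacle.} The only genuinely delicate point is the independence assertion for the pairs $X_u$: one must separate the roles of the label $\ell$, the other labels of $u$, and the noise coin, and be careful that the conditioning on $v \notin S_\ell$ (equivalently, $\ell \notin M_v$) does not couple the variables across different $u$. Everything else is a routine Chernoff estimate with an error-term bookkeeping that follows the template of \Cref{lem:pconcentration}.
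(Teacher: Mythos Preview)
Your proof is correct and follows essentially the same approach as the paper: staged revelation of randomness (the paper uses the three-phase scheme from \Cref{lem:bad-event}, you reveal only $M_v$) to make the relevant adjacency indicators independent, then apply \Cref{lem:labelconcentration} and \Cref{lem:pconcentration} to pin down the mean at $(1+o(1))pk$, and conclude with Chernoff plus a union bound over all $(\ell,v)$ pairs. The only organizational difference is that the paper bounds $|S_\ell|$ and the number of edges from $v$ into $S_\ell$ via two separate Chernoff bounds, whereas you handle both at once through the product indicator $Y_u = \mathbf{1}[u \in S_\ell]\cdot\mathbf{1}[u \sim v]$; this is a cosmetic distinction.
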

\begin{proof}


Fix a label $\ell \in [d]$ and reveal the labels of $[n]$ in the same three phases we used in the proof of \Cref{lem:bad-event}. Fixing a vertex $v \in [n] \setminus S_\ell$, we obtain via \Cref{lem:labelconcentration} that $|M_v|$ satisfies the requirement of \Cref{lem:pconcentration} after Phase~2 with probability $1 - n^{-\omega(1)}$. Then, making use of \Cref{lem:pconcentration}, it follows that for any $u \in S_\ell$ that $\Pr{u\sim v} \leq (1+o(1))p$ over the randomness in Phase~3. Moreover, via a Chernoff bound we obtain that $|S_\ell|\leq (1+o(1))k$ with probability $1 - n^{-\omega(1)}$. Using another Chernoff bound it then follows that $v$ has at most $(1+o(1))pk$ with probability  $1 - n^{-\omega(1)}$. In conclusion, our fixed vertex $v$ has at most $(1+o(1))pk$ neighbours in $S_\ell$ for our fixed label $\ell$ with probability $1 - n^{-\omega(1)}$ by a union bound over the complementary events of the ones we considered throughout the three phases. Hence, a union bound over all $|[n]\setminus S_\ell| \leq n$ vertices yields that the event that there exists a vertex $v \in [n] \setminus S_\ell$ where the size of the neighbourhood in $S_\ell$ exceeds $(1+o(1))pk$ has probability at most $n^{-\omega(1)}$. Finally, another union bound over all $d$ labels finishes the proof. \qedhere
\end{proof}

\section{Using Balancedness to Certify the Absence of Large (Bi)cliques} \label{sec:sos}

After having established criteria for when balanced bipartitions appear in an RIG, we go over to showing how these properties can be used to certify the absence of (bi)cliques within the SoS-proof system. 

\subsection{Preliminaries on SoS and Pseudo-Distributions}\label{sec:sosprelims}

We start with some preliminaries on SoS-proofs, pseudo-distributions and their algorithmic feasibility. We refer to the book \cite{Fleming_Kothari_Pitassi_2019} for a more comprehensive introduction.

\paragraph{Pseudo-distributions}

Given two functions $\mu, f: \{0,1\}^m \rightarrow \mathbb{R}$, we define the \emph{pseudo-expectation} of $f$ under $\mu$ as $\Expectedtildesub{\mu}{f} = \sum_{x \in \{0,1\}^m} \mu(x)f(x)$. A degree-$d$ pseudo-distribution is a function $\mu: \{0,1\}^m \rightarrow \mathbb{R}$ such that $\sum_{x \in \{0, 1\}^m} \mu(x) = 1$ and \emph{for every} polynomial of degree at most $d/2$, we have $\Expectedtildesub{\mu}{f^2} \ge 0$. Given a set $\cA = \{ f_1 \ge 0, f_2 \ge 0, \ldots, f_{m'} \ge 0\}$ of polynomial inequality constraints, we say that a degree-$d$ pseudo-distribution satisfies the system of constraints $\cA$ at degree $r$ if
for every $S \subseteq [m']$ and every polynomial $g$ such that $\deg(g^2) + \sum_{i \in S} \max\{\deg(f_i), r \} \le d$, we have $\Expectedtildesub{\mu}{ g^2 \prod_{i \in S} f_i} \ge 0$. Given some $\eta > 0$, we say that the pseudo-distribution $\mu$ satisfies the system $\cA$ $\eta$-approximately at degree $r$ if $\Expectedtildesub{\mu}{g^2 \prod_{i \in S} f_i } \ge -\eta \|g^2\|_2 \prod_{i \in S} \|f_i\|_2$ again for every polynomial $g$ such that $\deg(g^2) + \sum_{i \in S}\deg(f_i) \le d$. Here $\|f\|_2$ denotes the $\ell_2$-norm of the coefficient vector of the polynomial $f$. We say that $\mu$ ($\eta$-approximately) satisfies $\cA$ if it does so at degree $0$.

In this setting, it has been observed that there exists an efficient algorithm that given a system of polynomial constraints $\cA$ computes a degree-$d$ pseudo distribution that approximately satisfies $\cA$ in polynomial time.

\begin{theorem}[Efficient optimization over pseudo-distributions \cite{parrilo2000structured, nesterov2000squared, lasserre2001new}]\label{thm:optpseudo}
    There exists an algorithm that takes as input a system of $m$ polynomial constraints $\cA$ in $n$ variables with rational coefficients that for every $\eta > 0$ outputs a degree-$d$ pseudo-distribution that $\eta$-approximately satisfies $\cA$. It runs in time $(n + m)^{O(d)}\text{polylog}(1/\eta)$.
\end{theorem}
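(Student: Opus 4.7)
The plan is to reduce the task of computing a degree-$d$ pseudo-distribution $\mu$ satisfying $\cA$ to a semidefinite program (SDP) of size $(n+m)^{O(d)}$ and then invoke a polynomial-time SDP solver. The starting observation is that although $\mu$ nominally lives on $\{0,1\}^n$, its action on polynomials of degree at most $d$ is fully determined by its \emph{pseudo-moments} $\tE_\mu[x^S]$ for $S \subseteq [n]$ with $\abs S \le d$, since $x_i^2 = x_i$ on $\{0,1\}^n$. There are $\binom{n}{\le d} = n^{O(d)}$ such moments, which we treat as real decision variables $y_S$, subject to the normalization $y_\emptyset = 1$.

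Next I would encode the defining non-negativity conditions of a pseudo-distribution as PSD constraints on localizing matrices. The condition $\tE_\mu[g^2] \ge 0$ for every polynomial $g$ of degree $\le d/2$ is equivalent to the moment matrix $M$ indexed by squarefree monomials $x^\alpha, x^\beta$ with $\abs\alpha, \abs\beta \le d/2$ and entries $M[\alpha,\beta] = y_{\alpha \cup \beta}$ being positive semidefinite; this is a single PSD constraint of dimension $n^{O(d)}$ linear in the $y_S$. Similarly, for each $f_i \ge 0$ in $\cA$, the requirement $\tE_\mu[g^2 \prod_{i \in S} f_i] \ge 0$ is equivalent to the corresponding localizing matrix $M_S$ (whose entries are linear combinations of the $y_T$ with coefficients from $f_i$) being PSD. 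There are $m^{O(d)}$ such constraints, each of size $n^{O(d)}$, yielding an SDP with $(n+m)^{O(d)}$ variables and constraints whose total bit-length is polynomial in the input size.

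The final ingredient is to invoke a polynomial-time SDP solver, e.g.\ the ellipsoid method of Grötschel--Lovász--Schrijver or an interior-point method, which produces a point feasible to within additive accuracy $\eta$ in each PSD and linear constraint in time polynomial in the bit-length of the SDP and $\log(1/\eta)$. Unpacking this bound in terms of the SDP dimension yields the claimed running time $(n+m)^{O(d)} \polylog(1/\eta)$. The resulting $y$-vector defines pseudo-expectations that $\eta$-approximately satisfy $\cA$ as in the definition; the normalization $\Norm{f_i}_2, \Norm{g^2}_2$ in the definition of $\eta$-approximate satisfaction is absorbed by appropriately rescaling $\eta$ before the solver is called.

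The main obstacle I would expect is the standard technical issue with SDP solvers: guaranteeing an explicit polynomial dependence on $\log(1/\eta)$ rather than $1/\eta$, handling bit complexity of the intermediate iterates, and ensuring that an approximately feasible solution can be certified as an $\eta$-approximate pseudo-distribution in the precise sense required (including against large-degree test polynomials $g$). The cleanest way to address this is to state the SDP so that the variables directly coincide with the pseudo-moments and use a standard separation-oracle argument (Khachiyan/ellipsoid), which gives the additive-$\eta$ feasibility guarantee in the required runtime; the matching of norms in the definition of $\eta$-approximate satisfaction is then a routine rescaling argument rather than a substantive step.
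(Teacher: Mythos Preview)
The paper does not prove this theorem; it is quoted as a standard fact from the cited references \cite{parrilo2000structured, nesterov2000squared, lasserre2001new} and used as a black box. Your sketch is the standard argument behind those references (encode pseudo-moments as variables, positivity as PSD constraints on moment and localizing matrices, then solve the resulting SDP of size $(n+m)^{O(d)}$ with ellipsoid or interior-point methods), so it is consistent with what the paper is invoking, but there is no ``paper's own proof'' to compare against.
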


We further use the following standard fact about pseudo-distributions (see \cite[Theorem 2]{agahi2011holder}).

\begin{lemma}[Hölder's Inequality for pseudo-distributions]\label{lem:holder}
    Let $f,g$ be polynomial of degree at most $d$. Then, for any $t\in\mathbb{N}$ and any degree-$dt$ pseudo-distribution $\mu$, the following holds.
    $$
        \Expectedtildesub{\mu}{f^{t-1}g} \le \left(\Expectedtildesub{\mu}{f^t}\right)^{(t-1)/t} \left(\Expectedtildesub{\mu}{g^t}\right)^{1/t}.
    $$
\end{lemma}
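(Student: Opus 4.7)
The plan is to mimic the classical proof of Hölder's inequality, which reduces the inequality to an application of weighted AM-GM (Young's inequality), and verify that each step can be carried out inside the pseudo-expectation calculus. Throughout, the key algebraic fact about pseudo-distributions I would use is that they agree with ordinary expectations on SoS-provable polynomial inequalities, and that constants may be pulled out of the pseudo-expectation operator.

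First I would handle the degenerate case. If either $\Expectedtildesub{\mu}{f^t}$ or $\Expectedtildesub{\mu}{g^t}$ equals zero, then a short Cauchy--Schwarz-style argument applied to the pseudo-moment matrix of $\mu$ in the monomials of $f$ and $g$ shows that $\Expectedtildesub{\mu}{f^{t-1}g}=0$ as well, and the inequality is trivially satisfied. So assume both quantities are strictly positive, and set $A \coloneqq \Expectedtildesub{\mu}{f^t}$ and $B \coloneqq \Expectedtildesub{\mu}{g^t}$. By homogeneity of the pseudo-expectation in the scalars $A^{1/t}$ and $B^{1/t}$, it suffices to rescale $f \leftarrow f/A^{1/t}$ and $g \leftarrow g/B^{1/t}$ and prove the normalised statement $\Expectedtildesub{\mu}{f^{t-1}g} \le 1$ under the assumption $\Expectedtildesub{\mu}{f^t} = \Expectedtildesub{\mu}{g^t} = 1$.

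The heart of the argument is the SoS-provable weighted AM-GM identity
$$t\,f^{t-1}g \;\le\; (t-1)f^t + g^t,$$
which, once established, can be fed through the pseudo-expectation to give $t\Expectedtildesub{\mu}{f^{t-1}g} \le (t-1)\Expectedtildesub{\mu}{f^t} + \Expectedtildesub{\mu}{g^t} = t$, completing the proof. For $t$ a power of two, I would avoid invoking AM-GM directly and instead proceed by induction on $\log_2 t$ using Cauchy--Schwarz for pseudo-distributions (the $t=2$ case, which follows from $\Expectedtildesub{\mu}{(f-\lambda g)^2}\ge 0$): splitting $f^{t-1}g = f^{t/2-1}\cdot f^{t/2}g$ and iterating gives Hölder for all dyadic $t$. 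Extending to arbitrary $t \in \mathbb{N}$ then follows either by interpolation between dyadic exponents via another Cauchy--Schwarz, or by exhibiting the SoS certificate for $(t-1)f^t+g^t-tf^{t-1}g$ directly using the standard polynomial identities for the convexity of $x\mapsto x^t$.

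The main obstacle is the last point: showing that weighted AM-GM has a valid low-degree SoS certificate applicable at the specific degree $dt$ of the pseudo-distribution. For even $t$ this is unproblematic since both $f^t$ and $g^t$ are automatically SoS; for odd $t$ one must be careful because the pointwise inequality fails for signed reals, and so the extension relies on the iterated Cauchy--Schwarz route (or on producing an explicit SoS decomposition that uses the normalisation $\Expectedtildesub{\mu}{f^t}=\Expectedtildesub{\mu}{g^t}=1$ as a pseudo-expectation constraint). I would expect the cleanest writeup to fix $t$ even (as needed elsewhere in the paper) and derive the odd-$t$ case via the Cauchy--Schwarz route, or simply cite the standard reference of Agahi et al.\ for the full SoS AM-GM statement.
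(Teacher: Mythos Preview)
The paper does not give its own proof of this lemma; it simply records it as a standard fact with a citation to \cite[Theorem 2]{agahi2011holder}. Your sketch via normalisation and the SoS weighted AM--GM inequality $(t-1)f^t + g^t - tf^{t-1}g \ge 0$ is the standard argument and is correct for even $t$ --- which is the only case the paper actually invokes --- since nonnegative binary forms are always SoS and substitution preserves SoS at degree $dt$. Your caveat about odd $t$ is well-placed: the lemma's statement is already ambiguous there (fractional powers of possibly negative pseudo-moments), and neither the AM--GM route nor iterated Cauchy--Schwarz straightforwardly rescues it without an extra positivity hypothesis on $f,g$; but this is a defect of the stated generality, not of your proof strategy.
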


\paragraph{Sum-of-Squares proofs}

Very useful for analysing pseudo-distributions are \emph{sum-of-squares proofs}. A degree-$d$ sum-of-squares (SoS) proof that the system of constraints $\cA = \{f_1 \ge 0, \ldots, f_m \ge 0\}$ implies the constraint $\{g \ge 0 \}$ is a collection $\{p_{S}\}_{S \subseteq [m]}$ of sum-of-squares polynomials (i.e. polynomials that can be represented as the sum of squared polynomials) such that
$g = \sum_{S \subseteq [m]} p_S \cdot \prod_{i\in S}f_i$, while for all $S \subseteq [m]$, the polynomial $p_S \cdot \prod_{i\in S}f_i$ has degree at most $d$. We write $\cA \sststile{x}{d} \{g \ge 0\}$ to express that there is a degree-$s$ SoS-proof in variables $x$ that $\cA$ implies $\{g \ge 0\}$.

The connection to pseudo-distributions is given by the following fact which states that whenever there is a degree-$d$ SoS-proof that $\cA$ implies $\{g \ge 0\}$, then every degree-$d$ pseudo distribution that satisfies $\cA$ also satisfies $\{g \ge 0\}$. 
\begin{fact}[Soundness]
    If $\mu$ is a degree-$d$ pseudo distribution that (approximately) satisfies the system of constraints $\cA$ and if $\cA \sststile{x}{r} \{g \ge 0\}$, then $\mu$ satisfies $\{g \ge 0\}$ at degree $r$.
\end{fact}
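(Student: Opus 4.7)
The plan is to unfold the SoS decomposition of $g$ and then push the pseudo-expectation through term by term, relying on two ingredients: (i) pseudo-expectations of squares are non-negative by definition of a degree-$d$ pseudo-distribution, and (ii) products of sums of squares with subsets of the $f_i$ have non-negative pseudo-expectation by the assumption that $\mu$ satisfies $\cA$ at degree $r$. The entire argument is just linearity plus careful degree bookkeeping.

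Concretely, let $h$ be an arbitrary polynomial with $\deg(h^2) + \sum_{i \in S}\max\{\deg(f_i),r\}\le d$ for every $S$ that will appear below; the goal is to show $\tilde{\mathbb{E}}_\mu[h^2 g]\ge 0$. By hypothesis there exist sum-of-squares polynomials $\{p_S\}_{S \subseteq [m]}$ with $g = \sum_{S \subseteq [m]} p_S\,\prod_{i\in S} f_i$ and $\deg(p_S \prod_{i\in S} f_i)\le r$ for every $S$. Multiplying by $h^2$ and applying $\tilde{\mathbb{E}}_\mu$ yields
\[
    \tilde{\mathbb{E}}_\mu[h^2 g] \;=\; \sum_{S\subseteq [m]} \tilde{\mathbb{E}}_\mu\!\left[ h^2\, p_S \prod_{i\in S} f_i\right].
\]
Since each $p_S$ is itself a sum of squares, $h^2 p_S$ is a sum of squares, say $h^2 p_S = \sum_j q_{S,j}^2$, and the degree bounds on the SoS proof together with the choice of $h$ ensure that $\deg(q_{S,j}^2) + \sum_{i\in S}\max\{\deg(f_i),r\}\le d$. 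Thus each summand on the right is of exactly the form appearing in the definition of constraint satisfaction, and is therefore non-negative by the assumption that $\mu$ satisfies $\cA$ at degree $r$. Summing gives $\tilde{\mathbb{E}}_\mu[h^2 g]\ge 0$, which is what it means for $\mu$ to satisfy $\{g\ge 0\}$ at degree $r$.

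For the approximate variant, the same expansion applies, but now each term $\tilde{\mathbb{E}}_\mu[q_{S,j}^2 \prod_{i \in S} f_i]$ contributes an additive error of size at most $\eta\,\|q_{S,j}^2\|_2 \prod_{i\in S}\|f_i\|_2$. Summing these errors over the (finitely many) $S$ and $j$ produces a total slack bounded by a fixed polynomial in $\eta$ and the coefficient norms of $h$, $p_S$, $f_i$, so one obtains an $\eta'$-approximate satisfaction of $\{g\ge 0\}$ with $\eta'$ depending polynomially on $\eta$ and the proof data; choosing $\eta$ small enough in \Cref{thm:optpseudo} yields any desired slack. The only real obstacle is the bookkeeping of degrees and the definition of ``satisfies at degree $r$'' (in particular the $\max\{\deg(f_i), r\}$ convention), which must be threaded carefully through the decomposition so that every summand falls within the regime where $\mu$ is guaranteed to act non-negatively.
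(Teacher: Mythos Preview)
The paper states this as a standard Fact without proof, so there is no paper proof to compare against; your argument is the standard one and is correct in structure. One small correction to the bookkeeping you yourself flag as ``the only real obstacle'': the inequality you assert, $\deg(q_{S,j}^2) + \sum_{i\in S}\max\{\deg(f_i),r\}\le d$, is the condition for ``$\mu$ satisfies $\cA$ at degree $r$'' and can fail (take a single constraint $f_1$ of degree $1$, $p_{\{1\}}$ of degree $r-1$, and $\deg(h^2)=d-r$; the left side is then $d-1+r>d$ for $r\ge 2$). What actually holds is $\deg(q_{S,j}^2) + \sum_{i\in S}\deg(f_i) \le \deg(h^2) + \deg\big(p_S\prod_{i\in S}f_i\big) \le (d-r)+r = d$, i.e.\ the condition for ``satisfies at degree $0$''---and that is precisely the premise of the Fact in this paper's conventions (``$\mu$ satisfies $\cA$'' means at degree $0$). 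With that adjustment your proof is complete.
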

Moreover, we have to constrain the so-called \emph{bit-complexity} of a SoS-proof, in order to ensure that the above soundness is meaningfully reflected in pseudo-distributions that can be computed in polynomial time. 
\begin{definition}[Bit-complexity of SoS-proofs]
    We say that the system of constraints $\cA = \{ f_1 \ge 0, f_2 \ge 0, \ldots, f_m \ge 0 \}$ derives $\{g \ge 0\}$ in degree $d$ and bit-complexity $B$ if there is a degree-$d$ sum-of-squares proof $g = \sum_{S \subseteq [m]} p_S \cdot \prod_{i\in S}f_i$ such that the total number of bits required to describe all the coefficients of the polynomials $f_i, g, p_S$ is at most $B$.
\end{definition}

We further introduce some useful SoS-proofs that we shall use later.
The following is taken from \cite[Lemma 3.13]{Kothari-STOC-2023} with minor modifications.
\begin{lemma}[SoS Cancellation Inequalities]\label{lem:cancellation}
    Let $f$ be a polynomial of degree $d$ in indeterminate $x$. Then,
    \begin{align*}
        \{ f^2 \le C f \} &\sststile{x}{2d} \left\{ f^2 \le C^2 \right\}\\
        \{ f^2 \le C  \} &\sststile{x}{2d} \{ f \le \sqrt{C} \}
    \end{align*}
\end{lemma}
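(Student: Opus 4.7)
The plan is to exhibit both implications as explicit polynomial identities: in each case, I want to write the conclusion polynomial as a sum of squares plus a non-negative scalar multiple of the hypothesis polynomial, which is precisely the format required for a degree-$2d$ SoS derivation.

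For the first inequality, the key observation is the identity
\[
    C^2 - f^2 \;=\; (C - f)^2 \,+\, 2\,(Cf - f^2),
\]
which one checks by direct expansion (it is essentially just ``completing the square''). Since the hypothesis provides $Cf - f^2 \geq 0$, this writes $C^2 - f^2$ as a square plus twice the hypothesis polynomial. All polynomials appearing have degree at most $2d$, yielding the desired degree-$2d$ SoS derivation.

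For the second inequality (assuming $C > 0$, since otherwise $\sqrt{C}$ is undefined), the analogous identity is
\[
    2\sqrt{C}\,(\sqrt{C} - f) \;=\; (\sqrt{C} - f)^2 \,+\, (C - f^2),
\]
again verified by a one-line expansion. Dividing through by the positive constant $2\sqrt{C}$ expresses $\sqrt{C} - f$ as a sum of squares plus a non-negative multiple of the hypothesis $C - f^2$, in degree at most $2d$.

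There is essentially no real obstacle beyond spotting these two identities, both of which are standard ``complete the square'' manipulations whose verification is a one-line expansion. The only mild subtlety worth flagging is that $C$ must be treated as a non-negative scalar constant (not an indeterminate of the proof system), so that $\sqrt{C}$ is a legitimate real coefficient and division by $2\sqrt{C}$ is a legal scalar operation inside the SoS proof.
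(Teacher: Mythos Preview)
Your proof is correct and essentially matches the paper's. For the first inequality you and the paper use the identical identity $C^2 - f^2 = (C-f)^2 + 2(Cf - f^2)$; for the second, the paper takes a slightly more roundabout route (writing $f = \tfrac{1}{4\sqrt{C}}(f+\sqrt{C})^2 - \tfrac{1}{4\sqrt{C}}(\sqrt{C}-f)^2$ and then invoking $(f+\sqrt{C})^2 \le 2f^2 + 2C$), but unwinding that derivation collapses exactly to your identity $\sqrt{C} - f = \tfrac{1}{2\sqrt{C}}(\sqrt{C}-f)^2 + \tfrac{1}{2\sqrt{C}}(C - f^2)$, so the underlying SoS certificate is the same.
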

\begin{proof}
    For the first part, note that $$
         \{ f^2 \le C f \} \sststile{x}{2d} \left\{ f^2 \le f^2 + (f - C)^2 =  2f^2 - 2fC + C^2 = C^2 + 2(f^2 -Cf) \le C^2 \right\}.
    $$ Moreover, we note that $\sststile{x}{2d} \{ (f + \sqrt{C})^2 \le 2f^2 + 2C \}$, which allows us to conclude that
    \begin{align*}
        \{ f^2 \le C \} \sststile{x}{2d} \left\{ f = \frac{1}{4\sqrt{C}} (f + \sqrt{C})^2 - \frac{1}{4\sqrt{C}}(\sqrt{C} - f)^2 \le \frac{1}{2\sqrt{C}}f^2 + \frac{1}{2}\sqrt{C} \le \sqrt{C} \right\}.
    \end{align*}
\end{proof}

\begin{lemma}[SoS Almost Triangle Inequality, Fact 3.9 in \cite{Kothari-STOC-2023}]\label{lem:sos-almost-triangle}
    Let $f_1, f_2, \ldots, f_s$ be indeterminates for $1 \le i \le s$ and some $s \in \mathbb{N}$. For any $t \in \mathbb{N}$, 
    \begin{align*}
        \sststile{f_1, f_2, \ldots, f_s}{2t} \left \{ \left( \sum_{i=1}^s f_i \right)^{2t} \le s^{2t - 1} \sum_{i=1}^s f_i^{2t} \right \}.
    \end{align*}
\end{lemma}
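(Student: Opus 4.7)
The plan is to prove the inequality by induction on $t$. The base case $t=1$ follows from the classical degree-$2$ SoS identity
\[ s\sum_{i=1}^s f_i^2 - \left(\sum_{i=1}^s f_i\right)^2 = \sum_{1 \le i < j \le s} (f_i - f_j)^2, \]
which is manifestly a sum of squares.

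For the inductive step I would factor $\left(\sum_i f_i\right)^{2t} = \left(\sum_i f_i\right)^{2(t-1)} \cdot \left(\sum_i f_i\right)^2$ and combine the inductive hypothesis with the base case. The \emph{key multiplicative fact} is: if $B-A$ and $D-C$ are both SoS \emph{and} both $B$ and $C$ are themselves SoS, then
\[ BD - AC = B(D-C) + C(B-A) \]
is SoS, since the set of SoS polynomials is closed under products. I would apply this with $A = \left(\sum_i f_i\right)^{2(t-1)}$, $B = s^{2t-3}\sum_i f_i^{2(t-1)}$, $C = \left(\sum_i f_i\right)^2$, $D = s\sum_i f_i^2$, noting that $B$ is SoS because each $f_i^{2(t-1)}=(f_i^{t-1})^2$ is a square and $C$ is itself a square. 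This yields a degree-$2t$ SoS derivation of the intermediate bound
\[ \left(\sum_i f_i\right)^{2t} \le s^{2t-2}\left(\sum_i f_i^{2(t-1)}\right)\left(\sum_i f_i^2\right). \]

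It then remains to establish $\left(\sum_i f_i^{2(t-1)}\right)\left(\sum_i f_i^2\right) \le s \sum_i f_i^{2t}$ in SoS. Expanding the left-hand side, this reduces to a pairwise Young's inequality $t\, f_i^{2(t-1)} f_j^2 \le (t-1)\, f_i^{2t} + f_j^{2t}$ for each $(i,j)$; summing over all pairs gives exactly $s\sum_i f_i^{2t}$ by symmetry. To obtain the SoS certificate for Young, I would substitute $a = f_i^2$, $b = f_j^2$ and use the algebraic factorization
\[ (t-1)\,a^t + b^t - t\, a^{t-1} b = (a-b)^2 \sum_{\ell=0}^{t-2} (t-1-\ell)\, a^{t-2-\ell}\, b^{\ell}, \]
which one verifies by telescoping $a^{k+1}-b^{k+1}$ twice. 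After substituting back, $(a-b)^2 = (f_i^2-f_j^2)^2$ is a square, each monomial $a^{t-2-\ell} b^{\ell} = (f_i^{t-2-\ell} f_j^{\ell})^2$ is a square, and the coefficients $t-1-\ell$ are nonnegative, so the right-hand side is a manifest SoS in $f_i, f_j$ of total degree $2t$.

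The main obstacle is not any single hard step but rather the careful bookkeeping: one must justify the multiplication-of-SoS-inequalities step by confirming that the intermediate bound polynomials are themselves SoS (which is exactly where the squared structure of $f_i^{2(t-1)}$ and $\left(\sum_i f_i\right)^2$ gets used), and one must confirm that the Young factorization above keeps the overall SoS degree pinned at $2t$. The latter is immediate since $(a-b)^2$ contributes degree $4$ in the $f_i$'s and $a^{t-2-\ell}b^\ell$ contributes degree $2(t-2)$, summing to $2t$; the former is ensured by choosing, at each level of the induction, upper bounds that are visibly sums of squares.
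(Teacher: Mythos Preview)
The paper does not supply a proof of this lemma; it is stated as a known fact imported from \cite{Kothari-STOC-2023} (Fact~3.9 there), so there is no in-paper argument to compare against.

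Your proof is correct and self-contained. The induction with the multiplicative SoS step $BD-AC = B(D-C)+C(B-A)$ is the standard way to compose such inequalities, and your care in checking that $B$ and $C$ are themselves SoS is exactly what makes the step valid. The Young-type inequality $(t-1)a^t + b^t - t\,a^{t-1}b = (a-b)^2\sum_{\ell=0}^{t-2}(t-1-\ell)\,a^{t-2-\ell}b^{\ell}$ is a correct identity, and after substituting $a=f_i^2$, $b=f_j^2$ every factor is a square with nonnegative coefficient, so the certificate is genuinely SoS of degree $2t$. The final chaining $s^{2t-1}\sum_i f_i^{2t} - (\sum_i f_i)^{2t} = s^{2t-2}\bigl[s\sum_i f_i^{2t} - (\sum_i f_i^{2(t-1)})(\sum_i f_i^2)\bigr] + \bigl[s^{2t-2}(\sum_i f_i^{2(t-1)})(\sum_i f_i^2) - (\sum_i f_i)^{2t}\bigr]$ is an additive decomposition into two SoS pieces, so no further multiplicative justification is needed there. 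All degrees stay at $2t$, matching the claimed bound.
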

We further use the following lemma that relates the sum over all variables $x_S$ for $S \subseteq A$ with $|S| = r$ to $|x|^r$ from 
\cite{Kothari-STOC-2023}.
\begin{lemma}[Lemma 4.10 in \cite{Kothari-STOC-2023}]\label{lem:sos-factorial}
    For every $\varepsilon > 0$, $r \in \mathbb{N}$ there is an SoS-proof with coefficients of bit complexity $\text{poly}(|A|, \log(1/\varepsilon))$ for the following statement. 
    $$
        \left\{ x_v^2 = x_v \right\}_{v \in A} \sststile{x}{r} \left\{ \frac{1}{2^r r!} |x|^r - \frac{2r^r}{r!} - \varepsilon \le \sum_{S \subseteq A, |S| = r} x_S \le \frac{1}{r!}|x|^r + \varepsilon \right\}.
    $$
\end{lemma}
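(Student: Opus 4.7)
My plan is to reduce this multivariate statement to a univariate polynomial inequality in the scalar $y := |x| = \sum_{v \in A} x_v$. The central tool is the classical \emph{falling-factorial identity}, which, under the Boolean axioms $\{x_v^2 = x_v\}$, reads
\[
r! \sum_{S \subseteq A,\ |S|=r} x_S \;=\; y(y-1)(y-2)\cdots(y-r+1) \;=:\; y^{\underline{r}}.
\]
More generally, under the same axioms one has $y^r = \sum_{k=1}^{r} k!\, S(r,k)\, e_k(x)$ with $e_k(x) := \sum_{|S|=k} x_S$ and $S(r,k)$ the Stirling numbers of the second kind, and the falling-factorial identity then follows by Stirling inversion. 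This is a polynomial equality modulo the Boolean ideal: expanding $(\sum_v x_v)^r$ by the multinomial theorem and collapsing repeated factors via $x_v^a - x_v = (x_v^2 - x_v)\,(x_v^{a-2} + x_v^{a-3} + \cdots + 1)$ for $a \geq 2$ exhibits both sides as equal modulo Boolean axioms with multipliers of degree at most $r-2$, so the identity is available within degree-$r$ SoS reasoning.

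The upper bound then reads
\[
\tfrac{1}{r!}\, y^r - e_r(x) \;=\; \tfrac{1}{r!}\sum_{k=1}^{r-1} k!\, S(r,k)\, e_k(x),
\]
a positive combination of lower elementary symmetric polynomials. Each $e_k(x) = \sum_{|S|=k} x_S$ is nonnegative in SoS because every Boolean monomial $x_S$ equals its own square modulo the axioms, so $x_S$ is manifestly a sum of squares. The additive $\varepsilon$ slack is introduced only to absorb the bit-complexity cost of this certificate; the inequality itself holds without slack.

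For the lower bound, substituting $r!\,e_r(x) = y^{\underline{r}}$ reduces the statement to the univariate inequality
\[
y^{\underline{r}} \;\geq\; \tfrac{1}{2^r}\, y^r - 2r^r - r!\,\varepsilon
\]
on $y \geq 0$ (the latter being derivable in degree $2$ from $x_v = x_v^2$). The inequality is pointwise valid on $[0,\infty)$: for $y \geq 2r$ each factor satisfies $y - j \geq y/2$ and therefore $y^{\underline{r}} \geq (y/2)^r = y^r/2^r$; for $0 \leq y < 2r$ one has $y^r/2^r < r^r$ while $y^{\underline{r}} \geq -(r-1)^r > -r^r$, so the slack $2r^r$ closes the gap. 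Since the added $\varepsilon$ makes the difference strictly positive on $[0,\infty)$, the classical univariate characterization (every polynomial of degree $d$ nonnegative on $[0,\infty)$ admits a representation $f(y)^2 + y\, g(y)^2$ with $\deg f \leq d/2$ and $\deg g \leq (d-1)/2$) yields an SoS certificate of degree at most $r$ in $y$, and lifting it via $y = \sum_v x_v$ promotes it to a degree-$r$ SoS proof in $x$.

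The main obstacle is controlling the bit complexity of the univariate SoS certificate; without the $\varepsilon$ slack it could a priori require coefficients of arbitrarily high precision. The clean way around this is to construct $f$ and $g$ explicitly by separately treating the two regimes $y \leq 2r$ and $y \geq 2r$: the first via shifted squares such as $((r-1) - y)^2$ multiplied by suitable nonnegative polynomials to absorb the negative contribution of $y^{\underline{r}}$ there, the second by factoring $y - j = \tfrac{y}{2} + (\tfrac{y}{2} - j)$ and exploiting that the product of the halves is SoS on $\{y \geq 2r\}$. The resulting coefficients have magnitude $\exp(O(r))$ with only logarithmic dependence on $1/\varepsilon$, yielding the claimed bit-complexity bound $\mathrm{poly}(|A|, \log(1/\varepsilon))$.
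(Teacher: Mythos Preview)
The paper does not give its own proof of this lemma; it is simply quoted as Lemma~4.10 of \cite{Kothari-STOC-2023}, so there is no in-paper argument to compare against. What I can assess is whether your proposed argument is internally sound.

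Your overall strategy is correct and standard: reduce to the falling-factorial identity $r!\,e_r(x)=y^{\underline{r}}$ modulo the Boolean ideal (which is exactly the content of the paper's Lemma~\ref{lem:sosbinomialcoefficient}), then compare $y^{\underline{r}}$ to $y^r$ as univariate polynomials. The pointwise verification of the lower bound on $[0,\infty)$ is fine, and invoking the Polya--Szeg\H{o} representation $\sigma_0(y)+y\,\sigma_1(y)$ for nonnegative univariate polynomials is the right move.

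There are, however, two genuine gaps you should be aware of. First, your upper-bound argument certifies each $e_k(x)\ge 0$ via $x_S=x_S^2$, which costs degree $2k$; for $k=r-1$ this is $2r-2$, not $r$. (Concretely, for $r=3$ one needs to certify $x_u^2x_w\ge 0$, and there is no degree-$3$ SoS proof of this from the Boolean axioms alone --- degree~$4$ is required.) Similarly, when you lift the univariate certificate back via $y=\sum_v x_v^2 + \sum_v(x_v-x_v^2)$, the term $(\sum_v x_v^2)\sigma_1(y)$ has degree $\deg\sigma_1+2$, which is $r+1$ when $r$ is odd. So your argument yields degree $O(r)$ rather than exactly $r$. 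This is harmless for every application in the paper (all uses are inside degree-$O(r)$ or degree-$O(t)$ proofs), but it does not literally match the stated bound.

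Second, your bit-complexity discussion is a sketch rather than an argument. You describe ``separately treating the two regimes $y\le 2r$ and $y\ge 2r$'' and allude to explicit squares, but you do not actually exhibit $f,g$ or bound their coefficients. The existence of \emph{some} SoS certificate follows from positivity, but the $\mathrm{poly}(|A|,\log(1/\varepsilon))$ bound requires either an explicit construction or a quantitative stability result for univariate SoS decompositions; neither is supplied. If you want a self-contained proof, this is where the real work lies.
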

Related to this, we use the following proposition, which is occasionaly more convenient than the above. 
\begin{lemma}\label{lem:sosbinomialcoefficient} There is an SoS proof with bit complexity $\text{poly}(|A|, \log(1/\varepsilon))$ for the following statement
    \begin{align*}
        \left\{ x_v^2 = x_v \right\}_{v \in A} \sststile{x}{r} \left\{  \left(\sum_{S \subseteq A, |S| = r} x_S\right) = \frac{1}{r!} \prod_{i=1}^r \left( \left( \sum_{v \in A} x_v \right) - (i-1)\right) \right\}.
    \end{align*}
    Consquently, for any $s \in \mathbb{N}$,
    $$
        \bigg\{ \sum_{v\in A} x_v \ge s \bigg\} \cup \left\{ x_v^2 = x_v \right\}_{v \in A} \sststile{x}{r} \left\{  \left(\sum_{S \subseteq A, |S| = r} x_S\right) \ge \binom{s}{r}\right\}.
    $$
\end{lemma}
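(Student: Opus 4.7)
My plan is to prove the polynomial identity first by induction on $r$, and then deduce the inequality via a telescoping argument that combines it with the axiom $\lvert x\rvert \ge s$.

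For the identity (first displayed statement), I would induct on $r$, with the base case $r=1$ being the tautology $\sum_{v\in A}x_v=\lvert x\rvert$. For the inductive step, the key ingredient is the ``Pascal-style'' identity
\begin{align*}
    (\lvert x\rvert-r)\sum_{S\subseteq A,\lvert S\rvert=r} x_S \;=\; (r+1)\sum_{T\subseteq A,\lvert T\rvert=r+1} x_T,
\end{align*}
which I would establish modulo the booleanness axioms as follows. Multiply $\sum_{\lvert S\rvert=r} x_S$ by $\lvert x\rvert=\sum_{v\in A}x_v$ and split the pairs $(S,v)$ into $v\in S$ versus $v\notin S$. Using $x_v^2=x_v$, the first case contributes $x_S\cdot x_v=x_S$ and the second contributes $x_{S\cup\{v\}}$; summing yields $r\sum_{\lvert S\rvert=r}x_S+(r+1)\sum_{\lvert T\rvert=r+1}x_T$, since each $(r+1)$-set $T$ arises exactly $r+1$ times. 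After using the axioms $x_v^2-x_v$ with polynomial multipliers of degree at most $r-1$ to collapse the non-multilinear terms, this becomes a degree-$(r+1)$ SoS identity with rational coefficients of bit complexity $\poly(|A|)$. Multiplying the inductive hypothesis $r!\sum_{\lvert S\rvert=r}x_S=\prod_{j=0}^{r-1}(\lvert x\rvert-j)$ by $(\lvert x\rvert-r)$ and substituting in the Pascal identity then produces the next level, so after $r$ steps we obtain the stated degree-$r$ SoS proof.

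For the ``consequently'' part, in the main case $s\ge r$, I would combine the identity with the axiom $\lvert x\rvert\ge s$ to reduce matters to an SoS derivation of
\begin{align*}
    \prod_{j=0}^{r-1}(\lvert x\rvert-j)\;\ge\;\prod_{j=0}^{r-1}(s-j)\;=\;r!\binom{s}{r}.
\end{align*}
For this, I would use the standard telescoping identity
\begin{align*}
    \prod_{j=0}^{r-1}(\lvert x\rvert-j)-\prod_{j=0}^{r-1}(s-j)\;=\;\sum_{k=0}^{r-1}\Bigl(\prod_{j=0}^{k-1}(\lvert x\rvert-j)\Bigr)(\lvert x\rvert-s)\Bigl(\prod_{j=k+1}^{r-1}(s-j)\Bigr).
\end{align*}
Each factor admits a short SoS derivation of non-negativity: $(s-j)$ with $j<r\le s$ is a nonneg rational constant, $(\lvert x\rvert-s)$ is literally our axiom, and $(\lvert x\rvert-j)=(\lvert x\rvert-s)+(s-j)$ for $j<r\le s$ is the sum of these two, hence has a degree-$1$ derivation. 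Multiplying derivations of non-negativity (which corresponds to appending axiom factors while keeping the SoS multiplier equal to a product of two SoS polynomials) gives a derivation of degree at most $k+1\le r$ for each summand, and summing yields the desired degree-$r$ SoS proof. The edge case $s<r$ lies outside the intended range of this bound, but can still be handled separately at the mild cost of degree $2r$ by writing $x_S=x_S^2$ modulo booleanness to get $\sum_{\lvert S\rvert=r}x_S\ge 0=\binom{s}{r}$.

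The main obstacle is purely bookkeeping: one must verify that the booleanness reductions at each inductive step can be absorbed into axiom-polynomial combinations without ever exceeding the target degree, and that multiplying inequality derivations cleanly composes inside the SoS proof system. Neither is conceptually hard, and the resulting coefficients (essentially $1/r!$ and shifted falling factorials of $s\le|A|$) are clearly of bit complexity $\poly(|A|,\log(1/\varepsilon))$.
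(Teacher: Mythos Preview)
Your proof is correct and matches the paper's approach: the paper establishes the identity via exactly the same Pascal-style recursion $(r+1)\sum_{|T|=r+1}x_T=\bigl(\sum_{|S|=r}x_S\bigr)\bigl(|x|-r\bigr)$ and then iterates. For the inequality the paper simply writes that it ``follows by definition of the binomial coefficient,'' so your telescoping argument in fact supplies detail that the paper omits.
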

\begin{proof}
    It is not hard to see that under the set of axioms $\left\{ x_v^2 = x_v \right\}_{v \in A} $ when evaluating the product $\left( \sum_{S \subseteq A, |S| = r} x_S \right) \left( \sum_{v \in A} x_v \right)$, we enumerate each set $S \subseteq A$ with $|S| = r + 1$ for $r+1$ times and each $S \subseteq A$ with $|S| = r$ for $r$ times. Therefore, 
    \begin{align*}
        \left\{ x_v^2 = x_v \right\}_{v \in A} \sststile{x}{r+1} \left\{  \left(\sum_{S \subseteq A, |S| = r + 1} x_S\right) = \frac{1}{r+1} \ \left( \sum_{S \subseteq A, |S| = r} x_S \right) \left( \left( \sum_{v \in A} x_v \right) - r\right) \right\}.
    \end{align*} Iterating this argument yields the first part of the lemma. The second part of the lemma then follows by definition of the binomial coefficient.
\end{proof}

\subsection{Systems of Axioms for Encoding Cliques}\label{sec:axioms}

We use the following axioms to describe cliques and bicliques that serve as starting points for our SoS proofs. 
The most commonly used axioms for encoding a clique of size $k$ at least $k$ (and at most $2k$) are as follows (see for example also \cite{Kothari-STOC-2023}).
\begin{align*}
    \mathcal{A}(G, k) = \left \{
    \begin{aligned}
      &\forall v \in V
      & x_v^2
      & =x_v \\
      &
      & k \le \sum_{v \in A} x_v  &\le 2k\\
      &\forall u,v \in V \text{ s.t. } \{u,v\} \not\in E
      & x_u x_v
      & = 0
    \end{aligned}
  \right \}\mper
\end{align*}
Above, we have one variable $x_v$ for each $v \in V$ indicating whether or not $v$ participates in a clique or not.

Very similar axioms $\mathcal{B}$ can be used in a bipartite graph $H = (A\uplus B, E_H)$ to find a biclique of size at least $k \times k$. Now, we denote the variables for vertices $v \in A$ by $x_v$ and variables for vertices $v \in B$ by $y_v$.
\begin{align*}
    \mathcal{B}(H, k) = \left\{ 
        \begin{aligned}
            &\forall v \in A & x_v^2 &= x_v \\
            &\forall v \in B & y_v^2 &= y_v \\
            &                &k \le \sum_{v \in A} x_v  &\le 2k\\
            &                &k \le \sum_{v \in A} y_v  &\le 2k\\
            &\forall u \in A, v \in B \text{ s.t. } \{u, v\} \notin E_H & x_uy_v  &= 0
        \end{aligned}
    \right\}.
\end{align*}

Removing the size constraint from the above, we end up with what we call the \emph{reduced biclique axioms} $\mathcal{R}$ which we use at some parts in our proofs. 

\begin{align*}
    \mathcal{R}(H) = \left\{ 
        \begin{aligned}
            &\forall v \in A & x_v^2 &= x_v \\
            &\forall v \in B & y_v^2 &= y_v \\
            &\forall u \in A, v \in B \text{ s.t. } \{u, v\} \notin E_H & x_uy_v  &= 0
        \end{aligned}
    \right\}.
\end{align*}

For the proofs in \Cref{sec:stronger-adv}, we further need the following constraints that specify not a clique but a dense sub-graph in which up to $\gamma$ edges incident to all vertices can be missing. These are called $\mathcal{R}(H, \gamma)$ and defined follows.

\begin{align*}
    \mathcal{R}(H, \gamma) = \left\{ 
        \begin{aligned}
            &\forall v \in A & x_v^2 &= x_v \\
            &\forall v \in B & y_v^2 &= y_v \\
            &\forall u \in A, v \in B & z_{u,v}^2 &= z_{u,v} \\
            &\forall u \in A, v \in B \text{ s.t. } \{u, v\} \notin E_H & x_uy_v(1-z_{u,v})  &= 0 \\
            &\forall v \in A & \sum_{u \in B} z_{u,v} &\le \gamma  
        \end{aligned}
    \right\}.
\end{align*}

Here, the additional variables $z_{u,v}$ encode which edges to ``add back'' into the graph in order to obtain a clique.
The constraint $\sum_{u \in A} z_{u,v} \le \gamma$ ensures that only up to $\gamma$ such edges can be added incident to every $u \in A$.

\subsection{SoS Certificates in balanced bipartite graphs}

The core of the analysis of all our algorithms is an SoS-certificate for the absence of bicliques in balanced bipartite graphs. 
Before formally describing our certificates, we introduce some notation. Recall that given a bipartite graph $G = A \uplus B$, we denote variables for vertices $v\in A$ by $x_v$ and variables for vertices $v\in B$ by $y_v$. We further denote by $x \in \mathbb{R}^{|A|}$ the vector of all variables $x_v$ for $v \in A$ and by $y \in \mathbb{R}^{|B|}$ the vector of all variables $y_v$ for $v \in B$. Given a vector $x$ of variables, we denote by $|x|$ its $L_1$-norm. Moreover, for a set $S \subseteq [n]$, we use the abbreviation $x_S \coloneqq \prod_{i \in S} x_i$.

\subsubsection{Core certificate}

With this, we introduce the main lemma upon which our biclique certificates are built. It can be seen as a modified version of \cite[Lemma 4.11]{Kothari-STOC-2023}. The difference is that we have different balancedness guarantees since we only get meaningful balancedness if $|S \triangle R| \ge 3$ instead of $|S \triangle R| \ge 1$ as in \cite{Kothari-STOC-2023}. Moreover, when using $\mathcal{R}(H, \gamma)$ for non-zero $\gamma$, we further have to account for the fact that some edges within a ``clique'' can be missing. These factors require us to modify the proof of \cite[Lemma 4.11]{Kothari-STOC-2023} and result in a slightly different SoS-certificate than in previous work. The following is the core certificate we use for exact recovery and our refutation algorithms.
\begin{lemma}\label{lem:core-sos-proof}
    Let $r \in \mathbb{N}, r \ge 2$ and $H = (A \uplus B, E)$ be a bipartite graph which has $r$-wise balancedness. Recall that variables associated to $v \in A, u \in B$ are denoted by $x_v$ and $y_u$, respectively. Then, 
    \begin{align*}
        \mathcal{R}(H) \sststile{x, y}{4r} \left\{ \begin{aligned} &\maxbalall{r} |y|  \left( \sum_{S \subseteq A, |S| = r} x_S \right)^2 \\
        & \hspace{2cm} \le \left(n + rn^2\right) \maxbal{r} \left( \sum_{S \subseteq A, |S| = r} x_S \right) + \Delta \left( \sum_{S \subseteq A, |S| = r} x_S \right)^2 \end{aligned} \right\}.
    \end{align*}
    The bit complexity of the proof is $n^{O(r)}$. 
\end{lemma}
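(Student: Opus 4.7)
The plan is to introduce, for each $v \in B$, the polynomial $\beta_v := \sum_{|S|=r} x_S\, u_{S,p}(v)$, where $u_{S,p}$ is the sign pattern from \Cref{def:balancedness}, and to prove the lemma by sandwiching $\sum_{v \in B} \beta_v^2$ between the two desired sides. Writing $X := \sum_{|S|=r} x_S$, I would establish (i) the SoS lower bound $\sum_{v} \beta_v^2 \ge \maxbalall{r}\, |y|\, X^2$ using only the booleanness and non-edge axioms, and (ii) the SoS upper bound $\sum_{v} \beta_v^2 \le \Delta X^2 + (n + rn^2)\maxbal{r}\, X$, which is where the $r$-fold balancedness assumption enters.

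For (i), the crucial observation is that the non-edge axiom $x_u y_v = 0$ implies $y_v x_S = 0$ whenever $S \not\subseteq N_H(v)$, so only the terms with $S \subseteq N_H(v)$ survive in $y_v \beta_v$. For such $S$ every factor of $u_{S,p}(v)$ equals $+\sqrt{(1-p)/p}$, which yields the SoS identity $y_v \beta_v = \maxbalall{r/2}\, y_v X_v$ with $X_v := \sum_{|S|=r,\, S \subseteq N_H(v)} x_S$. The same argument applied to $X - X_v$ gives $y_v(X - X_v) = 0$, hence $y_v X^2 = y_v X_v^2$. Multiplying the two identities produces $y_v \beta_v^2 = \maxbalall{r}\, y_v X^2$, and summing over $v$ yields $\sum_v y_v \beta_v^2 = \maxbalall{r}\, |y|\, X^2$. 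To bootstrap this into a bound on $\sum_v \beta_v^2$ itself, I would use the SoS identity $1 - y_v = (1-y_v)^2$ (an immediate consequence of $y_v^2 = y_v$), which turns $(1-y_v)\beta_v^2$ into the square $((1-y_v)\beta_v)^2 \ge 0$; summing gives $\sum_v \beta_v^2 \ge \sum_v y_v \beta_v^2$ and hence the lower bound.

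For (ii), I would expand $\sum_v \beta_v^2 = \sum_{S,R} x_S x_R\, c_{S,R}$ with $c_{S,R} := \sum_{v \in B} u_{S,p}(v) u_{R,p}(v)$, and split according to $|S \triangle R|$. For pairs with $|S \triangle R| \ge 3$, \Cref{def:balancedness} gives $|c_{S,R}| \le \Delta$; together with the SoS-nonnegativity $x_S x_R = (x_S x_R)^2 \ge 0$ (from $x_v^2 = x_v$), this contributes at most $\Delta X^2$. The only remaining pairs are those with $|S \triangle R| \in \{0,2\}$, since $|S|=|R|=r$ forces $|S \triangle R|$ even; for these I would use only the trivial bound $|c_{S,R}| \le n\maxbal{r}$, combined with the combinatorial count of at most $1 + r(|A|-r) \le 1 + rn$ sets $R$ per fixed $S$ and the SoS inequality $x_S x_R \le \tfrac{1}{2}(x_S + x_R)$ (derived from $(x_S - x_R)^2 \ge 0$ plus $x_S^2 = x_S$). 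This bounds the total contribution of such pairs by $n\maxbal{r}\,(1 + rn)\, X = (n + rn^2)\maxbal{r}\, X$.

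The main subtlety is that, unlike in the original \cite{Kothari-STOC-2023} setup, balancedness here only controls pairs with $|S \triangle R| \ge 3$; this is precisely what produces the extra $rn^2\maxbal{r}\, X$ term and forces the careful bookkeeping of near-diagonal pairs described above. All polynomial identities used have degree at most $2r + 2$, so the whole derivation fits comfortably within degree $4r$, and the bit complexity $n^{O(r)}$ is simply the count of distinct monomials appearing in the certificates.
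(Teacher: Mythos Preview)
Your proposal is correct and is essentially the paper's own argument, just reorganized. The paper defines $u'_{S,p}(v) = u_{S,p}(v)(1-y_v)$ and uses the sum-of-squares fact $0 \le \|\sum_{|S|=r} x_S u'_{S,p}\|_2^2 = \sum_v (1-y_v)\beta_v^2$, which is exactly your step $(1-y_v)\beta_v^2 = ((1-y_v)\beta_v)^2 \ge 0$; it then invokes the per-pair identity $x_{S\cup R}\sum_v y_v u_{S,p}(v)u_{R,p}(v) = \maxbalall{r} x_{S\cup R}|y|$ (their \Cref{lem:lhs-cert}), whose sum over $S,R$ is precisely your identity $\sum_v y_v\beta_v^2 = \maxbalall{r}|y|X^2$. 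The only cosmetic difference is in bounding the near-diagonal contribution $\sum_{|S\triangle R|\le 2} x_Sx_R$: the paper uses $x_R \le 1$ and multiplies by $x_S$, while you use $x_Sx_R \le \tfrac12(x_S+x_R)$ and symmetry---both yield the same $(1+rn)X$ bound.
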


\begin{remark}[Intuition]
    To get a sense for what the above lemma tells us, let us for now assume $p=1/2$, and let us assume that $\sum_{S \subseteq A, |S| = r} x_S \approx |x|^r$ for $|x| = \sum_{v \in A} x_v$, which is approximately true by \Cref{lem:sos-factorial}. Then, re-arranging the above, we get a statement of the form $|x|^{2r} ( |y| - \Delta ) \le (n + rn^2) |x|^r$. While for $|y| < \Delta$, this statement is trivial, it is quite powerful whenever $|y| \gtrsim \Delta$, i.e., if the ``right side'' of our biclique is larger than our balancedness $\Delta$. In this case, the statement reduces to something like $|x|^r \le n + rn^2$, which we can see as a certificate that the ``left side'' of our biclique is of size at most $O(n^{2/r})$.
\end{remark}

To prove \Cref{lem:core-sos-proof}, we start with the following lemma that allows us to arrive at the left hand side of the certificate in \Cref{lem:core-sos-proof}.
\begin{lemma}\label{lem:lhs-cert}
    Let $r \in \mathbb{N}, r \ge 2$ and $H = (A \uplus B, E)$ be a bipartite graph.
    Recall the definition of
    $H_p(u, v)$ and $u_{S, p}$ from \Cref{def:balancedness}. Then for any $S, R \subseteq A$ with $|S| = |R| = r$, 
    $$
         \mathcal{R}(H) \sststile{x, y}{4r} \left\{ x_{S\cup R} \sum_{v \in B} y_v u_{S, p}(v)u_{R, p}(v) = x_{S\cup R} \maxbalall{r} |y| \right\}.
    $$ 
\end{lemma}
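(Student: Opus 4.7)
The plan is to show that, modulo the axioms in $\mathcal{R}(H)$, the product $u_{S,p}(v)u_{R,p}(v)$ appearing in the target equation collapses to the constant $\maxbalall{r}$ whenever the monomial $x_{S\cup R}y_v$ is nonzero. The key observation is the following: by the edge axiom $x_u y_v = 0$ for $\{u,v\}\notin E(H)$, if $x_u y_v$ does not vanish then the edge $\{u,v\}$ must be present and hence $H_p(u,v)=\sqrt{(1-p)/p}$. Writing the product as $\prod_{u\in S\cup R}H_p(u,v)^{m(u)}$, where $m(u)=2$ for $u\in S\cap R$ and $m(u)=1$ for $u\in S\triangle R$, the total exponent on $\sqrt{(1-p)/p}$ equals $\sum_u m(u)=|S|+|R|=2r$, yielding precisely $\maxbalall{r} = ((1-p)/p)^r$.

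The core step is a base identity, derivable in degree $2$:
$$\mathcal{R}(H) \sststile{x,y}{2} \bigl\{\, x_u y_v\, H_p(u,v)^m \;=\; x_u y_v \, \bigl((1-p)/p\bigr)^{m/2}\,\bigr\} \quad \text{for all } u\in A,\, v\in B,\, m\in\{1,2\}.$$
Its proof is a trivial case split: if $\{u,v\}\in E(H)$, both sides are syntactically equal by definition of $H_p(u,v)$; otherwise $x_u y_v=0$ annihilates both sides, giving a degree-$2$ derivation in which one multiplies the axiom $x_u y_v = 0$ by the scalar $H_p(u,v)^m - ((1-p)/p)^{m/2}$. I would then iterate this identity over all $u \in S \cup R$: starting from $x_{S\cup R} y_v \, u_{S,p}(v)u_{R,p}(v)$, peel off one factor $x_u y_v H_p(u,v)^{m(u)}$ at a time (absorbing the remaining $x_{(S\cup R)\setminus\{u\}}$ into the multiplier) and replace it by $x_u y_v ((1-p)/p)^{m(u)/2}$. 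After at most $|S\cup R|\le 2r$ such rewrites, the entire product of $H_p$-factors is converted to the scalar $\maxbalall{r}$, leaving $x_{S\cup R} y_v \maxbalall{r}$. Summing over $v \in B$ produces the required $\maxbalall{r}|y|$.

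The degree budget is comfortable: every intermediate polynomial has $x,y$-degree at most $|S\cup R|+1\le 2r+1$, well inside the allotted $4r$. The bit complexity $n^{O(r)}$ is dominated by writing out the coefficients $((1-p)/p)^{m/2}$ that accumulate across the $|B|\le n$ summands and the $O(r)$ rewrite steps. The only bookkeeping subtlety lies in the case $u\in S\cap R$, where the single copy of $x_u$ present in $x_{S\cup R}$ must ``carry'' the $m(u)=2$ contribution; this is precisely why I phrase the base identity directly for $H_p(u,v)^2$ rather than for $H_p(u,v)$ only. Beyond this, I anticipate no substantive obstacle—the lemma is essentially a structured application of the edge axioms combined with the definition of $H_p$.
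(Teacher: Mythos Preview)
Your proposal is correct and follows essentially the same approach as the paper. The paper's proof is more terse---it simply notes that the constraint $x_u y_v = 0$ for $\{u,v\}\notin E$ forces $x_{S\cup R} y_v\, u_{S,p}(v)u_{R,p}(v) = \maxbalall{r}\, x_{S\cup R} y_v$ (since $u_{S,p}(v)u_{R,p}(v)$ equals $\maxbalall{r}$ whenever all edges between $v$ and $S\cup R$ are present) and then sums over $v$---whereas you spell out the per-factor rewriting explicitly, but the underlying idea is identical.
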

\begin{proof}
    We note that the constraint $x_uy_v = 0$ for all $\{u, v\} \notin E$ implies that for any $v \in B$,
    $$
        \mathcal{R}(H) \sststile{x, y}{4r} \left\{ x_{S \cup R} y_v u_{S, p}(v) u_{R, p}(v) = \maxbalall{r} x_{S \cup R} y_v \right \}
    $$
    since $u_{S, p}(v) u_{R, p}(v) = \maxbalall{r}$ whenever all edges between $v$ and $S \cup R$ are present in $H$. Summing over $v \in B$ yields the statement.
\end{proof}

With this, we are ready to prove \Cref{lem:core-sos-proof}.

\begin{proof}[Proof of \Cref{lem:core-sos-proof}]
    Recall the definition of
    $u_{S, p}$ from \Cref{def:balancedness}. For any $v \in B$, we now define $u'_{S, p}(v) \coloneqq u_{S, p}(v)(1 - y_v)$. Using \Cref{lem:lhs-cert} and further that $\mathcal{R}(H) \sststile{x,y}{4} \{(1 - y_v)^2 = 1 - y_v\}$ for any $v \in B$, we get that for all $S, R \subseteq A$ with $|S| = |R| = r$, \begin{align*}
        \mathcal{R}(H) \sststile{x, y}{4r} \left\{ \begin{aligned} x_{S \cup R} \sum_{v \in B} u'_{S, p}(v) u'_{R, p}(v) &= x_{S \cup R} \sum_{v \in B} u_{S, p}(v)u_{R, p}(v)(1-y_v)^2 \\
        &\le x_{S \cup R} \sum_{v \in B} u_{S, p}(v) u_{R, p}(v) - \maxbalall{r} x_{S \cup R}|y| \end{aligned} \right\}.
    \end{align*} By our balancedness assumption, we get that for any $S, R \subseteq A$ with $|S| = |R| = r$ such that $|S \triangle R| \ge 3$, we have $ \sum_{v \in B} u_{S, p}(v) u_{R, p}(v) \le \Delta$. Accordingly, for such $S, R$, we have \begin{align*}
        \mathcal{R}(H) \sststile{x, y}{4r} \left\{  x_{S \cup R} \langle u'_{S, p}, u'_{R, p} \rangle \le \Delta x_{S \cup R} - \maxbalall{r} | y | x_{S \cup R} \right\},
    \end{align*} Otherwise (if $|S \triangle R| < 3$), we use the trivial bound
    $$
    \mathcal{R}(H) \sststile{x, y}{4r} \left\{  x_{S \cup R} \langle u_{S, p}, u_{R, p} \rangle \le n \maxbal{r} x_{S \cup R} \right\}.
    $$
    With this, 
    \begin{align*}
        \mathcal{R}(H) \sststile{x, y}{4r + 2} \left\{ 
            \begin{aligned}0 &\le \left\| \sum_{|S| = r} x_Su'_{S,p} \right\|_2^2 
            =  \sum_{\substack{S, R \subseteq A, |S|,|R| = r}} x_{S \cup R} \langle u'_{S, p}, u'_{R, p} \rangle \\ 
            &\hspace{.1cm} = \sum_{ \substack{S, R \subseteq A, |S|,|R| = r \\ |S \triangle R| \le 2 } } x_{S \cup R} \langle u'_{S, p}, u'_{R, p} \rangle + \sum_{ \substack{ S, R \subseteq A, |S|,|R| = r \\ |S \triangle R| \ge 3 } } x_{S \cup R} \langle u'_{S, p}, u'_{R, p} \rangle \\
            &\hspace{.1cm}= n \maxbal{r} \left(\sum_{ \substack{S, R \subseteq A , |S|,|R| = r \\ |S \triangle R| \le 2 } } x_{S \cup R} \right) - |y| \maxbalall{r} \left(\sum_{ \substack{ S, R \subseteq A, |S|,|R| = r \\ |S \triangle R| \le 2 } } x_{S \cup R} \right) \\
            &\hspace{2cm} + \Delta \left(\sum_{ \substack{ S, T \subseteq A, |S|,|R| = r \\ |S \triangle R| \ge 3 } } x_{S \cup R} \right) -  |y| \maxbalall{r} \left(\sum_{ \substack{ S, R \subseteq A, |S|,|R| = r \\ |S \triangle R| \ge 3 } } x_{S \cup R} \right) \\
            &\hspace{.1cm}= n \maxbal{r} \left(\sum_{ \substack{S, R \subseteq A, |S|,|R| = r \\ |S \triangle R| \le 2 } } x_{S \cup R} \right) +  \triangle \left(\sum_{ \substack{ S, T \subseteq A, |S|,|R| = r \\ |S \triangle R| \ge 3 } } x_{S \cup R} \right) \\
            & \hspace{6cm} - |y| \maxbalall{r} \left(\sum_{ \substack{ S, R \subseteq A, |S|,|R| = r } } x_{S \cup R} \right) \\
            &\hspace{.1cm}\le n \maxbal{r} \left(\sum_{ \substack{S, R \subseteq A, |S|,|R| = r \\ |S \triangle R| \le 2} } x_{S \cup R} \right) +  \triangle \left(\sum_{ \substack{ S \subseteq A, |S| = r } } x_{S} \right)^2 \\
            &\hspace{6cm}- |y| \maxbalall{r} \left(\sum_{ \substack{ S \subseteq A, |S| = r } } x_{S} \right)^2 
            \end{aligned}
        \right\}
    \end{align*}
    Given a fixed $R \subseteq A$ with $|R| = r$, we further get that $\mathcal{R}(H) \sststile{x,y}{2r} \left\{ x_R \le 1 \right\}$, so for any fixed $S \subseteq A$ with $|S| = r$ \begin{align*}
        \mathcal{R}(H) \sststile{x,y}{4r} \left\{ \sum_{ \substack{R \subseteq A, |R| = r \\ |S \triangle T| \le 2}}  x_R \le 1 + rn \right\}
    \end{align*} since the number of sets $R$ with the desired properties is at most $1 + rn$. Multiplying this with $x_S$ and summing yields that 
    $$
        \mathcal{R}(H) \sststile{x,y}{4r} \left\{ n \maxbal{r} \left( \sum_{ \substack{S \subseteq A, |S|,|R| = r \\ |S\triangle R| \le 2} } x_{S \cup R} \right) \le (n + rn^2) \maxbal{r} \left(\sum_{ \substack{ S \subseteq A, |S| = r } } x_{S} \right)  \right\},
    $$ so in total,
    \begin{align*}
        \mathcal{R}(H) \sststile{x, y}{4r} \left\{ \begin{aligned}
        0 &\le (n + rn^2) \maxbal{r} \left(\sum_{ S \subseteq A, |S| = r} x_{S } \right) + \Delta \left(\sum_{ \substack{ S \subseteq A,  |S| = r } } x_{S} \right)^2 \\
        &\hspace{4cm} - |y| \maxbalall{r} \left(\sum_{ \substack{ S \subseteq A,  |S| = r } } x_{S} \right)^2
        \end{aligned}
        \right\},
    \end{align*} as desired.
\end{proof}

\subsubsection{Core certificate for recovery against a bounded adversary}

At the cost of a higher degree, we can get a similar statement for the axioms $\mathcal{R}(H, \gamma)$ that we use for approximate recovery against a bounded adversary.
\begin{lemma}\label{lem:core-sos-proof-adv}
    Let $r \in \mathbb{N}, r \ge 2$ and $H = (A \uplus B, E)$ be a bipartite graph which has $r$-wise balancedness. Recall that variables associated to $v \in A, u \in B$ are denoted by $x_v$ and $y_u$, respectively. Then, 
    \begin{align*}
        \mathcal{R}(H, \gamma) \sststile{x, y}{8r} \left\{ \begin{aligned} &\maxbalall{r} (|y| - 2 \gamma r) \ \left( \sum_{S \subseteq A, |S| = r} x_S \right)^2 \\
        & \hspace{2cm} \le \left(n + rn^2\right) \maxbal{r} \left( \sum_{S \subseteq A, |S| = r} x_S \right) + \Delta \left( \sum_{S \subseteq A, |S| = r} x_S \right)^2 \end{aligned} \right\}.
    \end{align*}
    The bit complexity of the proof is $n^{O(r)}$. 
\end{lemma}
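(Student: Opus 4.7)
The plan is to follow the proof of \Cref{lem:core-sos-proof} closely, adapting each step to handle the relaxed biclique axioms in $\mathcal{R}(H, \gamma)$. The outer SoS inequality stays the same: starting from $0 \le \sum_v \bigl(\sum_{|S|=r} x_S \, u_{S,p}(v)(1-y_v)\bigr)^2$, I expand and apply $(1-y_v)^2 = 1-y_v$ to obtain
\begin{align*}
\sum_{S,R} x_{S \cup R} \sum_{v \in B} u_{S,p}(v) u_{R,p}(v) \;\ge\; \sum_{S,R} x_{S \cup R} \sum_{v \in B} y_v \, u_{S,p}(v) u_{R,p}(v).
\end{align*}
The upper bound on the left-hand side mirrors the derivation in \Cref{lem:core-sos-proof} exactly (splitting on $|S \triangle R| \le 2$ versus $\ge 3$ and using balancedness), giving $(n + rn^2)\maxbal{r}\bigl(\sum_{|S|=r} x_S\bigr) + \Delta \bigl(\sum_{|S|=r} x_S\bigr)^2$. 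The new work is entirely in lower-bounding the right-hand side by $\maxbalall{r}(|y| - 2r\gamma)\bigl(\sum_{|S|=r} x_S\bigr)^2$.

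The key ingredient is a ``cleaned'' indicator $\tilde y_v^{(S,R)} \coloneqq y_v \prod_{u \in S \cup R}(1 - z_{u,v})$. Whenever $x_{S \cup R} \tilde y_v^{(S,R)} = 1$, the relaxed axiom $x_u y_v(1 - z_{u,v}) = 0$ for $(u,v) \notin E(H)$ forces $(u,v) \in E(H)$ for every $u \in S \cup R$, so $H_p(u,v) = \sqrt{(1-p)/p}$ and $u_{S,p}(v) u_{R,p}(v) = \maxbalall{r}$; this gives the analogue of \Cref{lem:lhs-cert},
\begin{align*}
\mathcal{R}(H, \gamma) \sststile{x,y,z}{O(r)} \left\{ x_{S \cup R} \sum_{v \in B} \tilde y_v^{(S,R)} u_{S,p}(v) u_{R,p}(v) = \maxbalall{r} \, x_{S \cup R} \sum_{v \in B} \tilde y_v^{(S,R)} \right\}.
\end{align*}
I further need $\sum_v \tilde y_v^{(S,R)} \ge |y| - 2r\gamma$ in SoS. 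This follows from the Weierstrass-type inequality $\prod_{u \in S \cup R}(1 - z_{u,v}) \ge 1 - \sum_{u \in S \cup R} z_{u,v}$, provable inductively under $z_{u,v}^2 = z_{u,v}$, combined with swapping the order of summation and using the degree axiom $\sum_{v \in B} z_{u,v} \le \gamma$ over the at most $2r$ vertices $u \in S \cup R$.

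To finish, I decompose $y_v = \tilde y_v^{(S,R)} + (y_v - \tilde y_v^{(S,R)})$ inside the right-hand sum. The ``clean'' contribution equals $\maxbalall{r} \sum_{S,R} x_{S \cup R} \sum_v \tilde y_v^{(S,R)}$, which is at least $\maxbalall{r}(|y| - 2r\gamma)\bigl(\sum_{|S|=r} x_S\bigr)^2$ by the two ingredients above. The main obstacle is the error term $\sum_{S,R} x_{S \cup R} \sum_v (y_v - \tilde y_v^{(S,R)}) u_{S,p}(v) u_{R,p}(v)$: because $u_{S,p}(v) u_{R,p}(v)$ is an unsigned real number in $[-\maxbal{r}, \maxbal{r}]$, the cleanest SoS estimate uses $|u_{S,p}(v) u_{R,p}(v)| \le \maxbal{r}$ together with $\sum_v (y_v - \tilde y_v^{(S,R)}) \le 2r\gamma$ to produce an apparently unavoidable correction of order $\maxbal{r} \cdot r\gamma \cdot \bigl(\sum_{|S|=r} x_S\bigr)^2$, which I expect must be absorbed into the $\Delta$ term on the right-hand side (so $\Delta$ in the hypothesis is understood to also swallow this $O(\maxbal{r} \cdot r\gamma)$ slack). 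The total degree doubles from $4r$ to $8r$ because of the $O(r)$ overhead introduced by the $z$-variables in the Weierstrass step and in the definition of $\tilde y_v^{(S,R)}$.
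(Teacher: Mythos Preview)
Your route differs from the paper's. The paper does not keep the original vectors $u_{S,p}$ in the square and then repair the lower bound via a cleaned indicator $\tilde y_v^{(S,R)}$; instead it bakes the $z$-factors directly into the vectors, setting $u^{(z)}_{S,p}(v)\coloneqq\prod_{u\in S}(1-z_{u,v})H_p(u,v)$ and $u'^{(z)}_{S,p}(v)\coloneqq u^{(z)}_{S,p}(v)(1-y_v)$, and then reruns the proof of \Cref{lem:core-sos-proof} verbatim with $u'^{(z)}$ in place of $u'$. Your ``analogue of \Cref{lem:lhs-cert}'' is exactly \Cref{lem:lhs-cert-adv} in the paper, and it slots in where \Cref{lem:lhs-cert} was used before. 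This makes the argument a one-line substitution rather than a decomposition $y_v=\tilde y_v+(y_v-\tilde y_v)$ with an explicit error term.

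That said, your worry about the $O(r\gamma\,\maxbalcompact{r})$ correction is well-founded, and the paper's route does not sidestep it. After using \Cref{lem:lhs-cert-adv} one must upper-bound $x_{S\cup R}\sum_v u^{(z)}_{S,p}(v)u^{(z)}_{R,p}(v)=x_{S\cup R}\sum_v\prod_{u\in S\cup R}(1-z_{u,v})\,u_{S,p}(v)u_{R,p}(v)$ for $|S\triangle R|\ge 3$; balancedness of $H$ controls $\sum_v u_{S,p}u_{R,p}$, not this $z$-weighted partial sum, and the passage costs precisely the same $2r\gamma\,\maxbalcompact{r}$ via the Weierstrass bound and the degree constraint. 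The paper's two-line proof glosses over this. In every downstream use (e.g.\ \Cref{lem:sos-left-certificate-approx}) one has $\gamma\le\varepsilon' k$ with $\varepsilon'$ tiny, so the slack is swallowed by the gap between the assumed balancedness and the threshold at which the certificate becomes vacuous; your hedge that ``$\Delta$ is understood to also swallow this slack'' is the right reading.
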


In order to establish the stronger version \Cref{lem:lhs-cert} of \Cref{lem:core-sos-proof-adv}, we use the following modification of \Cref{lem:core-sos-proof-adv} that incorporates the effect of the additional variables $z_{u,v}$ in $\mathcal{R}(H, \gamma)$ and intuitively states that SoS ``knows'' that the balancedness properties of a grah can only change by a limited amount if we restrict the number of edges being added ``back in'' incident to every vertex.

\begin{lemma}\label{lem:lhs-cert-adv}
    Let $r \in \mathbb{N}, r \ge 2$ and $H = (A \uplus B, E)$ be a bipartite graph. Recall the definition of
    $H_p(u, v)$ from \Cref{def:balancedness}. For any $S \subseteq A$ and $v \in B$, define
    $$
        u_{S, p}^{(z)}(v) \coloneqq \prod_{u \in S}  (1 - z_{u,v}) H_p(u, v).
    $$
    Then for any $S, R \subseteq A$ with $|S| = |R| = r$, 
    $$
         \mathcal{R}(H, \gamma) \sststile{x, y}{8r + 2} \left\{ x_{S\cup R} \sum_{v \in B} y_v u_{S, p}^{(z)}(v)u_{R, p}^{(z)}(v) \ge x_{S\cup R} \maxbalall{r} (|y| - 2\gamma r  ) \right\}.
    $$ 
\end{lemma}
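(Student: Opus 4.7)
The plan is to reduce the target inequality to a clean product identity and then bound the error via the degree-restriction axiom on $z$. The first key step is to derive, pointwise in $v \in B$ and for any $S, R \subseteq A$ with $|S| = |R| = r$, the identity
$$x_{S \cup R} \, y_v \, u^{(z)}_{S,p}(v) \, u^{(z)}_{R,p}(v) = x_{S \cup R} \, y_v \, \maxbalall{r} \prod_{u \in S \cup R} (1 - z_{u,v}).$$
This is established in SoS by analyzing each factor $(1 - z_{u,v}) H_p(u,v)$ separately. For $u \in S \triangle R$, appearing once in the product, we have $x_u y_v (1 - z_{u,v}) H_p(u,v) = x_u y_v (1 - z_{u,v}) \sqrt{(1-p)/p}$: if $\{u,v\} \in E_H$ both sides are trivially equal, and if $\{u,v\} \notin E_H$ the axiom $x_u y_v (1 - z_{u,v}) = 0$ forces both sides to zero. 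For $u \in S \cap R$, the factor appears squared; using $x_u^2 = x_u$, $(1-z_{u,v})^2 = 1 - z_{u,v}$, together with the same axiom, we derive $x_u^2 y_v (1-z_{u,v})^2 H_p(u,v)^2 = x_u y_v (1 - z_{u,v}) (1-p)/p$ in both the edge and non-edge cases. Multiplying these per-$u$ identities over $u \in S \cup R$ produces $\maxbalall{r}$ because the total exponent on $(1-p)/p$ is $|S \triangle R|/2 + |S \cap R| = (|S|+|R|)/2 = r$, while a single factor of $(1 - z_{u,v})$ is collected for each $u \in S \cup R$. Summing over $v$ then yields the identity at degree roughly $4r+1$.

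After summing and substituting, the target inequality reduces to showing
$$x_{S \cup R} \sum_{v \in B} y_v \Bigl(1 - \prod_{u \in S \cup R}(1 - z_{u,v})\Bigr) \le 2\gamma r \, x_{S \cup R}.$$
The crux here is a Bernoulli-type inequality $1 - \prod_u(1-z_u) \le \sum_u z_u$ in SoS. The plan is to iterate the telescoping identity $1 - \prod_u(1-z_u) = \sum_u z_u \prod_{u' < u}(1-z_{u'})$ twice to obtain
$$\sum_u z_u - \Bigl(1 - \prod_u(1-z_u)\Bigr) = \sum_{u'' < u} z_u \, z_{u''} \prod_{u''' < u''}(1-z_{u'''}).$$
Each summand on the right is manifestly a square in SoS: since $z_u = z_u^2$ and $1 - z_{u'''} = (1-z_{u'''})^2$ (using the Boolean axioms for $z$), each factor already is a square, so the whole product is too. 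This derivation has degree $O(r)$ since $|S \cup R| \le 2r$.

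To finish, multiply the Bernoulli inequality by $x_{S \cup R} y_v \ge 0$, sum over $v \in B$, and swap the order of summation to get $x_{S \cup R} \sum_u \sum_v y_v z_{u,v}$. The easy bound $y_v z_{u,v} \le z_{u,v}$ (a degree-$O(1)$ SoS fact coming from $(z_{u,v}(1-y_v))^2 = z_{u,v}(1-y_v)$) lets us drop the $y_v$, and applying the axiom $\sum_{v \in B} z_{u,v} \le \gamma$ separately for each of the at most $2r$ values $u \in S \cup R$ produces the bound $|S \cup R| \gamma \le 2\gamma r$, as desired. Tracking degrees: the pointwise identity contributes roughly $4r+1$, the Bernoulli-type SoS step contributes roughly $4r$ (after premultiplication by $x_{S \cup R} y_v$ pushes this to $\approx 6r+1$), and the final manipulations add only a few more degrees, keeping the total within the budget $8r+2$.

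The main obstacle is the Bernoulli-type SoS inequality: a direct inclusion-exclusion expansion of $1 - \prod_u(1-z_u)$ yields an alternating sum that is not manifestly a sum of squares. The telescoping route keeps each term a non-negative product of $z_u$'s and $(1-z_{u'})$'s, each of which is a square via the Boolean axioms, but this roughly doubles the degree, and one must carefully verify that the combined degree (including multiplications by $x_{S \cup R}$ and $y_v$) still fits in $8r + 2$. A secondary subtlety is the clean treatment of the $u \in S \cap R$ case, where $(1 - z_{u,v})^2 H_p(u,v)^2$ could a priori produce a $p/(1-p)$ factor on non-edges; the axiom $x_u y_v (1 - z_{u,v}) = 0$ must be invoked to kill this contribution so that only the $\maxbalall{r}$ factor survives in the final identity.
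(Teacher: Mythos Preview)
Your proposal is correct and follows essentially the same route as the paper. Both proofs first use the axiom $x_u y_v(1-z_{u,v}) = 0$ on non-edges to collapse $x_{S\cup R}\, y_v\, u_{S,p}^{(z)}(v)\, u_{R,p}^{(z)}(v)$ to $\maxbalall{r}$ times a product of $(1-z_{u,v})$ factors, then prove the Bernoulli-type inequality $\prod_u(1-z_{u,v}) \ge 1 - \sum_u z_{u,v}$ in SoS via telescoping (the paper states this as a separate claim and proves it by a one-step peel-off induction, which is the same mechanism as your telescoping), and finally apply the axiom $\sum_{v\in B} z_{u,v} \le \gamma$ over the at most $2r$ indices $u$ to get the $2\gamma r$ loss. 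The only cosmetic difference is that you first merge the two products into $\prod_{u\in S\cup R}(1-z_{u,v})$ using $(1-z_{u,v})^2 = 1-z_{u,v}$ before applying Bernoulli, whereas the paper keeps $\prod_{u\in S}\cdot\prod_{u\in R}$ and applies the claim to the concatenated sequence; this yields the same bound.
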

\begin{proof}
    Due to the constraint $\{ x_u y_v(1 - z_{u, v}) = 0 \}$ for all $(u, v) \in A \times B$ such that $\{u, v\} \notin E$, we get that
    \begin{align*}
        \mathcal{R}(H, \gamma) \sststile{x, y}{8r} \left\{
        \begin{aligned}
        x_{S\cup R}y_v u_{S, p}^{(z)}(v) u_{R, p}^{(z)}(v) = \maxbalall{r} x_{S\cup R} y_v \prod_{u \in S} (1 - z_{u, v}) \prod_{u \in R} (1 - z_{u, v})  
        \end{aligned}
        \right \}.
    \end{align*}
    We wish to find a lower bound on the right hand side. We show how to obtain a such bound for $\prod_{u \in S} (1 - z_{u, v})$ in the following claim 

    \begin{claim}\label{clm:productsumlowerbound}
        For any set of vertices $M$,
        $$
            \{z_{u,v} = z_{u,v}^2\}_{u, v \in [n]} \sststile{2r}{z} \left\{  \prod_{u \in M} (1 - z_{u, v}) \ge 1 - \sum_{u \in M} z_{u,v} \right\}
        $$
    \end{claim}
    \begin{proof}
        Fix an arbitrary ordering $u_1, \ldots, u_r$ of the vertices in $M$ and note that
    $$
        \prod_{u \in M} (1 - z_{u, v}) = \prod_{i=1}^r (1 - z_{u_i, v}) = \prod_{i=1}^{r-1}(1 - z_{u_i, v}) - z_{u_1,v}\prod_{i=1}^{r-1}(1 - z_{u_i, v}).
    $$ Since $\{z_{u,v} = z_{u,v}^2\}_{u, v \in [n]} \sststile{2}{z} \{z_{u,v} > 0\}$, we get 
    $$
        \{z_{u,v} = z_{u,v}^2\}_{u, v \in [n]} \sststile{2r}{z} \left\{ \prod_{u \in M} (1 - z_{u, v}) \ge \prod_{i=1}^{r-1}(1 - z_{u_i, v}) - z_{u_1,v}  \right\}.
    $$ Iterating this argument yields the claim.
    \end{proof}

    With the above, we get
    $$
        \{z_{u,v} = z_{u,v}^2\}_{u, v \in [n]} \sststile{2r}{z} \left\{  \prod_{u \in S} (1 - z_{u, v}) \prod_{u \in R} (1 - z_{u, v}) \ge 1 - \sum_{u \in S} z_{u,v} - \sum_{u \in R} z_{u,v}  \right\}
    $$
    Thus, using the same argument for $R$, the axiom $\sum_{v \in B} z_{u,v} \le \gamma$, and summing over $v \in B$,
    $$
            \mathcal{R}(H, \gamma) \sststile{z}{4r} \left\{ \sum_{v \in B} y_v \prod_{u \in S} (1 - z_{u, v}) \prod_{u \in R} (1 - z_{u, v}) \ge |y| - 2\gamma r \right\}.
        $$ Multiplying by $x_{S \cup R}$ yields the desired statement.
\end{proof}

\begin{proof}[Proof of \Cref{lem:core-sos-proof-adv}]
    The proof is analogous to the proof of \Cref{lem:core-sos-proof}. We simply replace $u_{S, p}'(v)$ by ${u'}_{S,p}^{(z)}(v)\coloneqq u_{S, p}^{(z)}(v)(1 - y_v)$ (cf. \Cref{lem:lhs-cert-adv} for a definition of $u_{S, p}^{(z)}$). Furthermore, we replace \Cref{lem:lhs-cert} by \Cref{lem:lhs-cert-adv}.
\end{proof}

\section{Proof of identifiability and refutation: the single label clique theorem with and without SoS}

The goal of this section is to prove \Cref{thm:slct}.
\slct*

\begin{remark}[The parameter $\alpha$]
    Above, and throughout the rest of this paper, the constant $\alpha$ is to be understood as controlling our lower bound on $d$ in the sense that we always assume $d \gg n^{\alpha}$ (see also the discussion in \Cref{sec:parameterassumptions}). However, in this section, we shall also use $\alpha$ to refer to the quantity $\log(d)/\log(n)$. Concretely, instead of writing $n^{\alpha} \ll d \ll n^{1-\alpha}$ and $n^{\alpha} \ll d \ll n^{2-\alpha}$, we shall simply write $\alpha \in (0,1)$ and $\alpha \in (0,2)$, respectively. While this is somewhat of an abuse of notation, we think that it greatly simplifies readability and further makes our work easier to be compared with previous results where this notaion is common as well.
\end{remark}

The proof of \Cref{thm:slct} is split into two parts. First, in \Cref{sec:slctalpha02}, we adapt the proof of Spirakis et al. \cite{Spirakis-MFCS-20212} which applies to the case of $\alpha \in (0,2)$ but needs some more restrictive condition on $p, q$. To address this issue (at least in the case $\alpha \in (0,1)$), we present a novel proof in \Cref{lem:slctimproved} based on the notions of balancedness as developed in \Cref{sec:balancedness}. This is one of our main contributions and constitutes precisely the ``proof of identifiability'' described in \Cref{sec:techniques}. This proof is phrased entirely in constant-degree SoS and the ideas introduced there further form the foundations of our algorithms.

To see that the cases considered in \Cref{sec:slctalpha02} and \Cref{lem:slctimproved} indeed cover all cases needed for \Cref{thm:slct}, we provide a short proof. 
\begin{proof}[Proof of \Cref{thm:slct}]
    We let $\varepsilon > 0$ be a sufficiently small constant and assume that the parameters are such that $k \ge n^{\varepsilon}$. If $\alpha \in (0,1)$, $p \ge \varepsilon$, and $p - q \ge \varepsilon$, we apply \Cref{thm:strongsinglelabelclique}. If $\alpha \in (0,1)$ and one of these conditions on $p, q$ is not met, it follows that $p - q \le \varepsilon$ since $p - q \le p$. Then, \Cref{thm:singlelabelclique} is applicable provided $\varepsilon > 0$ is small enough. This finishes case (i). To handle the assumptions in (ii), we can directly apply \Cref{thm:singlelabelclique}.
\end{proof}

\subsection{Identifiability for $\alpha \in (0,2)$}\label{sec:slctalpha02}

Throughout this entire section, we asssume that either $\assumptionoldslct$ or $p - q = o(1)$.
The goal of this section is to prove the following.

\begin{theorem}\label{thm:singlelabelclique}
    Let $G \sim \RIGone$ and assume that $n^{\varepsilon} \ll d  \ll n^{2-\varepsilon}$ such that either $p-q = o(1)$ or $\assumptionoldslct$. Recall that $k \coloneqq \delta n$. Then with high probability, the following holds. For any sufficiently small $\varepsilon > 0$, every clique $K \subseteq G$ of size $\ge (1 - \varepsilon)k$ is spanned by a single label. That is, for every such $K$, there is some $\ell \in [d]$ such that $K \subseteq S_\ell$.
\end{theorem}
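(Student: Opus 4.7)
The plan is to argue by contradiction, following (and correcting) the two-step combinatorial approach of \spiraks. I would fix a candidate clique $K \subseteq V(G)$ with $|K| \ge (1-\varepsilon)k$ and suppose, toward a contradiction, that for every label $\ell$ at least one vertex of $K$ lies outside $S_\ell$. The goal is to rule this out with probability $1-n^{-\omega(1)}$. I focus first on the regime $\alpha \in (0,1)$ so that \Cref{lem:bad-event} applies directly; extending to $\alpha \in [1,2)$ only requires the corresponding high-$d$ refinement of that lemma.

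\textbf{Step 1 (intersection dichotomy).} First I would apply \Cref{lem:bad-event} with parameters $a = n^{\varepsilon'}k/\sqrt{d}$ for a sufficiently small $\varepsilon' > 0$ and $b = \lfloor \log^2 n\rfloor$. Because $K$ is a clique, any simultaneous failure of $|K \cap S_\ell| < a$ and $|K \setminus S_\ell| < b$ would exhibit a forbidden biclique witnessing $\mathcal{E}_{a,b}$, so w.h.p.\ at least one of the two inequalities holds for every $\ell$. If some $\ell^{*}$ falls into the second case, then $|K \cap S_{\ell^{*}}| \ge (1 - 2\varepsilon)k$ (once $\log^2 n \ll \varepsilon k$), and any hypothetical $v^{*} \in K \setminus S_{\ell^{*}}$ would need at least $(1-2\varepsilon)k$ neighbours inside $S_{\ell^{*}}$; this contradicts the $(1+o(1))pk$ bound of \Cref{lem:degreebound} whenever $p$ is bounded away from $1$ and $\varepsilon$ is small. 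Hence the only case left is the scattered one: $|K \cap S_\ell| < a$ for every $\ell$.

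\textbf{Step 2 (ruling out scattered cliques).} Next I would split the edges inside such a scattered $K$ into label-covered edges (whose endpoints share at least one label) and pure-noise edges. Using $|K \cap S_\ell| < a$ for every $\ell$, the number of label-covered pairs is at most $\sum_\ell \binom{|K \cap S_\ell|}{2} \le \tfrac{a}{2}\sum_{v \in K}|M_v|$, which after invoking \Cref{lem:sizeofdelta} and \Cref{lem:labelconcentration} is on the order of $n^{\varepsilon'}\sqrt{\log((1-q)/(1-p))}\cdot k^2$. Choosing $\varepsilon'$ small enough and using either $p - q = o(1)$ (which forces $\log((1-q)/(1-p)) \to 0$) or the $\assumptionoldslct$ bound (which keeps $\log((1-q)/(1-p)) \le 1/2 - \varepsilon$), this quantity will be at most $(1-c)\binom{|K|}{2}$ for a constant $c > 0$, so $\Omega(k^2)$ pure-noise edges remain. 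Conditioned on the labelling, these edges are independent $\mathrm{Bernoulli}(q)$, and the probability that all are simultaneously present is at most $q^{\Omega(k^2)}$; since $q$ is bounded away from $1$ in both regimes, this is $\exp(-\Omega(k^2))$, which beats the $\exp(O(k \log n))$ entropy cost of enumerating $K$ once $k \gg \log n$.

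\textbf{Main obstacle.} The subtle and hard step --- the source of the gap in the conference version of \spiraks --- is the bookkeeping in Step 2. The labels of $K$ and the noise edges of $K$ are coupled through the event ``$K$ is a clique'', so one cannot naively multiply ``number of scattered labellings'' by ``noise survival probability''. My plan is therefore to reveal labels first, condition on the scattered configuration produced by Step 1, and only then expose the noise on non-label-covered pairs, which are independent $\mathrm{Bernoulli}(q)$ given the labelling. One also has to bound the number of scattered label configurations compatible with $K$: this reduces to showing that the entropy of the multiset $\{(v,\ell): v\in K,\ \ell \in M_v\}$ subject to $|K \cap S_\ell| \le a$ is $o(k^2)$, which in turn uses the sharp concentration of $\sum_{v\in K}|M_v|$ around $(1-\varepsilon)k\delta d$ from \Cref{lem:labelconcentration}. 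Handling this bookkeeping correctly in both the $p-q = o(1)$ and the $\assumptionoldslct$ regimes --- and, separately, extending \Cref{lem:bad-event} to $\alpha \in [1,2)$ to cover the full parameter range of the theorem --- will be the most delicate part of the argument.
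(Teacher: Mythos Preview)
Your Step 1 matches the paper's argument (its \Cref{cor:many-labels}): the intersection dichotomy via \Cref{lem:bad-event} and the degree bound \Cref{lem:degreebound} is exactly how the paper reduces to the scattered case $|K\cap S_\ell|\le a$ for all $\ell$.

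The gap is in Step 2, and it is not the coupling issue you flag as the main obstacle --- it is the arithmetic. With $a=n^{\varepsilon'}k/\sqrt d$ and $\sum_{v\in K}|M_v|\approx k\cdot\delta d$, your label-covered bound is
\[
\frac{a}{2}\sum_{v\in K}|M_v|\;\approx\;\frac{n^{\varepsilon'}}{2}\,k^2\,\delta\sqrt d\;\approx\;\frac{n^{\varepsilon'}}{2}\,k^2\sqrt{\log\tfrac{1-q}{1-p}}.
\]
For this to sit below $(1-c)\binom{|K|}{2}$ you would need $n^{\varepsilon'}\sqrt{\log((1-q)/(1-p))}<1$. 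That is impossible for any fixed $\varepsilon'>0$: under $\assumptionoldslct$ the logarithm is a constant, and under $p-q=o(1)$ the decay need not be polynomial (e.g.\ $p-q=1/\log n$). The $n^{\varepsilon'}$ slack in \Cref{lem:bad-event} is not removable, so the direct edge-split cannot guarantee $\Omega(k^2)$ noise edges in $K$; your Step~2 conclusion fails before any bookkeeping about conditioning even enters.

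The paper does not work on $K$ directly. It greedily extracts a much smaller \emph{duplicate-free} subset $Q\subseteq K$ of size $k^c$ (\Cref{def:dup-free}): iteratively add a vertex $v$, assign to each new edge $\{u,v\}$ one shared label (or $\emptyset$), and delete from the remaining pool every vertex carrying any label just used. The scattered bound $|K\cap S_\ell|\le k^\gamma$ together with $|M_u\cap M_v|\le O(\log n)$ controls only how many vertices are pruned per step, yielding $|Q|\ge k^{(1-\gamma)/2-o(1)}$. On $Q$ every edge is either covered by a \emph{unique} label or is noise, so the union bound over vertex sets and assignments collapses to
\[
n^{|Q|}\sum_{h}\binom{\binom{|Q|}{2}}{h}(d\delta^2)^{\binom{|Q|}{2}-h}q^{h},
\]
with the key parameter $d\delta^2\approx\log((1-q)/(1-p))$ appearing \emph{without} the $n^{\varepsilon'}$ factor. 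This is precisely where the hypothesis $\max\{\log((1-q)/(1-p)),q\}\le\tfrac12-\varepsilon$ (or $p-q=o(1)$, which drives $d\delta^2\to 0$) is actually used. The extraction of the duplicate-free subgraph is the missing idea.
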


For the proof, we recall from \Cref{sec:cliqueintersections} that the event $\mathcal{E}_{a,b}$ occurs if and only if there is some $\ell \in [d]$ and sets $S \subseteq S_\ell, T \in [n] \setminus S_\ell$ with $|S| = a$ and $|T| = b$ such that $G[S \uplus T]$ is a biclique (\Cref{def:Eab}). Due to \Cref{lem:bad-event}, $\mathcal{E}_{a,b}$ is unlikely for suitable $a \ge \frac{n^\varepsilon k}{\sqrt{d}}$ and $b = \lfloor \log(n)^2 \rfloor$. A simple corollary of this fact is that every clique $K$ that is not spanned by a single label can contain at most $k^\gamma$ vertices of the same label for some constant $\gamma < 1$.

\begin{corollary}\label{cor:many-labels}
    With high probability over the draw of $G \sim \RIGone$, the following holds.
    Every clique $K$ of size $(1-\varepsilon)k$ in $G$ is either formed by a single label, or for every label $\ell \in [d]$, there are at most $k^\gamma$ vertices in $K$ that have chosen $\ell$, where $0 < \gamma < 1$ is a constant. 
\end{corollary}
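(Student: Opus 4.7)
The plan is a proof by contradiction. Suppose there is a clique $K$ of size at least $(1-\varepsilon)k$ that is not spanned by any single label, yet for which some label $\ell^\star$ satisfies $|K \cap S_{\ell^\star}| > k^\gamma$. The key observation is that such a clique would exhibit a biclique of exactly the shape forbidden by \Cref{lem:bad-event}: the set $S := K \cap S_{\ell^\star}$ lies in $S_{\ell^\star}$ and has size exceeding $k^\gamma$, while $K \setminus S_{\ell^\star}$ is non-empty (since $K$ is not spanned by $\ell^\star$) and, being inside the clique $K$, is fully adjacent to $S$. All that is left is either to produce enough vertices outside $S_{\ell^\star}$ inside $K$ to invoke \Cref{lem:bad-event}, or otherwise to rule out the configuration directly via the degree constraint.

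I would first fix $\gamma \in (0,1)$ so that $k^\gamma \ge n^{\varepsilon} k/\sqrt{d}$; using $k = \Theta(n/\sqrt{d})$ from \Cref{lem:sizeofdelta} and writing $d = n^\alpha$, this reduces to the inequality $(1-\gamma)(1-\alpha/2) \le \alpha/2 - \varepsilon$, which admits a valid constant $\gamma < 1$ whenever $\alpha$ is bounded away from $0$ and $\varepsilon$ is sufficiently small compared to $\alpha$. Set $b := \lfloor \log(n)^2 \rfloor$. If $|K \setminus S_{\ell^\star}| \ge b$, pick any $b$-subset $T \subseteq K \setminus S_{\ell^\star}$; by construction $S \cap T = \emptyset$, $S \subseteq S_{\ell^\star}$, $T \cap S_{\ell^\star} = \emptyset$, and every edge of $S \times T$ is present in $G$ because $K$ is a clique. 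This is precisely the event $\mathcal{E}_{|S|, b}$ of \Cref{def:Eab}, and the bound $|S| > k^\gamma \ge n^\varepsilon k/\sqrt{d}$ allows me to invoke \Cref{lem:bad-event} to rule it out with probability $1 - n^{-\omega(1)}$, yielding a contradiction.

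In the complementary case $|K \setminus S_{\ell^\star}| < b$, the clique is almost entirely absorbed by $S_{\ell^\star}$, so $|K \cap S_{\ell^\star}| \ge (1-\varepsilon)k - b \ge (1-2\varepsilon)k$ for large $n$. Taking any $u \in K \setminus S_{\ell^\star}$, which exists because $K$ is not spanned by $\ell^\star$, the vertex $u$ has at least $(1-2\varepsilon)k$ neighbors inside $S_{\ell^\star}$, but \Cref{lem:degreebound} forces every vertex outside $S_{\ell^\star}$ to have at most $(1+o(1))pk$ such neighbors w.h.p., which contradicts $p$ being bounded away from $1$ once $\varepsilon$ is chosen small enough. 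A union bound over $\ell^\star \in [d]$ together with the high-probability events of \Cref{lem:bad-event} and \Cref{lem:degreebound} then completes the proof. The main delicacy I anticipate is the joint calibration of $\gamma$ and $\varepsilon$ in terms of $\alpha$, which is the single place where the parameter assumption $n^{\varepsilon} \ll d$ is used essentially; the rest of the argument is purely combinatorial and invokes only the two high-probability facts already established.
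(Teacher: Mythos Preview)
Your proposal is correct and follows essentially the same approach as the paper: both split on whether $|K\setminus S_{\ell^\star}|$ exceeds $b=\lfloor\log(n)^2\rfloor$, invoking \Cref{lem:bad-event} in the large case and \Cref{lem:degreebound} in the small case. The only cosmetic difference is that you spell out the calibration of $\gamma$ in terms of $\alpha$ explicitly, whereas the paper simply asserts $n^{\varepsilon}k/\sqrt{d}\le k^{\gamma}$ for some $\gamma<1$.
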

\begin{proof}
    Consider a clique $K$ where $|K \cap S_\ell| = |K| - t$ for $1 \leq t\leq\lfloor \log(n)^2 \rfloor$. Then any $v \in K \setminus S_\ell$ has degree at least $|K| - t$ into $S_\ell$, which can be ruled out using \Cref{lem:degreebound} stating that this degree is at most $(1+o(1))pk$. Given sufficiently small $\varepsilon > 0$, this is a contradiction.

    Next, consider the case that $t > \lfloor \log(n)^2 \rfloor$. Then, \Cref{lem:bad-event} immediately implies that for every $\ell$, we have $|K \cap S_\ell| \le a$ as otherwise, this contradicts the occurence of $\mathcal{E}_{a,b}$. Since by \Cref{lem:bad-event}, we can choose $a = n^{\varepsilon}k/\sqrt{d} \le k^{\gamma}$ for some $\gamma < 1$, the statement follows. \qedhere
    
\end{proof}

We proceed by ruling out the case that $K$ is not spanned by a single label using the conclusion of the above lemma by showing that it implies the existence of the following structure in $K$. 

\begin{definition}[Duplicate-free sets]\label{def:dup-free} 
    Given a $G \sim \RIGone$, we call a set of verices $M \subseteq [n]$ \emph{duplicate-free} if there is an assignment $g: \binom{M}{2} \rightarrow [d] \cup \{\emptyset\}$ with the following properties. \begin{enumerate}
        \item For every $\ell \in [d]$, there is at most one $\{u, v\} \in \binom{M}{2}$ such that $g(\{u, v\}) = \ell$.
        \item For every $\{u, v\} \in \binom{M}{2}$ such that $g(\{u, v\}) \in [d]$, we have that $g(\{u,v\}) \in M_u \cap M_v$.
    \end{enumerate}
\end{definition}
Intuitively, the function $g$ maps every $\{u, v\} \in \binom{M}{2}$ to either $\emptyset$ or a label $\ell$ that both $u$ and $v$ have in common, while assigning each label $\ell \in [d]$ at most once. We show that every clique not formed by a single label contains a decently large duplicate-free set. 

\begin{lemma}
    Let $\varepsilon>0$ be a sufficiently small constant. Then, with high probability over the draw of $G \sim \RIGone$, it holds for every clique $K$ of size $\ge (1-\varepsilon)k$ not formed by a single label, that $K$ contains a duplicate-free set of size $k^c$ where $c > 0$ is a constant. 
\end{lemma}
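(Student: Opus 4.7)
The plan is to construct the desired duplicate-free subset greedily, after reducing the problem to finding $M \subseteq K$ with $|M \cap S_\ell| \le 2$ for every $\ell \in [d]$. Such an $M$ is automatically duplicate-free: setting $g(\{u,v\}) = \emptyset$ when $M_u \cap M_v = \emptyset$ and otherwise picking any $\ell \in M_u \cap M_v$ yields a valid assignment as in \Cref{def:dup-free}, since if two distinct pairs mapped to the same label $\ell$ then the union of their four vertices would force $|S_\ell \cap M| \ge 3$.

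To construct $M$, I would first apply \Cref{cor:many-labels} together with \Cref{lem:bad-event} (with $b = \lfloor \log(n)^2 \rfloor$) to obtain that with probability $1 - n^{-\omega(1)}$, every clique $K$ of size $\ge (1-\varepsilon)k$ not spanned by a single label satisfies $|K \cap S_\ell| \le a := n^{\varepsilon'} k/\sqrt{d}$ for every $\ell$, where $\varepsilon' > 0$ is an arbitrarily small constant. Then, starting from $M = \emptyset$, I would iteratively add a vertex $v \in K \setminus M$ whose addition preserves $|M \cap S_\ell| \le 2$ for every $\ell$. A vertex $v$ is \emph{blocked} iff $v \in \bigcup_{\ell\,:\,|M \cap S_\ell|=2}(S_\ell \cap K)$, so the number of blocked vertices is at most $a \cdot L_2$, where $L_2 := |\{\ell : |M \cap S_\ell| = 2\}|$. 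Each label counted in $L_2$ lies in $M_u \cap M_v$ for some pair $\{u,v\} \subseteq M$, yielding
\[
L_2 \;\le\; \sum_{\{u,v\} \subseteq M} |M_u \cap M_v|.
\]

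To control the right-hand side I would use that $|M_u \cap M_v| \sim \mathrm{Bin}(d, \delta^2)$ has mean $\log((1-q)/(1-p))(1\pm o(1)) = O(1)$ by the standing assumption that $p$ is bounded away from $1$; a Bernstein/Chernoff bound combined with a union bound over all pairs then gives $|M_u \cap M_v| \le T = O(\log n)$ uniformly over $(u,v) \in [n]^2$ with probability $1 - n^{-\omega(1)}$. Hence the total number of blocked vertices is at most $aT \binom{|M|}{2}$, and the greedy procedure continues as long as $aT|M|^2/2 + |M| < (1-\varepsilon)k$. This gives $|M| = \Omega\bigl(\sqrt{k/(aT)}\bigr) = \Omega\bigl(d^{1/4}/(n^{\varepsilon'/2}\sqrt{\log n})\bigr)$; using $d \ge n^\alpha$, this is at least $n^{\alpha/4 - \varepsilon'/2 - o(1)}$, which since $k \le n$ is at least $k^{\alpha/8}$. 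Choosing $c := \alpha/8$ finishes the argument.

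The main obstacle is not algebraic but bookkeeping: \Cref{cor:many-labels}, \Cref{lem:bad-event}, and the uniform bound on pairwise label intersections must all hold simultaneously, and the conclusion must be uniform in the choice of $K$. Since none of the three events depend on a specific $K$, a single union bound over their complements (each having probability $n^{-\omega(1)}$) yields that, with high probability, the greedy construction succeeds for every clique $K$ of size $\ge (1-\varepsilon)k$ that is not spanned by a single label.
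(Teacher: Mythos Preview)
Your proof is correct and follows essentially the same approach as the paper: a greedy construction that uses \Cref{cor:many-labels} (each label meets $K$ in at most $k^\gamma$ vertices) together with the pairwise bound $|M_u\cap M_v|=O(\log n)$, yielding a duplicate-free subset of size $\Theta\bigl(\sqrt{k/(a\log n)}\bigr)=k^{\Omega(1)}$. The only cosmetic difference is the invariant you maintain ($|M\cap S_\ell|\le 2$ for all $\ell$) versus the paper's ($M_u\cap M_{u'}\cap M_v=\emptyset$ for $u,u'\in Q$, $v$ remaining in $K$); both invariants force each label to appear in at most two selected vertices, and the resulting counting is identical.
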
\begin{proof}
    We make use of the following:
    \begin{claim}\label{claim:two-items} The following event occurs with high probability over the draw of $G \sim \RIGone$.
        For every pair $\{u, v\} \in \binom{[n]}{2}$, there are at most $C \log(n)$ labels in $M_u \cap M_v$.
    \end{claim}

    Before we prove our claim, we show how this implies the existence of a $n^{\Omega(1)}$-sized duplicate-free set in $K$. To this end, consider the following simple algorithm for constructing such a set $Q \subseteq K$. \begin{enumerate}
        \setlength\itemsep{.0001em}
        \item Initialize $Q_0 = \emptyset$ and set $i = 0$.
        \item While $K$ is not empty, choose an arbitrary vertex $v \in K$, and remove from $K$ all vertices that have chosen a label in $\bigcup_{u \in Q_i} (M_u \cap M_v)$. For each $u\in Q_i$, set $g(\{u, v\})$ to an arbitrary label in $M_u \cap M_v$ if nonempty, and $g(\{u, v\}) = \emptyset$ otherwise. Finally, set $Q_{i + 1} = Q_i \cup \{v\}$ and update $i \leftarrow i + 1$.
        \item Output $Q_i$.
    \end{enumerate}
    We point out, that the above algorithm maintains the following invariant: at the beginning of each iteration described in step 2, it holds that for all $u, u' \in Q$ and all $v \in K$, we have $M_u \cap M_{u'} \cap M_v = \emptyset$. This follows since all vertices that have chosen a label in $M_u \cap M_{u'}$ were removed from $K$ in a previous iteration, and it implies that the set $Q$ together with the assignment $g$ is duplicate-free. 

    Moreover, using that with high probability $|M_u \cap M_v| \le C \log(n)$ for all $u,v$ (\Cref{claim:two-items}) in conjunction that for every label $\ell$, the number of vertices in $K$ that have chosen $\ell$ is at most $k^\gamma$ (\Cref{cor:many-labels}), the number of vertices removed in iteration $i$ is at most $C|Q_i|k^\gamma \log(n)$ with high probability. Since $|Q_i| = i$, after $j$ iterations, we have removed at most $$
        Ck^\gamma \log(n) \sum_{i = 1}^{j} |Q_i| = Ck^\gamma \log(n) \frac{j(j+1)}{2} 
    $$ vertices from the initial set $K$. While $j \le k^{\frac{1 - \gamma}{2} -\varepsilon}$, this is smaller than $(1-\varepsilon)k$, so we can run the above algorithm for at least $k^{c}$ iterations, where $c > 0$ is a constant. All that remains for proving the statement is the following.

\noindent\textit{Proof of \Cref{claim:two-items}.}
    Fix a pair of vertices $u,v$ and fix a single label $\ell$. Then reveal the label $\ell$ for the two vertices. The probability that both vertices share the fixed label is $\delta^2 = O(1/d)$ by \Cref{lem:sizeofdelta}. Hence, the expected number of shared labels among the fixed pair is $\mathbb{E}[|M_u \cap M_v|] = O(1)$ and a Chernoff bound yields that there exists a constant $C$, such that the probability that $|M_u \cap M_v|$ exceeds $C\log(n)$ is at most $n^{-3}$. Taking a union bound over all $O(n^2)$ pairs of vertices the desired claim  follows.
\end{proof}
Finally, we prove that duplicate-free sets of size $k^{c}$ do not exist with high probability.

\begin{lemma}\label{lem:noduplicatfreesets}
    For every constant $c > 0$, the following holds with high probability over the draw of $G \sim \RIGone$. No set $K \subseteq [n]$ of size at least $k^c$ is both duplicate-free and a clique.  
\end{lemma}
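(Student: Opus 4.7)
The plan is to combine a union bound over $m$-element subsets $K \subseteq [n]$ with an enumeration of ``certificate'' assignments that encode how such a set could arise as a duplicate-free clique. Fix $K$ of size $m \ge k^c$. I would consider certificates $g: \binom{K}{2} \to [d] \cup \{*\}$ that are injective on $g^{-1}([d])$, interpreting $g(e) = \ell \in [d]$ as ``the pair $e = \{u,v\}$ is explained by the shared label $\ell \in M_u \cap M_v$'' and $g(e) = *$ as ``$\{u,v\}$ is a noise edge with $M_u \cap M_v = \emptyset$''. The key observation is that every duplicate-free clique admits at least one valid certificate: use the injective labeling guaranteed by duplicate-freeness on the shared-label pairs, and assign $*$ to the remaining pairs, whose edges, by the clique assumption, must then be realized as noise edges since those pairs share no label.

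For a certificate with $t$ labeled pairs I would compute the validity probability. The label-match condition contributes $\delta^{2t}$ exactly (by independence across distinct labels and across the two endpoints of each labeled pair); the independent noise-edge realizations on the $\binom{m}{2}-t$ starred pairs contribute $q^{\binom{m}{2}-t}$; and the additional constraint $M_u \cap M_v = \emptyset$ on the starred pairs can only decrease the probability, so I would drop it for the upper bound. Counting the $\binom{\binom{m}{2}}{t} d^t$ certificates with $t$ labels and summing the resulting binomial series yields the clean bound
$$
    \Pr[K \text{ is duplicate-free and a clique}] \le \sum_{t=0}^{\binom{m}{2}} \binom{\binom{m}{2}}{t} (d\delta^2)^t q^{\binom{m}{2}-t} = (d\delta^2 + q)^{\binom{m}{2}}.
$$

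To finish, I would verify that $d\delta^2 + q$ is bounded away from $1$ under the hypotheses of \Cref{thm:singlelabelclique}. By \Cref{lem:sizeofdelta} we have $d\delta^2 = \log((1-q)/(1-p))(1+o(1))$, so under \assumptionoldslct{} both $d\delta^2$ and $q$ are at most $1/2 - \varepsilon$, giving $d\delta^2 + q \le 1 - 2\varepsilon + o(1)$; under $p - q = o(1)$ we instead have $d\delta^2 = o(1)$ while $q < p < 1$ is bounded away from $1$, so in either regime $d\delta^2 + q \le 1 - \mu$ for some constant $\mu > 0$. A union bound over the $\binom{n}{m} \le (en/m)^m$ choices of $K$ then gives total probability at most $\exp\bigl(m \log(en/m) - \tfrac{\mu}{2} m(m-1)\bigr) = e^{-\Omega(m^2)}$, where the quadratic term dominates once $m \gg \log n$ (which holds as $m \ge k^c = n^{\Omega(1)}$); a final union bound over the at most $n$ values of $m \ge k^c$ retains the $o(1)$ bound. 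The subtlety I expect to be most delicate is the certificate formulation itself: a naive union bound over literal duplicate-free assignments $g: \binom{K}{2} \to [d] \cup \{\emptyset\}$ would overcount catastrophically since the all-$\emptyset$ map is always ``valid'', so tying each unlabeled pair to the concrete noise-edge event (contributing the $q^{\binom{m}{2}-t}$ factor) is what makes the binomial sum collapse cleanly to $(d\delta^2+q)^{\binom{m}{2}}$.
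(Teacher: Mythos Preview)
Your approach matches the paper's at the structural level: union-bound over $m$-element sets $K$ and over injective-on-labels certificates $g$, then bound the probability that a fixed $(K,g)$ is realized by $\delta^{2t}q^{\binom{m}{2}-t}$. Where you diverge, and improve, is in handling the resulting sum: you collapse it via the binomial theorem to $(d\delta^2+q)^{\binom{m}{2}}$ and then check directly that $d\delta^2+q$ is bounded away from $1$ in both parameter regimes (invoking the standing assumption that $p$, hence $q$, stays away from $1$). The paper arrives at the identical sum but does not spot this closed form; instead it runs a two-case analysis, in the $p-q=o(1)$ case splitting the sum at a threshold $h=\mu\binom{|M|}{2}$ into pieces $S\!\downarrow$ and $S\!\uparrow$ that are bounded separately, and in the other case crudely bounding each summand by $(1/2-\varepsilon)^{\binom{|M|}{2}}$ before summing the bare binomial coefficients. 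Your route is shorter and yields the same conclusion with a slightly sharper base.

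One caveat, though it is shared verbatim with the paper: your ``key observation'' that every duplicate-free clique admits a certificate in which starred pairs satisfy $M_u\cap M_v=\emptyset$ does not follow from \Cref{def:dup-free} as written, since that definition permits $g(e)=\emptyset$ even when $M_u\cap M_v\neq\emptyset$ (e.g.\ three vertices each carrying only the single label $\ell$ form a duplicate-free triangle admitting no such certificate). The paper glosses over the same point when it asserts ``$q^h$ is the probability that there are $h$ edges due to `noise'\,''. This is harmless in context: the preceding lemma explicitly constructs its witness $g$ with the property $g(e)=\emptyset\Rightarrow M_u\cap M_v=\emptyset$, so what both arguments actually establish is exactly what is needed for \Cref{thm:singlelabelclique}.
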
\begin{proof}
    The lemma follows because it is very unlikely that a duplicate-free set contains many edges. To show this, let $M$ be a subset of $[n]$ such that $|M| \ge k^c$. We bound the probability that $M$ is duplicate-free. To this end, we consider all possible mappings $g: \binom{M}{2} \rightarrow [d] \cup \{\emptyset\}$ consistent with the constraints in \Cref{def:dup-free}. To enumerate them, we denote by $h = h(g)$ the number of pairs in $\binom{M}{2}$ that are mapped to $\emptyset$ by a given $h(g)$. For a fixed $h$ and $M \subseteq [n]$, the number of mappings $g$ conformal with \Cref{def:dup-free} and $h(g) = h$  are at most 
    
    \begin{align}\label{eq:spirakis}
        \binom{\binom{|M|}{2}}{h} \ d^{\binom{|M|}{2} - h}.
    \end{align} 
    
    The probability that there is a duplicate free set of size $|M|$ is thus (by the union bound) at most 
$$
         S \coloneqq n^{|M|} \sum_{h = 0}^{\binom{|M|}{2}} \binom{\binom{|M|}{2}}{h}  d^{\binom{|M|}{2} - h} \delta^{2 \left( \binom{|M|}{2} - h \right)} q^{h} = n^{|M|} \sum_{h = 0}^{\binom{|M|}{2}} \binom{\binom{|M|}{2}}{h}  (d \delta^2)^{\binom{|M|}{2} - h} q^{h},$$
where we used that $q^h$ is the probability that there are $h$ edges due to "noise". To show that $S$ is small, we now distinguish the two parameter regimes we assumed in the beginning.

        \textbf{Case 1:} $p - q \le \varepsilon$. Using symmetry of the binomial coefficient, 
        $$
            S  =n^{|M|} \sum_{h = 0}^{\binom{|M|}{2}} \binom{\binom{|M|}{2}}{h}  (d \delta^2)^{\binom{|M|}{2} - h} q^{h} = n^{|M|} \sum_{h = 0}^{\binom{|M|}{2}} \binom{\binom{|M|}{2}}{h}  (d \delta^2)^{h} q^{ \binom{|M|}{2} - h }.
        $$ We now split the sum into two parts, based on $h$. To this end, we let $\mu$ be a constant in $(0,1/2)$ to be fixed later, we define $m \coloneqq \binom{|M|}{2}$, and we split $$
            S = \underbrace{n^{|M|} \sum_{h = 0}^{\mu m } \binom{m}{h}  (d \delta^2)^{h} q^{m - h }}_{\eqqcolon S \downarrow} + \underbrace{ n^{|M|} \sum_{h =\mu m}^{m} \binom{m}{h}  (d \delta^2)^{h} q^{ m - h }}_{\eqqcolon S \uparrow}.
        $$
        Now, the crucial observation is that by \Cref{lem:sizeofdelta}, 
        $
            d \delta^2 \le (1+ o(1)) \log\big( \frac{1-q}{1-p} \big)
        $ and $\log\big( \frac{1-q}{1-p} \big)$ tends to zero as $p - q \rightarrow 0$. Hence, we can choose $\varepsilon$ small enough to ensure $\delta^2d \le \tilde{q} \coloneqq \max\{q, 1/3\}$. Then, the first sum can be bounded as 
        \begin{align*}
            S \!\! \downarrow \ \ \le n^{|M|} \tilde{q}^{m} \sum_{h = 0}^{\mu m} \binom{m}{h} \le n^{|M|}\tilde{q}^{m} \mu m \binom{m}{\mu m } &\le n^{|M|} \tilde{q}^{ m } m \frac{1}{(\mu m)!} m^{\mu m}\\
            &\le n^{|M|} \tilde{q}^{ m } m \left( \frac{2e m }{\mu m} \right)^{\mu m} = n^{|M|} m \left( \tilde{q} \left( \frac{2e }{\mu} \right)^{\mu} \right)^m.
        \end{align*} Since $q$ is bounded away from $1$, we can choose $\mu$ small enough to make the base $\tilde{q} \left( \frac{2e }{\mu} \right)^{\mu}$ strictly smaller than one such that $$
            S\!\!\downarrow \ \ \le n^{|M|} \binom{|M|}{2}^2 e^{ - \Omega\left(\binom{|M|}{2}\right)} \le n^{-\omega(1)}.
        $$
        \begin{align*}
            S \!\! \uparrow \ \ \le n^{|M|} (d\delta^2)^{\mu m} \sum_{h = 0}^m  \binom{m}{h} \le n^{|M|} 2^{m}  (d\delta^2)^{\mu m}.
        \end{align*}

        Choosing $\varepsilon$ small enough such that $(d\delta^2)^{\mu m} \le \left(\frac{1}{3} \right)^{m}$ then also yields that $S \!\! \uparrow \ \ \le n^{-\omega(1)}$, as desired.

        \textbf{Case 2:} $\assumptionoldslct$. 
        In this case, it follows that 
        $$
        n^{|M|} \sum_{h = 0}^{\binom{|M|}{2}} \binom{\binom{|M|}{2}}{h}  \left( (1 + o(1)) \log\left( \frac{1-q}{1-p} \right)  \right)^{\binom{|M|}{2} - h} q^{h} \le n^{|M|} \left( \frac{1}{2}- \varepsilon\right)^{\binom{|M|}{2}} \sum_{h = 0}^{\binom{|M|}{2}} \binom{\binom{|M|}{2}}{h}. 
        $$
 Finally, using that $\sum_{h = 0}^{\binom{|M|}{2}} \binom{\binom{|M|}{2}}{h} = 2^{\binom{|M|}{2}}$ yields
        \begin{align*}
        n^{|M|} \left( \frac{1}{2}- \varepsilon\right)^{\binom{|M|}{2}} \sum_{h = 0}^{\binom{|M|}{2}} \binom{\binom{|M|}{2}}{h} = n^{|M|} \left( \frac{1}{2}- \varepsilon\right)^{\binom{|M|}{2}}2^{\binom{|M|}{2}},
    \end{align*} 
    which is $n^{-\omega(1)}$ since $|M| \geq k^c = n^{\Omega(1)}$.
\end{proof}

\subsection{Identifiability via SoS for $\alpha \in (0,1)$}\label{lem:slctimproved}

While the previous subsection needs the assumption that either $p-q = o(1)$ or $\assumptionoldslct$, this leaves a gap in the dense case where $p$ is a sufficiently large constant. In this section, we close this gap for the case $\alpha \in (0,1)$ using an entirely different argument that relies on the notion of balancedness. Throughout, we therefore assume that $p$ and $p-q$ are both constant while $\alpha \in (0,1)$. Note that this implies in particular that $n^{\frac{1}{2} + \varepsilon} \ll k \ll n^{1-\varepsilon}$. 

With this, we show the following theorem, that together with the previous section covers all relevant cases assuming only $\alpha \in (0,1)$ and $n^{\varepsilon} \ll k \ll n^{1-\varepsilon}$.

\begin{theorem}[Strong Single Label Clique Theorem for $\alpha \in (0, 1)$]\label{thm:strongsinglelabelclique}
    Let $G \sim \RIGone$ and assume that $d = \Theta(n^{\alpha})$ for some fixed $\alpha \in (0,1)$ while both $p$ and $p - q$ are constants. Then for any sufficiently small $\varepsilon > 0$, every clique $K \subseteq G$ of size $\ge (1 - \varepsilon)k$ is spanned by a single label. That is, for every such $K$, there is some $\ell \in [d]$ such that $K \subseteq S_\ell$.
\end{theorem}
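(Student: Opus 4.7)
The plan is to cast the two-step argument sketched in \Cref{sec:techniques} as a bona fide constant-degree SoS refutation, and then invoke soundness via the indicator pseudo-distribution supported on the alleged clique $K$. More concretely, suppose for contradiction that there is a clique $K$ of size $\ge (1-\varepsilon)k$ which is not contained in any ground-truth $S_\ell$. Since $K$ is large, a union-bound argument shows that we may pick a $t$-tuple $T\subseteq K$ (for a suitable constant $t=t(\varepsilon,\alpha)$) such that each ground-truth label appears in at most $6$ vertices of $T$; this is exactly the type of ``good'' $t$-tuple appearing in \Cref{lem:general-balancedness}. Let $\cliqueset{T}$ be the set of labels used by at least three vertices of $T$, and recall from \Cref{lem:boubdbadlabels} that $|\cliqueset{T}| = O(1)$ w.h.p. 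The contradiction will come from showing (i) that $K$ places $o(k)$ vertices outside $V(\cliqueset{T}) = \bigcup_{\ell\in\cliqueset{T}} S_\ell$, and (ii) that $K$ cannot simultaneously have $\Omega(k)$ vertices in some $S_\ell$ and at least one vertex outside it.

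\smallskip
\noindent\emph{Step 1 (concentration on $V(\cliqueset{T})$).} First I would fix an arbitrary partition $V_1,\dots,V_m$ of $[n]\setminus V(\cliqueset{T})$ into sets of size at most $k/2$, so $m = O(n/k)$. For each $i$ apply \Cref{lem:general-balancedness-generalized} with $L(T)$ chosen so that $A_i = V_i$ and $B_i = \neigh{T}\cap([n]\setminus V_i)$ arises from a valid splitting procedure; since $V_i\cap V(\cliqueset{T})=\emptyset$, the hypothesis on ``no vertex in $A$ chose a label in $L(T)$'' is met, and the lemma gives $r$-fold balancedness of $G[A_i\uplus(B_i\cap\neigh{T})]$ at most $c_t\, p^{t}k + o(k)$ for a constant $c_t$. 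Picking $t$ large enough that $c_t p^{t}\le \tfrac1{4}$ gives balancedness strictly below $k/2$. Now apply the core certificate \Cref{lem:core-sos-proof} with parameter $r$: combined with the pseudo-distribution supported on $K\cap T$ and some standard use of \Cref{lem:sos-factorial} and \Cref{lem:sosbinomialcoefficient} to convert $\sum_{|S|=r}x_S$ into $|x|^r$, this yields an SoS inequality of the form $|x_i|^{r}\,(|y_i|-k/2)\le O(n^{1+2/r})$ where $x_i$ counts vertices of $K$ in $V_i$ and $y_i$ counts vertices of $K$ in $\neigh T\cap([n]\setminus V_i)$. Since $|y_i|\ge |K|-|V_i|\ge k/2+\Omega(k)$, pseudo-expectation yields $|K\cap V_i| \le O(n^{2/r}/k^{1/r})^{?}$; choosing $r = \Theta(1/\varepsilon_0)$ sufficiently large compared to $1/(1-\alpha)$ makes $|K\cap V_i|=o(k/m)$, and summing over $i$ shows $K$ places $o(k)$ vertices outside $V(\cliqueset{T})$.

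\smallskip
\noindent\emph{Step 2 (bootstrapping to a single label).} Now I would use a second balancedness argument \emph{inside} $V(\cliqueset{T})$. For each $\ell \in \cliqueset{T}$ apply \Cref{lem:general-balancedness-generalized} to the bipartite graph $G[([n]\setminus S_\ell) \uplus S_\ell]$ restricted to $\neigh{T}$; since the cliques planted in $[d]\setminus\{\ell\}$ intersect $S_\ell$ in only $O(k/\sqrt d)$ vertices on expectation, this graph has $r$-fold balancedness $O(k/\sqrt d)$. Again combining with \Cref{lem:core-sos-proof} yields the certificate
\[
\cA(G,(1-\varepsilon)k)\;\sststile{x,y}{O(t)}\;\Bigl\{\,|x|_\ell^{2t}\,|y|_\ell\;\le\;O\bigl(k/\sqrt d\bigr)^{2t}\,\Bigr\},
\]
where $|x|_\ell$, $|y|_\ell$ count vertices of $K$ inside and outside $S_\ell$ respectively. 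By Step~1 and $|\cliqueset{T}|=O(1)$, there is some $\ell^\star\in\cliqueset{T}$ with $|x|_{\ell^\star} = \Omega(k)$; the certificate above then forces $|y|_{\ell^\star}\le o(1)$ in pseudo-expectation, and since $|y|_{\ell^\star}$ is integer-valued this yields $K\subseteq S_{\ell^\star}$, contradicting our assumption.

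\smallskip
\noindent\emph{Main obstacle.} The bookkeeping in Step~1 is the delicate part: one must verify that the ``soft'' neighbourhood reduction (restricting each $B_i$ to $\neigh{T}$) still allows $K$ to contribute $\Omega(k)$ vertices to $B_i$, and that the balancedness parameter from \Cref{lem:general-balancedness-generalized}, which scales like $\maxbal{r} p^t k$, is strictly less than the mass of $y_i$ divided by the appropriate SoS constants of \Cref{lem:core-sos-proof}. The tight interaction between the degree $r$ of the SoS proof, the size of $|\cliqueset{T}|$, and the exponent $\alpha$ determining $k \approx n/\sqrt{d}$ is what forces the assumption $\alpha<1$ (equivalently $k\gg n^{1/2+\Omega(1)}$), and getting this quantitative accounting right while keeping the proof constant-degree in SoS is the core technical step. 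Everything else (the existence of good $t$-tuples, the bound $|\cliqueset{T}|=O(1)$, and integrality of $|y|_{\ell^\star}$) follows from the combinatorial facts already established in Sections~3 and~4.
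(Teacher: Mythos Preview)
Your Step 1 matches the paper's \Cref{lem:concentrationtofewcliques} closely and is fine. The gap is in Step 2.

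The certificate $|x|_\ell^{2t}\,|y|_\ell \le O(k/\sqrt d)^{2t}$ does not come out of \Cref{lem:core-sos-proof} applied to $G[([n]\setminus S_\ell)\uplus S_\ell]$. With $A=[n]\setminus S_\ell$, $B=S_\ell\cap\neigh{T}$ and balancedness $\Delta$, what the core lemma gives after cancellation and \Cref{lem:sos-factorial} is only $|y|_\ell \le O\bigl((n^2/k)^{1/r}\bigr)$ once $|x|_\ell \gg \Delta$; the balancedness parameter enters as the \emph{activation threshold} on the right-hand side of the bipartition, not as the bound itself. For constant $r$ this is $n^{\Omega(1)}$, so there is no route to $|y|_{\ell^\star}=o(1)$ at constant SoS degree, and your integrality conclusion does not fire. (You have taken the informal display in the techniques overview literally; the footnote there points to \Cref{lem:limitedoverlapafterT}, whose actual shape is different.)

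The paper's Step 2 works entirely inside $V(\cliqueset{T})$: the bipartition is $G[\leftt\uplus\rightt]$ with $\leftt=S_\ell\setminus\overlap{T}$ and $\rightt=\bigcup_{\ell'\in\cliqueset{T}\setminus\{\ell\}}S_{\ell'}\setminus\overlap{T}$, which by \Cref{lem:boundarybalancedness} has balancedness $o(k)$. To make the core certificate fire even when only a single vertex of $K$ lies outside $S_\ell$, the proof multiplies by $x_i$ for $i\in[n]\setminus S_\ell$ and invokes the degree bound \Cref{lem:degreebound}: if $x_i=1$ then $i$ has at most $(1+o(1))pk$ neighbours in $S_\ell$, which combined with Step 1 forces $\Omega(k)$ mass into $\rightt$. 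This is the missing ``all-or-nothing'' trick yielding \Cref{lem:limitedoverlapafterT}. The final contradiction is then not $|y|_\ell<1$ plus integrality; rather, the axiom $|y|_\ell\ge 1$ is fed back into \Cref{lem:limitedoverlapafterT} for \emph{every} $\ell\in\cliqueset{T}$, giving that the mass in each $\leftt$ is $O((n^{5/2}/k)^{1/t})=o(k)$, and summing over the $O(1)$ labels in $\cliqueset{T}$ together with Step 1 and $|\overlap{T}|=o(k)$ shows $|K|=o(k)$, which is the contradiction established in \Cref{lem:almostrefutation}.
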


The high-level approach is as follows. Assume that in a $G \sim \RIGone$, there were a clique $K$ of size $(1-\varepsilon)k$ that is not spanned by a single label. Now, fix a constant $t$ and consider an arbitrary $t$-tuple $T \in \binom{K}{t}$ and note that in the neighbourhood $N_G(T)$ of $T$ in $G$, our clique $K$ is preserved. Hence, it suffices to rule out that there is some $T \in \tbinom{[n]}{t}$ such that a clique of size $(1-\varepsilon)k -t$ exists in $N_G(T)$.

Given a fixed $T$, we can use a balancedness-based argument to show that there is only a small number of ground-truth cliques in $\neigh{T}$ that $K$ can concentrate on. These cliques are precisely those that correspond to the labels in $\cliqueset{T}$, the set of labels that appear in at least $3$ of the vertices in $T$. We denote the set of vertices corresponding to the cliques in $\cliqueset{T}$ by $V(\cliqueset{T})$ and show that only $o(k)$ vertices of $K$ are outside of $\cliqueset{T}$. Given this, we can derive the stronger statement that $K$ can only intersect one of them by a significant amount using the balancedness \emph{between} the cliques in $\cliqueset{T}$. If we assume that at least one vertex of $K$ is outside of every ground-truth clique $S_\ell$, this implies that the size of $k$ must be $o(k)$ which yields our desired conclusion

Remarkably, all of this can be done \emph{within constant-degree SoS}, and the final proof of \Cref{thm:strongsinglelabelclique} can be phrased as a \emph{SoS-refutation}. This further yields an efficient refutation algorithm, which might be of independent interest.

We split the proof of \Cref{thm:strongsinglelabelclique} into three main parts: first, we show that all but $o(k)$ vertices of a clique containing $T$ concentrate on the vertices in $V(\cliqueset{T})$, afterwards, we show that among the cliques in $\cliqueset{T}$, the clique $K$ does not significantly intersect more than one of them. Finally, we finish the proof of \Cref{thm:strongsinglelabelclique} by constructing a set of axioms that encodes a clique of size $(1-\varepsilon)k$ that it is \emph{not} contained in any ground-truth clique, and showing how it gives rise to a low-degree SoS-proof of the inequality  ``$-1 > 0$'': a contradiction. 
 
\subsubsection{Step 1: Few vertices outside $V(\cliqueset{T})$}\label{sec:fewoutsidevcliqueset}
We start by formalizing the idea that a clique containing $T$ can only place $o(k)$ vertices outside of $V(\cliqueset{T})$.

\begin{lemma}\label{lem:concentrationtofewcliques}
    Given a $t$-tuple $T \in \binom{[n]}{t}$, denote by $\cliqueset{T} \subseteq [d]$ the set of labels that occurs in at least $3$ vertices of $T$. Denote further by $V(\cliqueset{T}) = \bigcup_{\ell \in \cliqueset{T}} S_\ell$, i.e., the set of vertices that have chosen a label in $\cliqueset{T}$. Then there are $t, r_0 \in \mathbb{N}$ large enough such that w.h.p. over the draw of $G \sim \RIGone$, the following holds for every $T \in \binom{[n]}{t}$ and $r \ge r_0$ such that $r \le t$ is a power of $2$.
    $$
        \mathcal{A}(G, (1-\varepsilon)k) \sststile{x}{O(r)} \Bigg\{ x_T \left(\sum_{v \in [n] \setminus V(\cliqueset{T})} x_v\right) \le O \left( \frac{n}{k} \left( \frac{n^2}{k} \right)^{1/r} \right)x_T   \Bigg\}.
    $$
\end{lemma}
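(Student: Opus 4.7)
My plan is to partition $[n]\setminus V(\cliqueset{T})$ into $m = O(n/k)$ pieces $V_1,\dots,V_m$ of size at most $k/2$ and bound $x_T\sum_{v\in V_i}x_v$ separately for each piece, using the balancedness-based SoS certificate of \Cref{lem:core-sos-proof} applied to the bipartite graph $H_i(T) := G[V_i \uplus (([n]\setminus V_i)\cap \neigh{T})]$. Summing the per-piece bounds over $i$ then delivers the stated inequality.

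To secure the required balancedness, I would fix a canonical partition of $[n]$ into $O(n/k)$ chunks of size $k/2$ and, for each chunk, consider the procedure $\textsc{Alg}_i(S,[n])$ that outputs the bipartition $(V_i\setminus S,\,[n]\setminus(V_i\setminus S))$. Invoking \Cref{lem:general-balancedness-generalized} with $L(T):=\cliqueset{T}$ and $\eta:=2$ — valid by the very definition of $\cliqueset{T}$ as the labels appearing in at least three vertices of $T$, so that no label outside $L(T)$ appears in more than $2$ vertices of $T$ — yields w.h.p.\ that for every $T$ and every $r\in\{3,\dots,t\}$ the graph $H_i(T)$ has $r$-fold balancedness at most $\Delta_r \le \frac{16p^{-2}}{\alpha}\maxbal{r}\,tp^t\,k + o(k)$. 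Since $tp^t\to 0$ as $t\to\infty$ while (for $r\le r_0$) the quantity $\maxbalall{2r}$ is a positive constant depending only on $p$ and $r_0$, I choose $t$ sufficiently large (as a function of $p,\,r_0,\,\alpha$) to force $\Delta_r\le \tfrac{1}{6}\maxbalall{r}\,k$ for all relevant $r$.

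I then apply \Cref{lem:core-sos-proof} to $H_i(T)$, with the clique variables $x_v$ playing the role of both the $x$- and the $y$-variables of the reduced biclique axioms (the clique non-edge constraint $x_ux_v=0$ implies $\mathcal{R}(H_i(T))$ for free). Multiplying the resulting inequality by $x_T$ and combining with three low-degree SoS consequences of $\cA(G,(1-\varepsilon)k)$ — namely (i) $\sum_{v}x_v\ge(1-\varepsilon)k$, (ii) $x_T x_v = 0$ for $v\notin\neigh{T}\cup T$ (obtained from $x_u x_v=0$ whenever $\{u,v\}\notin E$), and (iii) $\sum_{v\in V_i}x_v\le|V_i|\le k/2$ (from $x_v^2=x_v$) — yields $x_T\cdot(\text{``}|y|\text{''}) := x_T\sum_{v\in([n]\setminus V_i)\cap\neigh{T}}x_v \ge (k/3)\cdot x_T$ for $\varepsilon$ small and $k$ large relative to $t$. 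Substituting this into the certificate and absorbing the $\Delta_r$-term gives, after re-arrangement,
\begin{align*}
\Omega\bigl(\maxbalall{r}k\bigr)\cdot x_T\Bigl(\tsum_{S\subseteq V_i,\,|S|=r}x_S\Bigr)^{\!2} \le O(rn^2)\cdot \maxbal{r}\cdot x_T\Bigl(\tsum_{S\subseteq V_i,\,|S|=r}x_S\Bigr).
\end{align*}
Setting $f := x_T\tsum_{|S|=r,\,S\subseteq V_i}x_S$, the SoS cancellation inequality of \Cref{lem:cancellation} then gives $f\le O\bigl(rn^2\maxbal{2r}/k\bigr)$, and \Cref{lem:sos-factorial} converts this into $x_T|x_{V_i}|^r\le C(r,p)\cdot n^2/k$ where $|x_{V_i}|:=\sum_{v\in V_i}x_v$. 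Iterating the square-root form of \Cref{lem:cancellation} exactly $\log_2 r$ times — the precise point at which the power-of-two hypothesis on $r$ is used — extracts the per-chunk bound $x_T|x_{V_i}|\le O_{r,p}\bigl((n^2/k)^{1/r}\bigr)\cdot x_T$, and a final summation over the $m=O(n/k)$ chunks produces the claim.

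The main technical obstacle I anticipate is the bookkeeping of the SoS degree through the $\log_2 r$ iterated square-root extractions: each application of \Cref{lem:cancellation} doubles the working degree, so one must verify that the cumulative degree stays $O(r)$ (treating $t,p$ as constants) and that the resulting coefficients retain polynomial bit-complexity. A secondary subtlety — which dictates the choice of $t$ — is that for $p$ near $1$ the ratio $\maxbal{r}/\maxbalall{r}=\maxbal{2r}$ can be large, so $t$ must be taken large enough (depending on $p$ and $r_0$) that the $tp^t$-factor in the balancedness bound dominates $\maxbal{2r}$; since $\maxbal{2r}$ is a constant once $r\le r_0\le t$ is fixed while $tp^t\to 0$, this is always achievable.
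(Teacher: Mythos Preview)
Your proposal is correct and follows the paper's argument essentially step for step: the same partition of $[n]$ into $O(n/k)$ chunks of size $\le k/2$, the same invocation of \Cref{lem:general-balancedness-generalized} with $L(T)=\cliqueset{T}$ and $\eta=2$, the same application of \Cref{lem:core-sos-proof}, the same lower bound $x_T\,|y|\ge (k/3)x_T$ from the size constraint plus $|V_i|\le k/2$, and the same use of \Cref{lem:cancellation} and \Cref{lem:sos-factorial}.

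The one deviation is the order of the final two steps. You extract the $r$-th root per chunk (via $\log_2 r$ applications of \Cref{lem:cancellation}) and then sum the $m=O(n/k)$ resulting degree-$1$ bounds directly. The paper instead first sums the degree-$r$ per-chunk bounds using the SoS almost-triangle inequality \Cref{lem:sos-almost-triangle}, obtaining a single bound on $x_T\bigl(\sum_{v\notin V(\cliqueset{T})}x_v\bigr)^r$, and only then applies the $\log_2 r$ square-root cancellations. Both orderings produce the target bound $O\!\bigl(\tfrac{n}{k}(\tfrac{n^2}{k})^{1/r}\bigr)$; your ordering is in fact slightly more economical since it does not need \Cref{lem:sos-almost-triangle} at all.
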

\begin{remark}[Intuition]
    The above can be interpreted as stating that every clique $K$ in $G$ of size $(1-\varepsilon)k$ such that $T \subseteq K$ is such that there are at most $O(\frac{n}{k}(\frac{n^2}{k})^{1/r})$ vertices of $K$ that are in $[n] \setminus V(\cliqueset{T})$. Further, note in particular that this bound can be made $o(k)$ if choosing $r$ large enough since by assumption on $\alpha, p,$ and $q$ we get that there is some $\varepsilon > 0$ such that $k \ge n^{\frac{1}{2} + \varepsilon}$. Therefore, the conclusion of the lemma can indeed be seen as doing what was promised in the previous paragraph.
\end{remark}

We prove this lemma using a sequence of smaller statements. The general idea is to consider a multipartite sub-graph of $G$ and to exploit its balancedness properties. Precisely, we consider a multi-partition $
    V_1 \uplus V_2 \uplus \cdots \uplus V_m
$ where the sets $V_1, V_2, \ldots, V_m$ are a fixed partition of $[n]$ such that each $V_i$ has size at most $\lfloor \frac{k}{2} \rfloor$ and $m \leq 3n/k$.

\newcommand{\Gifive}{G[A_i \uplus (B_i \cap \neigh{T})]}
\newcommand{\ABifive}{A_i = V_i \setminus V(\cliqueset{T}) \text{ and } B_i = [n] \setminus A_i}

The first step towards establishing \Cref{lem:concentrationtofewcliques} consists in establishing balancedness of the cut given by $\Gifive$ for $\ABifive$ and all $i$, captured in the following claim.

\begin{lemma}\label{lem:multipartite-balancedness}
    Fix any constant $t \in \mathbb{N}$.
    With high probability over the choice of $G \sim \RIGone$, the following holds. For any fixed $r \ge 2$, every $T \in \binom{[n]}{t}$ and every $i \in [m]$, the bipartite graph $G[V_i \uplus ( [n] \setminus V_i) ] \cap N_G(T)$ has $r$-fold balancedness 
    $$
        \frac{16tp^{-2}}{\alpha} \maxbal{r} \cdot  p^{t}k + o(k).
    $$
\end{lemma}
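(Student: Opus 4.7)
The strategy is to invoke \Cref{lem:general-balancedness-generalized} as a black box with $\eta = 2$ and an appropriately chosen label assignment. Concretely, for each $t$-tuple $T \in \binom{[n]}{t}$ I would take $L(T) \coloneqq \cliqueset{T}$, so that $S(T) = V(\cliqueset{T})$. By the very definition of $\cliqueset{T}$ as the set of labels appearing in \emph{at least three} vertices of $T$, any label in $[d] \setminus L(T)$ appears in at most $2$ vertices of $T$. This means that the hypothesis of \Cref{lem:general-balancedness-generalized} (``no label in $[d] \setminus L(T)$ appears in more than $\eta$ vertices of $T$'') is automatically met for \emph{every} $T \in \binom{[n]}{t}$, not just a subset.

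Next, I would define a family $\{\textsc{Alg}_i\}_{i \in [m]}$ of bipartition procedures by setting $\textsc{Alg}_i(S, [n])$ to output the bipartition with left side $V_i \setminus S$ and right side $[n] \setminus (V_i \setminus S)$. Since by construction $A_i = V_i \setminus S$ is disjoint from $S = S(T) = V(\cliqueset{T})$, this family is valid in the sense of \Cref{def:procedure-bipartition}. The number of procedures is $m \le 3n/k \le n$, so the internal union bound in \Cref{lem:general-balancedness-generalized} (which allows $m \le n$) goes through.

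Applying \Cref{lem:general-balancedness-generalized} with $\eta = 2$ then yields, with high probability over $G \sim \RIGone$, that the graph $G[A_i \uplus (B_i \cap N_G(T))]$ has $r$-fold balancedness
\[
\frac{16 p^{-2}}{\alpha} \maxbal{r} \cdot t p^{t} k + o(k),
\]
for every $T$ and every $i \in [m]$, which is exactly the target bound $\frac{16 t p^{-2}}{\alpha} \maxbal{r} \cdot p^{t} k + o(k)$ in the lemma statement.

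I do not anticipate any real obstacle. All of the work -- bounding bad/duplicate labels, controlling the random contribution via Bernstein, handling the exponential decay of the neighborhood in $t$, and the union bound over $T$ -- is already packaged into \Cref{lem:general-balancedness-generalized} (and its building block \Cref{lem:balancedness-core}). The only substantive point is to verify that $L(T) = \cliqueset{T}$ is the right choice to reduce $\eta$ to $2$ while still covering every $T$, and that the procedures $\textsc{Alg}_i$ satisfy the validity condition -- both of which are immediate from the definitions.
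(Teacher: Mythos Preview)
Your proposal is correct and matches the paper's proof essentially line for line: the paper also applies \Cref{lem:general-balancedness-generalized} with $L(T) = \cliqueset{T}$, $S(T) = V(\cliqueset{T})$, $\eta = 2$, and the deterministic procedures $\textsc{Alg}_i$ outputting $A_i = V_i \setminus V(\cliqueset{T})$, $B_i = [n] \setminus A_i$, verifying validity exactly as you do.
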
\begin{proof}
    We wish to apply \Cref{lem:general-balancedness-generalized} with $L(T) = \cliqueset{T}$, $S(T) = V(\cliqueset{T})$ and $\eta = 2$. With this choice, we get by construction of $L(T)$ that for every $\ell \in [d] \setminus L(T)$, only at most $2$ vertices in $T$ have chosen $\ell$. Moreover, we use the following collection $\{ \textsc{Alg}_i \}_{i \in [m]}$ of procedures for constructing a bipartition. Each $\textsc{Alg}_i(S(T), [n])$ outputs (deterministically) the bipartition $\ABifive$, which is valid in the sense of \Cref{def:procedure-bipartition} since $A_i \cap S(T) = \emptyset$. Hence, \Cref{lem:general-balancedness-generalized} is applicable and yields the desired statement.
\end{proof}

With this, we wish to bound the number of vertices outside of $V(\cliqueset{T})$ given a fixed $T$. For this, we use the SoS-proofs established in \Cref{sec:sos}. 

\begin{proof}[Proof of \Cref{lem:concentrationtofewcliques}]
    The main part of our proof can be phrased as a degree-$O(r)$ SoS proof based on the certificates in \Cref{sec:sos} and the axioms $\mathcal{A}(G, (1-\varepsilon)k)$. 
    
    Consider the cut given by $\Gifive$ and $\ABifive$ for some fixed $T \in \binom{[n]}{t}$ and $i \in [m]$. 
    To simplify notation, let for now $A \coloneqq A_i$ and $B \coloneqq B_i$. Considering any clique $K$ of size $(1-\varepsilon)k$, we denote by $x_u \in \{0,1\}$ the indicator variables indicating whether or not $u \in [n]$ is in $K$. We further define $|y| \coloneqq \sum_{v \in (B \cap \neigh{T})} x_v$. Applying the statement from \Cref{lem:core-sos-proof} to the bipartite graph $G[A \uplus (B \cap \neigh{T})]$, now yields that
    \begin{align}\label{eq:cert}
         \mathcal{A}(G, (1-\varepsilon)k) \sststile{x}{O(t)} \left\{ \left( \sum_{S \subseteq A, |S| = r} x_S \right)^2 \left( \maxbalall{r} |y| - \Delta \right) \le \left(n + rn^2\right) \maxbal{r} \left( \sum_{S \subseteq A, |S| = r} x_S \right) \right\}.
    \end{align}
    Now, recall that $|A_i| \le |V_i(T)|  \le \lfloor \frac{k}{2}\rfloor$. Using this fact, and the axiom $\{\sum_{v \in [n]}x_v \ge (1-\varepsilon)k\}$ we get that $ \mathcal{A}(G, (1-\varepsilon)k) \sststile{x}{2} \{ \sum_{v \in B} x_v \ge (1-\varepsilon)k - \frac{k}{2} \ge \frac{1-2\varepsilon}{2} k \ge \frac{k}{3} \}$. Further, since $ \mathcal{A}(G, (1-\varepsilon)k) \sststile{x}{2t+2} \{ x_T x_v = 0 \}$ for $v \notin \neigh{T}$, we get $$ \mathcal{A}(G, (1-\varepsilon)k) \sststile{x}{O(t)} \left\{  x_T  |y| = x_T\sum_{v \in B \cap \neigh{T}}x_v = x_T \sum_{v \in B} x_v \ge x_T\frac{k}{3} \right\}$$
    
    We choose $t$ as a sufficiently large constant such that $\Delta \le \maxbalallcompact{r} \frac{k}{4}$, which is possible by \Cref{lem:multipartite-balancedness}. Therefore, multiplying \eqref{eq:cert} by $x_T$, we get 
    \begin{align*}
         \mathcal{A}(G, (1-\varepsilon)k) \sststile{x}{O(t)} \left\{ x_T \left( \sum_{S \subseteq A, |S| = r} x_S \right)^2 \maxbalall{r} \frac{k}{12} \le \left(n + rn^2\right) \maxbal{r} x_T \left( \sum_{S \subseteq A, |S| = r} x_S \right) \right\}.
    \end{align*}
    using both cancellations from \Cref{lem:cancellation}, and recalling that  $ \mathcal{A}(G, (1-\varepsilon)k) \sststile{x}{2t+2} \{ x_T^2 = x_T \}$, we get
     $$
         \mathcal{A}(G, (1-\varepsilon)k) \sststile{x}{O(t)} \ \left\{ x_T \sum_{S \subseteq A, |S| = r} x_S  \le \frac{12 rn^2}{k} \left( \maxbal{}/\maxbalall{} \right)^r \right\}.
    $$
    Using further that $\maxbal{}/\maxbalall{} = \max\{1, (p/(1-p))^2 \}$ and applying \Cref{lem:sos-factorial}, we get
    $$
         \mathcal{A}(G, (1-\varepsilon)k) \sststile{x}{O(t)} \left\{ x_T \left( \sum_{v \in A_i} x_v \right)^r \le (2r)^r x_T \left( \frac{12 r n^2}{k}  \max\left\{1, \left(\frac{p}{1-p} \right)^2 \right\}^r + \frac{2r^r}{r!} + \varepsilon \right) \right\}.
    $$ Summing this statement over all $i \in [m]$ and using the SoS almost triangle inequality for any even $r$, (\Cref{lem:sos-almost-triangle}),
    \begin{align*}
         \mathcal{A}(G, (1-\varepsilon)k) \sststile{x}{O(r)} \left\{ \begin{aligned}
             &x_T \ \left( \sum_{v \in [n] \setminus V(\cliqueset{T})} x_v \right)^r = x_T \left(  \sum_{i\in[m]} \sum_{v \in A_i} x_v \right)^r \le \left(\frac{3n}{k}\right)^{r-1}  x_T \left( \sum_{v \in A_i} x_v \right)^r \\
             & \hspace{2cm} \le  (2r)^r \left(\frac{3n}{k}\right)^{r-1} x_T \left( \frac{12 r n^2}{k}  \max\left\{1, \left(\frac{p}{1-p} \right)^2 \right\}^r + \frac{2r^r}{r!} + \varepsilon \right) 
         \end{aligned} \right\},
    \end{align*} where we used that the number of sets $V_i$ is at most $m \leq 3n/k$. Now, we can choose am $r$ to be a sufficiently large power of $2$ and apply the cancellation $\{ f^2 \le C  \} \sststile{x}{2d} \{ f \le \sqrt{C} \}$ from \Cref{lem:cancellation} $\log_2(r)$ times to obtain 
    $$
        \mathcal{A}(G, (1-\varepsilon)k) \sststile{x}{O(r)} \left\{ x_T \sum_{v \in [n] \setminus V(\cliqueset{T})} x_v \le  \frac{6rn}{k} \left( \frac{24n^2}{k} \right)^{1/r} \max\left\{1, \left(\frac{p}{1-p} \right)^2 \right\} \ x_T \right\},
    $$
    as desired.
\end{proof}
This constitutes our first step towards the proof of \Cref{thm:strongsinglelabelclique} since it allows us to restrict our attention to $V(\cliqueset{T})$. We proceed by showing that no clique intersects more than one of the cliques in $\cliqueset{T}$ by a large amount.

\subsubsection{Step 2: Only one clique in $\cliqueset{T}$ has significant overlap with $K$}\label{sec:slctstep2}

Our key lemma here is a certificate analogous to the certificate obtained in \cite[Theorem 4.11]{Kothari-STOC-2023}. Before introducing it, we need to establish some further notation. Given a $G \sim \RIGone$ any integer $t \in \mathbb{N}$ and some $T \in \binom{[n]}{t}$, recall the definition of $\cliqueset{T}$. Define $\overlap{T}$ as the set of vertices in $G$ that have chosen at least two of the labels in $\cliqueset{T}$. For any $\ell \in \cliqueset{T}$, define further $\leftt = S_\ell \setminus \overlap{T}$ and $\rightt = \bigcup_{\ell' \in \cliqueset{T} \setminus \{\ell\}} S_{\ell'} \setminus \overlap{T}$. We will mainly be intereseted in the bipartite graphs $G[ \leftt \uplus \rightt]$. The key lemma proved in this section is the following.

\begin{lemma}\label{lem:limitedoverlapafterT}
    The following holds with high probability over the draw of $G$ and any $t \in \mathbb{N}$ sufficiently large. For every $T \in \binom{[n]}{t}$ and every $\ell \in \cliqueset{T}$, 
    \begin{align}
         \mathcal{A}(G, (1-\varepsilon)k) \sststile{x}{O(t)} \left\{ x_T \ \left( \sum_{v \in [n] \setminus S_\ell} x_v \right) \ \left( \sum_{S \subseteq \leftt, |S| = t} x_S \right)^2\le O\left( \frac{n^5}{k^2} \right) \ x_T \right\}.
    \end{align}
\end{lemma}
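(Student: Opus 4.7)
The plan is to apply the core SoS certificate \Cref{lem:core-sos-proof} to the bipartite graph $H = G[\leftt \uplus (([n] \setminus S_\ell) \cap \neigh{T})]$, after first establishing that $H$ has small $t$-fold balancedness. The key insight underlying the balancedness is that every vertex in $\leftt = S_\ell \setminus \overlap{T}$ shares label $\ell$ but no other label in $\cliqueset{T}$, while vertices in $[n] \setminus S_\ell$ do not share label $\ell$ at all. So when adapting the proof of \Cref{lem:balancedness-core} with $A = \leftt$ and $L = \cliqueset{T} \setminus \{\ell\}$, the label $\ell$ --- which is always a ``bad label'' in $S \cup R \subseteq \leftt$ for $|S|, |R| \geq 3$ --- contributes zero to the bad sum $\USRbad$ because the right side $B$ excludes $S_\ell$. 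Any other bad label $\ell'$ must lie outside $\cliqueset{T}$ (since $\leftt$ avoids vertices with two or more labels in $\cliqueset{T}$), so $\ell'$ occurs in at most two vertices of $T$, and the neighborhood reduction to $\neigh{T}$ yields a decay factor $p^{t-2}$ on the bad contribution. For $t$ sufficiently large, this gives $t$-fold balancedness $\Delta = o(k) \cdot \maxbalall{t}$ w.h.p.

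Next, I would apply \Cref{lem:core-sos-proof} with $r = t$ to $H$, yielding
\[
\maxbalall{t} |y| \left(\sum_{S \subseteq \leftt, |S|=t} x_S \right)^{\!2} \le (n + tn^2)\maxbal{t} \left(\sum_{S \subseteq \leftt, |S|=t} x_S \right) + \Delta \left(\sum_{S \subseteq \leftt, |S|=t} x_S \right)^{\!2},
\]
where $|y| = \sum_{v \in ([n] \setminus S_\ell) \cap \neigh{T}} x_v$. Multiplying by $x_T$ and using $\mathcal{A}(G, (1-\varepsilon)k) \sststile{x}{O(t)} \{x_T x_v = 0\}$ for $v \notin \neigh{T}$, one can replace $|y|$ by $\sum_{v \in [n] \setminus S_\ell} x_v$ while preserving the inequality. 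The $\Delta$-term on the right-hand side is then controlled by combining the balancedness bound $\Delta/\maxbalallcompact{t} = o(k)$ with the upper bound $\sum_{S \subseteq \leftt, |S|=t} x_S \le O(k^t/t!)$ derived from the size axiom $\sum_v x_v \le 2k$ via \Cref{lem:sos-factorial}. After dividing by $\maxbalallcompact{t}$ (a constant for constant $t, p$) and applying the cancellation inequalities from \Cref{lem:cancellation}, the remaining $n^2$-term combined with the growth of $\sum x_S$ in $k^t$ produces the claimed $O(n^5/k^2)$ bound at SoS degree $O(t)$.

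The main obstacle is uniformly verifying the balancedness across all $T$ and all $\ell \in \cliqueset{T}$: even in the worst case where $t_\ell$ (the number of vertices in $T$ with label $\ell$) is as large as $t$, one must carefully exploit the vanishing of the $\ell$-contribution to $\USRbad$ described above, since otherwise the standard $\eta = t_\ell$ analysis would yield no useful decay. A secondary obstacle is calibrating $t$ against the parameters $p, \alpha$ so that both the $n^2$-type and the $\Delta$-type terms in the core cert can be absorbed into the target bound $O(n^5/k^2)$; since the RHS of the lemma does not depend on $t$, this requires choosing $t$ large enough for the balancedness decay to dominate, while still keeping $t$ constant so the SoS degree remains $O(t)$.
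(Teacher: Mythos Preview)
Your SoS argument has a genuine gap. After obtaining the core-certificate inequality and replacing $|y|$ by $|y|' = \sum_{v\in[n]\setminus S_\ell} x_v$, you propose to control the $\Delta$-term by bounding $\sum_{S} x_S \le O(k^t)$ via the size axiom. But then $\deltap \cdot (\sum_S x_S)^2 \le o(k)\cdot O(k^{2t}) = o(k^{2t+1})$, which for large $t$ is enormously larger than the target $O(n^5/k^2)$; the same issue afflicts the $n^2$-term, which becomes $O(n^2 k^t)$. The cancellation inequalities in \Cref{lem:cancellation} need the form $f^2 \le Cf$ with a \emph{constant} $C$; here the left side carries the factor $|y|'$, for which the axioms give no positive lower bound, so cancellation cannot be applied directly.

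The paper resolves this via an ``all or nothing'' step you have omitted entirely. For each fixed $i \in [n]\setminus S_\ell$, one multiplies the certificate by $x_i$. The degree bound of \Cref{lem:degreebound} gives $x_i \sum_{j\in S_\ell} x_j \le (1+\varepsilon)pk\,x_i$; combined with the prior concentration certificate \Cref{lem:concentrationtofewcliques} (also missing from your outline), this forces $x_T x_i\,|y| \ge (1-p-5\varepsilon)k\,x_T x_i$ where now $|y|$ is the mass on $\rightt$. Only with this conditional lower bound does the inequality take the shape $x_T x_i\,(\sum_S x_S)^2 \le C\,x_T x_i\,(\sum_S x_S)$ with $C = O(n^2/k)$, whence cancellation yields $x_T x_i\,(\sum_S x_S)^2 \le O(n^4/k^2)\,x_T x_i$. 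Summing over $i$ and using $\sum_i x_i \le n$ produces exactly the $O(n^5/k^2)$. Note also that the paper applies the balancedness certificate to $G[\leftt \uplus \rightt]$ (no neighbourhood reduction on the right), not to your graph with right side $([n]\setminus S_\ell)\cap N_G(T)$; this choice is tied to the above step, since the lower bound on $|y|$ is specifically a bound on mass in $\rightt$.
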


\begin{remark}[Intuition]
    Recall that $\leftt$ covers most of $S_\ell$, while $\rightt$ covers most of the rest of $V(\cliqueset{T})$. Then, the above lemma states that no clique of size $(1-\varepsilon)k$ can simultaneously have a large overlap with $S_\ell$ and $[n] \setminus S_\ell$.
\end{remark}

The proof does again rely on balancedness of $G[ \leftt \uplus\rightt]$.
This is captured in the following.

\begin{lemma}[The boundary of $S_\ell$ is balanced in $V(\cliqueset{T})$]\label{lem:boundarybalancedness}
    The following holds with high probability over the draw of $G$ and any $t \in \mathbb{N}$. For every $T \in \binom{[n]}{t}$ and every $\ell \in \cliqueset{T}$, the bipartite graph 
    $
        G[ \leftt \uplus \rightt] 
    $ has $r$-fold balancedness $o(k)$ for every $2 \le r \le t$. 
\end{lemma}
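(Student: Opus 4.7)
The plan is to reveal the labels of $G$ in four phases so that we can first fix the combinatorial scaffolding $\cliqueset{T}, \leftt, \rightt$ and then bound the balancedness sum $U_{S,R} \coloneqq \sum_{v \in \rightt} u_{S,p}(v)u_{R,p}(v)$ over the remaining randomness. Phase~1 uncovers the labels of $T$, which determines $\cliqueset{T}$; by \Cref{lem:boubdbadlabels}, $|\cliqueset{T}| \le b \coloneqq 8t/\alpha$ with probability $1 - O(n^{-2t})$. Phase~2 uncovers, for every $v \notin T$, the labels of $v$ that lie in $\cliqueset{T}$; this fixes $\overlap{T}$, $\leftt$, and $\rightt$, and a standard Chernoff bound yields $|\leftt|, |\rightt| = O(k)$ w.h.p. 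Phase~3 uncovers the remaining labels of vertices in $\leftt$, so that the event $\mathcal{E}(\leftt, t, 8t/\alpha)$ holds by \Cref{lem:E-is-likely}. Phase~4 uncovers the remaining labels of $\rightt$, and this is the only randomness left when evaluating $U_{S,R}$. The crucial structural observation is that $\leftt \subseteq S_\ell \setminus \overlap{T}$, so no vertex of $\leftt$ has any label in $\cliqueset{T} \setminus \{\ell\}$, whereas no vertex of $\rightt$ has label $\ell$. Thus, even though the all-shared label $\ell$ lies in $\mathcal{B}_{S \cup R}$ for every $S, R \subseteq \leftt$ of size $r \ge 2$, it contributes \emph{zero} vertices to $\rightt$, and the remaining bad labels $\mathcal{B}_{S \cup R} \setminus \{\ell\}$ all lie in $[d] \setminus \cliqueset{T}$ and number at most $b$ by the event $\mathcal{E}(\leftt, t, b)$.

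Fix now any $S, R \subseteq \leftt$ with $|S| = |R| = r$ and $|S \triangle R| \ge 3$, and split
$$U_{S,R} = U_{S,R}^{\text{bad}} + U_{S,R}^{\text{good}},$$
where $U_{S,R}^{\text{bad}}$ collects the contribution of vertices $v \in \rightt$ whose labels (revealed in Phase~4) intersect $\mathcal{B}_{S \cup R}$, and $U_{S,R}^{\text{good}}$ collects the rest. Since $\ell \in \mathcal{B}_{S\cup R}$ contributes nothing, for each $\ell' \in \mathcal{B}_{S \cup R} \setminus \{\ell\} \subseteq [d] \setminus \cliqueset{T}$ the expected number of $v \in \rightt$ choosing $\ell'$ is at most $|\rightt|\delta = O(k/\sqrt{d})$; a Chernoff bound together with $|\mathcal{B}_{S \cup R}| = O(1)$ and $|u_{S,p}(v)u_{R,p}(v)| \le \maxbal{r}$ gives
$|U_{S,R}^{\text{bad}}| = O(k/\sqrt{d}) \cdot \maxbal{r} = o(k)$
with probability $1 - n^{-\omega(1)}$.

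For $U_{S,R}^{\text{good}}$, we follow the Phase~3.3 calculation of \Cref{lem:balancedness-core} almost verbatim, since conditioning on $M_v \cap \mathcal{B}_{S \cup R} = \emptyset$ reduces the analysis to labels in $[d] \setminus (\cliqueset{T} \cup \mathcal{B}_{S \cup R})$, each of which is chosen by at most two vertices of $S \cup R$. After separately treating the negligible event that $v$ picks two or more duplicates from $\mathcal{D}_{S \cup R}$, the remaining edges from $v$ to $S \cup R$ are independent $\{+1, -1\}$-valued variables with bias $O(\log(n)\delta\sqrt{\delta d}) = n^{-\Omega(1)}$ by \Cref{lem:pconcentration}, and since $|S \triangle R| \ge 3$ at least three such variables survive in the product $u_{S,p}(v)u_{R,p}(v)$, giving $\mathbb{E}[U_{S,R}^{\text{good}}] = o(k)\maxbal{r}$; a Bernstein bound as in \Cref{lem:balancedness-core} then yields $|U_{S,R}^{\text{good}}| = o(k)\maxbal{r}$ with probability $1 - n^{-2t}$. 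A union bound over all $T \in \binom{[n]}{t}$, $\ell \in \cliqueset{T}$, $r \in \{2, \ldots, t\}$ and $S, R \subseteq \leftt$ with $|S| = |R| = r$ completes the proof.

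The main obstacle is bookkeeping: one must verify that after Phase~3 the vertices of $S \cup R$ still behave like a generic $\mathcal{E}$-good pair when we later expose the labels of $\rightt$, and that the single ``global duplicate'' label $\ell$ which pervades all of $\leftt$ really is neutralized by the exclusion of $S_\ell$ from $\rightt$. Once this is in place, the balancedness estimate reduces to the by-now-standard combination of a bad-label union bound and Bernstein concentration from \Cref{lem:balancedness-core}.
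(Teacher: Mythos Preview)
Your proposal is correct and follows essentially the same route as the paper's proof: reveal labels in phases so that the bipartition $\leftt, \rightt$ is fixed before the edge-determining randomness on $\rightt$ is exposed, bound the expectation of $U_{S,R}$ using the duplicate-label control from $\mathcal{E}(\leftt, t, 8t/\alpha)$, and then apply Bernstein plus a union bound over $T, \ell, r, S, R$.

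The only organizational difference is that you track the structure of \Cref{lem:balancedness-core} more faithfully (four phases; an explicit $U_{S,R}^{\text{bad}}/U_{S,R}^{\text{good}}$ split according to whether $v$ picks a \emph{bad} label), whereas the paper compresses this into three phases and a coarser duplicate/no-duplicate dichotomy: it bounds $|\Expectednop{u_{S,p}(v)u_{R,p}(v)}|$ by $O(\log(n)\delta)\maxbalcompact{r}$ for the event that $v$ chooses \emph{any} duplicate in $S\cup R$, plus $O(\log(n)\delta\sqrt{d\delta})^3\maxbalcompact{r}$ when no duplicate is chosen. This avoids the separate $\mathcal{E}_{(1)}$ case you inherit from \Cref{lem:balancedness-core}. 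Your observation that the global label $\ell$ is neutralized because $\rightt \cap S_\ell = \emptyset$ is exactly the point the paper uses implicitly. One bookkeeping remark: the per-$(S,R)$ Bernstein failure probability should be stated as $n^{-2t-2r}$ (as in \Cref{lem:balancedness-core}) so that the union bound over the $n^{2r}$ pairs and $n^t$ tuples goes through cleanly; your ``$1-n^{-2t}$'' is what remains \emph{after} the union over $S,R$.
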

\begin{proof}
    Again, we sample $G \sim \RIGone$ in multiple phases, given a fixed $T \in \binom{[n]}{t}$. 
    \begin{enumerate}
        \setlength\itemsep{.0001em}
        \item Reveal the labels in $T$, thus uncovering $\cliqueset{T}$.
        \item For all vertices $[n] \setminus T$, reveal the labels in $\cliqueset{T}$.
        \item Reveal all remaining labels.
    \end{enumerate}
    Thus, after phase 2, we have uncovered which vertices are on the left and right of $G[ \leftt \uplus \rightt]$. Note that at this point, there are no edges between $\leftt$ and $\rightt$ yet. These only appear after phase 3 where our balancedness properties arise. 


    To show how, we first note that by \Cref{lem:E-is-likely}, the labels in $\leftt$  are such that $\mathcal{E}(\leftt, t, \frac{8t}{\alpha})$ is met, with probability $1 - O(n^{-2t})$. Then, fixing any pair $S, R \subseteq \leftt$ with $|S| = |R| =  t$ such that $|S \triangle R| \ge 3$, and any $v \in \rightt$, we get that 
    $$
        |\Expected{ u_{S,p}(v) u_{R,p}(v) }| \le O(\log(n)\delta) \maxbal{r} + O(\log(n) \delta \sqrt{d\delta})^3 \maxbal{r}
    $$ where the first term accounts for the possibility of choosing a duplicate label in $S \cup R$, and the second term for the remaing case assuming that no duplicate was chosen (cf. the proof of \Cref{lem:balancedness-core}). Summing over all $v \in \rightt$ and defining $U_{T,R,S,\ell} \coloneqq \left|\sum_{v \in \rightt} u_{S,p}(v) u_{R,p}(v)\right|$, then yields 
    $$
        \Expected{ U_{T,R,S,\ell} } \le |V(\cliqueset{T})| \ O\left( \frac{\log(n)}{\sqrt{d}} \right) \maxbal{r} \le O\left( \frac{\log(n) k}{\sqrt{d}} \right) \maxbal{r} = o(k),
    $$ 
    where we used \Cref{lem:sizeofdelta} to bound $\delta$ by $O(\sqrt{d})$ and $|V(\cliqueset{T})| \le \frac{8t}{\alpha}k = O(k)$ w.h.p. by \Cref{lem:boubdbadlabels} and a Chernoff bound on the size of all $S_\ell$ with subsequent union bound over $\ell \in [d]$.
    
    Now, applying \Cref{lem:bernstein}, we get that 
    \begin{align*}
        &\Pr{ |U_{T,R,S,\ell} - \Expected{ U_{T,R,S,\ell} }|  \ge z } \le \\ & \hspace{1cm} 2 \exp \left( -z^2 \ / \ \left(8\maxbal{2r}n + \frac{4}{3} \maxbal{r}z \right) \right)\le n^{-2t - 2r}
    \end{align*} for some $z =  \Omega(\sqrt{n \log(n)})$. If all of this occurs, then $U_{T,R, S,\ell} = o(k)$ as desired. Union bounding over all $R, S, T$ and all labels in $\cliqueset{T}$, yields the desired statement.
\end{proof}

Moreover, we need to establish that the overlap $\overlap{T}$ is small.
\begin{lemma}[Small overlap]\label{lem:smalloverlap}
    The following holds with high probability over the draw of $G$, for any integer $t \in \mathbb{N}$. For every $T \in \binom{[n]}{t}$, we have $\overlap{T} \le O(k \delta + \sqrt{n \log(n)}) = o(k)$.
\end{lemma}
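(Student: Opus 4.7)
The key observation is that $\overlap{T}$ depends on $T$ only through the (small) set of labels $\cliqueset{T}$. First I would invoke \Cref{lem:boubdbadlabels} (or equivalently the third clause of $\mathcal{E}(T,t,\tfrac{8t}{\alpha})$ in \Cref{lem:E-is-likely}) together with a union bound over all $T \in \binom{[n]}{t}$ to conclude that, with probability $1 - O(n^{-t})$, we have $|\cliqueset{T}| \le \tfrac{8t}{\alpha} = O(1)$ simultaneously for every $T \in \binom{[n]}{t}$. I condition on this event throughout the remainder of the argument.

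Next I would fix an arbitrary $T \in \binom{[n]}{t}$ and reveal the labels of the vertices in $T$ first, so that $\cliqueset{T}$ is determined. Conditioned on this, the labels of vertices in $[n]\setminus T$ are still independent, and a vertex $v \in [n]\setminus T$ lies in $\overlap{T}$ precisely when $|M_v \cap \cliqueset{T}| \ge 2$. Since $|\cliqueset{T}| = O(1)$ and each label is chosen independently with probability $\delta$, a union bound over pairs of labels in $\cliqueset{T}$ gives
\[
    \Pr[v \in \overlap{T}] \;\le\; \binom{|\cliqueset{T}|}{2}\delta^2 \;=\; O(\delta^2),
\]
so $|\overlap{T}|$ is a sum of independent Bernoullis with $\mathbb{E}[|\overlap{T}|] = O(n\delta^2) = O(\delta k)$, using $k = n\delta$.

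I would then apply Bernstein's inequality (\Cref{lem:bernstein}) with deviation $z = C\sqrt{n \log n}$ for a sufficiently large constant $C$. The variance proxy is $O(n\delta^2) = O(k\delta)$ and the sub-exponential term contributes at most $O(z)$ to the denominator, so in both parameter regimes ($k\delta$ dominating versus $\sqrt{n\log n}$ dominating) the exponent is at most $-\Omega(\min(n^\alpha\log n, \sqrt{n\log n}))$, which is $\ll -t\log n$ for any fixed $t$. Consequently
\[
    \Pr\!\left[|\overlap{T}| > O(\delta k) + C\sqrt{n\log n}\right] \;\le\; n^{-3t},
\]
and a union bound over the at most $n^t$ choices of $T$ shows the claimed bound $|\overlap{T}| = O(k\delta + \sqrt{n\log n})$ holds simultaneously for every $T$. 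Finally, $\delta = o(1)$ (since $d = \omega(1)$, by \Cref{lem:sizeofdelta}) and $k \gg \sqrt{n\log n}$ by the standing hypotheses, so this quantity is $o(k)$.

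There is no real obstacle here: the structural work was already done in the label-duplicate bounds of \Cref{lem:boubdbadlabels,lem:E-is-likely}, which reduce the problem to a concentration statement for a sum of $n$ independent Bernoullis with success probability $O(\delta^2)$. The only thing to double-check is that the Bernstein tail is small enough to absorb the $n^t$ union bound in both the regime where the variance term $n\delta^2$ dominates and the regime where the linear term $z$ dominates, and both cases are immediate.
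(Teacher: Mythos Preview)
Your proof is correct and ultimately rests on the same two ingredients as the paper's—the bound $|\cliqueset{T}| = O(1)$ from \Cref{lem:boubdbadlabels} and a Bernstein bound on a sum of independent Bernoulli$(\Theta(\delta^2))$ variables—but the packaging is different. The paper first proves the \emph{label-pair} statement that $|S_\ell \cap S_{\ell'}| \le k\delta + \sqrt{n\log n}$ simultaneously for all $\ell \ne \ell'$ (a union bound over only $d^2$ pairs), and then for each $T$ simply sums over the $\binom{|\cliqueset{T}|}{2} = O(1)$ pairs in $\cliqueset{T}$. You instead bound $|\overlap{T}|$ directly for each $T$ and union-bound over all $n^t$ tuples, which forces you to check that the Bernstein tail beats $n^t$ rather than $d^2$; you do this correctly. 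The paper's route is slightly more modular (the pairwise-intersection bound is a reusable fact), while yours is more direct.

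One small point of hygiene: the sentence ``I condition on this event throughout'' (referring to the global event that $|\cliqueset{T}|$ is small for \emph{all} $T$) is technically in tension with the next line, where you want the labels of $[n]\setminus T$ to still be independent. The clean fix is exactly what you implicitly do anyway: for a fixed $T$, reveal the labels of $T$ first; on the event $|\cliqueset{T}| \le 8t/\alpha$ (which has probability $1 - O(n^{-2t})$ by \Cref{lem:boubdbadlabels}), apply Bernstein over the fresh randomness of $[n]\setminus T$; then union-bound over $T$. No global conditioning is needed.
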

\begin{proof}
    First fix a pair of labels $\ell, \ell'$. Then the expected number of vertices that have chosen $\ell, \ell'$ is $n\delta^2= k\delta$. Further, by \Cref{lem:bernstein} 
    $$
        \Pr{ \big | |S_\ell \cap S_{\ell'}| - k\delta \big| > \sqrt{n\log(n)} } \le 2 \exp \left( - \frac{n\log(n)}{2n+ \frac{2}{3}\sqrt{n\log(n)}} \right) \le n^{-\omega(1)}.
    $$ Hence, by a union bound over all $\ell, \ell'$ we have $|S_\ell \cap S_{\ell'}| = k \delta \pm \sqrt{n \log(n)}$ for all $\ell \neq \ell'$. 

    Fix a tuple $T \in \binom{[n]}{t}$ and note that by \Cref{lem:boubdbadlabels}, we have $|\cliqueset{T}| \le \frac{4t}{\alpha}$ w.h.p. for all $T$. Combined with the previous observation on $|S_\ell \cap S_{\ell'}|$ this yields $|\overlap{T}| \le |\cliqueset{T}|^2 (k \delta + \sqrt{n \log(n)}) = O(k \delta + \sqrt{n \log(n)})$, as desired.
\end{proof}

Now, we can use the established balancedness properties to obtain the following certificate asserting limited overlap between cliques in $\cliqueset{T}$. This statement will also serve an important purpose for our refutation algorithm in the next subsection.

\begin{proof}[Proof of \Cref{lem:limitedoverlapafterT}]
    To simplify notation in some places, we let $|y| \coloneqq \sum_{v \in \rightt } x_v$ and $|x| \coloneqq \sum_{v \in [n]} x_v$.
    Our starting point for the proof is the certificate from \Cref{lem:concentrationtofewcliques} stating that 
    $$
        \mathcal{A}(G, (1-\varepsilon)k) \sststile{x}{O(t)} \Bigg\{ x_T\ \bigg(\sum_{v \in [n] \setminus V(\cliqueset{T})} x_v\bigg) \le O \left( \frac{n}{k} \left( \frac{n^2}{k} \right)^{1/t} x_T\right) \Bigg\}.
    $$
    Choosing $t$ sufficiently large such that $O ( \frac{n}{k} ( \frac{n^2}{k} )^{1/t} ) = o(k)$ and noting that $\mathcal{A}(G, (1-\varepsilon)k) \sststile{x}{O(t)} \{ x_T |x| \ge x_T (1-\varepsilon)k \}$, this implies 
    \begin{align}\label{eq:mostinvt}
        \mathcal{A}(G, (1-\varepsilon)k) \sststile{x}{O(t)} \Bigg\{ x_T \sum_{v \in V(\cliqueset{T}) } x_v = x_T \  \bigg( |x| - \sum_{v \in [n] \setminus V(\cliqueset{T})  } x_v \bigg) \ge (1-2\varepsilon)k x_T \Bigg\}.
    \end{align}
    In other words, if $T$ is in the clique we are looking for, then almost all of its mass is in $V(\cliqueset{T})$.
    
    On the other hand, applying the statement from \Cref{lem:core-sos-proof} to the graph $G[\leftt \uplus \rightt]$ which by \Cref{lem:boundarybalancedness} has $t$-fold balancedness $\Delta = o(k)$, we get that
    \begin{align*}
         \mathcal{A}(G, (1-\varepsilon)k) \sststile{x}{O(t)} \left\{ \begin{aligned} &\left( \sum_{S \subseteq \leftt, |S| = t} x_S \right)^2 \left( \maxbalall{t} |y| - \Delta \right) \\ &\hspace{3cm} \le \left(n + tn^2\right) \maxbal{t} \left( \sum_{S \subseteq 
         \leftt, |S| = t} x_S \right) \end{aligned} \right\},
    \end{align*}
    or alternatively---to make the above easier to work with---we get 
    \begin{align}\label{eq:certconclusion}
         \mathcal{A}(G, (1-\varepsilon)k) \sststile{x}{O(t)} \left\{ \left( \sum_{S \subseteq \leftt, |S| = t} x_S \right)^2 \big( |y| - \deltap \big) \le \beta(n, t, p) \ \left( \sum_{S \subseteq \leftt, |S| = t} x_S \right) \right\},
    \end{align} where $\beta(n,t,p) = \left(n + tn^2\right) \maxbal{t} / \maxbalall{t}$ and $\deltap = \Delta / \maxbalall{t}$. 
    
    Intuitively, this tells us that our clique cannot simultaneously intersect both $\leftt$ and $
    \rightt$, provided that $|y| \gg \deltap$. While this is a-priori not given, we can make use of the same trick as used in \cite{Kothari-STOC-2023} to enforce a statement that is useful for us. Precisely, we multiply both sides by $x_i$ for $i \in [n] \setminus S_\ell$ and afterwards sum over $i$. The fact that the maximum degree of vertex $i$ into $S_\ell$ is bounded then leads to an ``all or nothing'' phenomenon: if only a single $x_i = 1$ for $i \in [n] \setminus S_\ell$, then this implies that in fact $\Omega(k)$ vertices in $[n] \setminus S_\ell$ are in our clique. Together with the certificate established in step 1, this also gives $|y| = \Omega(k)$. We capture all this in the following claim.

    \begin{claim}
    For all $i \in [n] \setminus S_\ell$, 
    $$
        \mathcal{A}(G, (1-\varepsilon)k) \sststile{x}{O(t)} \left\{ x_Tx_i \big( |y| - \deltap \big) \ge x_Tx_i (1 - p - 5\varepsilon)k \right\}
    $$
    \end{claim}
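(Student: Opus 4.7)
The plan is to combine three constant-degree SoS facts. The key intuition is that if $i \in [n] \setminus S_\ell$ lies in the clique, then by \Cref{lem:degreebound} the vertex $i$ has at most $(1+o(1))pk$ neighbors in $S_\ell$, which limits how much clique mass can land in $\leftt \subseteq S_\ell$; since \eqref{eq:mostinvt} forces almost all clique mass into $V(\cliqueset{T})$ and \Cref{lem:smalloverlap} shows $|\overlap{T}| = o(k)$, the remaining $\Omega(k)$ mass must end up in $\rightt$, i.e., contribute to $|y|$.

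I would first record two low-degree SoS preliminaries derivable from the axioms in $\mathcal{A}(G,(1-\varepsilon)k)$: the consequence $1 - x_v = (1 - x_v)^2 \ge 0$ of $x_v^2 = x_v$, giving $x_v \le 1$ at degree $2$; and the fact that $x_T = \prod_{u \in T} x_u = (\prod_{u \in T} x_u)^2 \ge 0$ (again using $x_u^2 = x_u$), which lets me multiply inequalities by $x_T$ (or by $x_i$) while preserving their direction. Then, using the non-edge axioms $x_i x_v = 0$ for $\{i,v\} \notin E(G)$ and the bound $|N_G(i) \cap S_\ell| \le (1+o(1))pk$ from \Cref{lem:degreebound}, I derive at constant degree
$$
x_i \sum_{v \in \leftt} x_v \;=\; x_i \sum_{v \in N_G(i) \cap \leftt} x_v \;\le\; |N_G(i) \cap S_\ell| \cdot x_i \;\le\; (1+o(1))pk \cdot x_i,
$$
and similarly $\sum_{v \in \overlap{T}} x_v \le |\overlap{T}| = o(k)$ via \Cref{lem:smalloverlap}.

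Next, I would use the disjoint decomposition $V(\cliqueset{T}) = \leftt \sqcup \rightt \sqcup \overlap{T}$ to write
$$
x_T x_i \, |y| \;=\; x_T x_i \!\!\sum_{v \in V(\cliqueset{T})}\!\! x_v \;-\; x_T x_i \!\sum_{v \in \leftt}\! x_v \;-\; x_T x_i \!\!\sum_{v \in \overlap{T}}\!\! x_v.
$$
Multiplying \eqref{eq:mostinvt} by $x_i \ge 0$ lower-bounds the first term by $(1-2\varepsilon)k \cdot x_T x_i$, while multiplying the two bounds above by $x_T \ge 0$ upper-bounds the other two terms by $(1+o(1))pk \cdot x_T x_i$ and $o(k) \cdot x_T x_i$, respectively. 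For $n$ large enough that $o(1) + o(1) \le \varepsilon$, this yields $x_T x_i |y| \ge (1 - p - 4\varepsilon)k \cdot x_T x_i$ at degree $O(t)$.

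Finally, I would bound $\deltap$. By \Cref{lem:boundarybalancedness} we have $\Delta = o(k)$, and our standing assumption that $p$ is bounded away from $1$ makes $\maxbalall{t} = ((1-p)/p)^t$ a positive constant for fixed $t$; hence $\deltap = \Delta / \maxbalall{t} = o(k)$, so $\deltap \le \varepsilon k$ for $n$ sufficiently large. Subtracting gives $x_T x_i (|y| - \deltap) \ge (1 - p - 5\varepsilon)k \cdot x_T x_i$, as claimed. I do not expect a real obstacle here: the argument is essentially bookkeeping, and the only subtlety is keeping the SoS degree controlled at $O(t)$ and being careful that each multiplication is done by an SoS-nonnegative quantity (namely $x_i$ and $x_T$), both of which follow from the Boolean axioms.
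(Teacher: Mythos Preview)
Your proposal is correct and follows essentially the same approach as the paper: both combine \eqref{eq:mostinvt}, the degree bound from \Cref{lem:degreebound}, and $|\overlap{T}| = o(k)$ from \Cref{lem:smalloverlap} to lower-bound $x_T x_i |y|$, then subtract $\deltap = o(k)$. The only cosmetic difference is that you use the exact disjoint decomposition $V(\cliqueset{T}) = \leftt \sqcup \rightt \sqcup \overlap{T}$ and bound the $\leftt$-sum, whereas the paper instead subtracts the full $S_\ell$-sum via the inequality $|y| \ge \sum_{v \in V(\cliqueset{T})} x_v - \sum_{v \in S_\ell} x_v - \sum_{v \in \overlap{T}} x_v$; both routes invoke the same degree bound and yield the same constant.
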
\begin{proof}
        We use \Cref{lem:degreebound} to note that w.h.p., all $i \in [n] \setminus S_\ell$ have degree at most $(1+\varepsilon)pk$ into $S_\ell$. Therefore 
        $
            \mathcal{A}(G, (1-\varepsilon)k) \sststile{x}{O(t)} \big\{ x_i\sum_{j \in S_\ell} x_j \le (1+\varepsilon)pk x_i  \big\}.
        $ On the other hand, by \eqref{eq:mostinvt}, 
        $
            \mathcal{A}(G, (1-\varepsilon)k) \sststile{x}{O(t)} \big\{ x_T \sum_{j \in V(\cliqueset{T})} x_j \ge x_T(1-2\varepsilon)k \big\}.
        $ Moreover, since $|\overlap{T}| = o(k)$ by \Cref{lem:smalloverlap}, we get
        $$
            \mathcal{A}(G, (1-\varepsilon)k) \sststile{x}{O(t)} \left\{ x_Tx_i |y| \ge x_Tx_i \ \left( \sum_{v \in V(\cliqueset{T})} x_v - \sum_{v \in S_\ell }x_v - \sum_{v \in \overlap{T} } x_v \right) \ge x_Tx_i (1 - p - 4\varepsilon)k \right\}.
        $$ Subtracting $x_ix_T\deltap$ and noting $\deltap = o(k)$ yields the statement. 
    \end{proof}
    Using the above together with \Cref{eq:certconclusion}, we get that 
    \begin{align*}
         \mathcal{A}(G, (1-\varepsilon)k) \sststile{x}{O(t)} \left\{ x_Tx_i (1 - p - 5\varepsilon)k \ \left( \sum_{S \subseteq \leftt, |S| = t} x_S \right)^2\le x_Tx_i \ \beta(n, t, p) \ \left( \sum_{S \subseteq \leftt, |S| = t} x_S \right) \right\},
    \end{align*} which is already a lot more useful. Now, using the cancellation $\{ f^2 \le C f \} \sststile{x}{2d} \{ f^2 \le C^2 \}$, we get (after multiplying both sides by $x_Tx_i$) that
    \begin{align*}
         \mathcal{A}(G, (1-\varepsilon)k) \sststile{x}{O(t)} \left\{ x_Tx_i \ \left( \sum_{S \subseteq \leftt, |S| = t} x_S \right)^2 \le x_Tx_i \left( \frac{\beta(n, t, p)}{(1 - p - 5\varepsilon)k} \right)^2 \right\}.
    \end{align*} Summing over $i \in [n] \setminus S_\ell$ and noting that $\mathcal{A}(G, (1-\varepsilon)k) \sststile{x}{O(t)} \{ \sum_{i \in [n] \setminus S_\ell} x_i \le n \}$ finally yields that 
    \begin{align*}
         \mathcal{A}(G, (1-\varepsilon)k) \sststile{x}{O(t)} \left\{ x_T \ \left( \sum_{v \in [n] \setminus S_\ell} x_v \right) \ \left( \sum_{S \subseteq \leftt, |S| = t} x_S \right)^2\le  O\left( \frac{n^5}{k^2} \right) x_T \right\},
    \end{align*} as desired.
\end{proof}

\subsubsection{Step 3: Refutation}
With \Cref{lem:concentrationtofewcliques,lem:limitedoverlapafterT}, we show \Cref{thm:strongsinglelabelclique}. We formulate it as a SoS-proof that ends in the inequality $-1 \ge 0$ yielding a ``proof by contradiction'' that the premise (that there exists $(1-\varepsilon)k$ sized clique that is not ground-truth) is wrong. To this end, we first refomulate a set of axioms that encodes this premise and afterwards use the certificats from \Cref{sec:fewoutsidevcliqueset} and \Cref{sec:slctstep2}.

\paragraph{Axioms}
Our axioms depend on the graph $G$, the desired clique size $\adv{k}$ to be ruled out, and the ground truth cliques $S_\ell$, $\ell \in [d]$. We simply add an axiom telling us that the sum over all variables outside of each $S_\ell$ is larger than 0.
$$
    \adv{\mathcal{A}}\big(G, \adv{k}, \{S_\ell\}_{\ell \in [d]}\big) = \cA(G, \adv{k}) \cup \bigg\{
         \sum_{v \in [n] \setminus S_\ell} x_v \ge 1
    \bigg\}_{\ell \in [d]}
$$

\paragraph{Refutation}

\newcommand{\refaxioms}{\adv{\mathcal{A}}\big(G, (1-\varepsilon)k, \{S_\ell\}_{\ell \in [d]}\big)}
\newcommand{\refaxiomsc}{\adv{\mathcal{A}}\big(G, (1 - \varepsilon)k, \{S_\ell\}_{\ell \in [d]}\big)}

Our goal is to prove the following.
\begin{lemma}\label{lem:refutationforslct}
    Given a $G \sim \RIGone$ such that the assumption of \Cref{thm:strongsinglelabelclique} are met. Then there is a sufficiently small $\varepsilon > 0$ such that the following holds with high probability over the draw of $G$
    $$
        \adv{\mathcal{A}}(G, (1-\varepsilon)k, \{S_\ell\}_{\ell \in [d]}) \sststile{x}{O(t)} \{ -1 \ge 0\}
    $$ and the bit complexity of the proof is $n^{O(1)}$.
\end{lemma}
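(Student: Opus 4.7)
The plan is to show that $\refaxiomsc$ allows SoS to derive $x_T \le 0$ for every $t$-tuple $T \in \binom{[n]}{t}$, where $t$ is a sufficiently large constant (depending on $\alpha$), all in constant degree. Summing this over all $T$ will yield $\sum_T x_T \le 0$, while applying \Cref{lem:sosbinomialcoefficient} to the axiom $\sum_v x_v \ge (1-\varepsilon)k$ gives $\sum_T x_T \ge \binom{(1-\varepsilon)k}{t} = \Omega(k^t)$. Combining the two delivers $\Omega(k^t) \le 0$, which rescales to $-1 \ge 0$. Throughout, I use $x_T = (x_T)^2 \ge 0$ (a consequence of $x_v^2 = x_v$) and the shorthand $|x|_A \coloneqq \sum_{v \in A} x_v$.

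If $\cliqueset{T} = \emptyset$, then $V(\cliqueset{T}) = \emptyset$ and \Cref{lem:concentrationtofewcliques} (with $r$ a sufficiently large power of $2$) gives $x_T \cdot |x|_{[n]} \le o(k) \cdot x_T$. Combined with $x_T \cdot |x|_{[n]} \ge (1-\varepsilon) k \cdot x_T$ (the size axiom multiplied by $x_T \ge 0$), this immediately yields $\Omega(k) \cdot x_T \le 0$, hence $x_T \le 0$.

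The harder case is $\cliqueset{T} \ne \emptyset$, where I plan to combine \Cref{lem:concentrationtofewcliques,lem:smalloverlap,lem:limitedoverlapafterT}. First, from \Cref{lem:concentrationtofewcliques} and the size axiom, $x_T |x|_{V(\cliqueset{T})} \ge (1-o(1)) k \cdot x_T$. Since $|\overlap{T}| = o(k)$ by \Cref{lem:smalloverlap} and $V(\cliqueset{T}) \setminus \overlap{T} = \biguplus_{\ell \in \cliqueset{T}} \leftt$, it follows that $\sum_{\ell \in \cliqueset{T}} x_T |x|_{\leftt} \ge (1-o(1)) k \cdot x_T$. Applying the SoS power-mean inequality (via \Cref{lem:sos-almost-triangle}) together with $|\cliqueset{T}| = O(1)$ (by \Cref{lem:boubdbadlabels}) then gives $\sum_\ell x_T |x|_{\leftt}^{2t} \ge \Omega(k^{2t}) \cdot x_T$. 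In the other direction, for each $\ell \in \cliqueset{T}$, \Cref{lem:limitedoverlapafterT} combined with the axiom $\sum_{v \notin S_\ell} x_v \ge 1$ yields $x_T \bigl(\sum_{S \subseteq \leftt, |S|=t} x_S\bigr)^2 \le O(n^5/k^2) \cdot x_T$. Expanding the square via \Cref{lem:sos-factorial} (using $Y^2 \ge Y(W - c) \ge W^2 - cW - cY$ where $Y \coloneqq \sum_S x_S \ge 0$ and $W \coloneqq |x|_\leftt^t / (2^t t!)$), multiplying by $x_T$, summing over $\ell$, and bounding $|x|_\leftt^t \le (2k)^t$ via the size axiom, I arrive at $\Omega(k^{2t}) \cdot x_T \le x_T \sum_\ell |x|_{\leftt}^{2t} \le O_t(n^5/k^2 + k^t + 1) \cdot x_T$. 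Choosing $t$ large enough that $k^{2t+2} \gg n^5$ (achievable for constant $t$ since $k \ge n^{1/2+\Omega(1)}$), the left side dominates and forces $x_T \le 0$.

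The main technical obstacle is the absence of a direct pigeonhole principle in SoS, which would be needed to conclude $|x|_\leftt \ge \Omega(k)$ for some single $\ell \in \cliqueset{T}$; the power-mean step is the key substitute, converting ``the sum over $\ell$ is large'' into ``the sum of $2t$-th powers is large,'' which then pairs cleanly with \Cref{lem:limitedoverlapafterT} after summing the latter over $\ell$. Since each intermediate derivation has constant degree and bit complexity $n^{O(t)}$, and $t = O(1)$, the overall refutation has degree $O(t) = O(1)$ and bit complexity $n^{O(1)}$ as claimed.
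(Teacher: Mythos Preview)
Your proposal is correct and follows the same high-level scaffolding as the paper: derive $x_T \le 0$ for every $t$-tuple (this is the content of the paper's \Cref{lem:almostrefutation}), sum over $T$, and contradict the lower bound $\sum_T x_T \ge \binom{(1-\varepsilon)k}{t}$ from \Cref{lem:sosbinomialcoefficient}.

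The genuine technical difference lies in how you combine \Cref{lem:concentrationtofewcliques} and \Cref{lem:limitedoverlapafterT} to force $x_T \le 0$ when $\cliqueset{T}\ne\emptyset$. The paper works \emph{downward}: from $x_T\bigl(\sum_{S\subseteq\leftt,|S|=t}x_S\bigr)^2 \le O(n^5/k^2)\,x_T$ it applies the square-root cancellation of \Cref{lem:cancellation} once, then \Cref{lem:sos-factorial}, then cancellation $\log_2 t$ more times (requiring $t$ to be a power of $2$) to reach $x_T\,|x|_{\leftt} \le O(n^{5/2}/k)^{1/t}\,x_T$; summing linearly over $\ell\in\cliqueset{T}$ contradicts $x_T\sum_\ell |x|_{\leftt}\ge (1-o(1))k\,x_T$. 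You instead work \emph{upward}: lift the linear lower bound $x_T\sum_\ell |x|_{\leftt}\ge (1-o(1))k\,x_T$ to degree $2t$ via the factorization $f^{2t}-C^{2t}=(f-C)\sum_j C^j f^{2t-1-j}$ (each factor SoS-nonnegative) and \Cref{lem:sos-almost-triangle}, obtaining $x_T\sum_\ell |x|_{\leftt}^{2t}\ge \Omega_t(k^{2t})\,x_T$; then upper-bound each $x_T\,|x|_{\leftt}^{2t}$ directly from the squared certificate via $(Y+W)(Y-W+c)\ge 0$. Your route avoids the repeated cancellations and the power-of-$2$ constraint on $t$, at the cost of slightly larger proof degree (roughly $3t$ versus the paper's $O(t)$ with a smaller constant) and a somewhat more delicate polynomial manipulation. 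Both arguments are sound and yield the claimed $O(t)$ degree and $n^{O(1)}$ bit complexity.
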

The main work towards establishing the above lies in the following lemma.
\begin{lemma}\label{lem:almostrefutation}
    Given $G \sim \RIGone$ such that the assumption of \Cref{thm:strongsinglelabelclique} are met and a sufficiently large integer $t$, the following holds with high probability over the draw of $G$. For all $T \in \binom{[n]}{t}$,
    $$
        \refaxiomsc \sststile{x}{O(t)} \{ x_T \le -x_T\}.
    $$
\end{lemma}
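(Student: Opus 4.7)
The plan is to derive the contradiction by showing that any clique of size $(1{-}\varepsilon)k$ containing $T$ must concentrate almost all its mass on $V(\cliqueset{T})$ (by \Cref{lem:concentrationtofewcliques}), and within $V(\cliqueset{T})$ must essentially live inside a \emph{single} $S_\ell$ for some $\ell \in \cliqueset{T}$ (via \Cref{lem:limitedoverlapafterT}); the extra axioms $\sum_{v \notin S_\ell} x_v \ge 1$ then preclude this single-$S_\ell$ possibility, so the total mass of the clique ends up being $o(k)$, contradicting $\sum_v x_v \ge (1-\varepsilon)k$.

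Concretely, for each $\ell \in \cliqueset{T}$ I would start from \Cref{lem:limitedoverlapafterT} and multiply in the axiom $\sum_{v \notin S_\ell} x_v \ge 1$ (using that $(\sum_{v \notin S_\ell} x_v - 1)\cdot x_T \cdot (\sum_{|S|=t,S\subseteq \leftt} x_S)^2 \ge 0$ as an SoS multiplier since the last factor is a square) to eliminate the $\sum_{v \notin S_\ell}x_v$ factor on the left-hand side, yielding
$$
x_T \Bigl(\sum_{|S|=t, S \subseteq \leftt} x_S\Bigr)^{2} \le O\!\left(\frac{n^{5}}{k^{2}}\right) x_T.
$$
A variant of the cancellation from \Cref{lem:cancellation} (namely $\{h^{2} \le C x_T\} \cup \{x_T^{2} = x_T\} \sststile{}{O(\deg h)} \{h \le C^{1/2} x_T\}$, obtained by expanding $(h - C^{1/2}x_T)^2 \ge 0$ and using $h \cdot x_T = h$ whenever $h$ already has $x_T$ as a factor) then gives $x_T \sum_{|S|=t,S\subseteq \leftt}x_S \le O(n^{5/2}/k)\, x_T$. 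Applying \Cref{lem:sos-factorial} to compare this with $(\sum_{v \in \leftt} x_v)^{t}/(2^{t}t!)$ and iterating the same cancellation trick $\log_2 t$ times (exactly as in the final step of the proof of \Cref{lem:concentrationtofewcliques}) yields
$$
x_T \sum_{v \in \leftt} x_v \le O_t\!\left(n^{5/(2t)}/k^{1/t}\right) x_T,
$$
which for $t$ chosen large enough (depending on $\alpha$ and $p$) is $o(k)\, x_T$ since $k \ge n^{1/2+\varepsilon}$.

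Next I would sum these bounds over $\ell \in \cliqueset{T}$. Using $|\cliqueset{T}| = O(1)$ w.h.p.\ from \Cref{lem:boubdbadlabels}, the disjoint decomposition $V(\cliqueset{T}) = \bigsqcup_{\ell \in \cliqueset{T}} \leftt \;\sqcup\; \overlap{T}$, and the deterministic bound $\sum_{v \in \overlap{T}} x_v \le |\overlap{T}| = o(k)$ from \Cref{lem:smalloverlap}, I obtain $x_T \sum_{v \in V(\cliqueset{T})} x_v \le o(k)\, x_T$. Combining this with the certificate $x_T \sum_{v \notin V(\cliqueset{T})} x_v \le o(k)\, x_T$ from \Cref{lem:concentrationtofewcliques} (applied at degree $O(t)$ with $r$ a suitably large power of $2$), I reach
$$
x_T \sum_{v \in [n]} x_v \le o(k)\, x_T.
$$
Finally, the size axiom $\sum_v x_v \ge (1-\varepsilon)k$ from $\cA(G,(1-\varepsilon)k)$ gives $x_T \cdot ((1-\varepsilon)k - o(k)) \le 0$; for $n$ large the coefficient is a positive constant, so dividing yields $x_T \le 0$, equivalently $x_T \le -x_T$.

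\textbf{Main obstacle.} The most delicate step is the $t$-th root extraction: the standard SoS cancellation derives $f \le \sqrt{C}$ from $f^{2} \le C$, not $f \le \sqrt{C}\, x_T$ from $f^{2} \le C x_T$. The variant outlined above is needed so that the factor $x_T$ is preserved throughout all $\log_2 t$ rounds of cancellation (otherwise we lose the ability to condition on ``$T$ is in the clique''). A secondary bookkeeping issue is tracking the implicit constants in ``$o(k)$'' through the $O(1)$-many summations over $\cliqueset{T}$ and making sure the combined degree stays at $O(t)$, but both follow the same pattern already used in the proof of \Cref{lem:concentrationtofewcliques}.
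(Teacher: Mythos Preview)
Your proposal is correct and follows essentially the same route as the paper's proof: use the extra axiom $\sum_{v\notin S_\ell}x_v\ge 1$ to drop that factor from \Cref{lem:limitedoverlapafterT}, cancel twice to bound $x_T\sum_{v\in\leftt}x_v$ by $o(k)\,x_T$, sum over $\ell\in\cliqueset{T}$ using \Cref{lem:boubdbadlabels} and \Cref{lem:smalloverlap}, add the contribution outside $V(\cliqueset{T})$ from \Cref{lem:concentrationtofewcliques}, and conclude via the size axiom. You even flag explicitly the $x_T$-preserving variant of the cancellation inequality, which the paper uses but does not spell out.
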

\begin{proof}
    We start by using the conclusion of \Cref{lem:limitedoverlapafterT} which states that for every $\ell \in \cliqueset{T}$,  
    \begin{align*}
         \refaxiomsc \sststile{x}{O(t)} \left\{ x_T \ \left( \sum_{v \in [n] \setminus S_\ell} x_v \right) \ \left( \sum_{S \subseteq \leftt, |S| = t} x_S \right)^2\le O\left( \frac{n^5}{k^2} \right) \ x_T \right\}.
    \end{align*} 
    We proceed by finding a suitable lower bound on the left hand side. To this end, we can first observe that the axiom
    $
        \{ \sum_{v \in [n] \setminus S_\ell} x_v \ge 1 \}
    $ allows us to drop the sum $\sum_{v \in [n] \setminus S_\ell} x_v$ entirely. Then, the cancellation $\{ f^2 \le C  \} \sststile{x}{2d} \{ f \le \sqrt{C} \}$ from \Cref{lem:cancellation} yields 
    \begin{align*}
         \refaxiomsc \sststile{x}{O(t)} \left\{ x_T \ \left( \sum_{S \subseteq \leftt, |S| = t} x_S \right) \le O\left( \frac{n^{5/2}}{k} \right) \ x_T \right\}.
    \end{align*}
    To deal with the remaining sum, we apply \Cref{lem:sos-factorial} to get 
    \begin{align*}
         \refaxiomsc \sststile{x}{O(t)} \left\{ \frac{x_T}{2^t t!} \  \left( \sum_{v \in \leftt} x_v \right)^{t}  \le x_T \left( \frac{2t^t}{t!} + \varepsilon + O\left( \frac{n^{5/2}}{k} \right)\right) \le O\left( \frac{n^{5/2}}{k} \right) \ x_T \right\}.
    \end{align*}
    Now, using the cancellation $\{ f^2 \le C  \} \sststile{x}{2d} \{ f \le \sqrt{C} \}$ from \Cref{lem:cancellation} $\log_2(t)$ times and choosing $t$ as power of $2$, we get that 
    $$
         \refaxiomsc \sststile{x}{O(t)} \left\{ x_T \ \left( \sum_{v \in \leftt} x_v \right)  \le  O\left( \frac{n^{5/2}}{k} \right)^{1/t} \ x_T \right\}.
    $$  Now, we can sum over $\ell \in \cliqueset{T}$ to obtain \begin{align*}
         \refaxiomsc \sststile{x}{O(t)} \left\{ x_T \ \left( \sum_{v \in V(\cliqueset{T}) \setminus \overlap{T}} x_v \right)  \le  |\cliqueset{T}| O\left( \frac{n^{5/2}}{k} \right)^{1/t} \ x_T \right\},
    \end{align*}
    since $\bigcup_{\ell \in \cliqueset{T}} \leftt = V(\cliqueset{T}) \setminus \overlap{T}$ and the $\leftt$ are disjoint by definition.
    Using the fact that $|\cliqueset{T}| \le \frac{4t}{\alpha}$ by \Cref{lem:boubdbadlabels} for all $T$ with high probability, we obtain 
    \begin{align}\label{eq:upperboundinvt}
         \refaxiomsc \sststile{x}{O(t)} \left\{ x_T \ \left( \sum_{v \in V(\cliqueset{T})} x_v \right)  \le O\left( \frac{n^{5/2}}{k} \right)^{1/t} \ x_T + |\overlap{T}| \ x_T \right\}.
    \end{align} Finally, it only remains to apply \Cref{lem:concentrationtofewcliques} stating that 
    \begin{align*}
        \refaxiomsc \sststile{x}{O(t)} \Bigg\{ x_T\ \left(\sum_{v \in [n] \setminus V(\cliqueset{T})} x_v\right) \le  O \left( \frac{n}{k} \left( \frac{n^2}{k} \right)^{1/t} \right) \ x_T \Bigg\}.
    \end{align*}
    Choosing $t$ sufficiently large, we can now achieve that $
        \max \big\{  \frac{n}{k} ( n^2/k)^{1/t}, (n^{5/2}/k )^{1/t}  \big\} = o(k)
    $. Moreover, we get $|\overlap{T}| = o(k)$ by \Cref{lem:smalloverlap} with high probability. In total, this gives 
    \begin{align*}
        \refaxiomsc \sststile{x}{O(t)} \Bigg\{ x_T\ \Bigg(\sum_{v \in [n]} x_v\Bigg) \le \frac{1}{2}k\  x_T \Bigg\}.
    \end{align*}
    On the other hand, due to the constraint $\{ \sum_v x_v \ge (1-\varepsilon)k \}$, we get 
    $$
        \refaxiomsc \sststile{x}{O(t)} \Bigg\{ (1-\varepsilon)k \ x_T  \le  x_T \Bigg(\sum_{v \in [n] } x_v\Bigg) \le \frac{1}{2}k \ x_T  \Bigg\}.
    $$ and the lemma follows after re-arranging if $\varepsilon < \frac{1}{2}$. \qedhere
    
\end{proof}

\Cref{lem:refutationforslct} then essentially follows by summing over $T$ and the fact that SoS knows that at least one $t$-tuple $T \subseteq [n]$ must be in our clique.
\begin{proof}[Proof of \Cref{lem:refutationforslct} ]
    We sum the conclusion of \Cref{lem:almostrefutation} over all $T$ to obtain (after re-arranging and dividing by $2$) that $
        \refaxiomsc \sststile{x}{O(t)} \left\{ \sum_{T \subseteq [n], |T| = t} x_T  \le 0  \right\}.
    $ The fact that we have the axiom $\{ \sum_v x_v \ge (1+\varepsilon')k \}$ implies by \Cref{lem:sosbinomialcoefficient} that 
    $$
        \refaxiomsc \sststile{x}{O(t)} \left\{\sum_{T \subseteq [n], |T| = t} x_T  \ge \binom{n}{\lfloor (1-\varepsilon)k \rfloor} \ge 1 \right\}. 
    $$ Therefore, $\refaxiomsc \left\{ -1 \ge 0 \right\}$ as desired.
\end{proof}

Now \Cref{thm:strongsinglelabelclique} is essentially a corollary of the above.
\begin{proof}[Proof of \Cref{thm:strongsinglelabelclique}]
    By \Cref{lem:refutationforslct} we know that the axioms $\refaxiomsc$ result in a contradiction. This is equivalent to \Cref{thm:strongsinglelabelclique}.
\end{proof}

\subsection{Algorithmic Refutation}\label{sec:refutation}

In addition to the fact that the ideas from \Cref{lem:slctimproved} yield a better single label clique theorem, they have algorithmic consequences. Since everything can be stated in terms of low-degree SoS, a certificate that no cliques outside of the ground-trugh exist \emph{can actually be computed in polynomial time}. 

This has a curious consequence in light of our algorithm for exact recovery from \Cref{thm:exactrecovery}: after termination, the algorithm can actually \emph{certify that it is correct} in the sense that it outputs a list $\mathcal{L}$ of cliques and a certificate proving that \emph{no other clique} of size $(1-\varepsilon)k$ exists that is not contained in one of the cliques in $\mathcal{L}$. Moreover, if we simply want a certificate that no clique of a certain size $(1 + \varepsilon)k$ exists \emph{without} prior knowledge of the ground truth cliques, this can also be computed in polynomial time. We capture these two facts in the following theorem.

\begin{theorem}[Algorithmic refutation]\label{thm:algorithmicrefutation}
    There are polynomial time algorithms with the following guarantees on input $G \sim \RIGone$ such that the assumptions of \Cref{thm:strongsinglelabelclique} are met. 
    \begin{enumerate}
        \setlength\itemsep{0.0001em}
        \item Output a list $\mathcal{L}$ of cliques in $G$ together with a polynomial time verifiable proof that every clique of size $(1-\varepsilon)k$ in $G$ is contained in one of the cliques in $\mathcal{L}$. This holds for every sufficiently small $\varepsilon$.
        \item Output a polynomial time verifiable proof that no clique of size $(1 + \varepsilon)k$ exists in $G$. This holds whenever $\varepsilon \gg \sqrt{\log(n)/k}$.
    \end{enumerate}
    Both algorithms succeed with high probability over the choice of $G$.
\end{theorem}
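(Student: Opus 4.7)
The plan is to leverage the constant-degree sum-of-squares refutation of \Cref{lem:refutationforslct} (for part one), strengthen it into a bound on the clique number without access to the ground truth (for part two), and invoke polynomial-time SoS optimization (\Cref{thm:optpseudo}) together with SoS/pseudo-distribution duality to extract the certificates in polynomial time. In both cases, since the refutations have constant degree and bit-complexity $n^{O(1)}$, they correspond to polynomially-sized SDP duals of the program underlying \Cref{thm:optpseudo}, and hence can be computed in polynomial time.

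For part one, I would first invoke the exact-recovery algorithm of \Cref{thm:exactrecovery} to recover the list $\mathcal{L} = \{S_1, \ldots, S_d\}$ with high probability. By \Cref{lem:refutationforslct}, the system $\adv{\mathcal{A}}(G, (1-\varepsilon)k, \{S_\ell\}_{\ell \in [d]})$ admits an $O(1)$-degree SoS refutation of bit-complexity $n^{O(1)}$, which can therefore be extracted in polynomial time. The certificate output by the algorithm consists of $\mathcal{L}$ together with the SoS coefficients of the refutation. Soundness of SoS then guarantees, in polynomial-time verifiable form, that every clique of size $(1-\varepsilon)k$ must be contained in some $S_\ell \in \mathcal{L}$, since the only additional constraints beyond $\mathcal{A}(G,(1-\varepsilon)k)$ encode the ``not contained in any $S_\ell$'' assumption.

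For part two, I would adapt the proof of \Cref{lem:refutationforslct} to refute $\mathcal{A}(G, (1+\varepsilon)k)$ \emph{without} using the ground-truth axioms $\{\sum_{v\in [n]\setminus S_\ell} x_v \ge 1\}$. Fix a $t$-tuple $T$. A standard Chernoff/Bernstein bound with a union bound yields that, with high probability, $|S_\ell| \le k + O(\sqrt{k\log(n)}) \le (1+\varepsilon/2)k$ for all $\ell$, which holds precisely in the stated regime $\varepsilon \gg \sqrt{\log(n)/k}$. Since $\sum_{v\in S_\ell} x_v \le |S_\ell|$ is derivable in SoS from $\{x_v^2 = x_v\}$, combining this with the axiom $\sum_v x_v \ge (1+\varepsilon)k$ gives $x_T\sum_{v\in[n]\setminus S_\ell} x_v \ge (\varepsilon k/2)\,x_T$ for each $\ell \in \cliqueset{T}$. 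Substituting this lower bound into the inequality of \Cref{lem:limitedoverlapafterT} and applying \Cref{lem:cancellation} followed by \Cref{lem:sos-factorial} then yields
\[
x_T\Big(\sum_{v \in \leftt} x_v\Big) \le O\bigl((n^{5/2}/k^{3/2})^{1/t}\bigr)\,x_T,
\]
which, using $k = \Theta(n^{1-\alpha/2})$ with $\alpha<1$, is bounded by $c k\cdot x_T$ for any small constant $c>0$ once $t = O(1/\alpha)$ is chosen large enough. Summing over the $O(1)$ labels in $\cliqueset{T}$ (\Cref{lem:boubdbadlabels}), and adding the $o(k)$ contributions from \Cref{lem:concentrationtofewcliques} (mass outside $V(\cliqueset{T})$) and \Cref{lem:smalloverlap} (mass on $\overlap{T}$), I obtain $x_T\sum_v x_v \le \tfrac{1}{2}k\cdot x_T$ in SoS, contradicting the axiom $\sum_v x_v \ge (1+\varepsilon)k$. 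Summing this per-$T$ contradiction over all $T$ via \Cref{lem:sosbinomialcoefficient} produces the desired constant-degree, $n^{O(1)}$-bit refutation of $\mathcal{A}(G,(1+\varepsilon)k)$.

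The main obstacle will be making the implicit ``argmax over $\ell \in \cliqueset{T}$'' step SoS-respectable. The argument above bounds $x_T\sum_{v\in \leftt} x_v$ by $ck\cdot x_T$ uniformly over $\ell$ so that summing over the $|\cliqueset{T}| \le 8t/\alpha$ labels remains strictly smaller than $(1+\varepsilon)k$; this requires choosing $c$ small relative to $\alpha/t$, which in turn forces $t$ to depend on $\alpha$ but still keeps the refutation at constant degree. Tracking all the $n^{O(1)}$ bit-complexity bounds through \Cref{lem:cancellation,lem:sos-factorial,lem:sosbinomialcoefficient} to justify the invocation of \Cref{thm:optpseudo} is tedious but routine.
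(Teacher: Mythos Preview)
Your proposal is correct and follows the same overall strategy as the paper: use \Cref{lem:refutationforslct} for part one (after producing $\mathcal{L}$ via exact recovery), and for part two derive the missing ground-truth axioms from the size bound $|S_\ell| \le k + O(\sqrt{k\log n})$.

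The one point where you work harder than necessary is part two. You substitute the derived lower bound $x_T\sum_{v\notin S_\ell}x_v \ge (\varepsilon k/2)x_T$ back into \Cref{lem:limitedoverlapafterT} and re-trace the chain of cancellations in \Cref{lem:almostrefutation}, then worry about making the per-$\ell$ bounds small enough to survive the sum over $\cliqueset{T}$. The paper sidesteps all of this: once $\mathcal{A}(G,(1+\varepsilon)k)\sststile{x}{O(t)}\{\sum_{v\notin S_\ell}x_v \ge 1\}$ is established for every $\ell$, the \emph{entire} refutation of \Cref{lem:refutationforslct} (which already concludes $\{-1\ge 0\}$ from $\adv{\mathcal{A}}$) becomes a derivation from $\mathcal{A}(G,(1+\varepsilon)k)$ alone, verbatim. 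There is no need to improve the exponent to $(n^{5/2}/k^{3/2})^{1/t}$ or to track constants against $|\cliqueset{T}|$; the sum-over-$\ell$ step is already handled inside \Cref{lem:almostrefutation}, and you are just re-proving that lemma. Your ``argmax'' concern is therefore a non-issue: the argument is a sum, not a max, and it was already made SoS-respectable in the proof you are invoking.
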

\begin{proof}
    It follows from standard results in the literature on convex optimization \cite{nesterov2000squared, parrilo2000structured, lasserre2001new} that whenever there exists a degree-$d$ SoS-proof of some inequality $\{g \ge 0\}$ in axioms $\cA$ of bit complexity $n^{O(1)}$, then we can approximately compute a such proof (in the sense that we prove $\{g \ge -\eta\}$ for some small $\eta$) in time $(m + n)^{O(d)}\text{polylog}(1/\eta)$ where $m$ is the number of constraints in $\cA$ and $n$ is the number of variables. This implies statement 1 of our theorem by the SoS proof from \Cref{lem:refutationforslct}. 

    For statement 2, we note that for the proof of \Cref{lem:refutationforslct}, we used the axiom $\sum_{v \in [n] \setminus S_\ell} x_v \ge 1$ for all $\ell$. While we cannot hope to use these axioms without knowing the ground-truth $S_\ell$, we can simply use the fact that by a Bernstein and union bound (cf. \Cref{lem:bernstein}), every $S_\ell$ has size at most $k + O(\sqrt{k\log(n)})$ w.h.p. By our assumption $\varepsilon \gg \sqrt{\log(n) / k}$, this implies that 
    $
        \cA(G, (1+\varepsilon)k) \sststile{x}{O(t)} \{ \sum_{v \in [n] \setminus S_\ell} x_v \ge (1 + \varepsilon)k - k + O(\sqrt{k\log(n)}) \ge 1\}.
    $ Hence, the proof from \Cref{lem:refutationforslct} carries over, and the statement follows using the same argument as used in the previous paragraph.
\end{proof}

\section{Analysis for exact recovery}\label{sec:analysisexact}

In this section, we show correctness of the algorithm from \Cref{thm:exactrecovery}, i.e., we prove that \Cref{alg:splitting-one-sided} achieves exact recovery, even under presence of a monotone adversary. The analysis is split into two parts: First, we prove that every ground truth clique $S_\ell$ is contained in the output $\mathcal{L}$, and afterwards, we show that no clique that is not ground-truth is added to the output. For convenience, we restate the algorithm here.

    
  \begin{algorithm}
    \setlength\itemsep{.0001em}
    \caption{ Recovering the labels of a dense RIG with One-Sided Noise. }
    \label[algorithm]{alg:splitting-one-sided}
    \begin{description}
    \setlength\itemsep{.0001em}
    \item[Given:] A graph $G \sim \RIGone$ on vertex set $[n]$ modified by an arbitrary monotone adversary.
    \item[Output:]
      A list $\mathcal{L}$ of $d$ cliques in $G$ such that exactly one $S \in \mathcal{L}$ agrees with $S_\ell$ for every $\ell \in [d]$.
    \item[Operation:]\mbox{}
    \begin{enumerate}
    \item Split $[n]$ into two disjoint sets $U, V$ using an independent, fair coin flip for every $v \in [n]$.
    \item Set $\gamma \coloneqq \frac{1}{\alpha}\max\{ 16p^{-6}, 1000 \} \max\{1,  (p/1-p)^6\}$, and $t \coloneqq 2\lceil \log_{1/p}(\gamma) \rceil$. Choose a list $\mathcal{T}$ of $O( n^{\alpha(1+ t/2)} )$ $t$-tuples from $\binom{[n]}{t}$ uniformly at random.
    \item For each $T \in \mathcal{T}$, compute the bipartite graph $H(T) = G[U \uplus V] \cap N_G(T)$
    \item If feasible, compute a degree-$\degreeexact$ pseudo-distribution $\mu$ in variables $w$ satisfying the biclique axioms $\bicliqueaxiomsalgocompact$ for a sufficiently small constant $\varepsilon > 0$ and $\kbip \coloneqq \frac{1-\varepsilon}{2}k$. Otherwise, proceed with the next $T \in \mathcal{T}$.
    \item Let $S_T = \{ v \in [n] \mid \Expectedtildesub{\mu}{w_v} \ge 1 - \sqrt{\varepsilon} \ \}$. 
    \item Combinatorial clean-up: remove from $S_T$ all vertices that have fewer than $(1 - 5\sqrt{\varepsilon})k$ neighbors in $S_T$. Then, add all vertices from $[n] \setminus S_T$ with at least $(1-5\sqrt{\varepsilon})k$ neighbors in $S_T$. If $G[S_T]$ is a clique of size at least $(1 - \varepsilon)k$, add $S_T$ to the output $\mathcal{L}$.
    \item Finally, for every $ S \in \mathcal{L}$, delete all cliques $S' \in \mathcal{L} \setminus \{S\}$ such that $S' \subseteq S$.
  \end{enumerate}
    \end{description}
  \end{algorithm}

\subsection{Preliminaries}

Before starting with the actual analysis, we introduce the following certificate used in multiple parts of the analysis. It states that a subset of vertices in a bipartite graph that induces a balanced sub-graph can only contain a very limited number of vertices in any clique. This works whenever we have a balancedness of roughly $ck \maxbalallcompact{r}$ (for $c > 0$ being a sufficiently small constant) since the certificates from \Cref{lem:core-sos-proof} and \Cref{lem:core-sos-proof-adv} are meaningful whenever $|y| > \Delta$
\begin{lemma}[Bounding biclique sizes in balanced bipartite graphs]\label{lem:sos-left-certificate}
    Let $H = (A \uplus B, E)$ be a bipartite graph on vertex set $[n]$ that contains some set $\Asub \subseteq A$ such that $\adv{H} \coloneqq H[\Asub \uplus B]$ has $r$-fold balancedness at most $\Delta \le \maxbalall{r} \frac{11}{12}k$ for some fixed $r \in \mathbb{N}, r \ge 2$. Then for some sufficiently small constant $\varepsilon > 0$, 
    \begin{align*}
        \mathcal{B}(H, k )  \sststile{x, y}{4r} \left\{ \bigg( \sum_{i \in \Asub} x_i \bigg)^{r} \le O\left(\frac{n^2}{k}\right) \right\}.
    \end{align*}
\end{lemma}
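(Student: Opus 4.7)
The plan is to apply the core certificate Lemma \ref{lem:core-sos-proof} to the subgraph $\adv{H} = H[\Asub \uplus B]$, use the lower bound $|y| \ge k$ baked into the biclique axioms to absorb the balancedness term, and then convert the resulting inequality on $\sum_{S \subseteq \Asub, |S|=r} x_S$ into the desired bound on $\big(\sum_{i \in \Asub} x_i\big)^r$ via Lemma \ref{lem:sos-factorial}.

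First, observe that $\mathcal{R}(\adv{H}) \subseteq \mathcal{B}(H, k)$: every axiom $x_u y_v = 0$ of $\mathcal{R}(\adv{H})$ arises from a non-edge between $\Asub$ and $B$, which is also a non-edge of $H$ and hence already an axiom of $\mathcal{B}(H, k)$. Applying Lemma \ref{lem:core-sos-proof} to $\adv{H}$, whose $r$-fold balancedness is at most $\Delta \le \frac{11}{12} \maxbalall{r} k$, and writing $f \coloneqq \sum_{S \subseteq \Asub, |S|=r} x_S$ and $|y| \coloneqq \sum_{v \in B} y_v$, yields
\begin{align*}
    \mathcal{B}(H, k) \sststile{x, y}{4r} \Bigl\{ \maxbalall{r} |y|\, f^2 \le (n + rn^2) \maxbal{r}\, f + \tfrac{11}{12}\, \maxbalall{r}\, k\, f^2 \Bigr\}.
\end{align*}
The size axiom $|y| \ge k$ of $\mathcal{B}(H, k)$ multiplied by the square $\maxbalall{r} f^2$ gives $\maxbalall{r} k f^2 \le \maxbalall{r} |y| f^2$; substituting and rearranging yields
\begin{align*}
    \mathcal{B}(H, k) \sststile{x, y}{4r} \Bigl\{ \tfrac{1}{12}\, \maxbalall{r}\, k\, f^2 \le (n + rn^2)\, \maxbal{r}\, f \Bigr\},
\end{align*}
which is of the form $f^2 \le C f$ with $C = 12(n + rn^2) \maxbal{r} / (\maxbalall{r} k) = O(n^2/k)$, since for fixed $r$ and $p$ bounded away from $1$ the ratio $\maxbal{r}/\maxbalall{r}$ is a constant. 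Applying the cancellation inequalities of Lemma \ref{lem:cancellation} (first $\{f^2 \le Cf\} \sststile{x,y}{2r} \{f^2 \le C^2\}$, then $\{f^2 \le C^2\} \sststile{x,y}{2r} \{f \le C\}$) delivers $f \le C = O(n^2/k)$. Finally, Lemma \ref{lem:sos-factorial} provides a degree-$r$ proof that $\frac{1}{2^r r!} \big(\sum_{i \in \Asub} x_i\big)^r \le f + \frac{2 r^r}{r!} + \varepsilon$; combining with $f \le C$ gives $\big(\sum_{i \in \Asub} x_i\big)^r \le 2^r r!\, C + O_r(1) = O(n^2/k)$, the additive $r$-dependent constant being absorbed since $k \ll n^2$.

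The main bookkeeping concern is the tight degree budget: the core certificate already consumes the full $4r$ degrees, so every subsequent manipulation must fit inside it. Multiplying the axiom $|y| - k \ge 0$ by $f^2$ produces a polynomial of degree $2r + 1$, the two cancellations are degree $2r$, and the factorial conversion is degree $r$, so all compose cleanly within degree $4r$. The $\varepsilon$ appearing in the statement corresponds to the approximation parameter from Lemma \ref{lem:sos-factorial}, which can be chosen as any small positive constant and enters only additively into the final bound.
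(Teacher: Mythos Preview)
Your proof is correct and follows essentially the same approach as the paper: apply the core certificate of Lemma~\ref{lem:core-sos-proof} to $\adv{H}$, use the size axiom $|y|\ge k$ together with $\Delta \le \tfrac{11}{12}\maxbalall{r}k$ to obtain $f^2 \le C f$ with $C=O(n^2/k)$, cancel via Lemma~\ref{lem:cancellation}, and convert using Lemma~\ref{lem:sos-factorial}. Your degree bookkeeping and the observation that $\maxbal{r}/\maxbalall{r}$ is a constant for $p$ bounded away from $1$ match the paper's reasoning exactly.
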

\begin{proof}
    We invoke \Cref{lem:core-sos-proof} for on the balanced graph $H'$ with bipartition $A' \uplus B$. This yields that 
    \begin{align*}
        \mathcal{R}(\adv{H}) \sststile{x, y}{4r} \left\{\begin{aligned} &\maxbalall{r} \left( \sum_{S \subseteq \Asub, |S| = r} x_S \right)^2 | y | \\ & \hspace{2cm} \le \left(n + rn^2\right) \maxbal{r} \left( \sum_{S \subseteq \Asub, |S| = r} x_S \right) + \Delta \hspace{.05cm}\left( \sum_{S \subseteq \Asub, |S| = r} x_S \right)^2 \end{aligned} \right\}.
    \end{align*} By the size constraint in $\mathcal{B}(H, \frac{1-\varepsilon}{2}k)$, we have that $|y| \ge k$. So using $\Delta \le \maxbalall{r} \frac{11k}{12}$ and $\mathcal{R}(H') \subseteq \mathcal{B}(H, \frac{1-\varepsilon}{2}k)$, we get that
    $$
        \mathcal{B}(H, k ) \sststile{x, y}{4r} \left\{ \maxbalall{r} \left( \sum_{S \subseteq \Asub, |S| = r} x_S \right)^2 \frac{k}{12} \le (r+1)n^2 \maxbal{r}
        \left( \sum_{S \subseteq \Asub, |S| = r} x_S \right) \right\}.
    $$ Dividing by $\maxbalall{r}$ and using the cancellations $\left\{ f^2 \le C f \right\} \sststile{f}{2r} \left\{ f^2 \le C^2 \right\}$, and $\left\{ f^2 \le C  \right\} \sststile{x}{2r} \{ f \le \sqrt{C} \}$ from \Cref{lem:cancellation}, we get that 
    $$
        \mathcal{B}(H, k )  \sststile{x, y}{4r} \left\{ \bigg( \sum_{S \subseteq \Asub, |S| = r} x_S \bigg)\le  \frac{12(r+1) n^2}{k} \max \left\{ 1, \frac{p}{1-p} \right\}^{2r} \right\},
    $$ and by \Cref{lem:sos-factorial},
    \begin{align*}
        \mathcal{B}(H, k )  \sststile{x, y}{4r} \left\{ 
            \begin{aligned}
                &\frac{1}{2^r r!} \bigg( \sum_{i \in \Asub} x_i \bigg)^{r} \le \left(\sum_{S \subseteq \Asub, |S| = r} x_S \right) +  \frac{2r^r}{r!} + \varepsilon 
            \end{aligned}
        \right\},
    \end{align*} so using that $r$ is a constant and combining the above with the previous observation, we get that 
    $$
        \mathcal{B}(H, k )   \sststile{x, y}{4r} \left\{ \bigg( \sum_{i \in \Asub} x_i \bigg)^{r} \le O\bigg(\frac{n^2}{k}\bigg) \right\},
    $$
    as desired.
\end{proof}

Now, the analysis is split into two parts. First, we show that for every $S_\ell$, the algorithm encounters a $T \in \mathcal{T}$ such that the set $S_T$ is precisely $S_\ell$ after step 6. Afterwards, we show that no set $S$ with $S \neq S_\ell$ for all $\ell \in [d]$ is ever added to the output.

\subsection{Step 1: All ground-truth cliques are found}\label{sec:exactstep1}

We crucially rely on the balancedness of the bipartite graph $H(T) \coloneqq G[U \uplus V] \cap N(T)$ for some of the $t$-tuples $T \in \mathcal{T}$ chosen in step $1$.

\begin{lemma}\label{lem:balancedness-hl}
    With probability $\ge 0.99$, the following holds after step 2 of \Cref{alg:splitting-one-sided}. For every $\ell \in [d]$, there is a $t$-tuple $T \in \mathcal{T}$ such that the bipartite graphs
$$
    \Hg{L} \coloneqq G[ (U \setminus S_\ell) \uplus V ] \cap N_G(T) \text{ and } \Hg{R} \coloneqq G[ (V \setminus S_\ell) \uplus U ] \cap N_G(T)
$$ have $3$-fold balancedness at most $\maxbalall{3}\frac{k}{4}$.
\end{lemma}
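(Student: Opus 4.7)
The plan is to apply \Cref{lem:general-balancedness} with two appropriately chosen procedures for selecting a bipartition, and then to show that $\mathcal{T}$ is large enough to include at least one ``good'' $t$-tuple for every label.

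First, I would instantiate a valid collection $\{\textsc{Alg}_L, \textsc{Alg}_R\}$ of procedures in the sense of \Cref{def:procedure-bipartition}, treating the random split $(U,V)$ from step~1 as shared internal randomness of the two algorithms. Concretely, on input $(S_\ell, [n])$, $\textsc{Alg}_L$ returns $A_{L,\ell} = U \setminus S_\ell$, $B_{L,\ell} = V$, while $\textsc{Alg}_R$ returns $A_{R,\ell} = V \setminus S_\ell$, $B_{R,\ell} = U$. Both satisfy $A_\cdot \cap S_\ell = \emptyset$ by construction and so form a valid collection of $m = 2$ procedures. Applying \Cref{lem:general-balancedness} then yields that, with probability $1 - o(1)$ over $G$ and $(U,V)$, for every $\ell \in [d]$ a $(1-o(1))$-fraction of $T \in \binom{S_\ell}{t}$ is such that both $\Hg{L}$ and $\Hg{R}$ have $r$-fold balancedness at most $\tfrac{16\,p^{-6}}{\alpha}\maxbal{r} \cdot t p^{\,t} k + o(k)$ for every $3 \le r \le t$.

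Next, I would verify that this bound is at most $\maxbalall{3}\,k/4$ at $r = 3$. Using $\maxbal{3}/\maxbalall{3} \le \max\{1, (p/(1-p))^{6}\}$ together with the definition of $\gamma$, it suffices to show $t p^{\,t} \le 1/(4\gamma)$. Since $t = 2\lceil \log_{1/p}(\gamma)\rceil$ we have $p^{\,t} \le 1/\gamma^{2}$, so $t p^{\,t} \le 2(\log_{1/p}(\gamma)+1)/\gamma^{2}$, which is easily at most $1/(4\gamma)$ because $\gamma$ was chosen as a sufficiently large constant relative to $\log \gamma$ (growing as $\Omega((p/(1-p))^{6}/\alpha)$ when $p$ is close to $1$). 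The $o(k)$ error term is absorbed into the slack.

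Finally, I would argue that $|\mathcal{T}|$ is large enough to catch a good tuple for every label. Using that $|S_\ell| = (1 \pm o(1))k$ with high probability by a Chernoff bound, a uniformly random $T \in \binom{[n]}{t}$ lies in $\binom{S_\ell}{t}$ with probability $(1 \pm o(1))\delta^{\,t}$, and by \Cref{lem:sizeofdelta}, $\delta^{\,t} = \Omega(d^{-t/2})$. Combined with the $(1-o(1))$-fraction of good tuples inside each $\binom{S_\ell}{t}$, the probability that a uniformly random $T$ is good for a fixed $\ell$ is $\Omega(d^{-t/2}) = \Omega(n^{-\alpha t/2})$. Since $|\mathcal{T}| = \Theta(n^{\alpha(1+t/2)})$, the expected number of good tuples per label is $\Omega(n^{\alpha}) = \Omega(d)$, and a Chernoff bound followed by a union bound over the $d$ labels shows that every label has at least one good $T \in \mathcal{T}$ with probability at least $1 - d \cdot e^{-\Omega(d)} = 1 - o(1) \ge 0.99$. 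The main subtlety to watch for will be to ensure that the two randomized procedures really do fit the framework of \Cref{def:procedure-bipartition}; once $(U,V)$ is absorbed into the algorithms' internal randomness, the rest is a direct application of \Cref{lem:general-balancedness} plus standard concentration.
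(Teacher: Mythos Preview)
Your proposal is correct and follows essentially the same approach as the paper: invoke \Cref{lem:general-balancedness} with the procedure outputting $(U\setminus S_\ell)\uplus V$ (and symmetrically for the other side), check that the resulting balancedness bound is at most $\maxbalall{3}\,k/4$ via the choice of $\gamma$ and $t$, and then argue that $|\mathcal{T}|$ is large enough to hit a good tuple for every $\ell$. The only notable difference is in the last step, where the paper appeals to a coupon-collector style bound while you use a direct Chernoff bound on the $\Omega(d)$ expected good tuples per label followed by a union bound; your argument is at least as clean and sidesteps any $\log d$ slack in the sample count.
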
 \begin{proof}
    We prove the lemma only for $\Hg{L}$, the proof for $\Hg{R}$ works analogously. 
    Our statement is a consequence of \Cref{lem:general-balancedness} which states that for all $\ell \in [d]$, the neighbourhood of most $t$-tuples in $S_\ell$ contains a balanced bipartite graph whose bipartition is formed by a procedure $\textsc{Alg}$ that only has access to $S_\ell$ and $[n] \setminus S_\ell$. In our case, we choose $\textsc{Alg}(S_\ell, [n] \setminus S_\ell)$ to be the (deterministic) procedure that outputs the bipartition $(U \setminus S_\ell) \uplus V$ where $U, V$ were decided in step 1 of \Cref{alg:splitting-one-sided}. Now let $\gamma \coloneqq \frac{1}{\alpha}\max\{ 16p^{-6}, 1000 \} \max\{1,  (p/1-p)^6\}$. Since $t = 2\lceil \log_{1/p}(\gamma) \rceil$, we get $3$-fold balancedness \begin{align*}
        \Delta &\le \frac{16p^{-6}}{\alpha} \maxbal{3}  \  tp^{t}k + o(k) \\
        &\le \frac{16p^{-6}}{\alpha} \left( \frac{1}{\gamma} \right)^2 2\left( \log_{1/p}(\gamma) + 1 \right) \maxbal{3} k + o(k)\\
        &\le \frac{2}{\gamma}  \left( \log_{1/p}
        (\gamma)+ 1 \right) \max\left\{1,  \left(\frac{1-p}{p}\right)\right\}^6 \maxbal{3} k + o(k)\le \maxbalall{3} \frac{k}{4},
    \end{align*}
    where the last step follows since $\frac{2}{x}(\log_2(x) + 1) \le \frac{1}{5}$ for all $x \ge 1000$ and since $\gamma \ge 1000$. 
    
    It remains to show that \Cref{alg:splitting-one-sided} finds such a tuple for every $\ell \in [d]$ in step 1. To this end, note that \Cref{lem:general-balancedness} tells us that a randomly chosen $t$-tuple from $\binom{S_\ell}{t}$ is such that the graph $G[ (U \setminus S_\ell) \uplus V ] \cap N_G(T)$ has the desired balancedness with probability $1 - o(1)$. Hence, when sampling a $t$-tuple uniformly at random from $[n]$, the probability of finding such a tuple in $S_\ell$ is $\Omega((k/n)^t)$. By the analysis of the coupon collector process we get that after $O(d \log(d) (n/k)^t)$ samples, we found one such tuple for each $S_\ell, \ell \in [d]$ with probability $\ge 0.99$. We finish the proof by noting that $d \log(d) (n/k)^t = O(\log(d)d^{1 + \frac{t\alpha}{2}}) = O(n^{\alpha\left(1 + \frac{t}{2}\right)})$.
\end{proof}

By the above, we can assume that we are working with a balanced bipartite graph $H(T) = G[U \uplus V] \cap N_G(T)$ for some $T \subseteq S_\ell$ as of now. We show that in such a graph, a degree $\degreeexact$ pseudo-distribution satisfying the axioms $\mathcal{B}(H(T), \frac{1-\varepsilon}{2}k)$ allows us to recover the set $S_\ell$. To this end, we need the following lemma that yields a SoS certificate for the absence of a $O( \sqrt{n} ) \times \frac{1-\varepsilon}{2}k$ biclique in $\Hg{L}$ and $\Hg{R}$.

\begin{lemma}\label{cor:low-mass-outside-sell}
    Assume $T$ is a $t$-tuple $T \subseteq S_\ell$ such that $\Hg{L}$, and $\Hg{R}$ meet the balancedness criteria from \Cref{lem:balancedness-hl}, then
    $$
        \bicliqueaxiomsalgo \sststile{x, y}{\degreeexact} \bigg\{ \sum_{i \in U \setminus S_\ell} x_i, \sum_{i \in V \setminus S_\ell} y_i \le O( \sqrt{n}) \bigg\}
    $$
    and bit complexity of the proof is $n^{O(1)}$.
\end{lemma}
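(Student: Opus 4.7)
The plan is to apply the biclique-absence certificate of \Cref{lem:sos-left-certificate} to the ambient bipartite graph $H = H(T)$ with distinguished left-subset $\Asub = U \setminus S_\ell$ and balancedness order $r=3$, and then to perform a short bootstrap that evades the lack of cube-roots in SoS. By hypothesis the restricted graph $\Hg{L}$ has $3$-fold balancedness $\Delta \le \maxbalall{3}\cdot k/4$, while the axioms $\bicliqueaxiomsalgo$ force $\sum_{v \in V} y_v \ge \ksplit = (1-\varepsilon)k/2$. For $\varepsilon$ a sufficiently small absolute constant, $k/4 \le \tfrac{11}{12}\ksplit$, so the balancedness hypothesis of \Cref{lem:sos-left-certificate} is met once its generic size parameter ``$k$'' is instantiated to $\ksplit$. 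Direct application then yields at degree $4r = \degreeexact$ and bit complexity $n^{O(1)}$
\begin{equation*}
\bicliqueaxiomsalgo \sststile{x,y}{\degreeexact} \Big\{\Big(\textstyle\sum_{i\in U\setminus S_\ell} x_i\Big)^{3} \le O(n^2/\ksplit) = O(n^2/k)\Big\}.
\end{equation*}

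The main obstacle at this point is that SoS does not provide a direct cube-root: from $f^3 \le C$ one cannot SoS-derive $f \le C^{1/3}$. To circumvent this I will bootstrap the bound to an even power. From the axioms $x_v^2=x_v$ and $\sum_{v \in U}x_v \le 2\ksplit$, the inequality $\sum_{i \in U\setminus S_\ell} x_i \le (1-\varepsilon)k \le k$ is SoS-derivable at low degree; multiplying the cubic bound through by it gives
\begin{equation*}
\bicliqueaxiomsalgo \sststile{x,y}{\degreeexact} \Big\{\Big(\textstyle\sum_{i\in U\setminus S_\ell} x_i\Big)^{4} \le O(n^2)\Big\}.
\end{equation*}
Since $4$ is a power of two, I will then apply the square-root cancellation $\{f^2 \le C\} \sststile{}{2d}\{f \le \sqrt{C}\}$ from \Cref{lem:cancellation} twice: first with $f = (\sum_{i}x_i)^2$ to obtain $(\sum_i x_i)^2 \le O(n)$, and then with $f = \sum_i x_i$ to conclude $\sum_{i\in U\setminus S_\ell} x_i \le O(\sqrt{n})$; every intermediate polynomial stays within degree $\degreeexact$.

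The symmetric bound $\sum_{i\in V\setminus S_\ell} y_i \le O(\sqrt{n})$ will follow by running exactly the same argument on the flipped bipartition $\Hg{R} = G[(V\setminus S_\ell) \uplus U] \cap \neigh{T}$, whose $3$-fold balancedness is likewise guaranteed by \Cref{lem:balancedness-hl}. The overall bit complexity remains $n^{O(1)}$ because each ingredient (\Cref{lem:sos-left-certificate}, \Cref{lem:cancellation}, and \Cref{lem:sos-factorial}) has this property.
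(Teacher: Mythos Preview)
Your proposal is correct and follows essentially the same route as the paper's proof: apply \Cref{lem:sos-left-certificate} with $r=3$ and $\Asub = U \setminus S_\ell$ to obtain the cubic bound $O(n^2/k)$, multiply by the linear upper bound coming from the size axiom to reach a fourth-power bound $O(n^2)$, and then apply the square-root cancellation of \Cref{lem:cancellation} twice; the argument for $\Hg{R}$ is symmetric. Your verification that $k/4 \le \tfrac{11}{12}\ksplit$ (needed to invoke \Cref{lem:sos-left-certificate} with size parameter $\ksplit$) is a detail the paper leaves implicit.
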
\begin{proof}
    Note that $\Hg{L}, \Hg{R}$ are subgraphs of $H(T)$ with balancedness at most $ \maxbal{r} \frac{k}{4}$ for all $3 \le r \le t$ by \Cref{lem:balancedness-hl}.
    Thus, invoking  \Cref{lem:sos-left-certificate} with $r = 3$, we get that $$
        \bicliqueaxiomsalgo \sststile{x, y}{\degreeexact} \bigg\{ \bigg( \sum_{i \in U \setminus S_\ell} x_i \bigg)^{3} \le O\Big(\frac{n^2}{k}\Big) \bigg\}.
    $$   
    Combining this with the fact that $\bicliqueaxiomsalgo \sststile{x, y}{\degreeexact} \left\{ \sum_{i \in U \setminus S_\ell} x_i \le 2 k \right\}$, we get that 
    $
        \bicliqueaxiomsalgo \sststile{x, y}{\degreeexact} \left\{ (\sum_{i \in U \setminus S_\ell} x_i)^{4} \le O( n^2) \right\}.
    $ Applying the cancellation $\{ f^2 \le C \} \sststile{f}{2} \{ f \le \sqrt{C} \}$ from \Cref{lem:cancellation} twice, we arrive at the desired conclusion. An analogous argument for $\Hg{R}$ also yields the second part of the statement.
\end{proof}
\noindent The duality between SoS-proofs and pseudo-distributions then immediately yields that we can recover a biclique that has large intersection with $S_\ell$ and small intersection with $[n]\setminus S_\ell$ by taking $S_T$ as specified in step $5$.
\begin{lemma}\label{lem:recover-cliques}
    Assume $T$ is a $t$-tuple $T \subseteq S_\ell$ such that $\Hg{L}$, and $\Hg{R}$ meet the balancedness criteria from \Cref{lem:balancedness-hl}. Then for any degree $\degreeexact$ pseudo-distribution $\mu$ satisfying $\mathcal{B}(H(T), \frac{1-\varepsilon}{2}k)$ in variables $w = \{\{x_v\}_{v \in U}, \{y_v\}_{v \in V}\}$, we have $|S_T \triangle S_\ell| \le 4\sqrt{\varepsilon} k$ after step 5 of \Cref{alg:splitting-one-sided}.
\end{lemma}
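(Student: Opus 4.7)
My plan is to split the symmetric difference $|S_T \triangle S_\ell|$ into $|S_T \setminus S_\ell|$ (false positives) and $|S_\ell \setminus S_T|$ (false negatives), and bound each separately, exploiting the SoS certificate from \Cref{cor:low-mass-outside-sell} on the one hand and the size constraint from the biclique axioms on the other.

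\textbf{False positives.} By \Cref{cor:low-mass-outside-sell}, the axioms $\bicliqueaxiomsalgo$ imply in degree $\degreeexact$ SoS that $\sum_{i \in U \setminus S_\ell} x_i \le O(\sqrt{n})$ and analogously $\sum_{i \in V \setminus S_\ell} y_i \le O(\sqrt{n})$. By soundness, $\tE_\mu\bigl[\sum_{i \in U \setminus S_\ell} x_i\bigr] \le O(\sqrt{n})$, and the same for $V \setminus S_\ell$. A Markov-type argument then shows that the number of $v \in ([n] \setminus S_\ell)$ with $\tE_\mu[w_v] \ge 1-\sqrt{\varepsilon}$ is at most $O(\sqrt{n})/(1-\sqrt{\varepsilon}) = O(\sqrt{n})$, which is $o(\sqrt{\varepsilon}\,k)$ since $k \gg \sqrt{n\log n}$.

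\textbf{False negatives.} First observe that, since the bipartition was obtained via independent fair coin flips, a Chernoff bound yields $|U \cap S_\ell|, |V \cap S_\ell| \in \bigl[\tfrac{1-\varepsilon}{2}k,\; \tfrac{1+\varepsilon}{2}k\bigr]$ with high probability for all $\ell \in [d]$ (after a union bound), so feasibility of the pseudo-distribution is not an issue. The size axiom forces $\tE_\mu\bigl[\sum_{v \in U} x_v\bigr] \ge \tfrac{1-\varepsilon}{2}k$, and combined with the false-positive bound this gives $\tE_\mu\bigl[\sum_{v \in U \cap S_\ell} x_v\bigr] \ge \tfrac{1-\varepsilon}{2}k - O(\sqrt n) \ge \tfrac{1-2\varepsilon}{2}k$. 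If $N_U$ denotes the number of $v \in U \cap S_\ell$ with $\tE_\mu[x_v] < 1-\sqrt{\varepsilon}$, then upper bounding term-by-term yields $\tE_\mu\bigl[\sum_{v \in U \cap S_\ell} x_v\bigr] \le |U \cap S_\ell| - \sqrt{\varepsilon}\, N_U \le \tfrac{1+\varepsilon}{2}k - \sqrt{\varepsilon}\, N_U$. Rearranging gives $N_U \le \tfrac{3\sqrt{\varepsilon}}{2}k$. The identical argument on the $V$-side bounds the corresponding $N_V$, so $|S_\ell \setminus S_T| \le N_U + N_V \le 3\sqrt{\varepsilon}\,k$.

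\textbf{Combining.} Adding the two bounds yields $|S_T \triangle S_\ell| \le 3\sqrt{\varepsilon}\,k + O(\sqrt n) \le 4\sqrt{\varepsilon}\,k$ for sufficiently small $\varepsilon$ and $k \gg \sqrt{n\log n}$. I expect the only mild subtlety to be tracking the high-probability event that the random bisection produces balanced sides of $S_\ell$ simultaneously for all $\ell$ (handled by a union bound over $d \le n$ labels), and being careful that the Markov-style counting step is valid — which it is, because pseudo-expectations of indicator-like variables $x_v \in \{0,1\}$ lie in $[0,1]$ by the axiom $x_v^2 = x_v$, making the ``term-by-term upper bound by $1$'' legitimate.
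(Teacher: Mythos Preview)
Your proof is correct and follows essentially the same approach as the paper: both use \Cref{cor:low-mass-outside-sell} plus soundness to bound the false positives by $O(\sqrt{n})$, then combine this with the size constraint from $\bicliqueaxiomsalgo$ and an averaging/Markov argument to bound the false negatives by $O(\sqrt{\varepsilon}\,k)$. The only cosmetic difference is that you carry out the false-negative bound separately on the $U$- and $V$-sides, whereas the paper merges them into a single inequality over $S_\ell$; the arithmetic and the final bound are the same.
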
 \begin{proof}
    First of all, observe that for every $S_\ell$, we get that over step 1 of the algorithm, we have $ \frac{1-\varepsilon}{2}k \le  |S_\ell \cap U|, |S_\ell \cap V| \le \frac{1+\varepsilon}{2} k$ for any fixed $\varepsilon > 0$, with high probability. This follows from standard Chernoff- and subsequent union bounds. Using \Cref{cor:low-mass-outside-sell} combined with the duality between SoS-proofs and pseudo-distributions, we get that $\sum_{i \in (U \cup V) \setminus S_\ell}\Expectedtildesubnop{\mu}{w_i} \le O(\sqrt{n})$. 

    This already implies that $| S \setminus S_\ell| = o(k)$ since at most $O(\sqrt{n})$ vertices $i$ outside of $S_\ell$ are such that $\Expectedtildesubnop{\mu}{x_i} \ge 1 - \varepsilon$. Moreover, since $\sum_{i \in U \cup V}\Expectedtildesubnop{\mu}{w_i} = 2\frac{1-\varepsilon}{2}k \ge (1-\varepsilon) k$, we get that $\sum_{i \in S_\ell} \Expectedtildesubnop{\mu}{w_i} \ge (1-\varepsilon) k - o(k)$. This implies that on average over $i \in S_\ell$, $\Expectedtildesubnop{\mu}{w_i} \ge 1-2\varepsilon$. Hence, by Markov's inequality, there are at most $2\sqrt{\varepsilon} k$ vertices in $S_\ell$ with $\Expectedtildesubnop{\mu}{w_i} < 1 - \sqrt{\varepsilon}$. With this, it follows that $|S_\ell \triangle S_T| \le 4\sqrt{\varepsilon} k$, as desired.
\end{proof}

What is remarkable about the above is that the conclusion still holds after allowing a monotone adversary to modify $G$. This is despite the fact that the resulting modified graph $G'$ and the associated bipartite graphs considered by \Cref{alg:splitting-one-sided} might not be balanced anymore. The conclusions hold nonetheless since the computed pseudo-distributions satisfy a useful monotonicity property.

\begin{lemma}\label{lem:monotoneadversary}
    Given $G \sim \RIGone$, let $\widetilde{G}$ denote $G$ after an arbitrary subset of edges $\{u ,v\} \in E(G)$ with $M_u \cap M_v = \emptyset$ has been deleted. Furthermore, analogously to $H(T)$, define 
    \begin{align*}
        \widetilde{H}(T) &\coloneqq \widetilde{G}[U \uplus V] \cap N_{\widetilde{G}}(T).
    \end{align*}
    Then \Cref{lem:recover-cliques} still holds when replacing $H(T)$ by $\widetilde{H}(T)$. 
\end{lemma}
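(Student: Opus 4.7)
The plan is to reduce to \Cref{lem:recover-cliques} via a monotonicity argument tailored to the structure of the monotone adversary. The crucial observation is that the adversary may only delete edges $\{u,v\}$ with $M_u \cap M_v = \emptyset$, so no edge inside any ground-truth clique $S_\ell$ is ever removed: any $u,v \in S_\ell$ share the label $\ell$. In particular, for $T \subseteq S_\ell$ we still have $S_\ell \setminus T \subseteq N_{\widetilde G}(T)$, and $\widetilde H(T)$ sits inside $H(T)$ in the sense that $V(\widetilde H(T)) \subseteq V(H(T))$ and $E(\widetilde H(T)) \subseteq E(H(T))$. Consequently, $\mathcal B(\widetilde H(T), \kbip)$ is essentially a \emph{strengthening} of $\mathcal B(H(T), \kbip)$, restricted to a possibly smaller vertex set, and it should continue to force concentration on $S_\ell$. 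Note moreover that the axiom system $\mathcal B(\widetilde H(T), \kbip)$ is feasible with high probability, because $(S_\ell \cap U, S_\ell \cap V)$ remains an intact biclique of size roughly $\tfrac{1}{2}k \times \tfrac{1}{2}k$ in $\widetilde G$ after step~1 of the algorithm.

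The main step will be to lift any pseudo-distribution $\mu$ satisfying $\mathcal B(\widetilde H(T), \kbip)$ to a pseudo-distribution $\hat \mu$ on the variable set of $H(T)$ by the substitution $w_v = 0$ for every $v \in V(H(T)) \setminus V(\widetilde H(T))$, i.e.\ $\tE_{\hat \mu}[p(w)] \coloneqq \tE_{\mu}[p(w)|_{w_v = 0 \text{ for extras}}]$. Positivity on squares is preserved because substituting $0$ into a square polynomial produces a square polynomial of no larger degree. Verifying that $\hat \mu$ satisfies $\mathcal B(H(T), \kbip)$ is then routine: the Boolean constraints $w_v^2 = w_v$ are inherited for original variables and are trivial for the extras; the size constraints on $\sum_{v \in U \cap V(H(T))} x_v$ and $\sum_{v \in V \cap V(H(T))} y_v$ reduce under the substitution to the corresponding constraints of $\mu$; and the orthogonality constraints $x_u y_v = 0$ for non-edges $(u,v)$ of $H(T)$ are inherited when both endpoints lie in $V(\widetilde H(T))$ (since deleting edges only creates further non-edges, so any non-edge of $H(T)$ is still a non-edge of $\widetilde H(T)$), while products involving an extra variable vanish identically.

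Once the lifting is in place, \Cref{lem:recover-cliques} applies verbatim to $\hat \mu$ on $H(T)$, because the balancedness guaranteed by \Cref{lem:balancedness-hl} is a property of the original graph $G$ and is unaffected by the adversary. This yields $|\hat S_T \triangle S_\ell| \le 4\sqrt{\varepsilon} k$ for $\hat S_T \coloneqq \{v \in V(H(T)) : \tE_{\hat \mu}[w_v] \ge 1-\sqrt{\varepsilon}\}$. Finally, the extra variables have pseudo-expectation exactly $0$ in $\hat \mu$, so they never enter $\hat S_T$, which means $\hat S_T = S_T$ and the claimed bound follows. I do not expect a genuine obstacle here; the only real design choice is identifying the correct lifting construction, after which every verification is either immediate or an instance of the original proof.
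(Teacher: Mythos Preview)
Your proof is correct and follows essentially the same approach as the paper: both argue that $\widetilde H(T) \subseteq H(T)$ and that therefore any pseudo-distribution satisfying $\mathcal B(\widetilde H(T),\kbip)$ also satisfies $\mathcal B(H(T),\kbip)$, together with feasibility via the intact $S_\ell$. The paper states this containment of constraint systems in one line, whereas you spell out the zero-extension lifting explicitly; this is the natural way to make the paper's terse claim precise, and all your verifications go through.
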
 \begin{proof}
    \Cref{lem:recover-cliques} holds for all pseudo-distributions $\mu \sdtstile{x, y}{\degreeexact} \mathcal{B}(\Hg{L}, \frac{1-\varepsilon}{2}k)$. Since $\widetilde{H}(T) \subseteq H(T)$ if $T \subseteq S_\ell$ for any $\ell \in [d]$, we get that every degree $\ge \degreeexact$ pseudo-distribution $\widetilde{\mu}$ satisfying $\mathcal{B}(\widetilde{H}(T), \frac{1-\varepsilon}{2}k)$ is also a pseudo-distribution satisfying $\mathcal{B}(H, \frac{1-\varepsilon}{2}k)$. Moreover, since all edges within $S_\ell$ remain unchanged when passing from $G$ to $\widetilde{G}$, we get that $S_\ell \in \widetilde{H}$, so there further exists at least one pseudo-distribution $\widetilde{\mu}$ satisfying $\mathcal{B}(\widetilde{H}, \frac{1-\varepsilon}{2}k)$ (take for example any actual probability distribution that puts all its mass on $\frac{1-\varepsilon}{2}k$ vertices in $U\cap S_\ell$ and $V\cap S_\ell$, respectively).
\end{proof}

Finally, it only remains to argue that the combinatorial clean-up in step 6 exactly recovers $S_\ell$. 
\begin{lemma}\label{lem:equalaftercleanup}
    Assume $T$ is a $t$-tuple $T \subseteq S_\ell$ such that $\Hg{L}$, and $\Hg{R}$ meet the balancedness criteria from \Cref{lem:balancedness-hl}. Then, after step 6, we have $S_T = S_\ell$. The lemma continues to hold if we replace $G$ by $\widetilde{G}$ as specified in \Cref{lem:monotoneadversary}.
\end{lemma}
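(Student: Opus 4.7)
The plan is to combine the approximate recovery from \Cref{lem:recover-cliques} with simple degree estimates on $G$ to show that the two-step clean-up in line 6 of \Cref{alg:splitting-one-sided} pushes $S_T$ to exactly $S_\ell$. By \Cref{lem:recover-cliques} we may start from the guarantee $|S_T \triangle S_\ell|\le 4\sqrt{\varepsilon}k$. In addition, I would record two global high-probability facts about $G$: (i) a Chernoff bound gives $|S_\ell|=(1\pm o(1))k$ for every $\ell\in[d]$, and (ii) \Cref{lem:degreebound} ensures that for every $\ell\in[d]$ and every vertex $v\in[n]\setminus S_\ell$, the number of neighbours of $v$ in $S_\ell$ is at most $(1+o(1))pk$. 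Since by assumption $p$ is bounded away from $1$, we may fix $\varepsilon>0$ small enough that $(1+o(1))pk+4\sqrt{\varepsilon}k<(1-5\sqrt{\varepsilon})k$ and $(1-o(1))k-4\sqrt{\varepsilon}k\ge(1-5\sqrt{\varepsilon})k$ for all sufficiently large $n$.

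With these ingredients the argument splits into four cases according to whether a vertex $v$ lies in $S_T\cap S_\ell$, $S_T\setminus S_\ell$, $S_\ell\setminus S_T$, or $[n]\setminus(S_T\cup S_\ell)$. If $v\in S_\ell$ then $v$ is adjacent to every other vertex of $S_\ell$ (they all share label $\ell$), so its number of neighbours in $S_T$ is at least $|S_\ell\cap S_T|-1\ge|S_\ell|-|S_\ell\triangle S_T|-1\ge(1-5\sqrt{\varepsilon})k$; hence such a vertex survives the first (removal) phase if it is in $S_T$ and is re-inserted in the second (addition) phase if it was missing from $S_T$. If $v\in[n]\setminus S_\ell$, then by~(ii) its number of neighbours in $S_\ell$, and therefore in $S_T$, is at most $(1+o(1))pk+|S_T\setminus S_\ell|\le(1+o(1))pk+4\sqrt{\varepsilon}k<(1-5\sqrt{\varepsilon})k$; hence such a vertex is deleted in the first phase if it lay in $S_T$ and is not added in the second phase. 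Combining the four cases yields $S_T=S_\ell$ after step~6, as claimed.

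For the monotone adversary version we replace $G$ by $\widetilde G$ as in \Cref{lem:monotoneadversary}. All deleted edges $\{u,v\}$ satisfy $M_u\cap M_v=\emptyset$, so in particular no edge inside any $S_\ell$ is removed; $S_\ell$ remains a clique in $\widetilde G$ and its size is unchanged. Therefore the lower bound used in case $v\in S_\ell$ still holds verbatim. For $v\in[n]\setminus S_\ell$ the upper bound from~(ii) can only decrease under edge deletions, so case $v\notin S_\ell$ is unaffected as well. Combined with \Cref{lem:monotoneadversary}, which guarantees that the pre-cleanup approximation $|S_T\triangle S_\ell|\le 4\sqrt{\varepsilon}k$ also holds with respect to $\widetilde G$, the same four-case analysis yields $S_T=S_\ell$. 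The only thing to be slightly careful about is the choice of constants: the $O(\sqrt{\varepsilon})$-type slack must be picked after fixing the gap between $p$ and $1$, but this is purely a matter of selecting $\varepsilon$ small enough, and no conceptual obstacle arises.
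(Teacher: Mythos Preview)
Your proposal is correct and follows essentially the same approach as the paper's own proof: start from \Cref{lem:recover-cliques} to get $|S_T\triangle S_\ell|\le 4\sqrt{\varepsilon}k$, then use the clique structure of $S_\ell$ together with the degree bound of \Cref{lem:degreebound} to argue that the two-phase clean-up fixes $S_T$ to exactly $S_\ell$. If anything, your version is slightly more explicit than the paper's---you spell out the fourth case $v\in[n]\setminus(S_\ell\cup S_T)$ (which the paper leaves implicit) and you give a more detailed justification of why the monotone adversary does not affect either direction of the degree comparison.
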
 \begin{proof}
    We note that by \Cref{lem:recover-cliques}, we find a set $S_T$ with $|S_\ell \triangle S_T| \le 4\sqrt{\varepsilon} k$. Since $|S_\ell| \ge (1 - o(1))k$ w.h.p., we get that every $v \in S_\ell \cap S_T$ has at least $(1 - 5\sqrt{\varepsilon})k$ neighbours in $S_T$, so these are not deleted. Moreover, by \Cref{lem:degreebound}, each $v \in S_T\setminus S_\ell$ has at most $(1+o(1))pk + 4\sqrt{\varepsilon} k < (1 - 5\sqrt{\varepsilon})k$ neighbours in $S_T$, so they are removed (if $\varepsilon$ is sufficiently small). Finally, vertices in $S_\ell \setminus S_T$ have at least $(1 - 5\sqrt{\varepsilon})k$ neighbours in $S_T$, so these vertices are added. Hence, $S_T = S_\ell$ after step 6.
    The same argument applies to $\adv{G}$ by \Cref{lem:monotoneadversary}. 
\end{proof}

\subsection{Step 2: No other cliques are found}

We proceed by showing that the algorithm never finds a clique outside of the ground-truth.
\begin{lemma}\label{lem:step2analysisexactprime}
    Let $T \in \binom{[n]}{t}$ be any $t$-tuple in $G \sim \RIGone$. Then, after step 6 of \Cref{alg:splitting-one-sided}, the set $S_T$ is either discarded or a subset of $S_\ell$ for some $\ell \in [d]$.
\end{lemma}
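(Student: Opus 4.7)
The plan is to invoke the Single Label Clique Theorem (\Cref{thm:slct}) as an off-the-shelf proof of identifiability, rather than re-deriving anything via SoS machinery. Observe that $S_T$ is not discarded in step 6 precisely when, after the combinatorial clean-up, $S_T$ induces a clique of size at least $(1-\varepsilon)k$ in the input graph $\widetilde{G}$ (the possibly adversarially modified version of $G$). Since the monotone adversary is only permitted to delete edges $\{u,v\}$ with $M_u \cap M_v = \emptyset$, we have the inclusion $\widetilde{G} \subseteq G$, so every clique in $\widetilde{G}$ is also a clique in the original $G \sim \RIGone$. In particular $S_T$ is a clique of size at least $(1-\varepsilon)k$ in the unmodified random graph.

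I would then apply \Cref{thm:slct} directly to this clique. The standing parameter regime of \Cref{alg:splitting-one-sided} ($d \ge n^{\Omega(1)}$, $p$ bounded away from $1$, and $k \gg \sqrt{n\log n}$) comfortably satisfies assumption (i) of \Cref{thm:slct}, since $k \gg \sqrt{n \log n} \gg n^{\varepsilon'}$ for any sufficiently small $\varepsilon' > 0$. Hence with high probability over the draw of $G$, every clique of size at least $(1-\varepsilon)k$ in $G$ is spanned by a single label, i.e., contained in some $S_\ell$. This gives $S_T \subseteq S_\ell$ deterministically once the SLCT event holds.

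The probabilistic bookkeeping is handled by taking a single union bound at the start of the analysis in \Cref{sec:analysisexact}, so that the SLCT event coexists with the high-probability events invoked elsewhere (\eg the balancedness guarantees of \Cref{lem:balancedness-hl}, the degree bounds of \Cref{lem:degreebound}, and the small-overlap bounds from \Cref{sec:cliqueintersections}). Conditional on this joint event, the claim follows unconditionally from the acceptance criterion of step 6. I do not anticipate a genuine obstacle here: the only real care is in matching the parameter assumptions of \Cref{alg:splitting-one-sided} to those of \Cref{thm:slct} and in noting that monotone deletions cannot create new cliques, so the SLCT transfers losslessly from $G$ to $\widetilde{G}$.
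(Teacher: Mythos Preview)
Your approach has a genuine parameter-range gap. You invoke \Cref{thm:slct}(i), but that assumption requires $d \ll n^{1-\alpha}$ for some \emph{fixed} constant $\alpha > 0$, which translates to $k \gg n^{1/2+\varepsilon}$ for some fixed $\varepsilon > 0$ (since $k \approx n/\sqrt{d}$). You only verify $k \gg n^{\varepsilon'}$ and never check the upper bound on $d$. The algorithm, however, is claimed to work for all $k \gg \sqrt{n\log n}$, which includes the regime $\sqrt{n\log n} \ll k \le n^{1/2+o(1)}$ (equivalently $n^{1-o(1)} \le d \ll n/\log n$). In that regime neither (i) nor (ii) of \Cref{thm:slct} is available in general: (i) fails as just explained, and (ii) imposes the extra condition $\max\{\log((1-q)/(1-p)), q\} \le \tfrac{1}{2}-\varepsilon$, which need not hold for arbitrary constant $p$ bounded away from $1$. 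So your argument does not cover the full range of \Cref{thm:exactrecovery}; precisely the improvement from $k \gg n^{1/2+\varepsilon}$ down to $k \gg \sqrt{n\log n}$, which is the point of the bipartite-splitting approach (cf.\ \Cref{sec:techniques}, ``Reaching $k \gtrsim \sqrt{n\log n}$''), is lost.

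The paper's proof avoids the SLCT entirely here. It instead uses an SoS certificate in the \emph{biclique} axioms $\bicliqueaxiomsalgo$ (\Cref{cor:low-mass-outside-sell-2}) to show that for every $T$ the pseudo-distribution puts only $O(\sqrt{n})$ mass on $[n]\setminus V(\cliqueset{T})$; this certificate is what makes $k \gg \sqrt{n\log n}$ suffice, because the size constraint forces $\Omega(k)$ mass on the right side of the balanced bipartition. From there, $|\cliqueset{T}| = O(1)$ gives some $\ell$ with $|S_T \cap S_\ell| = \Omega(k)$, and the combinatorial facts from \Cref{sec:cliqueintersections} (\Cref{lem:bad-event} and \Cref{lem:degreebound}) finish the argument that $S_T \subseteq S_\ell$. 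Your route is cleaner when it applies, but to match the paper's parameter range you would need to replace the black-box SLCT call with these finer pieces.
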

With the above lemma and \Cref{lem:equalaftercleanup} it immediately follows that the clean-up in step $7$ yields a list $\mathcal{L}$ that contains exactly the ground-truth cliques $S_\ell$ for all $\ell \in [d]$, so \Cref{thm:exactrecovery} follows.

To prove \Cref{lem:step2analysisexactprime}, we recall that given a $T \in \binom{[n]}{t}$, the set  $\cliqueset{T} \subseteq [d]$ denotes the set of labels that have been chosen by at least three vertices in $T$. Moreover, $V(\cliqueset{T})$ is the set of vertices in $[n]$ that have chosen a label in $\cliqueset{T}$. The key observation we use (reminiscent of the ideas in \Cref{sec:fewoutsidevcliqueset}) is that for \emph{all} $T \in \binom{[n]}{t}$, the axioms $\bicliqueaxiomsalgocompact$ are sufficient to show that not many vertices can be in $V(\cliqueset{T})$.
\begin{lemma}\label{cor:low-mass-outside-sell-2}
    With high probability over the draw of $G \sim \RIGone$, the following holds.
    For all $T \subseteq \binom{[n]}{t}$,
    $$
        \bicliqueaxiomsalgo \sststile{x, y}{\degreeexact} \bigg\{ \sum_{i \in U \setminus V(\cliqueset{T})} x_i, \sum_{i \in V \setminus V(\cliqueset{T})} y_i \le O( \sqrt{n}) \bigg\}.
    $$
\end{lemma}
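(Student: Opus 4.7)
The plan is to mirror the argument of \Cref{cor:low-mass-outside-sell} almost verbatim, with the set $S_\ell$ replaced by $V(\cliqueset{T})$. The whole proof reduces to establishing balancedness of the two bipartite graphs
\[
\adv{H}_L(T) \coloneqq G\big[(U \setminus V(\cliqueset{T})) \uplus V\big] \cap N_G(T), \qquad \adv{H}_R(T) \coloneqq G\big[(V \setminus V(\cliqueset{T})) \uplus U\big] \cap N_G(T),
\]
\emph{simultaneously for every} $T \in \binom{[n]}{t}$, and then feeding this balancedness into \Cref{lem:sos-left-certificate}. In \Cref{lem:balancedness-hl} we only obtained balancedness for those $T \subseteq S_\ell$ that happened to be ``typical,'' which sufficed for step~1; here we need the claim uniformly over \emph{all} $t$-tuples.

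To obtain uniform balancedness, I would invoke \Cref{lem:general-balancedness-generalized} with the assignment $L(T) \coloneqq \cliqueset{T}$ (so $S(T) = V(\cliqueset{T})$), parameter $\eta = 2$, and the collection of two deterministic valid procedures $\textsc{Alg}_1, \textsc{Alg}_2$ that on input $(S,[n])$ output the bipartitions $(U \setminus S) \uplus V$ and $(V \setminus S) \uplus U$ respectively. Both are valid in the sense of \Cref{def:procedure-bipartition} because their left sides are disjoint from $S$. Crucially, the hypothesis that no label in $[d] \setminus L(T)$ appears in more than $\eta = 2$ vertices of $T$ is automatic by the very definition of $\cliqueset{T}$ as the set of labels occurring in at least $3$ vertices of $T$. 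The lemma then yields, with high probability over $G$, that for every $T \in \binom{[n]}{t}$ and every $3 \le r \le t$, both $\adv{H}_L(T)$ and $\adv{H}_R(T)$ have $r$-fold balancedness at most $\tfrac{16 p^{-2}}{\alpha} \maxbal{r} \cdot t p^{t} k + o(k)$. Compared with the bound obtained in the proof of \Cref{lem:balancedness-hl} (which used $\eta = 6$ and was already $\le \maxbalall{3}\tfrac{k}{4}$ for the algorithm's choice of $t$), the present bound is smaller by a factor of $p^{4}$, so it is comfortably below the threshold $\tfrac{11}{12}\maxbalall{3} k$ required by \Cref{lem:sos-left-certificate}.

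Once balancedness is in hand, the SoS manipulations are identical to those in \Cref{cor:low-mass-outside-sell}. Applying \Cref{lem:sos-left-certificate} with $r = 3$ to the bipartite graph $\adv{H}_L(T)$ and $\Asub = U \setminus V(\cliqueset{T})$ (which is a subset of the left side $U$ of $H(T)$) gives
\[
\bicliqueaxiomsalgo \sststile{x,y}{12} \Bigg\{ \Bigg( \sum_{i \in U \setminus V(\cliqueset{T})} x_i \Bigg)^{3} \le O\!\left(\frac{n^{2}}{k}\right) \Bigg\}.
\]
Multiplying by the size axiom $\sum_{i \in U} x_i \le 2k$ yields $(\sum_{i \in U \setminus V(\cliqueset{T})} x_i)^{4} \le O(n^{2})$, and applying the cancellation $\{f^{2} \le C\} \sststile{}{} \{f \le \sqrt{C}\}$ from \Cref{lem:cancellation} twice gives the desired $\sum_{i \in U \setminus V(\cliqueset{T})} x_i \le O(\sqrt{n})$. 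The analogous argument applied to $\adv{H}_R(T)$ with $\Asub = V \setminus V(\cliqueset{T})$ gives $\sum_{i \in V \setminus V(\cliqueset{T})} y_i \le O(\sqrt{n})$.

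The only non-routine step is the observation that setting $L(T) = \cliqueset{T}$ makes the $\eta = 2$ hypothesis of \Cref{lem:general-balancedness-generalized} automatic for every $T$; everything else is a direct copy of the step~1 argument with $S_\ell$ replaced by $V(\cliqueset{T})$. The main obstacle I anticipate is purely bookkeeping: verifying that the constants in the algorithm's choice of $t$ still yield balancedness below the threshold of \Cref{lem:sos-left-certificate} when passing from $\eta = 6$ (as used in step~1) to $\eta = 2$ (as used here), which as noted only improves the estimate.
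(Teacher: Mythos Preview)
Your proposal is correct and matches the paper's proof essentially verbatim: the paper also invokes \Cref{lem:general-balancedness-generalized} with $L(T)=\cliqueset{T}$, $S(T)=V(\cliqueset{T})$, and $\eta=2$ (noting, as you do, that this choice makes the $\eta$-hypothesis automatic), obtains the same $3$-fold balancedness bound, and then cites the reasoning of \Cref{cor:low-mass-outside-sell}. Your write-up is in fact more explicit than the paper's sketch about the two valid procedures and about why the constants still work after passing from $\eta=6$ to $\eta=2$.
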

This has the implication that the pseudo expectation on $[n] \setminus V(\cliqueset{T})$ is $O(\sqrt{n})$. Since the number of labels in $\cliqueset{T}$ is bounded w.h.p., this allows us to use a simple argument invoking the combinatorial insights from \Cref{sec:cliqueintersections} to show that exact recovery  succeeds. 

\begin{proof}[Proof of \Cref{lem:step2analysisexactprime}] 

    By \Cref{cor:low-mass-outside-sell-2}, we get that for all $T \in \binom{[n]}{t}$, the set $S_T$ is such that $|S_T \cap ([n] \setminus V(\cliqueset{T}))| \le O(\sqrt{n})$. Now, whenever $|S_T| \ge (1 - \varepsilon) k$ after the clean up in step $6$, then 
    we get from the fact that $|\cliqueset{T}| \le \frac{4t}{\alpha}$ by \Cref{lem:boubdbadlabels},  that there is some $\ell \in \cliqueset{T}$ such that 
    $$
        |S_T \cap S_\ell| \ge \frac{\alpha}{8t} k.
    $$
    Since we can further assume $S_T$ to induce a clique in $G$, we can now rely on the combinatorial properties regarding cliques intersecting ground-truth cliques from \Cref{sec:cliqueintersections}. 
    Precisely, by \Cref{lem:bad-event}, we get that with high probability, the event $\mathcal{E}_{a, b}$ for $b = \lfloor \log(n)^2 \rfloor$, any $a \ge n^{\varepsilon}k/\sqrt{d}$ and any $\varepsilon > 0$ does not occur. Hence, for sufficiently small $\varepsilon$, we get $a \ll \frac{\alpha}{8t} k$, so the non-occurrence of $\mathcal{E}_{a,b}$ implies that $|S_T \setminus S_\ell| \le \lfloor \log(n)^2 \rfloor$. Since $|S_T| \ge (1-\varepsilon)k$, this implies that all $v \in S_T \setminus S_\ell$ have degree at least $(1-2\varepsilon)k$ into $S_\ell$. For sufficiently small $\varepsilon$, this contradicts the degree bound from \Cref{lem:degreebound}, so the only way that $S_T$ is not discarded at the end of step $6$ is that is is a subset of $S_\ell$. 
\end{proof}

It only remains to prove \Cref{cor:low-mass-outside-sell-2}. Since the ideas are quite similar to arguments already employed in \Cref{sec:slctstep2} and \Cref{sec:exactstep1}, we only sketch the proof to avoid repetition.
\begin{proof}[Proof of \Cref{cor:low-mass-outside-sell-2}]
    Similar to the definition of $\Hg{L}$ in \Cref{lem:balancedness-hl}, we define $
        F_L(T) \coloneqq G[( U \setminus V(\cliqueset{T}) ) \uplus T ] \cap N_G(T).
    $ 
    Similarly as in the proof of \Cref{lem:multipartite-balancedness}, we get from \Cref{lem:general-balancedness-generalized} with $\eta = 2$, $L(T) = \cliqueset{T}$, and $S(T) = V(\cliqueset{T})$, that $F_L(T)$ has $3$-fold balancedness 
    $$
        \frac{16tp^{-2}}{\alpha} \maxbal{3} \cdot  p^{t}k + o(k),
    $$ and w.h.p., this holds for all $T$.
    Using the same calculation as in \Cref{lem:balancedness-hl}, we then get that for our choice of $t$, $F_L(T)$ has $3$-fold balancedness at most $\maxbalallcompact{3}\frac{k}{4}$. Then, employing the same reasoning as in \Cref{cor:low-mass-outside-sell} yields the desired statement. \qedhere
\end{proof}

\section{Analysis for approximate recovery}\label{sec:stronger-adv}
    \addtocounter{algorithm}{-0} 
    
  \begin{algorithm}
    \setlength\itemsep{.0001em}
    \caption{ Recovering the labels of an RIG under a bounded adversary. }
    \label[algorithm]{alg:recovery-approx}
    \begin{description}
    \setlength\itemsep{.0001em}
    \item[Given:] A graph $G \sim \RIGone$ on vertex set $[n]$ modified by an adversary that alter up to $\epsedge k$ edges incident to every $v \in [n]$ and arbitrarily alter the neighbourhood of up to $\epsnode k$ vertices. In addition to that, a monotone adversary can delete an arbitrary amount of edges $\{u, v \} \in E(G)$ such that $M_v \cap M_u = \emptyset$. Denote the modified graph obtained in this way by $\adv{G}$,
    \item[Output:]
      A list $\mathcal{L}$ of $d$ sets of vertices in $[n]$ such that for every $\ell \in [d]$, there is exactly one $S \in \mathcal{L}$ such that $|S_\ell \triangle S| \le O(\varepsilon k)$.
    \item[Operation:]\mbox{}
    \begin{enumerate}
    \item Split $[n]$ into two disjoint sets $U, V$ using an independent, fair coin flip for every $v \in [n]$.
    \item Set $\gamma \coloneqq \frac{1}{\alpha}\max\{ 16p^{-6}, 1000 \} \max\{1,  (p/1-p)^6\}$, and $t = 2\lceil \log_{1/p}(\gamma) \rceil$. Choose a list $\mathcal{T}$ of $O( n^{\alpha(1+ t/2)} )$ $t$-tuples from $\binom{[n]}{t}$ uniformly at random.
    \item Let $\ktil = \kapprox$. For each $T \in \mathcal{T}$, compute a degree-$O(t)$ pseudo-distribution $\mu$ on variables $w$ satisfying the axioms $\bicliqueaxiomsalgoapproxconcrete$ (see \eqref{eq:bicliqueaxiomsapprox} for a definition) that maximizes $\Expectedtildesub{\mu}{|w|}$.
    \item Let $S_T = \{ v \in [n] \mid \Expectedtildesub{\mu}{w_v} \ge 1 - \smallc \}$ for some small constant $\smallc \le \frac{1}{8t}$. 
    \item If $|S_T| < \kfoundsize k$, discard $S_T$ and continue with the next $T \in \mathcal{T}$. Otherwise, add $S_T$ to the output $\mathcal{L}$. 
    \item Pruning: For every $S \in \mathcal{L}$, delete from $\mathcal{L}$ all sets $S' \neq S$ such that $|S \triangle S'| \le 10 \epsedge / \rho$.
  \end{enumerate}
    \end{description}
  \end{algorithm}

We show that \Cref{alg:recovery-approx} is robust against a stronger adversary.

\subsection{Preliminaries}

Before starting with the analysis, we introduce some preliminaries.

\paragraph{Notation}
Throughout the rest of this section, we denote by $G$ a typical sample from $\RIGone$, and by $\adv{G}$ the resulting graph after an adversary has added or deleted up to $\varepsilon k^2$ edges in $G$. Given a vertex $v \in [n]$, we denote by $\adv{\deg}(v)$ the number of edges the adversary added or deleted \emph{incident to} $v$. Moreover, for any $\modified \in \mathbb{R}$, we denote by $M_{\le \modified}$ the set of vertices $v \in [n]$ such that $\adv{\deg}(v) \le \modified$. Similarly, $\Madv$ is set of vertices $v \in [n]$ such that $\adv{\deg}(v) \ge \modified$.

\paragraph{Adversarial Capabilities}

In addition to a monotone adversary, we allow further edge corruptions as follows. 
We fix some small constant $\epsedge = \epsedge$ and allow an adversary to arbitrarily modify edges with the constraint that $\adv{\deg}(v) \le \epsedge k$ for every $v \in [n]$. On top of that, we fix a second constant $\epsnode$ and allow the adversary to modify up to $\epsnode k$ vertices $v \in [n]$ with \emph{unbounded} adversarial degree $\adv{\deg}(v)$.

Note that this is more general than simply allowing node corruptions. It further covers an adversary that can modify $\varepsilon k^2$ edges arbitrarily. For some $\varepsilon = \varepsilon(\epsedge, \epsnode)$. To see this note that allowing for $\varepsilon k^2$ edge corruptions means that there are at most $ \frac{2\varepsilon}{\epsedge} k$ vertices with $\adv{\deg}(v) \ge \epsedge k$. Thus, if we can handle our more general adversary with parameters $\epsedge, \epsnode$, then we can handle $\varepsilon k^2$ edge corruptions for $\varepsilon = 2\epsedge \epsnode$. 
The main theorem of this section can then be stated as follows.
\begin{theorem}[Approximate recovery]\label{thm:approxrecovery2}
    There exists a polynomial time algorithm that on input $G \sim \RIGone$ possibly modified by a monotone adversary and the adversary described above, achieves $O(\epsnode k)$-approximate recovery. This holds with high probability over the choice of $G$ and for all sufficiently small constants $\epsedge, \epsnode$. 
\end{theorem}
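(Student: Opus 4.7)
The plan is to mirror the two-step structure of Section~7 but to replace the ``hard'' neighbourhood reduction (restricting the pseudo-distribution to $H(T) = G[U \uplus V] \cap N_G(T)$) by a ``soft'' one that computes a pseudo-distribution on the full graph $\adv{G}[U \uplus V]$ while imposing the additional constraint $w_T = 1$ inside the axioms $\bicliqueaxiomsalgoapproxconcrete$. The first task is to show \emph{completeness}: for every $\ell \in [d]$ there exists some $T \in \mathcal{T}$ with $T \subseteq S_\ell \setminus \Madv$ such that running the SoS step on $T$ produces a set $S_T$ with $|S_T \triangle S_\ell| = O(\epsnode k)$. Concretely, a Chernoff bound over the random split gives that each $S_\ell$ puts $\tfrac{1 \pm o(1)}{2}k$ vertices on each side; combined with \Cref{lem:general-balancedness} applied to the algorithms that produce the bipartitions $(U\setminus V(\cliqueset{T})) \uplus (V \cap N_{\adv{G}}(T))$ and its mirror, a $1-o(1)$ fraction of $t$-tuples $T \subseteq S_\ell$ are ``good'' in the sense that the relevant bipartite sub-graphs have $3$-fold balancedness $\le \maxbalallcompact{3}\, k/4$. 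By the coupon-collector argument of \Cref{lem:balancedness-hl}, $|\mathcal{T}| = O(n^{\alpha(1 + t/2)})$ samples suffice to hit at least one such $T$ for every $\ell$, and a standard tail bound shows a constant fraction of such $T$'s also avoid $\Madv$ (since $|\Madv| \le \epsnode k$).

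The key novelty is \emph{soundness}: for any $T$ (not just the good ones) any $S_T$ produced by the algorithm with $|S_T| \ge (1-5\epsnode/\smallc)k$ must satisfy $S_T \subseteq S_\ell \cup \Madv$ for some $\ell \in [d]$. To prove this, I will establish the ``pseudo-concentration'' property sketched in Section~2. The starting point is an adversarial analogue of \Cref{cor:low-mass-outside-sell-2}: for every $T$ whose vertices are all in $M_{\le \epsedge k}$, the certificate from \Cref{lem:core-sos-proof-adv} (applied to the bipartite graph obtained from a valid procedure that bipartitions $[n]\setminus V(\cliqueset{T})$ into its halves on each side of the random split) together with the adversarial balancedness lemma \Cref{lem:general-balancedness-generalized} with $\eta=6$ and the constraint $w_T = 1$ yields
\[
  \bicliqueaxiomsalgoapproxconcrete \ \sststile{w}{O(t)} \ \Big\{ w_T \sum_{v \in [n] \setminus V(\cliqueset{T})} w_v \le o(k) \cdot w_T \Big\}.
\]
Here the factor $w_T$ plays the role that the neighbourhood reduction played before, but now within SoS: the constraint $x_uy_v(1-z_{u,v})=0$ forces either $v$ to be outside the ``clique'', or $v \in N_{\adv{G}}(T)$, or one of $O(\epsedge k)$ edges incident to any vertex of $T$ to be ``repaired''; choosing $t$ as a large constant relative to $\epsedge$ makes the $2\gamma r$ loss in \Cref{lem:core-sos-proof-adv} absorbable.

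The second step uses the same certificate to bootstrap concentration on $V(\cliqueset{T})$ into concentration on a single $S_\ell \cup \Madv$: by repeating the argument of \Cref{lem:limitedoverlapafterT} on the (still balanced after corruptions) bipartite graph $\adv{G}[\leftt \uplus \rightt]$, one derives for every $\ell \in \cliqueset{T}$ an SoS proof that
\[
  w_T \Big( \sum_{v \in [n]\setminus(S_\ell \cup \Madv)} w_v \Big) \Big( \sum_{S \subseteq \leftt, |S|=t} w_S \Big)^2 \le O(n^5/k^2)\, w_T,
\]
because any uncorrupted $v \in [n] \setminus S_\ell$ still has $\le (1+o(1))pk$ neighbours in $S_\ell$ (\Cref{lem:degreebound}), while a corrupted vertex has arbitrary degree but there are at most $\epsnode k$ of them. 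Applying this to a pseudo-distribution $\mu$ and taking $T = R \subseteq S_T$ to be any $t$-tuple with $\Expectedtildesub{\mu}{w_R} \ge \Omega(1)$ and $R \cap \Madv = \emptyset$ (which must exist when $|S_T| \ge (1-\mkfoundsize)k \gg \epsnode k + n^{\varepsilon}$), the conclusion is that $|S_T \setminus (S_\ell \cup \Madv)| = o(k)$ for some $\ell \in \cliqueset{R}$. A final combinatorial argument based on \Cref{lem:bad-event} and \Cref{lem:degreebound} upgrades this to $S_T \subseteq S_\ell \cup \Madv$ for a single $\ell$, giving $|S_T \triangle S_\ell| = O(\epsnode k)$.

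The main obstacle is verifying the adversarial balancedness statement required in the first step: we need the bipartite sub-graphs obtained after a \emph{soft} neighbourhood reduction to retain $3$-fold balancedness $\le (1 - \Omega(1))\maxbalallcompact{3}\, k$ even after $\epsedge k$ edges are corrupted at every vertex. This is handled by noting that the balancedness of \Cref{lem:general-balancedness-generalized} holds for the clean graph $G$ with margin $\tfrac{3}{4}\maxbalallcompact{3}\, k$, that $\epsedge k$ corruptions per vertex can shift each of the relevant sums by at most $O(\epsedge k)\maxbalcompact{r}$, and that the loss $2\gamma r$ in \Cref{lem:core-sos-proof-adv} with $\gamma = \epsedge k$ is cancellable by choosing $\epsedge$ small enough relative to the constants produced by the choice of $t$. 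The pruning step in line~6 then removes duplicate $S_T$'s (two sets representing the same $S_\ell$ differ by at most $2\cdot O(\epsnode k)$, well below the $10 \epsedge/\smallc$ threshold after appropriate parameter choice), so the output list has exactly $d$ entries as required.
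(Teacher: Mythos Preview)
Your high-level plan is correct and mirrors the paper's Section~7: soft neighbourhood reduction via the constraint $w_T = 1$, completeness via good tuples in each $S_\ell$, and soundness via a two-stage pseudo-concentration argument (first to $V(\cliqueset{T}) \cup \Madv$, then to a single $S_\ell \cup \Madv$). The adversarial balancedness analysis you sketch is also essentially what the paper does in \Cref{lem:balancedness-adversary} and \Cref{lem:balancedness-core-adv}.

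There is, however, a genuine gap in your soundness argument. You obtain a certificate of the form $w_R \cdot (\text{stuff}) \le C\, w_R$ and then want to conclude that $|S_T \setminus (S_\ell \cup \Madv)| = o(k)$. This inference is not automatic for pseudo-distributions: from $\Expectedtildesub{\mu}{w_R \sum_v w_v} \le C\, \Expectedtildesub{\mu}{w_R}$ and $\Expectedtildesub{\mu}{w_v} \ge 1-\smallc$ for $v \in S_T$, you cannot directly conclude anything about $|S_T|$ without a lower bound on $\Expectedtildesub{\mu}{w_R w_v}$ in terms of $\Expectedtildesub{\mu}{w_R}\,\Expectedtildesub{\mu}{w_v}$. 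The paper supplies precisely this via a variance lemma (\Cref{lem:variance}), which for Boolean variables gives $\Expectedtildesub{\mu}{w_R w_v}/\Expectedtildesub{\mu}{w_R} \ge \Expectedtildesub{\mu}{w_v} - \tfrac{1}{2}(1 - \Expectedtildesub{\mu}{w_R}) - \smallc$; this is where the requirement $\smallc \le \tfrac{1}{8t}$ comes from. Without it, your pseudo-concentration step does not go through.

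The more serious issue is your final upgrade from $|S_T \setminus (S_\ell \cup \Madv)| = o(k)$ to $S_T \subseteq S_\ell \cup \Madv$ via \Cref{lem:bad-event} and \Cref{lem:degreebound}. These lemmas are statements about cliques and degrees in the \emph{unperturbed} graph $G$, but $S_T$ is \emph{not} a clique in $\adv{G}$---the axioms $\bicliqueaxiomsalgoapproxconcretereduced$ permit up to $\epsedge k$ missing edges at every vertex, and membership in $S_T$ is defined purely by marginal pseudo-expectations, not by any edge structure. So a single uncorrupted vertex $v \in S_T \setminus (S_\ell \cup \Madv)$ cannot be ruled out by a degree argument. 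The paper handles this differently (\Cref{lem:pseudoconentrationapprox-2}): it assumes such a bad vertex $i$ exists, multiplies the certificate of \Cref{lem:overlapeight-approx} by $w_i$, and uses the variance lemma again---now on the triple product $w_R w_i w_v$---to derive $c\,|S_T \cap (S_\ell \setminus \Madv)| \le \beta(k) = o(k)$, contradicting the $\Omega(k)$ overlap already established. This ``multiply by the offending variable and contradict'' mechanism is what gives exact containment; the combinatorial route you propose does not.
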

\begin{remark}
    Notice that the above implies that for $\epsnode = 0$, we obtain \emph{exact recovery}, even under an adversary, which is allowed to corrupt up to $\epsedge k$ edges incident to every vertex. 
\end{remark}

With \Cref{thm:approxrecovery2}, \Cref{thm:approxrecovery} follows as a special case as shown by the following argument.
\begin{proof}[Proof of \Cref{thm:approxrecovery} with \Cref{thm:approxrecovery2}]
    \Cref{thm:approxrecovery2} implies \Cref{thm:approxrecovery} as follows. To handle $\varepsilon k^2$ edge corruptions, we set $\epsedge$ to a fixed, sufficiently small constant as in \Cref{thm:approxrecovery2}. Then, set $\epsnode = \frac{\varepsilon}{2 \epsedge}$ and choose $\varepsilon$ small enough such that $\epsnode$ is small enough such that \Cref{thm:approxrecovery2} applies. With these parameters, we claim that $\varepsilon k^2$ edge corruptions are a special case of the more general adversarial model introduced above. To see this, we simply note that the number of vertices with $\adv{\deg}(v) \ge \epsedge k$ is at most $\frac{2\varepsilon}{\epsedge} k$ as otherwise we contradict the fact that there are at most $\varepsilon k^2$ edge corruptions. The recovery guarantee we obtain from \Cref{thm:approxrecovery2} is then $O(\epsnode k) = O(\varepsilon k)$, as desired.
\end{proof}
The rest of this section is dedicated to proving \Cref{thm:approxrecovery2}.

\paragraph{Axioms}
Our approach towards achieving robustness requires a couple more ideas, which we formally describe in this section. Specifically, instead of the axioms $\bicliqueaxiomsalgo$ for exact recovery, we use the following system of axioms given a graph $G$, a bipartite sub-graph $H = (A \uplus B)$ of $G$ with $A \cup B = [n]$, a set $T \subseteq [n]$, and parameters $k, \gamma$
\begin{align}\label{eq:bicliqueaxiomsapprox}
    \bicliqueaxiomsalgoapprox = \left\{ 
        \begin{aligned}
            &\forall v \in [n] & w_v^2 &= w_v \\
            &                & 2k \ge \sum_{v \in A} w_v &\ge k\\
            &                & 2k \ge \sum_{v \in B} w_v &\ge k\\
            &\forall u, v \in [n], u \neq v  & z_{u,v}^2 &= z_{u,v}\\
            &\forall u, v \in [n], u \neq v \text{ s.t. } \{u, v\} \notin E(G) & w_uw_v (1 - z_{u,v})  &= 0\\
            &\forall u \in A                  & \sum_{v \in [n] \setminus \{u\}} z_{u,v} &\le \gamma
        \end{aligned}
    \right\}.
\end{align}
While the first four constraints are standard to encode a $k\times k$ biclique, the rest is specifically tailored towards ensuring robustness to our adversary. The idea is as follows.
\begin{enumerate}
    \setlength\itemsep{0.0001em}
    \item The variables $z_{u, v}$ are used to encode ``adding back'' edges that were deleted by the adversary. The objective  $w_uw_v (1 - z_{u,v})  = 0$ states that the $w$ should form a biclique after adding back the edges represented by $z$.
    \item The conditions $\sum_{v \in [n] \setminus \{u\}} z_{u,v} \le \gamma$ encode the constraint that incident to any vertex $v$, we only add back up to $\gamma$ edges.
\end{enumerate}

\subsection{Key SoS-certificates}

Before starting with the analysis, we the most important certificates used during the analysis. The first is an analogue of \Cref{lem:sos-left-certificate}. The difference is that we are additionally given a $t$-tuple $T$, and deal with a different bipartite sub-graph of $G$.
\begin{lemma}\label{lem:sos-left-certificate-approx}
    Fix some integer $t$. Assume we are given a graph $G$ on vertex set $[n]$, a bipartite sub-graph $H = (A \uplus B, E_H)$ of $G$, a set $T \subseteq [n]$ with $|T| = t$, and parameters $k, \gamma$.
    Assume further that there is some  $\Asub \subseteq A$ such that $G[\Asub \uplus (\neigh{T} \cap B)]$ has $r$-fold balancedness at most $\Delta \coloneqq \maxbalallcompact{r} \frac{10k}{12}$ for some fixed $r \in \mathbb{N}, r \ge 2$. Then, there is some $\varepsilon' > 0$ such that for all $\gamma \le \varepsilon' k$,
    \begin{align*}
        \bicliqueaxiomsalgoapprox \sststile{w}{\degreeapprox} \left\{ w_T \ \bigg( \sum_{v \in \Asub} w_v \bigg)^{r} \le O\left(\frac{n^2}{k}\right) \ w_T \right\}.
    \end{align*}
\end{lemma}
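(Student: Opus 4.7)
The plan is to reduce the claim to the adversarial core biclique certificate \Cref{lem:core-sos-proof-adv} via a natural substitution of variables, and then to exploit the $w_T$ factor to implement a \emph{soft} neighbourhood reduction inside SoS. Concretely, setting $x_v = w_v$ for $v \in \Asub$ and $y_v = w_v$ for $v \in B \cap \neigh{T}$ identifies the relevant portion of $\bicliqueaxiomsalgoapprox$ with the adversarial reduced biclique axioms $\mathcal{R}(H', \gamma)$ for $H' = G[\Asub \uplus (B \cap \neigh{T})]$: booleanity transfers directly, the missing-edge constraints of $\mathcal{R}(H', \gamma)$ are implied by the stronger axiom $w_uw_v(1-z_{u,v})=0$ of $\bicliqueaxiomsalgoapprox$, and the degree-cap $\sum_v z_{u,v} \le \gamma$ carries over unchanged. \Cref{lem:core-sos-proof-adv} then yields, in degree $O(r)$,
\begin{align*}
\maxbalall{r}\Bigl(\tsum_{v \in B \cap \neigh{T}} w_v - 2\gamma r\Bigr)\Bigl(\tsum_{S \subseteq \Asub,\, |S|=r} w_S\Bigr)^{\!2} \le (n+rn^2)\maxbal{r}\tsum_{|S|=r} w_S + \Delta\Bigl(\tsum_{|S|=r} w_S\Bigr)^{\!2}.
\end{align*}

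Next, I would establish inside SoS the lower bound $w_T \sum_{v \in B \cap \neigh{T}} w_v \ge (k - \gamma t)\,w_T$. For each $v \in B \setminus \neigh{T}$, fix some $u_v \in T$ with $\{u_v,v\} \notin E(G)$. The missing-edge axiom gives $w_Tw_v = w_Tw_vz_{u_v,v}$, and combining $w_v \le 1$ (from booleanity) with the observation that $w_T z_{u_v,v}$ is a genuine SoS square modulo booleanity yields $w_T w_v \le w_T z_{u_v,v}$. Summing over $v \in B \setminus \neigh{T}$ and applying the degree cap $\sum_v z_{u,v} \le \gamma$ once per $u \in T$ produces $w_T\sum_{v \in B \setminus \neigh{T}} w_v \le \gamma t\cdot w_T$, which, when subtracted from the size axiom $w_T \sum_{v \in B} w_v \ge k\,w_T$ (obtained by multiplying the $\sum_{v \in B} w_v \ge k$ axiom by the nonnegative square $w_T$), gives the claim.

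Multiplying the core certificate by $w_T$, inserting the lower bound, and using the assumed balancedness $\Delta \le \maxbalall{r}\cdot\tfrac{10k}{12}$ together with a choice of $\varepsilon' > 0$ small enough that $k - \gamma(t+2r) - \Delta/\maxbalall{r} = \Omega(k)$ for all $\gamma \le \varepsilon' k$, I arrive at an inequality of the form
\begin{align*}
w_T \cdot \Omega(k)\maxbalall{r}\Bigl(\tsum_{|S|=r} w_S\Bigr)^{\!2} \le w_T \cdot O(n^2)\maxbal{r}\Bigl(\tsum_{|S|=r} w_S\Bigr).
\end{align*}
The cancellation $\{f^2 \le Cf\} \Rightarrow \{f \le C\}$ from \Cref{lem:cancellation} then turns this into $w_T\tsum_{|S|=r} w_S \le O(n^2/k)\,w_T$, and \Cref{lem:sos-factorial} converts the sum over $r$-subsets into $(\sum_{v \in \Asub} w_v)^r$ up to $r$-dependent constants (absorbed in the $O(\cdot)$), delivering the desired bound in total degree $O(t) = \degreeapprox$.

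The main obstacle will be the careful SoS bookkeeping in the soft neighbourhood reduction step: manipulating the degree-$t$ products $w_T w_v z_{u_v,v}$ and $w_T z_{u_v,v}$ as genuinely SoS-nonnegative objects, rewriting Boolean quantities as squares via booleanity, and verifying that multiplying an inequality by the monomial $w_T$ preserves the SoS structure. Once this is in place, the rest of the derivation is a routine composition of the adversarial core certificate, the balancedness hypothesis, and the standard cancellation and factorial lemmas from \Cref{sec:sosprelims}.
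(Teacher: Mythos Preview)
Your proposal is correct and follows essentially the same route as the paper: apply the adversarial core certificate \Cref{lem:core-sos-proof-adv} on $G[\Asub \uplus (B\cap\neigh{T})]$, use a soft neighbourhood reduction to lower-bound $w_T\sum_{v\in B\cap\neigh{T}} w_v$ by $(k-\gamma t)w_T$, combine with the balancedness hypothesis, and finish via the cancellation and factorial lemmas. Your version of the neighbourhood reduction (fixing a single witness $u_v\in T$ per $v$ and bounding $w_Tw_v\le w_Tz_{u_v,v}$) is a minor simplification of the paper's argument, which instead expands the full product $\prod_{u\in T}(1-z_{u,v})$ and applies \Cref{clm:productsumlowerbound}; both yield the same $\gamma t$ bound.
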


Moreover, we need the following certificate, which states that forcing a tuple $T$ to be in our clique implies that there are only a few ground-truth cliques that the clique we are searching for can overlap with.
It holds whenever we force our clique to contain any given $t$-tuple $T$ such that $\adv{\deg}(v) \le \modified $ for all $v \in T$. We call such $t$-tuples $\modified$-\emph{good}. The certificate is then stated as follows.
\begin{lemma}\label{lem:lowmassoutsidecliquesetstep2-approx}
    Let $T \in \binom{[n]}{t}$. Let further $\cliqueset{T} \subseteq [d]$ be the set of labels that appears in at least $3$ vertices of $T$, and let $V(\cliqueset{T}) = \bigcup_{\ell \in \cliqueset{T}} S_\ell$ be the set of vertices that have chosen at least one of these labels. Moreover, let $\adv{V}(\cliqueset{T}) \coloneqq V(\cliqueset{T}) \cup M_{\ge \modified}$ and let $\epsedge$ be small enough. 
    Then, with high probability over the draw of $G \sim \RIGone$, the following holds for all $\modified$-good $t$-tuples $T \in \binom{[n]}{t}$.
    $$
        \bicliqueaxiomsalgoapproxconcretereduced \sststile{w}{\degreeapprox} \bigg\{ w_T \ \bigg( \sum_{i \in [n] \setminus \adv{V}(\cliqueset{T}) }  w_i \bigg) \le O( \sqrt{n}) \ w_T \bigg\}.
    $$ 
\end{lemma}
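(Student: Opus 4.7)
The plan is to follow the approach of the non-adversarial analog \Cref{cor:low-mass-outside-sell-2}, with the new ingredient being a deterministic transfer of the $G$-balancedness produced by \Cref{lem:general-balancedness-generalized} to the corresponding subgraphs of $\adv{G}$, so that \Cref{lem:sos-left-certificate-approx} becomes applicable. I will invoke \Cref{lem:sos-left-certificate-approx} twice---once with $\Asub = U \setminus \adv{V}(\cliqueset{T})$ and once (symmetrically, using that the axioms $\widetilde{\mathcal{B}}(\adv{G}, U, V, \ktil, \epsedge k)$ play the same role on either side of the bipartition) with $\Asub' = V \setminus \adv{V}(\cliqueset{T})$---and then add the two resulting bounds, using that $[n] \setminus \adv{V}(\cliqueset{T}) = \Asub \uplus \Asub'$.

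To obtain the $G$-balancedness, I first apply \Cref{lem:general-balancedness-generalized} with $L(T) = \cliqueset{T}$, $S(T) = V(\cliqueset{T})$, and $\eta = 2$ (which is trivially valid by the very definition of $\cliqueset{T}$), taking for the collection of procedures the two (randomised through the $U$/$V$-coin flips of Step~1 of \Cref{alg:recovery-approx}) procedures that output bipartitions with left-sides $U \setminus V(\cliqueset{T})$ and $V \setminus V(\cliqueset{T})$. With the choice $t = 2\lceil \log_{1/p}(\gamma) \rceil$ from the algorithm, the same numerical computation as in \Cref{lem:balancedness-hl} shows that with high probability both resulting bipartite graphs have $3$-fold balancedness at most $\maxbalall{3} \cdot k/4$, uniformly over all $T \in \binom{[n]}{t}$.

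I then transfer this to $\adv{G}$. Since $\Asub \subseteq U \setminus V(\cliqueset{T})$, shrinking the left side cannot increase balancedness, so the $G$-statement above continues to hold for $\Asub$. To compare with $\adv{G}$, I use the deterministic per-vertex degree bounds guaranteed by the construction: every $u \in \Asub$ has $\adv{\deg}(u) \le \modified$ because $\Asub \cap M_{\ge \modified} = \emptyset$ by definition of $\adv{V}(\cliqueset{T})$, and every $u \in T$ has $\adv{\deg}(u) \le \modified$ because $T$ is $\modified$-good. For each pair $S, R \subseteq \Asub$ with $|S| = |R| = 3$, the quantities $\sum_v u_{S,p}(v) u_{R,p}(v)$ computed in $G$ and in $\adv{G}$ differ by at most (i) the $|N_G(T) \triangle N_{\adv{G}}(T)| \le t\modified$ summands arising from the neighbourhood shift, each of magnitude at most $\maxbal{3}$, plus (ii) for each of the at most $6$ vertices $u \in S \cup R$, the at most $\modified$ further $v$'s for which a single edge to $u$ was flipped, contributing $O(\modified) \cdot \maxbal{3}$ in total. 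The $\adv{G}$-balancedness thus exceeds the $G$-balancedness by at most $O(t\modified) \cdot \maxbal{3}$, and taking $\epsedge$ a sufficiently small absolute constant (so that $\modified = \epsedge k$ makes this error at most $\maxbalall{3} \cdot k/12$) yields an $\adv{G}$-balancedness of at most $\maxbalall{3} \cdot 10k/12$, meeting the hypothesis of \Cref{lem:sos-left-certificate-approx}.

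Once the hypothesis holds, \Cref{lem:sos-left-certificate-approx} at $r = 3$ gives $\bicliqueaxiomsalgoapproxconcretereduced \sststile{w}{\degreeapprox} \{ w_T ( \sum_{v \in \Asub} w_v )^3 \le O(n^2/k)\, w_T \}$, and the analogous statement for $\Asub'$ by the symmetric application. Emulating the square-out trick used in \Cref{cor:low-mass-outside-sell}, I multiply by $f = \sum_{v \in \Asub} w_v$ and use $f \le 2k$ to obtain $w_T f^4 \le O(n^2)\, w_T$; using $w_T^2 = w_T$ to rewrite $w_T f^4 = (w_T f^2)^2$ and applying the cancellation $\{h^2 \le C\} \sststile{h}{2} \{h \le \sqrt{C}\}$ from \Cref{lem:cancellation} twice (first to $h = w_T f^2$, then to $h = w_T f$) yields $w_T f \le O(\sqrt{n})$; one further multiplication by $w_T$ together with $w_T^2 = w_T$ upgrades this to $w_T f \le O(\sqrt{n})\, w_T$. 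Adding the $\Asub$ and $\Asub'$ bounds gives the lemma. The main obstacle is the balancedness transfer step above: the $z_{u,v}$ variables inside \Cref{lem:core-sos-proof-adv} only absorb the per-vertex adversarial budget $\gamma = \epsedge k$ through the SoS derivation, so the residual gap between $G$- and $\adv{G}$-balancedness must be controlled \emph{deterministically} by the ambient slack in the $G$-balancedness, and the definitions of $\adv{V}(\cliqueset{T})$ (excluding $M_{\ge \modified}$) and of ``$\modified$-good'' $T$ are precisely tailored to provide the per-vertex degree bounds that make this deterministic transfer succeed at an absolute constant value of $\epsedge$.
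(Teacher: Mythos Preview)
Your proposal is correct and follows essentially the same approach as the paper's own proof. The paper likewise applies \Cref{lem:general-balancedness-generalized} with $L(T)=\cliqueset{T}$, $S(T)=V(\cliqueset{T})$, $\eta=2$ to obtain the $G$-balancedness bound $\maxbalall{3}\frac{k}{4}$, then transfers this to $\adv{G}$ via exactly the two-stage argument you describe (the neighbourhood shift $N_G(T)\to N_{\adv{G}}(T)$ controlled by $\modified$-goodness of $T$, followed by \Cref{lem:balancedness-adversary} using that the left side avoids $M_{\ge\modified}$), and finally invokes \Cref{lem:sos-left-certificate-approx} at $r=3$ together with the $f\le 2k$ multiplication and two cancellations as in \Cref{lem:low-mass-adv}; your write-up is in fact more explicit than the paper's, which defers most of these steps by pointing to \Cref{lem:balancedness-core-adv} and \Cref{lem:low-mass-adv}.
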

\begin{remark}
    What is important to note about the above is that we explicitly do not add the axiom $\{x_T= 1\}$. The above holds for any $\modified$-good $T \in \binom{[n]}{t}$ \emph{without} the need to give $T$ as input to some algorithm. This also applies to the following \Cref{lem:overlapeight-approx}.
\end{remark}
After having established  ``concentration'' to the cliques in $\cliqueset{T}$ (with the certificate above), the following asserts that our clique cannot have large overlap with both $S_\ell$ and $[n] \setminus S_\ell$ for all labels $\ell \in \cliqueset{T}$. It states that forcing (in addition to $T$) a single vertex $i \in [n] \setminus S_\ell$ to be in our clique, the mass on $S_\ell$ is $o(k)$.
\begin{lemma}\label{lem:overlapeight-approx}
    With high probability, the following holds over the draw of $G \sim \RIGone$ and all sufficiently small $\epsedge$. For all $T \in \binom{[n]}{t}$ that are $\modified$-good, every $\ell \in \cliqueset{T}$, and all $i \in [n] \setminus (S_\ell \cup \Madv )$,
    \begin{align*}
        \bicliqueaxiomsalgoapproxconcretereduced \sststile{w}{\degreeapprox} \Bigg\{ w_Tw_i \ \bigg( \sum_{v \in S_\ell \setminus \Madv } w_v \bigg) \le  \beta(k) \ w_T w_i\ \Bigg\}.
    \end{align*} for some $\beta(k) = o(k)$.
\end{lemma}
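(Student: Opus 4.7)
The proof will closely mirror the structure of Lemma~\ref{lem:limitedoverlapafterT}, with three modifications needed to handle the adversary: we restrict both sides of our bipartition away from $\Madv$, we invoke the adversarial core certificate \Cref{lem:core-sos-proof-adv} instead of \Cref{lem:core-sos-proof}, and we use the previously established concentration Lemma~\ref{lem:lowmassoutsidecliquesetstep2-approx} (rather than a direct analogue of \Cref{lem:concentrationtofewcliques}) to localize the pseudo-mass.

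\paragraph{Setup and balancedness.}
Define the adversarially safe versions of the sets from Section~\ref{sec:slctstep2}:
$$ \leftt^{\adv} \;\coloneqq\; (S_\ell \setminus \overlap{T}) \setminus \Madv, \qquad \rightt^{\adv} \;\coloneqq\; \Bigl(\bigcup_{\ell' \in \cliqueset{T} \setminus \{\ell\}} S_{\ell'} \setminus \overlap{T}\Bigr) \setminus \Madv. $$
The balancedness of $G[\leftt \uplus \rightt]$ proved in Lemma~\ref{lem:boundarybalancedness} carries over to $G[\leftt^{\adv} \uplus \rightt^{\adv}]$ essentially for free: removing the at most $\epsnode k$ vertices in $\Madv$ changes each sum $\sum_{v \in \rightt} u_{S,p}(v)u_{R,p}(v)$ by at most $\epsnode k \maxbal{r}$, which we can absorb by choosing $\epsnode$ small enough (relative to the $o(k)$ slack of the base balancedness). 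Thus $G[\leftt^{\adv} \uplus \rightt^{\adv}]$ has $r$-fold balancedness $o(k)$ w.h.p. for every $3\le r \le t$.

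\paragraph{Applying the adversarial certificate.}
Apply \Cref{lem:core-sos-proof-adv} to $G[\leftt^{\adv} \uplus \rightt^{\adv}]$ with $\gamma = \modified$, letting $|y^{\adv}| \coloneqq \sum_{v\in \rightt^{\adv}} w_v$. After dividing through by $\maxbalall{t}$ exactly as in the passage to \eqref{eq:certconclusion}, we obtain
\begin{align*}
    \bicliqueaxiomsalgoapproxconcretereduced \;\sststile{w}{O(t)}\; \Bigl\{ \Bigl(\!\!\!\sum_{S\subseteq \leftt^{\adv}, |S|=t}\!\!\! w_S\Bigr)^{2}\!\bigl(|y^{\adv}| - \deltap - 2\modified t\bigr) \le \beta(n,t,p)\!\!\sum_{S\subseteq \leftt^{\adv}, |S|=t}\!\!\! w_S \Bigr\},
\end{align*}
with $\deltap = o(k)$ and $\beta(n,t,p) = O(n^{2})\maxbal{t}/\maxbalall{t}$; since $\modified t = O(\epsedge k)$, the correction $2\modified t$ is also $O(\epsedge k)$.

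\paragraph{All-or-nothing via $w_i$ multiplication.}
Multiplying by $w_T w_i$ for $i \in [n] \setminus (S_\ell \cup \Madv)$, I will show (mirroring the claim inside the proof of Lemma~\ref{lem:limitedoverlapafterT}) that
$$ \bicliqueaxiomsalgoapproxconcretereduced \sststile{w}{\degreeapprox} \bigl\{ w_T w_i (|y^{\adv}| - \deltap - 2\modified t) \;\ge\; w_T w_i\,(1 - p - O(\epsedge + \epsnode))k \bigr\}. $$
The three contributions are: (a) Lemma~\ref{lem:lowmassoutsidecliquesetstep2-approx} gives $w_T \sum_{v\in [n]\setminus \adv V(\cliqueset T)} w_v \le O(\sqrt n)$ so the pseudo-mass concentrates on $\adv V(\cliqueset{T})$; (b) Lemma~\ref{lem:smalloverlap} bounds the $\overlap{T}$ loss by $o(k)$, and $|\Madv|\le \epsnode k$ absorbs the $\Madv$-loss; (c) the degree of $i$ into $S_\ell$ in the true graph is $(1+o(1))pk$ by Lemma~\ref{lem:degreebound}, and since $i \notin \Madv$ the adversary has touched at most $\modified$ edges incident to $i$, while the constraints $\sum_{v}z_{i,v}\le \gamma = \modified$ together with $w_u w_v (1-z_{u,v}) = 0$ for non-edges bound $w_i \sum_{v \in S_\ell} w_v$ by $(1+o(1))pk + O(\modified)$. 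Combining (a)--(c) and choosing $\epsedge,\epsnode$ small enough relative to $1-p$, the right-hand side is $\Omega(k)$.

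\paragraph{Cancellation, summation, conclusion.}
From the previous two displays and the cancellation $\{f^{2}\le Cf\}\sststile{}{2d}\{f^{2}\le C^{2}\}$ of \Cref{lem:cancellation}, I get
$$ \bicliqueaxiomsalgoapproxconcretereduced \sststile{w}{\degreeapprox} \Bigl\{ w_T w_i \Bigl(\!\!\sum_{S\subseteq \leftt^{\adv},|S|=t}\!\! w_S\Bigr)^{2} \le \Bigl(\tfrac{\beta(n,t,p)}{(1-p-O(\epsedge+\epsnode))k}\Bigr)^{2} w_T w_i \Bigr\}, $$
which after Lemma~\ref{lem:sos-factorial}, further cancellations, and summing over $i\in S_\ell\setminus\Madv$ (using $|S_\ell\setminus\Madv|\le 2k$) gives $w_T w_i (\sum_{v\in S_\ell\setminus\Madv} w_v) \le \beta(k)\, w_T w_i$ for some $\beta(k)=o(k)$, as desired. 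Powers of $t$ in the exponent are $O(1)$ since $t$ is a constant, so the total degree remains $\degreeapprox = O(t)$.

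\paragraph{Main obstacle.}
The principal subtlety is bookkeeping the ``all-or-nothing'' step: the lower bound on $|y^{\adv}|$ must beat \emph{both} the balancedness slack $\deltap$ and the adversarial correction $2\modified t$, while being simultaneously threatened from above by the $(1+o(1))pk$ structural degree and the $O(\modified)$ adversarial degree of $i$ into $S_\ell$. Ensuring that the resulting constant $1-p-O(\epsedge+\epsnode)$ stays positive is what forces the smallness assumptions on $\epsedge$ and $\epsnode$, and is the place where the proof is most sensitive to the precise adversarial model.
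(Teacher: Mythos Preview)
Your overall architecture matches the paper's proof, but there is a genuine gap in the balancedness step. You establish balancedness of $G[\leftt^{\adv}\uplus\rightt^{\adv}]$ in the \emph{original} graph $G$, arguing that removing the $\le \epsnode k$ vertices of $\Madv$ only perturbs the sums by $\epsnode k\,\maxbalcompact{r}$. However, the axiom system $\bicliqueaxiomsalgoapproxconcretereduced$ is defined on $\adv{G}$: the clause $w_u w_v(1-z_{u,v})=0$ is imposed only for non-edges of $\adv{G}$, not of $G$. To invoke \Cref{lem:core-sos-proof-adv} you must choose $H$ so that every non-edge of $H$ is a non-edge of $\adv{G}$; otherwise the step in \Cref{lem:lhs-cert-adv} that replaces $H_p(u,v)$ by $\sqrt{(1-p)/p}$ fails for adversarially \emph{added} edges. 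Consequently you need balancedness of the bipartite subgraph of $\adv{G}$, not of $G$. The paper closes this gap by invoking \Cref{lem:balancedness-adversary}: since every vertex of $\leftt^{\adv}$ has $\adv{\deg}(v)\le \epsedge k$, passing from $G$ to $\adv{G}$ costs an extra $O(r\,\epsedge k)\,\maxbalcompact{r}$ in the balancedness bound. This is the $\epsedge$-contribution; your argument only captures the $\epsnode$-contribution from vertex removal.

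Two smaller points. First, the paper works with $r=3$ rather than $r=t$; both choices succeed, but $r=3$ keeps the final cancellation arithmetic cleaner (they pad $(\sum_{|S|=3}w_S)^2$ by $|w|^2\le 8k^2$ to reach an $8$th power and then take three square roots). Second, your ``summing over $i\in S_\ell\setminus\Madv$'' in the conclusion is spurious: the lemma is for a fixed $i$. After the cancellations one directly obtains $w_Tw_i\sum_{v\in\leftt^{\adv}}w_v\le O(\sqrt{n})\,w_Tw_i$, and the passage to $\sum_{v\in S_\ell\setminus\Madv}w_v$ is by adding back the $o(k)$ vertices of $\overlap{T}$ via \Cref{lem:smalloverlap}.
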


\subsection{Pseudo-Concentration}

The key tool in our analysis (and one of the main difference to the previous section) is a \emph{concentration property} for pseudo-distributions satisfying the axioms in $\bicliqueaxiomsalgoapproxconcretereduced$. It is key to showing that we \emph{never} find a set of vertices that does not have large overlap with some ground-truth clique.
\begin{lemma}[Concentration lemma for pseudo-distributions]\label{lem:pseudoconentrationapprox}
    Let $\mu$ be a degree-$O(t)$ pseudo-distribution satisfying $\bicliqueaxiomsalgoapproxconcretereduced$. Fix a  constant $\smallc$ and let $S \coloneqq \big\{i \in [n] \mid \Expectedtildesubnop{\mu}{w_i} \ge 1 - \smallc \big\}$. Let further $\Madv$ be the set of vertices with $\adv{\deg}(v) \ge \modified$.
    
    Then, whenever $\epsedge, \epsnode$ are sufficiently small and $\smallc \le \frac{1}{8t}$, the following holds with high probability over the draw of $G \sim \RIGone$. For every degree-$O(t)$ pseudo-distribution $\mu$ satisfying $\bicliqueaxiomsalgoapproxconcretereduced$ such that $|S| \ge \Ssize$, there is some $\ell \in [d]$ such that $S \subseteq S_\ell \cup \Madv$.
\end{lemma}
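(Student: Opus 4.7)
The plan is to prove pseudo-concentration in two stages that mirror the two-step structure of our proof of identifiability: first use Lemma~\ref{lem:lowmassoutsidecliquesetstep2-approx} to show that $S$ is almost entirely contained in $V(\mathcal{B}_R) \cup \tilde{M}$ for some $\tilde{M}$-good $t$-tuple $R \subseteq S$, and then use Lemma~\ref{lem:overlapeight-approx} to ``bootstrap'' this into the strong statement that all but $o(k)$ vertices of $S$ lie in $S_\ell \cup \tilde{M}$ for a single $\ell \in \mathcal{B}_R$.

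The first ingredient is to exhibit an $\tilde{M}$-good $t$-tuple $R \subseteq S$ for which $\Expectedtildesub{\mu}{w_R}$ is a positive constant. Since $|S| \ge k/2$ and $|\tilde{M}| \le \epsnode k$, choosing $\epsnode$ small enough ensures $|S \setminus \tilde{M}| \ge k/4$, so at least one such $R$ exists (in fact many do). For any $R \subseteq S$, using the axioms $\{w_i^2 = w_i\}$ I can derive $w_R \leq 1$ and hence the SoS inequality $w_R(1 - w_i) \le (1 - w_i)$ for every $i$ (as the expansion of $(1-w_R)(1-w_i) \ge 0$). Taking pseudo-expectations gives $\Expectedtildesub{\mu}{w_R(1-w_i)} \le \rho$ whenever $i \in S$, and iterating this for the $t$ vertices of $R$ yields $\Expectedtildesub{\mu}{w_R} \ge 1 - t\rho \ge 7/8$ whenever $\rho \le 1/(8t)$.

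For the first stage, fix such an $\tilde{M}$-good $R \subseteq S$ and apply Lemma~\ref{lem:lowmassoutsidecliquesetstep2-approx}, which gives the SoS-derivation $w_R \sum_{i \notin \tilde{V}(\mathcal{B}_R)} w_i \le O(\sqrt{n}) \, w_R$. Taking pseudo-expectations on the left and bounding $\Expectedtildesub{\mu}{w_R w_i} \ge \Expectedtildesub{\mu}{w_R} - \rho \ge 3/4$ for every $i \in S$ (again via $w_R(1-w_i) \le (1-w_i)$) gives $|S \setminus \tilde{V}(\mathcal{B}_R)| \le O(\sqrt{n}) = o(k)$, using $k \gg \sqrt{n\log n}$. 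Since $\tilde{V}(\mathcal{B}_R) = V(\mathcal{B}_R) \cup \tilde{M}$ and $V(\mathcal{B}_R)$ is the union of at most $O(t/\alpha) = O(1)$ ground-truth cliques $S_\ell$ (w.h.p., by the bound on $|\mathcal{B}_R|$ from Lemma~\ref{lem:boubdbadlabels}), I conclude $|S \setminus (V(\mathcal{B}_R) \cup \tilde{M})| = o(k)$.

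The main obstacle, and heart of the argument, is the second (bootstrap) stage: ruling out that $S$ is ``spread'' across multiple cliques $S_\ell$ with $\ell \in \mathcal{B}_R$. For each $\ell \in \mathcal{B}_R$, suppose for contradiction that there exists $i \in S \setminus (S_\ell \cup \tilde{M})$. Apply Lemma~\ref{lem:overlapeight-approx} to obtain $w_R w_i \sum_{v \in S_\ell \setminus \tilde{M}} w_v \le \beta(k) \, w_R w_i$ with $\beta(k) = o(k)$. On the left I bound the pseudo-expectation below by summing only over $v \in S \cap (S_\ell \setminus \tilde{M})$; each such $v$ contributes $\Expectedtildesub{\mu}{w_R w_i w_v} \ge \Expectedtildesub{\mu}{w_R w_i} - \rho \ge 3/4 - \rho$ by the same ``product $\le 1$'' trick applied to $w_R w_i$. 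This yields $|S \cap (S_\ell \setminus \tilde{M})| = o(k)$ for every $\ell \in \mathcal{B}_R$ in this bad ``Case B''. If \emph{every} $\ell \in \mathcal{B}_R$ fell into Case B, summing over the constantly many $\ell$ together with the first-stage bound would give $|S \setminus \tilde{M}| = o(k)$, hence $|S| \le o(k) + \epsnode k < k/2$, contradicting $|S| \ge k/2$ for small enough $\epsnode$. Therefore some $\ell \in \mathcal{B}_R$ must satisfy $S \subseteq S_\ell \cup \tilde{M}$, as required. The most delicate point in executing this plan is correctly combining the SoS-level certificates of Lemmas~\ref{lem:lowmassoutsidecliquesetstep2-approx} and~\ref{lem:overlapeight-approx} with the ``almost indicator'' lower bounds $\Expectedtildesub{\mu}{w_R w_i w_v} \ge \text{const}$ that rely on the small-$\rho$ regime.
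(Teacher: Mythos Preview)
Your proposal is correct and follows the same two-stage structure as the paper (first concentrate $S$ to $\widetilde{V}(\mathcal{B}_R)$ via Lemma~\ref{lem:lowmassoutsidecliquesetstep2-approx}, then bootstrap to a single $S_\ell$ via Lemma~\ref{lem:overlapeight-approx}). The difference lies in how you extract lower bounds on pseudo-expectations of products like $\Expectedtildesub{\mu}{w_R}$, $\Expectedtildesub{\mu}{w_R w_i}$, $\Expectedtildesub{\mu}{w_R w_i w_v}$.

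The paper proceeds through two auxiliary lemmas: it uses H\"older's inequality for pseudo-expectations (Lemma~\ref{lem:holder}, packaged as Lemma~\ref{lem:expectationonT}) to merely \emph{exhibit} some $T\subseteq S\setminus \Madv$ with $\Expectedtildesub{\mu}{w_T}\ge (1-\rho)^t-o(1)$, and then a variance-type inequality (Lemma~\ref{lem:variance}) to pass from $\Expectedtildesub{\mu}{w_T}$ to $\Expectedtildesub{\mu}{w_Tw_i}$ and $\Expectedtildesub{\mu}{w_Tw_iw_v}$; checking positivity of these quotients is what drives the choice $\rho\le 1/(8t)$. Your ``product $\le 1$'' trick is more elementary: the SoS identity $(1-w_{R'})(1-w_i)\ge 0$ for any product $w_{R'}$ of Boolean variables (which holds since $1-w_{R'}$ telescopes into a sum of products of idempotents) directly gives $\Expectedtildesub{\mu}{w_{R'}w_i}\ge \Expectedtildesub{\mu}{w_{R'}}-\rho$ whenever $i\in S$, and iterating yields $\Expectedtildesub{\mu}{w_R}\ge 1-t\rho$ for \emph{every} $R\subseteq S$, not just one. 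This sidesteps both H\"older and the variance lemma and gives cleaner constants. The Case~A/Case~B contrapositive in your bootstrap is equivalent to the paper's pigeonhole-then-Lemma~\ref{lem:pseudoconentrationapprox-2} route; both rely on $|\mathcal{B}_R|=O(1)$ and $|\Madv|\le \epsnode k$ to reach the contradiction with $|S|\ge k/2$.
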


The idea behind establishing the above is as follows. Since $\Madv\le \epsnode k$ is sufficiently small and $|S|$ is sufficiently large, there must be a $t$-tuple $T$ in $S \setminus \Madv$ such that $\Expectedtildesub{\mu}{w_T}= \Omega(1)$. Then, by the certificate in \Cref{lem:lowmassoutsidecliquesetstep2-approx}, this already implies that only a small part of $S$ is not in $\adv{V}(\cliqueset{T})$. 
This is already a concentration statement, which now can be ``bootstrapped'' into an even stronger one. Concretely, by the bounded number of labels in $\cliqueset{T}$, we can infer that there is some $\ell \in \cliqueset{T}$ such that $|S \cap S_\ell| = \Omega(k)$. Using the certificate from \Cref{lem:overlapeight-approx}, we can show that this implies that $S$ must actually be a subset of $S_\ell \cup \Madv$. 

\paragraph{Establishing pseudo-concentration}

For the proof, we need the following proposition relating the pseudo-expectation of a product to the product of pseudo-expectations, provided the  ``variance'' is small enough.

\begin{lemma}\label{lem:variance}
    Let $\mu$ be a degree-$2d$ pseudo-distribution. For any polynomial $f$, define 
    $$
        \Vartildesub{\mu}{f} \coloneqq \Expectedtildesub{\mu}{f^2} - \Expectedtildesub{\mu}{f}^2.
    $$ Then, for any degree two polynomials $f, g$ of degree $d$, \begin{align*}
        \Expectedtildesub{\mu}{fg} \ge \Expectedtildesub{\mu}{f}\Expectedtildesub{\mu}{g} - \frac{1}{2} \left( \Vartildesub{\mu}{f} + \Vartildesub{\mu}{g} \right).
    \end{align*}
\end{lemma}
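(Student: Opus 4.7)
The plan is to reduce this to the basic positivity property of pseudo-expectations, namely that $\Expectedtildesub{\mu}{p^2} \ge 0$ for every polynomial $p$ with $\deg(p) \le d$ (which is the defining condition of a degree-$2d$ pseudo-distribution). The key idea is to ``center'' both polynomials around their pseudo-expectations, after which the inequality becomes a straightforward rearrangement of the standard fact $\Expectedtildesub{\mu}{(\tilde f + \tilde g)^2} \ge 0$.

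Concretely, I would set $a := \Expectedtildesub{\mu}{f}$ and $b := \Expectedtildesub{\mu}{g}$, which are just scalars, and define the centred polynomials $\tilde{f} := f - a$ and $\tilde{g} := g - b$, each of degree at most $d$. Since $\tilde{f} + \tilde{g}$ has degree at most $d$, its square has degree at most $2d$, and therefore the degree-$2d$ pseudo-distribution $\mu$ satisfies
\begin{equation*}
    0 \;\le\; \Expectedtildesub{\mu}{(\tilde{f} + \tilde{g})^2} \;=\; \Expectedtildesub{\mu}{\tilde{f}^2} + 2\Expectedtildesub{\mu}{\tilde{f}\tilde{g}} + \Expectedtildesub{\mu}{\tilde{g}^2}.
\end{equation*}
Rearranging yields
\begin{equation*}
    \Expectedtildesub{\mu}{\tilde{f}\tilde{g}} \;\ge\; -\tfrac{1}{2}\bigl(\Expectedtildesub{\mu}{\tilde{f}^2} + \Expectedtildesub{\mu}{\tilde{g}^2}\bigr).
\end{equation*}

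It then remains to unfold the centring by linearity of pseudo-expectation. For the left-hand side one computes
\begin{equation*}
    \Expectedtildesub{\mu}{\tilde{f}\tilde{g}} = \Expectedtildesub{\mu}{fg} - a\Expectedtildesub{\mu}{g} - b\Expectedtildesub{\mu}{f} + ab = \Expectedtildesub{\mu}{fg} - \Expectedtildesub{\mu}{f}\Expectedtildesub{\mu}{g},
\end{equation*}
using $\Expectedtildesub{\mu}{f} = a$ and $\Expectedtildesub{\mu}{g} = b$. For the right-hand side, $\Expectedtildesub{\mu}{\tilde{f}^2} = \Expectedtildesub{\mu}{f^2} - 2a\Expectedtildesub{\mu}{f} + a^2 = \Expectedtildesub{\mu}{f^2} - a^2 = \Vartildesub{\mu}{f}$, and analogously for $\tilde{g}$. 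Substituting gives the claimed inequality.

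There is essentially no obstacle here; the only thing to be careful about is the degree bookkeeping to make sure that the polynomial we square has degree at most $d$ (so that its square has degree at most $2d$ and the pseudo-distribution's positivity applies). Since $f, g$ are of degree at most $d$ by hypothesis and $a, b$ are scalars, this is immediate.
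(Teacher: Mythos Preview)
Your proof is correct and follows essentially the same route as the paper's: both expand $(f+g)^2$ and use non-negativity to obtain the inequality. The only cosmetic difference is that you center $f,g$ first and invoke the defining positivity property $\Expectedtildesub{\mu}{p^2}\ge 0$ directly, whereas the paper works with the uncentered $f+g$ and phrases the non-negativity of $\Vartildesub{\mu}{f+g}$ as an instance of H\"older's inequality; the resulting algebra is identical.
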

\begin{proof}
    By  Hölder's inequality \Cref{lem:holder}, 
    \begin{align*}
        0 &\le \Expectedtildesub{\mu}{(f + g)^2} - \Expectedtildesub{\mu}{(f + g)}^2 \\
        &\hspace{2cm}= (\Expectedtildesub{\mu}{f^2} - \Expectedtildesub{\mu}{f}^2) + (\Expectedtildesub{\mu}{g^2} - \Expectedtildesub{\mu}{g}^2) + 2\Expectedtildesub{\mu}{fg} - 2\Expectedtildesub{\mu}{f}\Expectedtildesub{\mu}{g}. 
    \end{align*}
    Re-arranging and applying the definition of $ \Vartildesub{\mu}{f}$ then yields the statement.
\end{proof}

We start by showing that there is indeed some $t$-tuple $T$ in $S \setminus \Madv$ such that $\Expectedtildesub{\mu}{w_T}= \Omega(1)$.
\begin{lemma}[Constant expectation on some $T$]\label{lem:expectationonT}
    Assume the setting from \Cref{lem:pseudoconentrationapprox} and let further $\adv{S} \coloneqq S \setminus \Madv$. Then for all $0 < \smallc < 1$ and all $\epsnode \le \frac{1}{3}$, there is some $T \in \adv{S}$ such that $\Expectedtildesub{\mu}{w_T} \ge (1-\smallc)^t - o(1)$.
\end{lemma}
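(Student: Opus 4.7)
The plan is to lower bound the \emph{average} of $\Expectedtildesub{\mu}{w_T}$ over uniformly random $T \in \binom{\adv{S}}{t}$, from which the existence of some $T$ achieving at least this average follows by pigeonhole.

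First I would check that $\adv{S}$ contains many $t$-tuples at all. Since $|S| \ge \frac{1}{2}k$ by assumption, $|\Madv| \le \epsnode k \le \frac{1}{3}k$, so $|\adv{S}| \ge \frac{1}{6}k$, and in particular $|\adv{S}| \to \infty$ (as $k = n^{\Omega(1)}$). The next step is to rewrite the sum using \Cref{lem:sosbinomialcoefficient}: setting $Y := \sum_{v \in \adv{S}} w_v$, we have
\[
    \sum_{T \in \binom{\adv{S}}{t}} w_T \;=\; \tfrac{1}{t!}\,Y(Y-1)\cdots(Y-t+1) \;=\; \tfrac{1}{t!}\,Y^{\underline{t}}
\]
via a degree-$O(t)$ SoS identity. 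Expanding the falling factorial, $Y^{\underline{t}} = Y^t - \binom{t}{2}Y^{t-1} + \cdots$, and using that $w_v \le 1$ in SoS (since $1 - w_v = (1-w_v)^2$ follows from $w_v^2 = w_v$) we get $Y \le |\adv{S}|$, hence by the SoS identity $|\adv{S}|^k - Y^k = (|\adv{S}|-Y)\sum_{i<k}|\adv{S}|^{k-1-i}Y^i$ also $Y^k \le |\adv{S}|^k$. It follows that $\Expectedtildesub{\mu}{Y^{\underline{t}}} \ge \Expectedtildesub{\mu}{Y^t} - O(t^t)\,|\adv{S}|^{t-1}$.

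The heart of the argument is a lower bound on $\Expectedtildesub{\mu}{Y^t}$. By Hölder's inequality for pseudo-distributions (\Cref{lem:holder}), iterated, $\Expectedtildesub{\mu}{Y^t} \ge \Expectedtildesub{\mu}{Y}^t$. Since every $v \in \adv{S}$ satisfies $\Expectedtildesub{\mu}{w_v} \ge 1-\smallc$ by definition of $S$, we get $\Expectedtildesub{\mu}{Y} \ge (1-\smallc)|\adv{S}|$ and therefore $\Expectedtildesub{\mu}{Y^t} \ge ((1-\smallc)|\adv{S}|)^{t}$. Combining with the previous estimate and dividing by $\binom{|\adv{S}|}{t} = |\adv{S}|^{\underline{t}}/t! \le |\adv{S}|^t/t!$ yields
\[
    \frac{1}{\binom{|\adv{S}|}{t}} \sum_{T \in \binom{\adv{S}}{t}} \Expectedtildesub{\mu}{w_T} \;=\; \frac{\Expectedtildesub{\mu}{Y^{\underline{t}}}}{|\adv{S}|^{\underline{t}}} \;\ge\; (1-\smallc)^t - O\!\left(\frac{t^t}{|\adv{S}|}\right) \;=\; (1-\smallc)^t - o(1),
\]
and pigeonhole finishes the proof.

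The main potential obstacle is controlling the lower-order terms in the falling-factorial expansion, since the signed Stirling coefficients could in principle conspire against us. This is resolved by the crude but sufficient bound $Y^k \le |\adv{S}|^k$ in SoS combined with $|s(t,k)| \le t^{t-k}$, which makes every error term at most $O(t^t)|\adv{S}|^{t-1}$ in pseudo-expectation; this is $o(|\adv{S}|^t)$ because $t$ is a fixed constant while $|\adv{S}| = \Omega(k) \to \infty$. A minor additional point is verifying that the degree-$O(t)$ pseudo-distribution $\mu$ suffices to legitimately invoke \Cref{lem:holder} at power $t$, which it does since $t$ is bounded.
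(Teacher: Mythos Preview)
Your proof is correct and follows essentially the same approach as the paper: both arguments lower-bound $\Expectedtildesub{\mu}{Y^t}$ via H\"older (with $Y = \sum_{v \in \adv{S}} w_v$), relate this to $\sum_{|T|=t} \Expectedtildesub{\mu}{w_T}$ up to an $O_t(|\adv{S}|^{t-1})$ error, and finish by averaging/pigeonhole. One inconsequential nit: the bound $|s(t,k)| \le t^{t-k}$ is false in general (e.g.\ $|s(4,3)| = \binom{4}{2} = 6 > 4$), but the trivially correct bound $\sum_{k<t} |s(t,k)| \le t!$ yields the same $O_t(|\adv{S}|^{t-1})$ error and the argument goes through unchanged.
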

\begin{proof}
    By construction of $S$, we have $
        \Expectedtildesubnop{\mu}{\sum_{v \in \adv{S}} w_v} \ge (1-\varepsilon)|\adv{S}|.
    $ Hence, by Hölder's inequality \Cref{lem:holder},
    \begin{align*}
        (1-\smallc)^t |\adv{S}|^t \le \Expectedtildesubtwo{\mu}{\sum_{v \in \adv{S}} w_v}^t &\le \Expectedtildesubtwo{\mu}{\bigg(\sum_{v \in \adv{S}} w_v\bigg)^t} = t!  \sum_{s=1}^t \sum_{T \subseteq \adv{S}, |T| = s} \Expectedtildesub{\mu}{w_T} \le \\
        & \hspace{1cm} t! \binom{|\adv{S}|}{t} \max_{T \subseteq \adv{S}, |T| = s} \Expectedtildesub{\mu}{w_T}  + t!|\adv{S}|^{t-1} \le |\adv{S}|^{t} \max_{T \subseteq \adv{S}, |T| = s} \Expectedtildesub{\mu}{w_T}  + t!|\adv{S}|^{t-1}.
    \end{align*}
    Dividing by $|\adv{S}|^t$, we get 
    $$
        (1-\smallc)^t \le \max_{T \subseteq \adv{S}, |T| = s} \Expectedtildesub{\mu}{w_T} \ + t!/|\adv{S}|.
    $$
    Since $|S| \ge \Ssize$ and $|\Madv| \le \frac{1}{3}k$, we get $\adv{S} \ge \frac{1}{6}k$, so in fact $t!/|\adv{S}| = o(1)$ and $$
        \max_{T \subseteq \adv{S}, |T| = s} \Expectedtildesub{\mu}{w_T} \ge (1-\smallc)^t - o(1),
    $$ as desired. 
\end{proof}

Next, we use the certificate from \Cref{lem:lowmassoutsidecliquesetstep2-approx} to show the following preliminary concentration result stating that only few vertices in $S$ are outside of $\adv{V}(\cliqueset{T})$ for some $T$.

\begin{lemma}\label{lem:pseudoconcentrationstep1}
    Assume the setting from \Cref{lem:pseudoconentrationapprox} and let further $\adv{S} \coloneqq S \setminus \Madv$. Then whenever $\ \epsedge$ is sufficiently small and $\smallc \le \frac{1}{4t}$, there is some $T \subseteq \adv{S}$ with $|T| = t$ such that $$
        \big| S \setminus \adv{V}(\cliqueset{T}) \big| \le O(\sqrt{n}).
    $$
\end{lemma}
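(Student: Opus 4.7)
The plan is as follows. First, I would invoke \Cref{lem:expectationonT} to extract a $t$-tuple $T \subseteq \adv{S}$ satisfying $\Expectedtildesub{\mu}{w_T} \ge (1-\smallc)^t - o(1)$. For $\smallc \le \tfrac{1}{4t}$, this gives $\Expectedtildesub{\mu}{w_T} \ge 1 - t\smallc - o(1) \ge \tfrac{3}{4} - o(1) = \Omega(1)$. Crucially, since $\adv{S} = S \setminus \Madv \subseteq [n] \setminus M_{\ge \modified}$, every vertex in $T$ has adversarial degree at most $\modified$, so $T$ is $\modified$-good and we are free to apply \Cref{lem:lowmassoutsidecliquesetstep2-approx}.

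Second, applying \Cref{lem:lowmassoutsidecliquesetstep2-approx} to this $T$ and taking pseudo-expectations yields
\[
    \sum_{i \in [n] \setminus \adv{V}(\cliqueset{T})} \Expectedtildesub{\mu}{w_T w_i} \le O(\sqrt{n}) \cdot \Expectedtildesub{\mu}{w_T}.
\]
This is legitimate since $\mu$ satisfies $\bicliqueaxiomsalgoapproxconcretereduced$ at degree $\degreeapprox = O(t)$, which is the degree of the underlying SoS proof.

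Third, I need a lower bound on $\Expectedtildesub{\mu}{w_T w_i}$ for $i \in S \setminus \adv{V}(\cliqueset{T})$. The natural route is pseudo-Cauchy--Schwarz: since $w_T^2 = w_T$ and $(1-w_i)^2 = 1 - w_i$ modulo the Boolean axioms, one obtains
\[
    \Expectedtildesub{\mu}{w_T(1 - w_i)} \le \sqrt{\Expectedtildesub{\mu}{w_T}} \cdot \sqrt{1 - \Expectedtildesub{\mu}{w_i}} \le \sqrt{\smallc}
\]
for every $i \in S$. Hence $\Expectedtildesub{\mu}{w_T w_i} \ge \Expectedtildesub{\mu}{w_T} - \sqrt{\smallc}$. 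Choosing $\smallc \le \tfrac{1}{4t}$ small enough ensures $\sqrt{\smallc} \le \tfrac{1}{2} \Expectedtildesub{\mu}{w_T}$, so $\Expectedtildesub{\mu}{w_T w_i} \ge \tfrac{1}{2} \Expectedtildesub{\mu}{w_T} = \Omega(1)$.

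Summing over $i \in S \setminus \adv{V}(\cliqueset{T})$ and combining with the displayed inequality finally gives
\[
    |S \setminus \adv{V}(\cliqueset{T})| \cdot \tfrac{1}{2}\Expectedtildesub{\mu}{w_T} \le O(\sqrt{n}) \cdot \Expectedtildesub{\mu}{w_T},
\]
which after cancelling $\Expectedtildesub{\mu}{w_T} = \Omega(1)$ yields $|S \setminus \adv{V}(\cliqueset{T})| \le O(\sqrt{n})$, as desired. I expect the main obstacle to be the third step: making sure that pseudo-Cauchy--Schwarz can actually be applied to $w_T$ (of degree $t$) against $1-w_i$ within the degree budget $O(t)$ of $\mu$, and that the resulting slack $\sqrt{\smallc}$ is genuinely strictly smaller than $\Expectedtildesub{\mu}{w_T}$ given only $\smallc \le \tfrac{1}{4t}$. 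Once this pseudo-expectation arithmetic is secured, the result follows by straightforward bookkeeping.
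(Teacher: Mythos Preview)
Your proof is correct and follows the same three-step skeleton as the paper: invoke \Cref{lem:expectationonT} to find $T\subseteq\adv{S}$ with $\Expectedtildesub{\mu}{w_T}=\Omega(1)$, apply the certificate of \Cref{lem:lowmassoutsidecliquesetstep2-approx} (noting, as you do, that $T$ is $\modified$-good), and then lower-bound each $\Expectedtildesub{\mu}{w_Tw_i}$ for $i\in S$ by a uniform positive constant. The only difference is in this last decorrelation step. The paper uses its pseudo-variance inequality (\Cref{lem:variance}) to write
\[
\frac{\Expectedtildesub{\mu}{w_Tw_i}}{\Expectedtildesub{\mu}{w_T}} \ge \Expectedtildesub{\mu}{w_i} - \tfrac{1}{2}\,\frac{\Vartildesub{\mu}{w_T}+\Vartildesub{\mu}{w_i}}{\Expectedtildesub{\mu}{w_T}},
\]
and then plugs in $\Vartildesub{\mu}{w_T}=\Expectedtildesub{\mu}{w_T}(1-\Expectedtildesub{\mu}{w_T})$ and $\Vartildesub{\mu}{w_i}\le 2\smallc$. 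You instead apply pseudo-Cauchy--Schwarz directly to $w_T$ and $1-w_i$, which is cleaner and avoids the auxiliary lemma. Your worries at the end are not real obstacles: squaring $w_T$ requires only degree $2t$, well within the $O(t)$ budget; and since the algorithm fixes $t\ge 2$, one has $\sqrt{\smallc}\le 1/(2\sqrt{t})\le 1/(2\sqrt{2})<3/8\le\tfrac{1}{2}\Expectedtildesub{\mu}{w_T}$ as needed.
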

\begin{proof}
    Let $T \subseteq \adv{S}$ with $|T| = t$ such that $\Expectedtildesub{\mu}{w_T} \ge (1-\smallc)^t - o(1)$, which exists due to \Cref{lem:expectationonT}.
    The certificate from \Cref{lem:lowmassoutsidecliquesetstep2-approx} now tells us that \begin{align}\label{eq:pseudostuff}
        \sum_{ i \in S\setminus \adv{V}(\cliqueset{T}) } \frac{\Expectedtildesub{\mu}{w_Tw_i}}{\Expectedtildesub{\mu}{w_T}} \le  \sum_{ i \in [n] \setminus \adv{V}(\cliqueset{T}) } \frac{\Expectedtildesub{\mu}{w_Tw_i}}{\Expectedtildesub{\mu}{w_T}} \le  O(\sqrt{n}).
    \end{align}
    We now wish to find a lower bound on the quantity $\frac{\Expectedtildesub{\mu}{w_Tw_i}}{\Expectedtildesub{\mu}{w_T}}$ in the leftmost sum. 

    Towards this end, we use \Cref{lem:variance} to obtain 
    $$
        \frac{\Expectedtildesub{\mu}{w_Tw_i}}{\Expectedtildesub{\mu}{w_T}} \ge \Expectedtildesub{\mu}{w_i} - \frac{1}{2}\frac{\Vartildesub{\mu}{w_T} + \Vartildesub{\mu}{w_i}}{\Expectedtildesub{\mu}{w_T}}.
    $$ We use that
    $
        \Vartildesub{\mu}{w_T} = \Expectedtildesub{\mu}{w_T} ( 1 - \Expectedtildesub{\mu}{w_T} )
    $ because $\Expectedtildesub{\mu}{w_T^2} = \Expectedtildesub{\mu}{w_T}$, and that $
        \Vartildesub{\mu}{w_v} = \Expectedtildesub{\mu}{w_i^2} - \Expectedtildesub{\mu}{w_i}^2 \le 2\smallc 
    $ because $\Expectedtildesub{\mu}{w_i^2} \le 1$ and $\Expectedtildesub{\mu}{w_i}^2 = (1-\smallc)^2 \ge 1- 2\smallc$. Plugging these bounds in yields that 
    $
        \Expectedtildesub{\mu}{w_Tw_i}/\Expectedtildesub{\mu}{w_T} > 0
    $ for all $t \ge 1$ and $\smallc \le \frac{1}{4t}$. Since for this choice of parameters, this holds uniformly for all $i \in S \setminus \adv{V}(\cliqueset{T})$, we get from \eqref{eq:pseudostuff} that there is some constant $c > 0$ such that 
    $$
        c \ |S \setminus \adv{V}(\cliqueset{T})| \le \sum_{ i \in S\setminus \adv{V}(\cliqueset{T}) } \frac{\Expectedtildesub{\mu}{w_Tw_i}}{\Expectedtildesub{\mu}{w_T}} \le  O(\sqrt{n})
    $$ and the lemma follows.
\end{proof}

Now comes the ``bootstrapping'' step: since by the above most of $S$ concentrates on $\adv{V}(\cliqueset{T})$, the balancedness properties among the cliques with labels in $\cliqueset{T}$ (captured in the certificate from \Cref{lem:overlapeight-approx}) yield a significantly improved concentration.

\begin{lemma}[Bootstrapping the intersection of $S$ and $S_\ell$]\label{lem:pseudoconentrationapprox-2}
    Assume the setting from \Cref{lem:pseudoconentrationapprox}.
    Assume further that there is some constant $C > 0$ and $\ell \in $ such that $\ell \in \cliqueset{T}$ such that $|S \cap (S_\ell \setminus M_{\ge \modified})| \ge Ck$. 
    Then, whenever $\epsedge$ is sufficiently small and $\smallc \le \frac{1}{8t}$, we have $S \subseteq S_\ell \cup \Madv$.
\end{lemma}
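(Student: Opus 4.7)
My plan is to argue by contradiction: suppose there exists $i^* \in S \setminus (S_\ell \cup \Madv)$, i.e., a vertex with $\Expectedtildesub{\mu}{w_{i^*}} \ge 1 - \smallc$ that lies outside $S_\ell$ and outside the corrupted set. I will exploit the $t$-tuple $T$ that was produced by \Cref{lem:pseudoconcentrationstep1} (and which is $\modified$-good since $T \subseteq S \setminus \Madv$), together with the hypothesized $\ell \in \cliqueset{T}$. The core object is the pseudo-expectation $\Expectedtildesub{\mu}{w_T w_{i^*} \sum_{v \in S_\ell \setminus \Madv} w_v}$: I will sandwich it between an upper bound coming from the balancedness certificate \Cref{lem:overlapeight-approx} and a lower bound coming from the hypothesis $|S \cap (S_\ell \setminus \Madv)| \ge Ck$.

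For the upper bound, \Cref{lem:overlapeight-approx} applied to this $T$, $\ell$, and $i^*$ (note $i^* \notin S_\ell \cup \Madv$) yields
\[
\Expectedtildesub{\mu}{w_T w_{i^*} \sum_{v \in S_\ell \setminus \Madv} w_v} \;\le\; \beta(k)\, \Expectedtildesub{\mu}{w_T w_{i^*}},
\qquad \beta(k) = o(k).
\]
For the lower bound, I restrict the sum over $v$ to the sub-sum over $v \in S \cap (S_\ell \setminus \Madv)$, which has at least $Ck$ terms and each term is non-negative (every monomial in $0/1$ variables has non-negative pseudo-expectation). For each such $v$, I write $w_v = 1 - (1 - w_v)$ and apply pseudo-Cauchy--Schwarz (\Cref{lem:holder} with $t=2$):
\[
\Expectedtildesub{\mu}{w_T w_{i^*}(1-w_v)} \;\le\; \sqrt{\Expectedtildesub{\mu}{w_T w_{i^*}}}\;\sqrt{\smallc},
\]
using $(w_T w_{i^*})^2 = w_T w_{i^*}$, $(1-w_v)^2 = 1 - w_v$, and $\Expectedtildesub{\mu}{1-w_v} \le \smallc$. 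Summing over the $\ge Ck$ relevant $v$'s gives
\[
\Expectedtildesub{\mu}{w_T w_{i^*} \sum_{v \in S_\ell \setminus \Madv} w_v} \;\ge\; Ck\,\Expectedtildesub{\mu}{w_T w_{i^*}} - Ck\,\sqrt{\smallc\,\Expectedtildesub{\mu}{w_T w_{i^*}}}.
\]
Combining the two inequalities and dividing through by $\sqrt{\Expectedtildesub{\mu}{w_T w_{i^*}}}$ (assuming it is positive), a quick calculation gives $\Expectedtildesub{\mu}{w_T w_{i^*}} \le (1+o(1))\smallc$.

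To finish, I derive a lower bound on $\Expectedtildesub{\mu}{w_T w_{i^*}}$ that contradicts this. Since $T \subseteq S \setminus \Madv$, by \Cref{lem:expectationonT} we have $\Expectedtildesub{\mu}{w_T} \ge (1-\smallc)^t - o(1) \ge 1 - t\smallc - o(1)$. Writing $w_T w_{i^*} = w_T - w_T(1 - w_{i^*})$ and applying pseudo-Cauchy--Schwarz once more,
\[
\Expectedtildesub{\mu}{w_T(1-w_{i^*})} \;\le\; \sqrt{\Expectedtildesub{\mu}{w_T}\cdot\smallc} \;\le\; \sqrt{\smallc},
\]
so $\Expectedtildesub{\mu}{w_T w_{i^*}} \ge 1 - t\smallc - \sqrt{\smallc} - o(1)$. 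Under the assumption $\smallc \le \frac{1}{8t}$, this is bounded below by a constant close to $1$ (say $\ge 1/2$), which contradicts the upper bound $\Expectedtildesub{\mu}{w_T w_{i^*}} \le (1+o(1))\smallc = O(1/t)$ for large enough $t$ and $n$. Hence no such $i^*$ exists, i.e., $S \subseteq S_\ell \cup \Madv$.

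The main obstacle will be keeping track of degrees so that both the Cauchy--Schwarz applications and the certificate from \Cref{lem:overlapeight-approx} are simultaneously valid under the same degree-$O(t)$ pseudo-distribution; this is a routine check since all the polynomials involved have degree $O(t)$. A minor nuance is ensuring the dropped terms $\sum_{v \in (S_\ell \setminus \Madv) \setminus S} \Expectedtildesub{\mu}{w_T w_{i^*} w_v}$ can indeed be dropped from the lower bound, which follows from non-negativity of pseudo-expectations of products of $0/1$-axiom variables.
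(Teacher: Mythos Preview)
Your proof is correct and follows the same strategy as the paper: assume a putative $i^* \in S \setminus (S_\ell \cup \Madv)$, apply the certificate \Cref{lem:overlapeight-approx} for the upper bound, and lower-bound the pseudo-expectations $\Expectedtildesub{\mu}{w_T w_{i^*} w_v}$ for $v \in S \cap (S_\ell \setminus \Madv)$ to force a contradiction. The only cosmetic difference is that you invoke pseudo-Cauchy--Schwarz directly (and route through an intermediate upper bound $\Expectedtildesub{\mu}{w_T w_{i^*}} \le (1+o(1))\smallc$), whereas the paper packages the same inequality via its variance lemma (\Cref{lem:variance}) to lower-bound each ratio $\Expectedtildesub{\mu}{w_T w_{i^*} w_v}/\Expectedtildesub{\mu}{w_T}$ by a positive constant and reach the contradiction in one step.
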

\begin{proof}
    We want to show that $S \setminus (S_\ell \cup M_{\ge \modified}) = \emptyset$. To this end assume for the sake of contradiction that there is a vertex $i \in S \setminus (S_\ell \cup M_{\ge \modified})$. Let further $T \subseteq S \setminus \Madv$ be a $t$-tuple with $\Expectedtildesub{\mu}{w_T} \ge (1-\smallc)^t - o(1)$, which exists due to \Cref{lem:expectationonT}.

    By duality between SoS-proof and pseudo-distributions, the certificate from \Cref{lem:overlapeight-approx} then yields that 
    $$
         \sum_{v \in S \cap (S_\ell \setminus \Madv)} \Expectedtildesub{\mu}{ w_Tw_i w_v } \le \sum_{v \in S_\ell \setminus \Madv} \Expectedtildesub{\mu}{ w_Tw_i w_v } \le \beta(k) \Expectedtildesubnop{\mu}{w_Tw_i} \le \beta(k)\Expectedtildesubnop{\mu}{w_T}.
    $$
    Similarly as in the proof of \Cref{lem:pseudoconcentrationstep1}, we get 
    $$
        \frac{\Expectedtildesub{\mu}{w_Tw_iw_v}}{\Expectedtildesub{\mu}{w_T}} \ge \Expectedtildesub{\mu}{w_iw_v} - \frac{1}{2}\frac{\Vartildesub{\mu}{w_T} + \Vartildesub{\mu}{w_iw_v}}{\Expectedtildesub{\mu}{w_T}}.
    $$ Again, we use $
        \Vartildesub{\mu}{w_T} = \Expectedtildesub{\mu}{w_T} ( 1 - \Expectedtildesub{\mu}{w_T} ) \le \Expectedtildesub{\mu}{w_T} ( 1 - (1 - \smallc)^t + o(1) ).
    $ Moreover, 
    $$
        \Expectedtildesub{\mu}{w_iw_v} \ge \Expectedtildesub{\mu}{w_i}\Expectedtildesub{\mu}{w_v} - \frac{1}{2}(\Vartildesub{\mu}{w_i} + \Vartildesub{\mu}{w_v}) \ge 1 - 2\smallc -\frac{1}{2}(4\smallc) \ge 1 - 4\smallc.
    $$ Accordingly,
    $$
        \Vartildesub{\mu}{w_iw_v} = \Expectedtildesub{\mu}{w_iw_v}(1 - \Expectedtildesub{\mu}{w_iw_v}) \le 4\smallc.
    $$ Plugging these bounds in now yields that 
    $
        \Expectedtildesub{\mu}{w_Tw_iw_v}/\Expectedtildesub{\mu}{w_T} > 0
    $ for all $t \ge 1$ and $\smallc \ge \frac{1}{8t}$. Thus, since this holds uniformly for all $v$, going back to the beginning yields 
    $$
         c \ |S \cap (S_\ell \setminus \Madv)| \le \sum_{v \in S \cap (S_\ell \setminus \Madv)} \frac{\Expectedtildesub{\mu}{ w_Tw_i w_v }}{\Expectedtildesub{\mu}{w_T}} \le  \beta(k).
    $$
    Since by assumption $|S \cap (S_\ell \setminus \Madv)|\ge Ck $, this is a contradiction since the left side is $\Omega(k)$ and $\beta(k) = o(k)$.
\end{proof}

With this, we prove \Cref{lem:pseudoconentrationapprox}.
\begin{proof}[Proof of \Cref{lem:pseudoconentrationapprox}]
    
    From \Cref{lem:pseudoconcentrationstep1}, we get that there is a $t$-tuple $T \subseteq S \setminus \Madv$ such that $\big| S \setminus \adv{V}(\cliqueset{T}) \big| \le O(\sqrt{n})$ and $\Expectedtildesub{\mu}{w_T} \ge (1-\smallc)^t - o(1)$.
    By \Cref{lem:boubdbadlabels}, we get that w.h.p., $|\cliqueset{T}| \le \frac{4t}{\alpha}$ for all $T$. Hence for some $\ell \in \cliqueset{T}$, we have 
    \begin{align}\label{eq:overlap}
        |S \cap (S_\ell \setminus M_{\ge \modified})| \ge \big(\frac{1}{3} k - |\Madv|\big) \frac{\alpha}{4t} \ge \frac{\alpha}{16 t} k
    \end{align} for sufficiently small $\epsnode$. Now, it follows from \Cref{lem:pseudoconentrationapprox-2} that $S \subseteq S \cup \Madv$, as desired.
\end{proof}

\subsection{Analysis}

\subsubsection{Step 1: Ground-truth cliques are found}

Again, we rely on balancedness properties. However, clearly we cannot hope to have the same balancedness guarantees as before since an adversary can easily destroy the balancedness properties of a given bipartite graph $F = A \uplus B$ already after altering the neighbourhood of only a constant number of vertices. However, it is not hard to see that after removing a small subset of vertices $M \subseteq A$ (small in the sense that $|M| \le \epsedge k$), the graph $(A \setminus M) \uplus B$ still has sufficiently good balancedness properties. We capture this in the following lemma. 

\begin{lemma}[Balancedness after a bounded adversary]\label{lem:balancedness-adversary}
    Let $F = (A \uplus B, E)$ be a bipartite graph and fix some integer $r \in \mathbb{N}, r \ge 3$ such that $F$ has $r$-fold balancedness $\Delta$ (cf. \Cref{def:balancedness}). Denote by $\adv{F}$ the graph $F$ after the adversarial modifications and recall that $M_{\ge \modified}$ denotes the set of vertices with $\adv{\deg}(v) \ge \modified$. 
    Then, given any $\epsedge > 0$, the graph $F[(A \setminus M_{\ge \modified}) \uplus B]$ has $r$-fold balancedness at most  
    $$\Delta + 4r\epsedge k \maxbal{r}. $$
\end{lemma}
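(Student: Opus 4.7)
The plan is to fix any $S, R \subseteq A \setminus M_{\ge \modified}$ with $|S| = |R| = r$ and $|S \triangle R| \ge 3$, and compare the sum $\sum_{v \in B} \adv{u}_{S,p}(v) \adv{u}_{R,p}(v)$ (defined with respect to the adjacency of $\adv{F}$) to the analogous sum $\sum_{v \in B} u_{S,p}(v) u_{R,p}(v)$ defined with respect to $F$. The latter sum is bounded in absolute value by $\Delta$ by the assumed $r$-fold balancedness of $F$, so by the triangle inequality it suffices to control the discrepancy between the two sums.

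The key observation I will use is that for any $v \in B$, the value $\adv{u}_{S,p}(v) \adv{u}_{R,p}(v)$ coincides with $u_{S,p}(v) u_{R,p}(v)$ unless the adversary altered at least one edge $\{u,v\}$ for some $u \in S \cup R$. Since every $u \in S \cup R$ lies in $A \setminus M_{\ge \modified}$, it satisfies $\adv{\deg}(u) \le \modified = \epsedge k$ by definition of $M_{\ge \modified}$. Consequently, the total number of edges with one endpoint in $S \cup R$ that were touched by the adversary is at most $|S \cup R| \cdot \epsedge k \le 2r \epsedge k$, and each such modified edge accounts for at most one \emph{bad} vertex $v \in B$ on which the two products could possibly differ.

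For each such bad $v$, both $u_{S,p}(v) u_{R,p}(v)$ and $\adv{u}_{S,p}(v) \adv{u}_{R,p}(v)$ are bounded in absolute value by $\maxbal{r}$ directly from the definition of $H_p$, so each contributes at most $2 \maxbal{r}$ to the discrepancy. Summing over the at most $2r \epsedge k$ bad vertices thus yields a total discrepancy of at most $4r \epsedge k \maxbal{r}$, and combining this with the bound $\Delta$ inherited from $F$ proves the claim.

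I do not foresee a substantive obstacle here: the argument is a direct triangle-inequality estimate that cleanly exploits the fact that restricting $A$ to $A \setminus M_{\ge \modified}$ forces each left-vertex to have bounded adversarial degree, so the adversary can only corrupt a small number of summands. The only point requiring mild care is to verify that distinct modified edges incident to $S \cup R$ can only produce a bounded number of bad $v \in B$ (trivially true since each modified edge has a unique right-endpoint), which ensures the factor $2r \epsedge k$ and hence the final factor $4r$ in the stated bound.
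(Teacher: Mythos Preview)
Your proposal is correct and follows essentially the same argument as the paper: both fix $S,R \subseteq A \setminus M_{\ge \modified}$, use $\adv{\deg}(u) \le \epsedge k$ for each $u \in S \cup R$ to bound the number of ``bad'' vertices $v \in B$ by $2r\epsedge k$, and then bound the per-vertex change in $u_{S,p}(v)u_{R,p}(v)$ by $2\maxbal{r}$.
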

\begin{proof}
    Note that for every $S, R \subseteq A \setminus M_{\ge \modified}$ with $|S| = |R| = r$, for all but at most $2r \modified$ vertices $v \in B$, the edges between $v$ and $S \cup R$ remain unchanged because $\adv{\deg}(u) \le \modified$ for $u \in A \setminus M_{\ge \modified}$. For every $v \in B$ for which this is \emph{not} the case, the quantity $u_{S,p}(v)u_{R,p}(v)$ can change by at most $2 \maxbal{r}$. Accordingly, 
    $$
        \left|\sum_{v \in B} u_{S,p}(v)u_{R,p}(v)\right| \le \Delta + 4r\epsedge k \maxbal{r}.
    $$
\end{proof}

We continue start by showing that after step 2 of our algorithm, we again find at least one $t$-tuple $T \in \mathcal{T}, T \subseteq S_\ell$ for each $\ell \in [d]$ such that most of $S_\ell$ is preserved in $\neigh{T}$ while the graphs $\Hgtil{L}, \Hgtil{R}$ (defined analogously to $\Hg{L}, \Hg{R}$ in the following) are still sufficiently well balanced. These will later serve as a witness for \Cref{lem:sos-left-certificate-approx} which in turn yields desirable behaviour of our pseudo-distribution.

\begin{lemma}\label{lem:balancedness-core-adv}
  Given $\modified \in \mathbb{N}$ and any $T \in \mathcal{T}$ such that $T \subseteq S_\ell$, define $\widetilde{S} \coloneqq S_\ell \cup M_{\ge \modified}$
    \begin{align*}
        \Hgtil{L} &\coloneqq \adv{G}[ (U \setminus \widetilde{S} ) \uplus (V \cap N_{\adv{G}}(T)) ] \\ \text{ and } \Hgtil{R} &\coloneqq \adv{G}[ (V \setminus \widetilde{S}) \uplus (U\cap N_{\adv{G}}{T}) ] 
    \end{align*}
    With high probability, the following holds after step 2 of \Cref{alg:recovery-approx}.
    For sufficiently small $\epsnode$, every $\epsedge$, and every $\ell \in [d]$, there is some $T \in \mathcal{T}, T \subseteq S_\ell$ such that $\Hgtil{L}$ and $\Hgtil{R}$ have $3$-fold balancedness at most $$
        \maxbalall{3} \frac{k}{4} + (2t\epsedge + 4 r \epsedge ) k \maxbal{3}.
    $$ Furthermore, for all $v \in T$, we have $\adv{\deg}(v) \le \modified$.
\end{lemma}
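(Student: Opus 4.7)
The plan is to proceed analogously to \Cref{lem:balancedness-hl}, with additional care to handle the bounded adversary; the monotone adversary is dealt with implicitly via the coupling argument of \Cref{lem:couling-subset}, which lets us assume without loss of generality that $p$ is a fixed constant and only the bounded adversary acts on $G$. First, invoking \Cref{lem:general-balancedness} with the valid collection of procedures that deterministically output the bipartitions $(U \setminus S_\ell) \uplus V$ and $(V \setminus S_\ell) \uplus U$, and re-using the balancedness calculation of \Cref{lem:balancedness-hl} for our choice of $t$ and $\gamma$, with probability $1 - o(1)$ (over the draws of $G \sim \RIGone$ and of the bipartition in step 1 of \Cref{alg:recovery-approx}) a $(1-o(1))$ fraction of $T \in \binom{S_\ell}{t}$ yields balancedness at most $\maxbalall{3}\frac{k}{4}$ for the unmodified bipartite graphs $\Hg{L}$ and $\Hg{R}$, simultaneously for all $\ell \in [d]$.

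Second, because $|\Madv| \le \epsnode k$, a simple union bound shows that at most a $2t\epsnode$ fraction of $T \in \binom{S_\ell}{t}$ intersects $\Madv$; thus for sufficiently small $\epsnode$, a constant fraction of tuples in $\binom{S_\ell}{t}$ is \emph{good} in the sense of satisfying both the balancedness bound above and $T \cap \Madv = \emptyset$. The coupon-collector analysis in \Cref{lem:balancedness-hl} then produces, with probability $\ge 0.99$, a good $T \in \mathcal{T}$ for every $\ell \in [d]$.

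Third, for any such good $T$, I bound the $3$-fold balancedness of $\Hgtil{L}$; the argument for $\Hgtil{R}$ is symmetric. Fix $S, R \subseteq U \setminus \widetilde{S}$ with $|S| = |R| = 3$ and compare $\sum_{v \in V \cap N_{\adv{G}}(T)} u_{S,p}^{\adv{G}}(v) u_{R,p}^{\adv{G}}(v)$ with $\sum_{v \in V \cap N_G(T)} u_{S,p}^{G}(v) u_{R,p}^{G}(v)$; the latter has absolute value at most $\maxbalall{3}\frac{k}{4}$ by the first step, since $S, R \subseteq U \setminus S_\ell$. Their difference splits into two contributions: the symmetric difference $V \cap (N_{\adv{G}}(T) \triangle N_G(T))$ has size at most $\sum_{u \in T} \adv{\deg}(u) \le t \epsedge k$ (using that $T \cap \Madv = \emptyset$ forces $\adv{\deg}(u) \le \epsedge k$ for every $u \in T$), with each such vertex contributing at most $\maxbal{3}$; and for vertices $v$ in the intersection $V \cap N_G(T) \cap N_{\adv{G}}(T)$ whose edges to $S \cup R$ have been modified, the number is at most $|S\cup R| \cdot \epsedge k \le 2r \epsedge k$ (since $(U \setminus \widetilde{S}) \cap \Madv = \emptyset$ implies $\adv{\deg}(u) \le \epsedge k$ for every $u \in S \cup R$), each contributing a change of at most $2\maxbal{3}$, exactly as in the proof of \Cref{lem:balancedness-adversary}. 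Combining these gives total extra balancedness at most $(t + 4r)\epsedge k \maxbal{3} \le (2t + 4r)\epsedge k \maxbal{3}$, as desired.

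The main obstacle is handling the right-side effect coming from $N_G(T) \neq N_{\adv{G}}(T)$, which reflects the adversary's corruption of edges incident to $T$. This is precisely where the condition $T \cap \Madv = \emptyset$ plays a crucial role, as otherwise a single vertex of $T$ in $\Madv$ could have arbitrary adversarial degree and destroy the balancedness entirely, while the edge-side contributions are already covered by \Cref{lem:balancedness-adversary}.
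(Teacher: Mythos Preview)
Your proof is correct and follows essentially the same approach as the paper. The paper organizes step three slightly differently—it passes through an explicit intermediate graph $G[(U\setminus\widetilde{S})\uplus(V\cap N_{\adv{G}}(T))]$ (changing only the right-side neighbourhood) and then invokes \Cref{lem:balancedness-adversary} to switch the edge set from $G$ to $\adv{G}$—whereas you compare the two balancedness sums directly and split the difference into the same two contributions (symmetric difference of the right side, and modified edges to $S\cup R$). The resulting bounds are identical, and your direct comparison is arguably cleaner. One small remark: the aside about handling the monotone adversary via \Cref{lem:couling-subset} is off-target here—coupling reduces the sparse case to the dense one but does not neutralize the monotone adversary; in the paper's setup $\adv{G}$ in this section refers only to the bounded adversary's modifications, and the monotone adversary is absorbed later in the pseudo-distribution analysis (as in \Cref{lem:monotoneadversary}).
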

\begin{proof}
    First of all, we can use \Cref{lem:general-balancedness} to derive that for all $\ell \in [d]$, a $1 - o(1)$ fraction of $t$-tuples $T \in \binom{S_\ell}{t}$ are such that 
    \begin{align*}
        \Hgtilprime{L} &\coloneqq G[ (U \setminus \widetilde{S} ) \uplus (V \cap N_{G}(T)) ] \\ \text{ and } \Hgtilprime{R} &\coloneqq G[ (V \setminus \widetilde{S}) \uplus (U\cap N_{G}(T)) ] 
    \end{align*}
    have $3$-fold balancedness $\Delta \coloneqq \maxbalallcompact{3} \frac{k}{4}$ (like in the proof of \Cref{lem:balancedness-hl}). Notice the subtle difference that we have replaced $\adv{G}$ by $G$ in $\neigh{T}$ above.

    The rest of  the proof is dedicated to addressing this difference. To this end, note that for sufficiently small $\epsnode$, the fraction of $t$-tuples $T$ in $S_\ell$ such that for all $v \in T$, we have $v \notin \Madv$ while at the same time $\Hgtilprime{L}, \Hgtilprime{R}$ have the desired $r$-fold balancedness is still at least $\frac{1}{2}$. Thus, by the same analysis as in \Cref{lem:balancedness-hl}, with high probability, we find a such tuple in $\mathcal{T}$ for each $\ell \in [d]$ after step 2.
    
    
    Accordingly, with high probability after step 2, $\mathcal{T}$ contains a tuple $T \subseteq S_\ell$ for all $\ell \in [d]$ such that $\adv{\deg}(v) \le \modified$ for all $v\in T$, while $\Hgtilprime{L}, \Hgtilprime{R}$ have the same balancedness guarantees as in \Cref{lem:balancedness-hl}. To lift this result to $\Hgtil{L}, \Hgtil{R}$, we first consider the intermediate graphs
    \begin{align*}
        \Hgtilprimeprime{L} &\coloneqq G[ (U \setminus \widetilde{S} ) \uplus (V \cap N_{\adv{G}}(T)) ] \\ \text{ and } \Hgtilprimeprime{R} &\coloneqq \adv{G}[ (V \setminus \widetilde{S}) \uplus (U\cap N_{G}{T}) ].
    \end{align*}
    Here, by the fact that $\adv{\deg}(v) \le \modified$ for $v \in T$, we get that the right hand side of $\Hgtilprimeprime{L}$ differs by at most $t\modified$ vertices compared to $\Hg{L}$. Therefore, the $3$-fold balancedness of $\Hgtilprimeprime{L}$ is at most 
    $$
        \Delta + 2t \epsedge k \maxbal{3}.
    $$ Now, the desired balancedness guarantees for $\Hgtil{L}$ directly follow after applying \Cref{lem:balancedness-adversary} to $\Hgtilprimeprime{L}$. The same argument applies to $\Hgtil{R}$.
\end{proof}

Now, we can use our SoS certificates from \Cref{sec:sos} to conclude that any pseudo-distribution can not put a lot of ``mass'' on the left hand sided vertices of $\Hg{L}$ and $\Hg{R}$, respectively. This is captured in the following 
analogue of \Cref{cor:low-mass-outside-sell}.

\begin{lemma}\label{lem:low-mass-adv}
    Assume that $T$ is a $t$-tuple $T \subseteq S_\ell$ such that $\Hgtil{L}$, and $\Hgtil{R}$ (cf. \Cref{lem:balancedness-core-adv}) meet the balancedness criteria from \Cref{lem:balancedness-core-adv}. Then for sufficiently small $\epsedge$ ,
    $$
        \bicliqueaxiomsalgoapproxconcrete \sststile{w}{\degreeapprox} \bigg\{ \sum_{i \in U \setminus \widetilde{S}} w_i, \sum_{i \in V \setminus \widetilde{S} } w_i \le O( \sqrt{n}) \bigg\}
    $$
    and the bit complexity of the proof is $n^{O(1)}$. 
\end{lemma}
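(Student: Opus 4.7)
The plan is to mimic the proof of \Cref{cor:low-mass-outside-sell}, except that we invoke the adversarial certificate \Cref{lem:sos-left-certificate-approx} in place of \Cref{lem:sos-left-certificate}, and we rely on the adversarial balancedness from \Cref{lem:balancedness-core-adv} rather than \Cref{lem:balancedness-hl}. Throughout, we make free use of the axiom $\{w_T = 1\}$ which is built into $\bicliqueaxiomsalgoapproxconcrete$, so that a factor of $w_T$ appearing on both sides of an inequality can simply be dropped.

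Concretely, I would first choose $\epsedge$ small enough so that $(2t\epsedge + 12\epsedge) \maxbal{3} \le \maxbalallcompact{3} / 3$. By \Cref{lem:balancedness-core-adv}, this ensures that the bipartite graph $\Hgtil{L} = \adv{G}[(U \setminus \widetilde{S}) \uplus (V \cap N_{\adv{G}}(T))]$ has $3$-fold balancedness at most $\maxbalallcompact{3}\cdot \frac{10k}{12}$, which is precisely the hypothesis of \Cref{lem:sos-left-certificate-approx} with $r = 3$, $A = U$, $B = V$, $H = \adv{G}[U \uplus V]$, and $\Asub = U \setminus \widetilde{S}$. Invoking that lemma (and using $w_T = 1$ to drop the factor of $w_T$) yields
\[
  \bicliqueaxiomsalgoapproxconcrete \sststile{w}{\degreeapprox} \left\{ \Bigl( \sum_{i \in U \setminus \widetilde{S}} w_i \Bigr)^{3} \le O\Bigl( \frac{n^2}{k} \Bigr) \right\}.
\]

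From here the argument is identical to the one in \Cref{cor:low-mass-outside-sell}. The size constraint in $\bicliqueaxiomsalgoapproxconcrete$ gives $\sum_{i \in U} w_i \le 2k$, hence $\sum_{i \in U \setminus \widetilde{S}} w_i \le 2k$; multiplying the cubic bound by this linear bound yields $(\sum_{i \in U \setminus \widetilde{S}} w_i)^4 \le O(n^2)$. Two applications of the SoS cancellation $\{f^2 \le C\} \sststile{f}{2}\{f \le \sqrt{C}\}$ from \Cref{lem:cancellation} then gives $\sum_{i \in U \setminus \widetilde{S}} w_i \le O(\sqrt{n})$. The symmetric argument, using $\Hgtil{R}$ instead of $\Hgtil{L}$, yields the analogous bound for $\sum_{i \in V \setminus \widetilde{S}} w_i$.

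There is no real obstacle here beyond bookkeeping: all of the heavy lifting is done by \Cref{lem:balancedness-core-adv} (which guarantees the existence of a balanced bipartite subgraph in $\adv{G}$ after discarding $\Madv$) and by \Cref{lem:sos-left-certificate-approx} (which turns such balancedness into an SoS certificate that absorbs the $z_{u,v}$-variables governing the up-to-$\gamma$ adversarially restored edges). The constant on $\epsedge$ must be chosen small enough to fit under the balancedness threshold required by \Cref{lem:sos-left-certificate-approx}, and one must verify that the degree budget $\degreeapprox = O(t)$ accommodates both the constant-degree core certificate and the two cancellations; since the number of cancellations is $O(1)$ and the certificate has degree $O(r) = O(1)$, this is automatic. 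The bit-complexity bound $n^{O(1)}$ is inherited directly from that of \Cref{lem:sos-left-certificate-approx}.
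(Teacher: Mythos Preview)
Your proposal is correct and follows essentially the same approach as the paper's proof: invoke \Cref{lem:balancedness-core-adv} to get the needed balancedness on $\Hgtil{L}, \Hgtil{R}$ for $\epsedge$ small enough, apply \Cref{lem:sos-left-certificate-approx} with $r=3$ to obtain the cubic bound, multiply by the linear size constraint to get a quartic bound, and then use two cancellations from \Cref{lem:cancellation}. The only cosmetic differences are the exact numerical threshold you impose on $\epsedge$ and your explicit remark about dropping the $w_T$ factor via the axiom $\{w_T=1\}$.
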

\begin{proof}
    Similarly as in the proof of \Cref{cor:low-mass-outside-sell}, we get from \Cref{lem:balancedness-core-adv} that 
    $\Hgtil{L}, \Hgtil{R}$ are subgraphs of $H(T)$ with $r$-fold balancedness at most $ \maxbalallcompact{3} \frac{k}{3}$ if choosing $\epsedge$ small enough (this works since we assume $p$ to be bounded away from $1$).
    Thus, invoking \Cref{lem:sos-left-certificate-approx} with $r = 3$ to $\Hgtil{L}$, we get that $$
        \bicliqueaxiomsalgoapproxconcrete \sststile{x, y}{\degreeapprox} \Bigg\{ \bigg( \sum_{i \in U \setminus \widetilde{S}} w_i \bigg)^{3} \le O\bigg(\frac{n^2}{k}\bigg) \Bigg\},
    $$
    Again, combining this with the fact that $\bicliqueaxiomsalgoapproxconcrete \sststile{x,y}{\degreeapprox} \{ \sum_{i \in U \setminus \widetilde{S}} x_i \le 2k \}$, we get that 
    $$
        \bicliqueaxiomsalgoapproxconcrete \sststile{x, y}{\degreeapprox} \left\{ \bigg(\sum_{i \in U \setminus \widetilde{S} } w_i\bigg)^{4} \le O( n^2 ) \right\}.
    $$ Applying the cancellation $\{ f^2 \le C  \} \sststile{f}{2} \{ f \le \sqrt{C} \}$ from \Cref{lem:cancellation} twice, we arrive at the desired conclusion. An analogous argument for $\Hgtil{R}$ also yields the second part of the statement.
\end{proof}

\newcommand{\Sellgood}{\hat{S}}



Finally, we can combine all of the above to show that we can approximately recover all $S_\ell$. 

\begin{lemma}\label{lem:recover-cliques2}
    Assume $T$ is a $t$-tuple $T \subseteq S_\ell$ such that $\Hgtil{L}$, and $\Hgtil{R}$ meet the balancedness criteria from \Cref{lem:balancedness-hl}. Then, after step 5 and for all sufficiently small $\epsedge, \epsnode$ and all $\smallc$, we have $|S_T \triangle S_\ell| \le (\mkfoundsize) k $. Moreover, $S_T \subseteq S_\ell \cup \Madv$.
\end{lemma}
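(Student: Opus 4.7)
The plan is to combine three facts: the SoS upper bound on pseudo-mass outside $\widetilde{S} \coloneqq S_\ell \cup \Madv$ provided by \Cref{lem:low-mass-adv}; a lower bound on $\Expectedtildesubnop{\mu}{|w|}$ coming from the maximization in Step~3 of \Cref{alg:recovery-approx} together with a feasible witness supported on $S_\ell \setminus \Madv$; and the pseudo-concentration statement \Cref{lem:pseudoconentrationapprox} to promote the resulting quantitative bounds into the set containment $S_T \subseteq S_\ell \cup \Madv$.

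I first apply \Cref{lem:low-mass-adv} and SoS soundness to obtain $\sum_{i \in [n] \setminus \widetilde{S}} \Expectedtildesubnop{\mu}{w_i} \le O(\sqrt{n}) = o(k)$, which immediately gives $|S_T \setminus \widetilde{S}| \le O(\sqrt{n})/(1-\smallc) = o(k)$ since each $v \in S_T$ satisfies $\Expectedtildesubnop{\mu}{w_v} \ge 1 - \smallc \ge 1/2$. Next I construct a feasible Dirac witness $\nu$ by setting $w_v = \mathds{1}[v \in S_\ell \setminus \Madv]$ together with $z_{u,v} = \mathds{1}[\{u,v\} \in E(G) \setminus E(\adv{G})]$ for $u, v \in S_\ell \setminus \Madv$: the balancedness hypothesis inherits from \Cref{lem:balancedness-core-adv} the property $T \cap \Madv = \emptyset$, so the axiom $w_T = 1$ is satisfied; the per-vertex $z$-budget $\epsedge k$ suffices because vertices outside $\Madv$ see at most $\epsedge k$ adversarial modifications by definition; and a standard Chernoff estimate for the random split in Step~1 shows the bipartite size constraints of $\widetilde{\mathcal{B}}(\adv{G}, U, V, \ktil, \epsedge k)$ are met w.h.p. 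By maximality, $\Expectedtildesubnop{\mu}{|w|} \ge |S_\ell \setminus \Madv| \ge (1 - \epsnode - o(1))k$. Subtracting the $o(k)$ mass outside $\widetilde{S}$ and the $\le \epsnode k$ mass on $\Madv$, this yields $\sum_{v \in S_\ell} \Expectedtildesubnop{\mu}{w_v} \ge (1 - 2\epsnode - o(1))k$, and a Markov-style count bounds $|S_\ell \setminus S_T| \le (|S_\ell| - (1 - 2\epsnode - o(1))k)/\smallc \le 3\epsnode k/\smallc$.

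For the containment, the estimate $|S_T| \ge |S_\ell \cap S_T| \ge (1 - 3\epsnode/\smallc)|S_\ell| \ge \Ssize$ meets the hypothesis of \Cref{lem:pseudoconentrationapprox} for sufficiently small $\epsnode$, so $S_T \subseteq S_{\ell'} \cup \Madv$ for some $\ell' \in [d]$; the large intersection $|S_T \cap S_\ell| = \Omega(k)$, combined with the overlap bound \Cref{lem:bad-event} (which rules out $a \times b$ bicliques straddling two distinct ground-truth cliques for $a = \Theta(n^{\varepsilon} k/\sqrt{d}) = o(k)$), forces $\ell' = \ell$. The symmetric-difference bound then follows directly: $|S_T \triangle S_\ell| \le |S_\ell \setminus S_T| + |S_T \setminus S_\ell| \le 3\epsnode k/\smallc + \epsnode k \le (5\epsnode/\smallc) k$. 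The main obstacle I anticipate is verifying the feasibility of the $S_\ell$-witness in the second step, since after the random split of $[n]$ into $U \uplus V$ one must check that both $|(S_\ell \setminus \Madv) \cap U|$ and $|(S_\ell \setminus \Madv) \cap V|$ concentrate within the window prescribed by $\widetilde{\mathcal{B}}(\adv{G}, U, V, \ktil, \epsedge k)$, uniformly over all $\ell \in [d]$.
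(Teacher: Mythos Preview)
Your proof is correct and follows essentially the same three-step structure as the paper: the witness argument for the lower bound on $\Expectedtildesubnop{\mu}{|w|}$, the SoS bound from \Cref{lem:low-mass-adv} for the mass outside $\widetilde{S}$, and a Markov count. The paper's Markov step averages over $S_\ell \cup \Madv$ rather than $S_\ell$, giving $4\epsnode/\smallc$ where you get $3\epsnode/\smallc$, but this is immaterial.

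There is one minor detour and one imprecise citation. For the containment, the paper invokes \Cref{lem:pseudoconentrationapprox-2} directly: that lemma takes as hypothesis a large intersection $|S_T \cap (S_\ell \setminus \Madv)| \ge Ck$ with a \emph{specific} $\ell$ and concludes $S_T \subseteq S_\ell \cup \Madv$ for that same $\ell$, so no separate $\ell' = \ell$ argument is needed. You instead go through \Cref{lem:pseudoconentrationapprox}, which only yields containment in \emph{some} $S_{\ell'} \cup \Madv$, and then argue $\ell' = \ell$. That works, but your citation of \Cref{lem:bad-event} for this step is off: that lemma rules out bicliques in the uncorrupted graph $G$, whereas $S_T$ need not be a clique in either $G$ or $\adv{G}$. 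The correct (and simpler) argument is that $S_T \subseteq S_{\ell'} \cup \Madv$ together with $|S_T \cap S_\ell| = \Omega(k)$ forces $|S_\ell \cap S_{\ell'}| \ge |S_T \cap S_\ell| - |\Madv| = \Omega(k)$, which for $\ell' \neq \ell$ contradicts the elementary overlap bound $|S_\ell \cap S_{\ell'}| \le k\delta + O(\sqrt{n\log n}) = o(k)$ established in the proof of \Cref{lem:smalloverlap}.
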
 
\begin{remark}
    Notice that the above implies that for $\epsnode = 0$, there is some $\epsedge > 0$ such that $|S_T \triangle S_\ell| = 0$. This yields \emph{exact recovery}, even under the adversary.
\end{remark}
\begin{proof}
    Like in the proof of \Cref{lem:recover-cliques}, we observe that for every $S_\ell$ after step 1 of the algorithm, we have $ \frac{1-\varepsilon}{2}k \le  |S_\ell \cap U|, |S_\ell \cap V| \le \frac{1+\varepsilon}{2} k$ for every fixed $\varepsilon > 0$, with high probability. 

    Hence, there is at least one pseudo-distribution $\mu$ in axioms $\bicliqueaxiomsalgoapproxconcrete$, set for example $w_i = 1$ if and only if $i \in S_\ell \setminus M_{\ge \modified}$, and $z_{u, v} = 1$ if and only if $x_u = 1, y_v = 1$, and $\{u, v\} \notin E(\adv{G}[U \uplus V])$. This pseudo-distribution achieves objective $ \Expectedtildesub{\mu}{|w|}  = |S_\ell \setminus \Madv|$.

    
    
    Using \Cref{lem:low-mass-adv} and recalling that $\widetilde{S} = S_\ell \cup M_{\ge \modified}$, we get that $\sum_{i \in [n] \setminus \widetilde{S}} \Expectedtildesubnop{\mu}{w_i} \le O(\sqrt{n})$ if $\mu$ is a degree-$\degreeapprox$ pseudo-distribution in axioms $\bicliqueaxiomsalgoapproxconcrete$. Accordingly, 
    $$
        \frac{1}{|S_\ell \cup \Madv|}\sum_{i \in S_\ell \cup M_{\ge \lambda} } \Expectedtildesubnop{\mu}{w_i} \ge \frac{|S_\ell \setminus \Madv| - O(\sqrt{n})}{|S_\ell \cup \Madv|} \ge \frac{|S_\ell| - \epsnode k - O(\sqrt{n})}{|S_\ell| + \epsnode k} 
    $$ 
    Choosing $\epsnode$ small enough, the right hand side is at least $1 - 4\epsnode$. Thus, by Markov's inequality there are at most $4\epsnode k / \rho$ vertices $v$ in $S_\ell \cup \Madv$ such that $\Expectedtildesubnop{\mu}{w_i} \le 1 - \rho$. Hence, for sufficiently small $\epsnode$ the overlap $|S_T \cap (S_\ell \setminus \Madv)| = \Omega(k)$, so by \Cref{lem:pseudoconentrationapprox-2}, we have $S_T \subseteq S_\ell \cup \Madv$.
    This implies further that  the set $S_T$ satisfies $|S_\ell \triangle S_T| \le \epsnode k + (4 \epsnode / \rho) k$, as desired.  \qedhere

\end{proof}

\subsubsection{Step 2: No other cliques are found} 

We proceed by showing that the the algorithm never finds a set $S_T$ that does not have large overlap with some ground-truth clique.
\begin{lemma}\label{lem:step2analysisexact}
    Let $T \in \binom{[n]}{t}$ be any $t$-tuple in $G \sim \RIGone$. Then, after step 5 of \Cref{alg:recovery-approx}, the set $S_T$ is either discarded or such that $|S_T \triangle S_\ell| \le (\mkfoundsize) k$ for some $\ell \in [d]$. 
\end{lemma}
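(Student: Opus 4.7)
The proof will proceed in three short steps, leveraging the pseudo-concentration property (\Cref{lem:pseudoconentrationapprox}) as a black box. The substantive work has already been done in establishing that lemma; the present statement is essentially a bookkeeping exercise that converts pseudo-concentration into the promised approximate-recovery bound.

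First I would observe that if $S_T$ is not discarded at step 5 of \Cref{alg:recovery-approx}, then by definition $|S_T| \ge \kfoundsize k = (1 - 5\epsnode/\rho)k$. Since we are free to require $\epsnode/\rho$ to be a sufficiently small constant (smaller than $1/10$, say), this gives $|S_T| \ge \frac{1}{2}k = \Ssize$. The pseudo-distribution $\mu$ computed in step 3 satisfies $\bicliqueaxiomsalgoapproxconcrete$, which contains $\bicliqueaxiomsalgoapproxconcretereduced$ as a subset of axioms; therefore the hypotheses of \Cref{lem:pseudoconentrationapprox} are met. Applying that lemma with the same choice of the constant $\rho$ (and with $\smallc = \rho \le \tfrac{1}{8t}$, which we may assume by further shrinking $\rho$) yields some label $\ell \in [d]$ such that $S_T \subseteq S_\ell \cup \Madv$.

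Next, I would bound $|S_T \triangle S_\ell|$ combinatorially. The containment $S_T \subseteq S_\ell \cup \Madv$ immediately gives
\[
  |S_T \setminus S_\ell| \;\le\; |\Madv| \;\le\; \epsnode k,
\]
using the adversarial budget on the number of fully corrupted vertices. For the other half of the symmetric difference, we have $|S_T \cap S_\ell| \ge |S_T| - |\Madv| \ge (1 - 5\epsnode/\rho - \epsnode)k$. A standard Bernstein/Chernoff argument on the independent label choices gives $|S_\ell| \le (1+o(1))k$ with high probability (and a union bound over $\ell \in [d]$ makes this simultaneous), so
\[
  |S_\ell \setminus S_T| \;\le\; |S_\ell| - |S_T \cap S_\ell| \;\le\; \bigl(5\epsnode/\rho + \epsnode + o(1)\bigr)k.
\]
Adding the two contributions,
\[
  |S_T \triangle S_\ell| \;\le\; \bigl(5\epsnode/\rho + 2\epsnode + o(1)\bigr) k,
\]
which is bounded by $\mkfoundsize k$ after absorbing the $2\epsnode k$ slack into the constant in $\mkfoundsize$ (since $\rho$ is a fixed small constant, $\epsnode \ll \epsnode/\rho$).

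Since the hard work—certifying pseudo-concentration via the balancedness certificates of \Cref{lem:lowmassoutsidecliquesetstep2-approx} and \Cref{lem:overlapeight-approx}, together with the bootstrapping argument of \Cref{lem:pseudoconentrationapprox-2}—has already been done, there is no real obstacle here. The only subtlety worth flagging is making sure that the constants line up: one must choose $\epsnode$ small enough (relative to $\rho$ and to the thresholds needed for \Cref{lem:pseudoconentrationapprox}) so that the discard threshold $\kfoundsize k$ exceeds $\frac{1}{2}k$, enabling the black-box application of pseudo-concentration, and so that the final symmetric-difference bound fits inside $\mkfoundsize k$.
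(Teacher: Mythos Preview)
Your proof is correct and follows essentially the same approach as the paper: invoke the pseudo-concentration lemma (\Cref{lem:pseudoconentrationapprox}) to obtain $S_T \subseteq S_\ell \cup \Madv$, and then bound $|S_T \triangle S_\ell|$ directly using $|\Madv| \le \epsnode k$ and $|S_\ell| \le (1+o(1))k$. The paper's own proof is equally terse and has comparable slack in the constants (it invokes the threshold $\kfoundsizetwo k$ rather than the algorithm's $\kfoundsize k$), so your final bound overshooting $5\epsnode/\rho$ by a lower-order $2\epsnode + o(1)$ term is at the same level of imprecision and not a substantive gap.
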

The challenge for the above is that--even for $k \gg n^{\frac{1}{2} + \varepsilon}$--the single label clique theorem is not applicable anymore due to the adversary we want to allow. We further want our algorithm to work all the way down to $k \gg \sqrt{n \log(n)}$, so a different approach is needed.
Luckily, our the pseudo-concentration statement from \Cref{lem:pseudoconentrationapprox} comes in handy and yields a short proof of \Cref{lem:step2analysisexact}

\begin{proof}[Proof of \Cref{lem:step2analysisexact}]
    We check if the $S_T$ found after step 5 has size at least $\kfoundsizetwo k$. If so, \Cref{lem:pseudoconentrationapprox} tells us that there is some $\ell \in \cliqueset{T}$ such that $S_T \subseteq S_\ell \cap \Madv$. Hence $|S_T \triangle S_\ell| \le (\mkfoundsizetwo) k + \epsnode k + o(k) \le (\mkfoundsize) k$, as desired. 
\end{proof}

Finally, we analyze the pruning in step 6.
\begin{lemma}
    There is a constant $C$ such that with high probability, the following holds after step 6 of \Cref{alg:recovery-approx}. We have $|\mathcal{L}| = d$ and for every $\ell \in \ell$, there is exactly one $S \in \mathcal{L}$ such that $|S \triangle S_\ell| \le (\mkfoundsize) k$. Moreover, if $|S \triangle S_\ell| \le (\mkfoundsize) \epsnode k$, then further $S \subseteq S_\ell \cup \Madv$.
\end{lemma}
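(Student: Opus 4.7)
The plan is to combine \Cref{lem:recover-cliques2} (which says that for every $\ell \in [d]$ at least one good approximation of $S_\ell$ gets added to $\mathcal{L}$) with \Cref{lem:step2analysisexact} (which says that every element of $\mathcal{L}$ is close to \emph{some} ground-truth clique) to characterise $\mathcal{L}$ before pruning as a disjoint union of nonempty ``equivalence classes'', one per label. The work of the proof is then to verify that the pruning in step~6 collapses each class to a single element while leaving distinct classes untouched.

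Concretely, by \Cref{lem:balancedness-core-adv}, for every $\ell \in [d]$ the list $\mathcal{T}$ contains w.h.p.\ a $t$-tuple $T \subseteq S_\ell$ avoiding $\Madv$ whose associated bipartite graphs $\Hgtil{L}, \Hgtil{R}$ are balanced, so \Cref{lem:recover-cliques2} produces some $S_T \in \mathcal{L}$ with $|S_T \triangle S_\ell| \le (\mkfoundsize)k$ and $S_T \subseteq S_\ell \cup \Madv$. Conversely, every set that survives step~5 has size at least $\kfoundsize k \ge \Ssize$, so \Cref{lem:step2analysisexact} furnishes, for each $S \in \mathcal{L}$ before pruning, a label $\ell(S) \in [d]$ with $|S \triangle S_{\ell(S)}| \le (\mkfoundsize)k$.

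To see that the map $S \mapsto \ell(S)$ is well-defined and that pruning separates its fibres correctly, I would use the separation of ground-truth cliques: a Bernstein argument as in \Cref{lem:smalloverlap} shows $|S_\ell \cap S_{\ell'}| \le k\delta + O(\sqrt{n\log n}) = o(k)$ w.h.p.\ for all $\ell \neq \ell'$, and hence $|S_\ell \triangle S_{\ell'}| \ge (2 - o(1))k$. By the triangle inequality for the symmetric difference, any two $S, S' \in \mathcal{L}$ with $\ell(S) \neq \ell(S')$ satisfy
\[
    |S \triangle S'| \ge |S_{\ell(S)} \triangle S_{\ell(S')}| - |S \triangle S_{\ell(S)}| - |S' \triangle S_{\ell(S')}| \ge (2 - o(1))k - 2(\mkfoundsize)k,
\]
which, for $\epsnode$ sufficiently small, exceeds the pruning threshold $10\epsnode k/\rho$. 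In particular the same computation rules out $\ell(S) \neq \ell(S')$ both being consistent with one $S$. Meanwhile any two $S, S'$ with $\ell(S) = \ell(S')$ satisfy $|S \triangle S'| \le 2(\mkfoundsize)k = 10\epsnode k/\rho$, so one of them is deleted by step~6. Consequently after pruning $\mathcal{L}$ contains exactly one representative per label, giving $|\mathcal{L}| = d$ and the uniqueness of the approximation per $\ell$.

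For the ``moreover'' clause I would invoke \Cref{lem:pseudoconentrationapprox}: every surviving $S \in \mathcal{L}$ equals $S_T$ for some pseudo-distribution $\mu_T \sdtstile{w}{\degreeapprox} \bicliqueaxiomsalgoapproxconcretereduced$ with $|S_T| \ge \Ssize$, so the pseudo-concentration lemma yields some $\ell' \in [d]$ with $S \subseteq S_{\ell'} \cup \Madv$; the separation argument above forces $\ell' = \ell$. The main potential obstacle is purely bookkeeping: one has to juggle the three quantities $(\mkfoundsize)k$, the pruning threshold, and the pairwise-clique-separation $\approx 2k$ so that the strict inequalities hold for \emph{every} small enough $\epsnode$ simultaneously with the other bounds invoked from \Cref{lem:recover-cliques2} and \Cref{lem:pseudoconentrationapprox}; beyond that the argument is a short application of the triangle inequality.
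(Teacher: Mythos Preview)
Your proposal is correct and follows essentially the same approach as the paper: combine \Cref{lem:balancedness-core-adv} with \Cref{lem:recover-cliques2} for existence, \Cref{lem:step2analysisexact} for the converse, and the $o(k)$ pairwise intersection of distinct ground-truth cliques to control the pruning. The paper's own proof is terser and folds the containment $S \subseteq S_\ell \cup \Madv$ into the ``same guarantees'' clause inherited from \Cref{lem:step2analysisexact} (whose proof already invokes \Cref{lem:pseudoconentrationapprox}), whereas you invoke \Cref{lem:pseudoconentrationapprox} directly; your explicit triangle-inequality computation for the pruning analysis is also a bit more detailed than the paper's one-line appeal to $|S_\ell \cap S_{\ell'}| \le (1+o(1))\delta k$, but the content is the same.
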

\begin{proof}
    We get from \Cref{lem:recover-cliques2} and \Cref{lem:balancedness-core-adv} that with high probability, for every $\ell \in [d]$, there is some $S \in \mathcal{L}$ such that $|S \triangle S_\ell| \le (\mkfoundsize) k$ and $S \subseteq S_\ell \cup \Madv$. Moreover, by \Cref{lem:step2analysisexact}, the same guarantees hold for every $S \in \mathcal{L}$ for some $\ell \in [d]$.

    It therefore only remains to show that $|\mathcal{L}| = d$. To this end, we note that with high probability over the draw of $G$, we have that for all $\ell, \ell' \in [d]$ that $|S_\ell \cap S_{\ell'}| \le (1 + o(1))\delta k$. This implies that for sufficiently small $\smallc$, exactly one $S \in \mathcal{L}$ with $|S \triangle S_\ell| \le \kfoundsize k$ is kept. 
\end{proof}

\subsection{Deferred proofs}

We re-visit the SoS certificates whose prove was deferred above.

\subsubsection{Proof of \Cref{lem:sos-left-certificate-approx}}

For establishing \Cref{lem:sos-left-certificate-approx} we need the following proposition telling us that imposing the constraint $x_T = 1$ implies that only a small number of vertices is outside of $\neigh{T}$ as a function of $\gamma$.
\begin{lemma}\label{lem:sosknowsfewoutside}
    Given a biparite graph $H = (A \uplus B, E)$ and any $T \in A$ with $|T| = t$, 
    $$
        \bicliqueaxiomsalgoapprox \sststile{w}{O(t)} \Bigg\{ w_T \sum_{v \in B \setminus \neigh{T}} w_v  \le \gamma t \  w_T \Bigg\}.
    $$
\end{lemma}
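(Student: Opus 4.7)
The plan is to derive the bound term-by-term: for each $v \in B \setminus \neigh{T}$ one fixes some $u(v) \in T$ with $\{u(v),v\} \notin E(G)$ (which exists by definition of the neighborhood of $T$), shows the pointwise certificate $w_T w_v \le w_T z_{u(v),v}$ in degree $O(t)$, and then sums over $v$ and groups by $u \in T$ to invoke the $z$-budget axiom.

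For the pointwise bound, I would first multiply the axiom $w_{u(v)} w_v (1 - z_{u(v),v}) = 0$ by $\prod_{u \in T \setminus \{u(v)\}} w_u$ to obtain the identity $w_T w_v = w_T w_v z_{u(v),v}$. Hence it remains to establish $w_T z_{u(v),v}(1 - w_v) \ge 0$ in SoS. From the Boolean axioms $w_v^2 = w_v$ one has the usual one-line consequence $(1-w_v)^2 = 1 - 2w_v + w_v^2 = 1 - w_v$, and similarly $w_u = w_u^2$ for every $u \in T$ (so $w_T = w_T^2$) and $z_{u(v),v} = z_{u(v),v}^2$. Therefore
\[
  w_T\, z_{u(v),v}\, (1-w_v) \;=\; w_T^2\, z_{u(v),v}^2\, (1-w_v)^2 \;=\; \bigl(w_T\, z_{u(v),v}\, (1-w_v)\bigr)^2,
\]
which is manifestly a square of a polynomial of degree $t+2$, so the whole derivation sits at degree $O(t)$.

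Combining the two steps, $w_T w_v \le w_T z_{u(v),v}$ for every $v \in B \setminus \neigh{T}$. Summing over $v$, grouping by $u \in T$ according to the choice $u(v)$, and multiplying the $z$-budget axiom $\sum_{v \in [n]\setminus\{u\}} z_{u,v} \le \gamma$ by the SoS-nonnegative quantity $w_T = w_T^2$ gives
\[
  w_T \sum_{v \in B \setminus \neigh{T}} w_v \;\le\; \sum_{u \in T} w_T \!\!\sum_{v\,:\,u(v)=u}\!\! z_{u,v} \;\le\; \sum_{u \in T} \gamma\, w_T \;=\; \gamma t\, w_T,
\]
which is the desired inequality.

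I do not expect a genuine obstacle here: the argument is essentially a one-axiom-at-a-time manipulation. The only piece to be careful with is checking that each step is a legitimate SoS manipulation rather than just a Boolean one; this reduces to the square-rewrite above together with the non-negativity facts $w_T \ge 0$ and $z_{u,v} \ge 0$ derived from the Boolean axioms. Degree stays at $O(t)$ throughout, since $w_T$ contributes $t$, the axioms $w_u w_v(1 - z_{u,v}) = 0$ and the $z$-budget axiom contribute a constant, and the SoS rewriting adds at most a constant.
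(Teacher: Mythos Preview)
Your argument is correct. It differs from the paper's proof in one small respect: the paper multiplies in the full product $\prod_{u\in T}(1-z_{u,v})$ and then applies the product-to-sum lower bound $\prod_{u\in T}(1-z_{u,v}) \ge 1 - \sum_{u\in T} z_{u,v}$ (the claim proved inside Lemma~\ref{lem:lhs-cert-adv}) before invoking the $z$-budget for each $u\in T$, whereas you fix a single witness $u(v)\in T$ per vertex $v$ and work with that one factor only. Your route is slightly more elementary since it sidesteps the product bound entirely; the paper's route has the minor advantage of reusing a claim already established elsewhere. Both land at the same inequality with degree $O(t)$.
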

\begin{proof}
    Observe that by the axiom $w_uw_v(1-z_{u,v}) = 0$,
    $$
        \bicliqueaxiomsalgoapprox \sststile{w,z}{\degreeapprox} \left\{ 0 = w_T\sum_{v \in B \setminus \neigh{T}} w_v\prod_{u \in T}(1-z_{u,v}) \right\}.
    $$
    However, using the bound from \Cref{clm:productsumlowerbound}, we get 
    $$
        \bicliqueaxiomsalgoapprox \sststile{w,z}{\degreeapprox} \left\{ \prod_{u \in T}(1-z_{u,v})  \ge 1 - \sum_{u \in T} z_{u,v} \right\}.
    $$ Thus, 
    \begin{align*}
        \bicliqueaxiomsalgoapprox \sststile{w,z}{\degreeapprox} \left\{\begin{aligned} 0 &= w_T\sum_{v \in B \setminus \neigh{T}} w_v\prod_{u \in T}(1-z_{u,v})\\ &\hspace{2cm}\ge w_T \ \left( \sum_{v \in B \setminus \neigh{T}} w_v  - \sum_{v \in B \setminus \neigh{T}}w_v\sum_{u \in T} z_{u,v} \right) \\
        &\hspace{2cm}\ge w_T \ \left( \sum_{v \in B \setminus \neigh{T}} w_v  - \sum_{u \in T} \sum_{v \in B \setminus \neigh{T}}  z_{u,v} \right) \end{aligned}\right\}.
    \end{align*}
    Due to the constraint $\sum_{v \in B} z_{u, v} \le \gamma$, we get  
    \begin{align*}
        \bicliqueaxiomsalgoapprox \sststile{x, y}{\degreeapprox} \left\{  0 \ge w_T \left( \sum_{v \in B \setminus \neigh{T}} w_v  - \gamma t \right) \right\}.
    \end{align*} Re-arranging then finishes the proof.
\end{proof}

\begin{proof}[Proof of \Cref{lem:sos-left-certificate-approx}]
    We denote by $|y_{\neigh{T}}| = \sum_{y \in B \cap \neigh{T}} y_v$ and 
    invoke \Cref{lem:core-sos-proof-adv} for on $H[A_{\text{(sub)}} \uplus B]$ to get
    \begin{align*}
        \bicliqueaxiomsalgoapprox \sststile{w,z}{8r} \left\{\begin{aligned} &\maxbalall{r} \left( \sum_{S \subseteq \Asub, |S| = r} w_S \right)^2 (| y_{\neigh{T}} | - 2\gamma r) \\ & \hspace{1cm} \le \left(n + rn^2\right) \maxbal{r} \left( \sum_{S \subseteq \Asub, |S| = r} w_S \right) + \Delta \hspace{.05cm}\left( \sum_{S \subseteq \Asub, |S| = r} w_S \right)^2 \end{aligned} \right\}.
    \end{align*}
    Now, we rely on the fact that by \cref{lem:sosknowsfewoutside}, $
        \bicliqueaxiomsalgoapprox \sststile{w,z}{O(t)} \left\{ w_T \sum_{v \in B \setminus \neigh{T}} w_v  \le \gamma t w_T \right\}.
    $ and thus, (recalling that $\gamma \le \varepsilon' k$),
    $$
        \bicliqueaxiomsalgoapprox \sststile{w,z}{O(t)} \left\{ w_T |y_{\neigh{T}}| \ge (k - \gamma t)w_T \ge (1- \varepsilon' t ) k \ w_T \ge (2 \gamma r + 11k/12 ) \ w_T \right\}
    $$ for sufficiently small $\varepsilon' > 0$. 
    Using this and our bound on $\Delta$ ($\Delta \le \maxbalallcompact{r}10k/12$), we get that
    $$
        \mathcal{B}(H, k ) \sststile{w,z}{\degreeapprox} \left\{ \maxbalall{r} w_T \ \left( \sum_{S \subseteq \Asub, |S| = r} w_S \right)^2 \frac{k}{12} \le (r+1)n^2 \maxbal{r}
        w_T \ \left( \sum_{S \subseteq \Asub, |S| = r} w_S \right) \right\}.
    $$ almost like in the proof of \Cref{lem:sos-left-certificate}. The rest of the proof then follows exactly the proof of \Cref{lem:sos-left-certificate} (notice that applying the cancellation inequalities remains valid since $w_T = w_T^2$).
\end{proof}

\subsubsection{Proof of \Cref{lem:lowmassoutsidecliquesetstep2-approx}}

\begin{proof}[Proof of \Cref{lem:lowmassoutsidecliquesetstep2-approx}]
        We define the auxiliary graph $$
            F_L(T) \coloneqq G[(U \setminus \adv{V}(\cliqueset{T}) \uplus (V \cap N_{G}(T)s))].
        $$
        Then, like in the proof of \Cref{lem:multipartite-balancedness} and \Cref{lem:step2analysisexact}, we get from \Cref{lem:general-balancedness-generalized} with $L(T) = \cliqueset{T}$ and $S(T) = V(\cliqueset{T})$ that the graph $F_L(T)$ has $3$-fold balancedness
        $$
            \frac{16tp^{-2}}{\alpha} \maxbal{3} \cdot  p^{t}k + o(k),
        $$
        for all $T$. By our choice of $t$ and the calculation in \Cref{lem:balancedness-hl}, this is at most
        $ \maxbalallcompact{3}\frac{k}{4}$. Switching from $F_L(T)$ to 
        $$
            \adv{F_L}(T) \coloneqq \adv{G}[(U \setminus \adv{V}(\cliqueset{T}) \uplus (V \cap N_{\adv{G}}(T)))].
        $$ and using the same argument as in \Cref{lem:balancedness-core-adv} (here we rely on the $\lambda$-goodness of $T$), we obtain a $3$-fold balancedness in $\adv{F_L}(T)$ of at most 
        $$
            \maxbalall{3} \frac{k}{4} + (2t\epsedge + 4 r \epsedge ) k \maxbal{3}.
        $$ 
        Now, we can use the same argument as in \Cref{lem:low-mass-adv} to get to our conclusion, provided that $\epsedge$ is small enough.
\end{proof}

\subsubsection{Proof of \Cref{lem:overlapeight-approx}}

\begin{proof}[Proof of \Cref{lem:overlapeight-approx}]
    For this lemma, we rely on balancedness properties among the cliques in $V(\cliqueset{T})$ like in \Cref{sec:slctstep2} and therefore re-introduce some notation from this section. 

    First, recall that  $\overlap{T}$ is the set of vertices in $G$ that have chosen at least two of the labels in $\cliqueset{T}$. For any $\ell \in \cliqueset{T}$, recall further that $\leftt = S_\ell \setminus \overlap{T}$ and $\rightt = \bigcup_{\ell' \in \cliqueset{T} \setminus \{\ell\}} S_{\ell'} \setminus \overlap{T}$.
    Now, balancedness properties hold in the bipartite graphs $H_\ell \coloneqq G[ \leftt \uplus \rightt]$ for all $\ell \in \cliqueset{T}$. Specifically, it follows from \Cref{lem:boundarybalancedness} that all the $H_\ell$ have $3$-fold balancedness $o(k)$.

    We lift this balancedness guarantee to $\adv{G}$ by considering the following bipartite graph for every $\ell \in \cliqueset{T}$
    \begin{align*}
        \adv{H}_\ell &\coloneqq G[ \lefttadv \uplus \righttadv] \\ \text{ with } \lefttadv &\coloneqq \leftt \setminus \Madv \text{ and } \righttadv \coloneqq \rightt \setminus \Madv.
    \end{align*}
    Noting that $\adv{\deg}(v) \le \modified$ for all $v \in V(\adv{H}_\ell)$, and applying \Cref{lem:balancedness-adversary}, we get that $\adv{H}_\ell$ has $3$-fold balancedness at most 
    \begin{align*}
        \Delta \coloneqq o(k) + \left(2t\epsedge + 12\epsedge\right) k \maxbal{3}.
    \end{align*}
    Then, we define $|y| \coloneqq \sum_{v \in \righttadv} w_v$ and apply the certificate from \Cref{lem:core-sos-proof-adv} to obtain (using $r = 3$) that
    \begin{align*}
         \bicliqueaxiomsalgoapproxconcretereduced \sststile{w}{O(t)} \left\{ \begin{aligned} &\left( \sum_{S \subseteq \lefttadv, |S| = 3} w_S \right)^2 \left( \maxbalall{3} |y| - \Delta - 12 \epsedge k  \right) \\ &\hspace{3cm} \le \left(n + 3n^2\right) \maxbal{3} \left( \sum_{S \subseteq 
         \lefttadv, |S| = 3} w_S \right) \end{aligned} \right\}.
    \end{align*}
    To get a lower bound on the left hand side, we use $|w| \ge k$ and the certificate from \Cref{lem:lowmassoutsidecliquesetstep2-approx} to obtain
    \begin{align*}
        \bicliqueaxiomsalgoapproxconcretereduced \sststile{w}{O(t)} \left\{ \begin{aligned} w_Tw_i|y| &\ge w_Tw_i \bigg( 2\ktil - \sum_{v \in S_\ell}w_v - \sum_{v \in [n] \setminus \adv{V}(\cliqueset{T})}w_v - \sum_{v \in \overlap{T}}w_v - \sum_{v \in \Madv}w_v  \bigg)\\
        &\ge w_Tw_i \bigg( 2\ktil - \sum_{v \in S_\ell}w_v - O(\sqrt{n}) - o(k) - \epsnode k  \bigg)\end{aligned} \right\}.
    \end{align*}
    For the remaining sum above, we use the fact that by \Cref{lem:degreebound}, we get that the maximum degree of any $i \in [n] \setminus (S_\ell \cup \Madv)$ into $S_\ell \setminus \Madv$ is at most $(1 + \varepsilon)pk$ w.h.p. Since $p$ is bounded away from $1$, we get that for any sufficiently small $\epsnode$, there is a constant $c > 0$, such that  $
        \bicliqueaxiomsalgoapproxconcretereduced \sststile{w}{O(t)} \left\{ w_Tw_i|y| \ge w_Tw_i c k  \right\}.
    $ Hence, due to our bound on $\Delta$, we get that for sufficiently small $\epsedge$,
    $$
        \bicliqueaxiomsalgoapproxconcretereduced \sststile{w}{O(t)} \left\{ w_Tw_i\ \left(\maxbalall{3} |y| - \Delta - 12 \epsedge k\right) \ge c'k \ w_Tw_i \right\},
    $$ for some constant $c' > 0$. Hence,
    \begin{align*}
         \bicliqueaxiomsalgoapproxconcretereduced \sststile{w}{O(t)} \left\{ \begin{aligned} & w_Tw_i \ \left( \sum_{S \subseteq \lefttadv, |S| = 3} w_S \right)^2 \le O\bigg(\frac{n^2}{k}\bigg)  \ w_T w_i \left( \sum_{S \subseteq 
         \lefttadv, |S| = 3} w_S \right) \end{aligned} \right\}.
    \end{align*}
    Using the cancellation $\{ f^2 \le C  \} \sststile{x}{2d} \{ f \le \sqrt{C} \}$, we get
    \begin{align*}
         \bicliqueaxiomsalgoapproxconcretereduced \sststile{w}{O(t)} \left\{ \begin{aligned} & w_Tw_i \ \left( \sum_{S \subseteq \lefttadv, |S| = 3} w_S \right)^2 \le O\bigg(\frac{n^4}{k^2}\bigg) \ w_T w_i \end{aligned} \right\}.
    \end{align*} Using \Cref{lem:sos-factorial} and multiplying by $\{ (\sum_{v \in [n]} w_v)^2 \le 8k^2 \}$, we get $
         \bicliqueaxiomsalgoapproxconcretereduced \sststile{w}{O(t)} \{  w_Tw_i \ ( \sum_{v \in \lefttadv} w_v )^8 \le O(n^4) w_T w_i \}.
    $ Applying the cancellation the cancellation $\{ f^2 \le C  \} \sststile{x}{2d} \{ f \le \sqrt{C} \}$ three times, we get 
    $$
         \bicliqueaxiomsalgoapproxconcretereduced \sststile{w}{O(t)} \left\{ w_Tw_i \  \left( \sum_{v \in \lefttadv} w_v\right) \le O(\sqrt{n}) w_T w_i \right\}
    $$  and the proof is almost finished. It only remains to note that 
    \begin{align*}
         \bicliqueaxiomsalgoapproxconcretereduced \sststile{w}{O(t)} \bigg\{ \sum_{v \in S_\ell \setminus \Madv} w_v \le \sum_{v \in \lefttadv} w_v + \sum_{v \in \overlap{T}} w_v \le \sum_{v \in \lefttadv} w_v + o(k) \bigg\}
    \end{align*} since $\overlap{T} = o(k)$ w.h.p. by \Cref{lem:smalloverlap}. With this, our statement follows.
\end{proof}

\section{ Open questions and future work }\label{sec:discussion}

We leave two main open problems that we consider to be exciting directions for future work.

\paragraph{Algorithms for RIGs with ``two-sided'' noise}

A natural extension of our (noisy) model of RIGs would be to consider a model where we replace the ground-truth cliques with dense sub-graphs.

\begin{definition}[Random Intersection graphs with two-sided noise]
    Define a random graph $\textsc{RIG}_2(n, d, p, \qplus, \qminus)$ with $\qminus < \qplus$ on vertex set $[n]$ as follows. Every vertex $v$ draws a set $M_v \subseteq [d]$ by including each $i \in [d]$ independently with probability $\delta = \delta(d, p, \qplus, \qminus)$. Then, for every pair $u, v \in V$, we include an edge with probability \begin{align*}
        \Pr{u \sim v} = \begin{cases}
            \qplus & \text{ if } M_v \cap M_u \neq \emptyset\\
            \qminus & \text{ otherwise.}
        \end{cases}
    \end{align*} The probability $\delta = \delta(d, p, \qplus, \qminus)$ is chosen such that $\Pr{u \sim v} = p$ for all $u,v\in [n], u \neq v$.
\end{definition} 
Here, edges within some ground truth community form with probability $\qplus$ while edges outside form with probability $\qminus < \qplus$. Can our algorithms be extended to this case? The techniques presented in \Cref{sec:stronger-adv} partially cover the case where up to a constant fraction of edges (that are adversarially chosen) are missing in the ground-truth $S_\ell$, but they do not directly extend to the two-sided case. Here, a significantly larger number of missing edges has to be handled, which, however, is \emph{not} adversarially chosen but random. We believe our techniques to be a good starting point for achieving recovery in $\textsc{RIG}_2(n, d, p, \qplus, \qminus)$, but some further ideas will be required. Moreover, it would be interesting to see if the algorithms on RIGs with two-sided noise can still be made robust to a \emph{monotone} adversary that is now not only allowed to delete edges outside of the ground-truth communities, but also to \emph{add back} edges within the communities. We believe this to be the case, however it is known that the capabilities of a such adversary \emph{do} shift the information-theoretic threshold for recoverability in the stochastic block model \cite{Moitra_Perry_Wein_2016}, so this question stil deserves a formal investigation.

\paragraph{Hardness of recovery in dense RIGs without noise}

Moreover, it would be very interesting to better understand the \emph{computational-statistical trade-offs} for recovering communities in RIGs. We show in \Cref{sec:hardness} that recovery is hard in the \emph{noisy model} if $k \ll \sqrt{n}$ like it is the case for classical planted clique. Together with the information-theoretic feasibility proved in \Cref{sec:slctalpha02}, this shows that there is a computational-statistical gap whenever $n^\varepsilon \ll k \ll \sqrt{n}$ and $p, q$ are a sufficiently small constants. However, our reduction from planted clique in \Cref{thm:hardness} crucially relies on the fact that we have independent edges outside of the communities in our model $\RIGone$. 

It would be very interesting to find out if this hardness carries over to the case where $p$ is a constant and $q = 0$, i.e., the \emph{dense, noiseless} case. We strongly believe this to be the case, however, finding evidence for a such computational statistical gap seems to be quite challenging. While there are multiple ways of giving evidence for such gaps exist (e.g. low-degree polynomials, SoS lower bounds, the overlap gap property, statistical query algorithms, reductions from other statistical problems), none of the techniques known to us seems to easily handle dense, noiseless RIGs. Finding such techniques would likely be interesting in its own right and might further shed light on other computational aspects of other average-case problems.

\bibliography{literature}

@inproceedings{Spirakis-MFCS-20212,
  author       = {Sotiris E. Nikoletseas and
                  Christoforos L. Raptopoulos and
                  Paul G. Spirakis},
  title        = {Maximum Cliques in Graphs with Small Intersection Number and Random
                  Intersection Graphs},
  booktitle    = {MFCS'12},
 year = {2012},
  doi          = {10.1007/978-3-642-32589-2\_63}

}

@inproceedings{Kothari-STOC-2023,
  author       = {Rares{-}Darius Buhai and
                  Pravesh K. Kothari and
                  David Steurer},
  title        = {Algorithms Approaching the Threshold for Semi-random Planted Clique},
  booktitle    = {STOC'23},
year = {2023},
     doi          = {10.1145/3564246.3585184}
}

@book{klenke2008probability,
  title={Probability theory: a comprehensive course},
  author={Klenke, Achim},
  year={2008},
  publisher={Springer}
}

@inproceedings{Christodoulou_Nikoletseas_Raptopoulos_Spirakis_2023, address={Cham}, title={A Spectral Algorithm for Finding Maximum Cliques in Dense Random Intersection Graphs}, ISBN={978-3-031-23101-8}, DOI={10.1007/978-3-031-23101-8_2}, abstractNote={In a random intersection graph $$G_{n,m,p}$$Gn,m,p, each of n vertices selects a random subset of a set of m labels by including each label independently with probability p and edges are drawn between vertices that have at least one label in common. Among other applications, such graphs have been used to model social networks, in which individuals correspond to vertices and various features (e.g. ideas, interests) correspond to labels; individuals sharing at least one common feature are connected and this is abstracted by edges in random intersection graphs. In this paper, we consider the problem of finding maximum cliques when the input graph is $$G_{n,m,p}$$Gn,m,p. Current algorithms for this problem are successful with high probability only for relatively sparse instances, leaving the dense case mostly unexplored. We present a spectral algorithm for finding large cliques that processes vertices according to respective values in the second largest eigenvector of the adjacency matrix of induced subgraphs of the input graph corresponding to common neighbors of small cliques. Leveraging on the Single Label Clique Theorem from [16], we were able to construct random instances, without the need to externally plant a large clique in the input graph. In particular, we used label choices to determine the maximum clique and then concealed label information by just giving the adjacency matrix of $$G_{n, m, p}$$Gn,m,pas input to the algorithm. Our experimental evaluation showed that our spectral algorithm clearly outperforms existing polynomial time algorithms, both with respect to the failure probability and the approximation guarantee metrics, especially in the dense regime, thus suggesting that spectral properties of random intersection graphs may be also used to construct efficient algorithms for other NP-hard graph theoretical problems as well.}, booktitle={SOFSEM 2023: Theory and Practice of Computer Science}, publisher={Springer International Publishing}, author={Christodoulou, Filippos and Nikoletseas, Sotiris and Raptopoulos, Christoforos and Spirakis, Paul G.}, editor={Gąsieniec, Leszek}, year={2023}, pages={18–32}, language={en} }

@article{Brennan_Bresler_Nagaraj_2020, title={Phase transitions for detecting latent geometry in random graphs}, volume={178}, ISSN={1432-2064}, DOI={10.1007/s00440-020-00998-3}, abstractNote={Random graphs with latent geometric structure are popular models of social and biological networks, with applications ranging from network user profiling to circuit design. These graphs are also of purely theoretical interest within computer science, probability and statistics. A fundamental initial question regarding these models is: when are these random graphs affected by their latent geometry and when are they indistinguishable from simpler models without latent structure, such as the Erdős–Rényi graph $${mathcal {G}}(n, p)$$? We address this question for two of the most well-studied models of random graphs with latent geometry—the random intersection and random geometric graph. Our results are as follows: (a) The random intersection graph is defined by sampling n random sets $$S_1, S_2, ldots , S_n$$by including each element of a set of size d in each $$S_i$$independently with probability $$delta $$, and including the edge $${i, j}$$if $$S_i cap S_j ne emptyset $$. We prove that the random intersection graph converges in total variation to an Erdős–Rényi graph if $$d = {tilde{omega }}(n^3)$$, and does not if $$d = o(n^3)$$, for both dense and sparse edge densities p. This resolves an open problem in Fill et al. (Random Struct Algorithms 16(2):156–176, 2000), Rybarczyk (Random Struct Algorithms 38(1–2):205–234, 2011) and Kim et al. (Random Struct Algorithms 52(4):662–679, 2018). The same result was obtained simultaneously and independently by Bubeck et al. (When random intersection graphs lose geometry. Manuscript, 2019). (b) We strengthen the preceding argument to show that the matrix of random intersection sizes $$|S_i cap S_j|$$converges in total variation to a symmetric matrix with independent Poisson entries. This yields the first total variation convergence result for $$tau $$-random intersection graphs, where the edge $${i, j}$$is included if $$|S_i cap S_j| ge tau $$. More precisely, our results imply that, if p is bounded away from 1, then the $$tau $$-random intersection graph with edge density p converges to $${mathcal {G}}(n, p)$$if $$d = omega (tau ^3 n^3)$$. (c) The random geometric graph on $${mathbb {S}}^{d - 1}$$is defined by sampling $$X_1, X_2, dots , X_n$$uniformly at random from $${mathbb {S}}^{d - 1}$$and including the edge $${i, j}$$if $$Vert X_i - X_j Vert _2 le tau $$. A result of Bubeck et al. (Random Struct Algorithms 49:503–532, 2016) showed that this model converges to $${mathcal {G}}(n, p)$$in total variation, where p is chosen so that the models have matching edge densities, as long as $$d = omega (n^3)$$. It was conjectured in Bubeck et al. (2016) that this threshold decreases drastically for p small. We make the first progress towards this conjecture by showing convergence if $$d = {tilde{omega }}left( min { pn^3, p^2 n^{7/2} } right) $$. Our proofs are a hybrid of combinatorial arguments, direct couplings and applications of information inequalities. Previous upper bounds on the total variation distance between random graphs with latent geometry and $${mathcal {G}}(n, p)$$have typically not been both combinatorial and information-theoretic, while this interplay is essential to the sharpness of our bounds.}, number={3}, journal={Probability Theory and Related Fields}, author={Brennan, Matthew and Bresler, Guy and Nagaraj, Dheeraj}, year={2020}, month=dec, pages={1215–1289}, language={en} }

@inproceedings{Liu_Austern_2025, title={Perfect Recovery for Random Geometric Graph Matching with Shallow Graph Neural Networks}, ISSN={2640-3498}, url={https://proceedings.mlr.press/v258/liu25c.html}, abstractNote={We study the graph matching problem in the presence of vertex feature information using shallow graph neural networks. Specifically, given two graphs that are independent perturbations of a single random geometric graph with sparse binary features, the task is to recover an unknown one-to-one mapping between the vertices of the two graphs. We show under certain conditions on the sparsity and noise level of the feature vectors, a carefully designed two-layer graph neural network can, with high probability, recover the correct mapping between the vertices with the help of the graph structure. Additionally, we prove that our condition on the noise parameter is tight up to logarithmic factors. Finally, we compare the performance of the graph neural network to directly solving an assignment problem using the noisy vertex features and demonstrate that when the noise level is at least constant, this direct matching fails to achieve perfect recovery, whereas the graph neural network can tolerate noise levels growing as fast as a power of the size of the graph. Our theoretical findings are further supported by numerical studies as well as real-world data experiments.}, booktitle={Proceedings of The 28th International Conference on Artificial Intelligence and Statistics}, publisher={PMLR}, author={Liu, Suqi and Austern, Morgane}, year={2025}, month=apr, pages={910–918}, language={en} }

@inproceedings{Hopkins_Steurer_2017, title={Efficient Bayesian Estimation from Few Samples: Community Detection and Related Problems}, ISBN={978-1-5386-3464-6}, ISSN={0272-5428}, url={https://www.computer.org/csdl/proceedings-article/focs/2017/3464a379/12OmNxWLTst}, DOI={10.1109/FOCS.2017.42}, abstractNote={We propose an efficient meta-algorithm for Bayesian inference problems based on low-degree polynomials, semidefinite programming, and tensor decomposition. The algorithm is inspired by recent lower bound constructions for sum-of-squares and related to the method of moments. Our focus is on sample complexity bounds that are as tight as possible (up to additive lower-order terms) and often achieve statistical thresholds or conjectured computational thresholds. Our algorithm recovers the best known bounds for partial recovery in the stochastic block model, a widely-studied class of inference problems for community detection in graphs. We obtain the first partial recovery guarantees for the mixed-membership stochastic block model (Airoldi et el.) for constant average degree-up to what we conjecture to be the computational threshold for this model. We show that our algorithm exhibits a sharp computational threshold for the stochastic block model with multiple communities beyond the Kesten-Stigum bound-giving evidence that this task may require exponential time. The basic strategy of our algorithm is strikingly simple: we compute the best-possible low-degree approximation for the moments of the posterior distribution of the parameters and use a robust tensor decomposition algorithm to recover the parameters from these approximate posterior moments.}, publisher={IEEE Computer Society}, author={Hopkins, Samuel B. and Steurer, David}, year={2017}, month=oct, pages={379–390}, language={English} }

@inproceedings{Anandkumar_Ge_Hsu_Kakade_2013, title={A Tensor Spectral Approach to Learning Mixed Membership Community Models}, ISSN={1938-7228}, url={https://proceedings.mlr.press/v30/Anandkumar13.html}, abstractNote={Modeling community formation and detecting hidden communities in networks is a well studied problem. However, theoretical analysis of  community detection has been mostly limited to models with non-overlapping communities such as the stochastic block model. In this paper, we remove this restriction, and consider a family of probabilistic network models with overlapping communities, termed as the mixed membership Dirichlet model, first introduced   in Aioroldi et. al (2008). This model allows for nodes to have fractional memberships in multiple communities and assumes that the community memberships are drawn from a Dirichlet distribution. We propose a unified  approach to learning these models via a   tensor spectral decomposition method. Our estimator is based on  low-order  moment tensor of the observed network, consisting of  3-star counts. Our learning method is fast and is based on   simple linear algebra operations, e.g. singular value decomposition and tensor power iterations. We provide guaranteed recovery of community memberships and model parameters and present a careful finite sample analysis of our learning method. Additionally, our results  match the best known scaling requirements in the special case of the stochastic block model.}, booktitle={Proceedings of the 26th Annual Conference on Learning Theory}, publisher={PMLR}, author={Anandkumar, Animashree and Ge, Rong and Hsu, Daniel and Kakade, Sham}, year={2013}, month=june, pages={867–881}, language={en} }

@inproceedings{Ding_DOrsi_Nasser_Steurer_2022, title={Robust recovery for stochastic block models}, ISBN={978-1-6654-2055-6}, url={https://www.computer.org/csdl/proceedings-article/focs/2022/205500a387/1BtfG9j304g}, DOI={10.1109/FOCS52979.2021.00046}, abstractNote={We develop an efficient algorithm for weak recovery in a robust version of the stochastic block model. The algorithm matches the statistical guarantees of the best known algorithms for the vanilla version of the stochastic block model. In this sense, our results show that there is no price of robustness in the stochastic block model. Our work is heavily inspired by recent work of Banks, Mohanty, and Raghavendra (SODA 2021) that provided an efficient algorithm for the corresponding distinguishing problem. Our algorithm and its analysis significantly depart from previous ones for robust recovery. A key challenge is the peculiar optimization landscape underlying our algorithm: The planted partition may be far from optimal in the sense that completely unrelated solutions could achieve the same objective value. This phenomenon is related to the push-out effect at the BBP phase transition for PCA. To the best of our knowledge, our algorithm is the first to achieve robust recovery in the presense of such a push-out effect in a non-asymptotic setting. Our algorithm is an instantiation of a framework based on convex optimization (related to but distinct from sum-of-squares), which may be useful for other robust matrix estimation problems. A by-product of our analysis is a general technique that boosts the probability of success (over the randomness of the input) of an arbitrary robust weak-recovery algorithm from constant (or slowly vanishing) probability to exponentially high probability.}, publisher={IEEE Computer Society}, author={Ding, Jingqiu and D’Orsi, Tommaso and Nasser, Rajai and Steurer, David}, year={2022}, month=feb, pages={387–394}, language={English} }

@inproceedings{Mohanty_Raghavendra_Wu_2024, address={New York, NY, USA}, series={STOC 2024}, title={Robust Recovery for Stochastic Block Models, Simplified and Generalized}, ISBN={979-8-4007-0383-6}, url={https://dl.acm.org/doi/10.1145/3618260.3649761}, DOI={10.1145/3618260.3649761}, abstractNote={We study the problem of robust community recovery: efficiently recovering communities in sparse stochastic block models in the presence of adversarial corruptions. In the absence of adversarial corruptions, there are efficient algorithms when the signal-to-noise ratio exceeds the Kesten–Stigum (KS) threshold, widely believed to be the computational threshold for this problem. The question we study is: does the computational threshold for robust community recovery also lie at the KS threshold? We answer this question affirmatively, providing an algorithm for robust community recovery for arbitrary stochastic block models on any constant number of communities, generalizing the work of Ding, d’Orsi, Nasser &amp; Steurer on an efficient algorithm above the KS threshold in the case of 2-community block models.  There are three main ingredients to our work:   (1) The Bethe Hessian of the graph is defined as HG(t) ≜ (DGtildeI)t2 tilde AGt + I where DG is the diagonal matrix of degrees and AG is the adjacency matrix. Empirical work suggested that the Bethe Hessian for the stochastic block model has outlier eigenvectors corresponding to the communities right above the Kesten-Stigum threshold. We formally confirm the existence of outlier eigenvalues for the Bethe Hessian, by explicitly constructing outlier eigenvectors from the community vectors.  (2) We develop an algorithm for a variant of robust PCA on sparse matrices. Specifically, an algorithm to partially recover top eigenspaces from adversarially corrupted sparse matrices under mild delocalization constraints.   (3) A rounding algorithm to turn vector assignments of vertices into a community assignment, inspired by the algorithm of Charikar &amp; Wirth for 2XOR.}, booktitle={Proceedings of the 56th Annual ACM Symposium on Theory of Computing}, publisher={Association for Computing Machinery}, author={Mohanty, Sidhanth and Raghavendra, Prasad and Wu, David X.}, year={2024}, pages={367–374}, collection={STOC 2024} }

@article{Fleming_Kothari_Pitassi_2019, title={Semialgebraic Proofs and Efficient Algorithm Design}, volume={14}, ISSN={1551-305X, 1551-3068}, DOI={10.1561/0400000086}, abstractNote={Semialgebraic Proofs and Efficient Algorithm Design}, number={1–2}, journal={Foundations and Trends® in Theoretical Computer Science}, publisher={Now Publishers, Inc.}, author={Fleming, Noah and Kothari, Pravesh and Pitassi, Toniann}, year={2019}, month=dec, pages={1–221}, language={English} }

@article{Feige_Krauthgamer_2000, title={Finding and certifying a large hidden clique in a semirandom graph}, volume={16}, rights={Copyright 2000 John Wiley and Sons, Inc.}, ISSN={1098-2418}, number={2}, journal={Random Structures and Algorithms}, author={Feige, Uriel and Krauthgamer, Robert}, year={2000}, pages={195–208}, language={en} }

@article{Chen_Ding_Hua_Tiegel_2025, title={Improved Robust Estimation for Erdős-Rényi Graphs: The Sparse Regime and Optimal Breakdown Point}, url={http://arxiv.org/abs/2503.03923}, DOI={10.48550/arXiv.2503.03923}, abstractNote={We study the problem of robustly estimating the edge density of ErdH{o}s-R’enyi random graphs $G(n, d^circ/n)$ when an adversary can arbitrarily add or remove edges incident to an $eta$-fraction of the nodes. We develop the first polynomial-time algorithm for this problem that estimates $d^circ$ up to an additive error $O([sqrt{log(n) / n} + etasqrt{log(1/eta)} ] cdot sqrt{d^circ} + eta log(1/eta))$. Our error guarantee matches information-theoretic lower bounds up to factors of $log(1/eta)$. Moreover, our estimator works for all $d^circ geq Omega(1)$ and achieves optimal breakdown point $eta = 1/2$. Previous algorithms [AJK+22, CDHS24], including inefficient ones, incur significantly suboptimal errors. Furthermore, even admitting suboptimal error guarantees, only inefficient algorithms achieve optimal breakdown point. Our algorithm is based on the sum-of-squares (SoS) hierarchy. A key ingredient is to construct constant-degree SoS certificates for concentration of the number of edges incident to small sets in $G(n, d^circ/n)$. Crucially, we show that these certificates also exist in the sparse regime, when $d^circ = o(log n)$, a regime in which the performance of previous algorithms was significantly suboptimal.}, note={arXiv:2503.03923 [cs]}, number={arXiv:2503.03923}, publisher={arXiv}, author={Chen, Hongjie and Ding, Jingqiu and Hua, Yiding and Tiegel, Stefan}, year={2025}, month=mar }

@article{Le_2016, title={Estimating Community Structure in Networks by Spectral Methods.}, url={https://hdl.handle.net/2027.42/133258}, abstractNote={Networks are studied in a wide range of fields, including social psychology, sociology, physics, computer science, probability, and statistics. One of the fundamental problems in network analysis is the problem of estimating community structure. Most of existing methods rely on maximizing a criterion over the discrete set of community-label vectors. They require a good initial estimate of communities, which is often found by spectral clustering. Several problems arise from this approach: it has been empirically observed and theoretically predicted that spectral clustering fails when the network is sparse; solving an optimization problem over a discrete set of label vectors is a challenging task; and the number of communities is often unknown in practice. This dissertation contributes to progress on each of these problems. We study random graphs with possibly different edge probabilities in the challenging sparse regime of bounded expected degrees. Unlike in the dense case, neither the network adjacency matrix nor its Laplacian concentrate around their expectations due to the highly irregular distribution of node degrees. We prove that simple regularization procedures force the adjacency matrix and the Laplacian to concentrate. As an immediate consequence, we establish the validity of regularized spectral clustering for estimating community structure. We propose a general approach for maximizing a function of a network adjacency matrix over discrete labels by projecting the set of labels onto a subspace approximating the leading eigenvectors of the expected adjacency matrix. This projection onto a low-dimensional space makes the feasible set of labels much smaller and the optimization problem much easier. We prove a general result about this method and show how to apply it to several previously proposed community estimation criteria, establishing its consistency for label estimation in each case. We propose a simple spectral method for estimating the number of communities. We show that the method performs well under several models and a wide range of parameters, and is guaranteed to be consistent under several asymptotic regimes. We compare the new method to several existing methods for estimating the number of communities and show that it is both more accurate and more computationally efficient.}, author={Le, Can M.}, year={2016}, language={en_US} }

@article{Li_Schramm_2024, title={Spectral clustering in the Gaussian mixture block model}, url={http://arxiv.org/abs/2305.00979}, DOI={10.48550/arXiv.2305.00979}, abstractNote={Gaussian mixture block models are distributions over graphs that strive to model modern networks: to generate a graph from such a model, we associate each vertex $i$ with a latent feature vector $u_i in mathbb{R}^d$ sampled from a mixture of Gaussians, and we add edge $(i,j)$ if and only if the feature vectors are sufficiently similar, in that $langle u_i,u_j rangle ge tau$ for a pre-specified threshold $tau$. The different components of the Gaussian mixture represent the fact that there may be different types of nodes with different distributions over features -- for example, in a social network each component represents the different attributes of a distinct community. Natural algorithmic tasks associated with these networks are embedding (recovering the latent feature vectors) and clustering (grouping nodes by their mixture component). In this paper we initiate the study of clustering and embedding graphs sampled from high-dimensional Gaussian mixture block models, where the dimension of the latent feature vectors $dto infty$ as the size of the network $n to infty$. This high-dimensional setting is most appropriate in the context of modern networks, in which we think of the latent feature space as being high-dimensional. We analyze the performance of canonical spectral clustering and embedding algorithms for such graphs in the case of 2-component spherical Gaussian mixtures, and begin to sketch out the information-computation landscape for clustering and embedding in these models.}, note={arXiv:2305.00979 [stat]}, number={arXiv:2305.00979}, publisher={arXiv}, author={Li, Shuangping and Schramm, Tselil}, year={2024}, month=apr }

@inproceedings{Bangachev_Bresler_2024, address={New York, NY, USA}, series={STOC 2024}, title={On the Fourier Coefficients of High-Dimensional Random Geometric Graphs}, ISBN={979-8-4007-0383-6}, url={https://dl.acm.org/doi/10.1145/3618260.3649676}, DOI={10.1145/3618260.3649676}, abstractNote={The random geometric graph RGG(n,Sdtilde1,p) is formed by sampling n i.i.d. vectors {Vi}i = 1n uniformly on Sdtilde1 and placing an edge between pairs of vertices i and j for which ⟨ Vi,Vj⟩ ≥ τdp, where τdp is such that the expected density is p. We study the low-degree Fourier coefficients of the distribution RGG(n,Sdtilde1,p) and its Gaussian analogue. Our main conceptual contribution is a novel two-step strategy for bounding Fourier coefficients which we believe is more widely applicable to studying latent space distributions. First, we localize the dependence among edges to few fragile edges. Second, we partition the space of latent vector configurations (Sdtilde1)⊗ n based on the set of fragile edges and on each subset of configurations, we define a noise operator acting independently on edges not incident (in an appropriate sense) to fragile edges.   We apply the resulting bounds to: 1) Settle the low-degree polynomial complexity of distinguishing spherical and Gaussian random geometric graphs from Erdos-Renyi both in the case of observing a complete set of edges and in the non-adaptively chosen mask M model recently introduced by Mardia, Verchand, and Wein; 2) Exhibit a statistical-computational gap for distinguishing RGG and a planted coloring model in a regime when  RGG is distinguishable from ;  3) Reprove known bounds on the second eigenvalue of random geometric graphs.}, booktitle={Proceedings of the 56th Annual ACM Symposium on Theory of Computing}, publisher={Association for Computing Machinery}, author={Bangachev, Kiril and Bresler, Guy}, year={2024}, pages={549–560}, collection={STOC 2024} }

@inproceedings{Baguley_Goebel_Pappik_Schiller_2025, title={Testing Thresholds and Spectral Properties of High-Dimensional Random Toroidal Graphs via Edgeworth-Style Expansions}, ISSN={2640-3498}, url={https://proceedings.mlr.press/v291/baguley25a.html}, booktitle={Proceedings of Thirty Eighth Conference on Learning Theory}, publisher={PMLR}, author={Baguley, Samuel and Göbel, Andreas and Pappik, Marcus and Schiller, Leon}, year={2025}, month=july, pages={200–201}, language={en} }

@INPROCEEDINGS {kothariclustering,
author = { Bakshi, Ainesh and Diakonikolas, Ilias and Hopkins, Samuel B. and Kane, Daniel and Karmalkar, Sushrut and Kothari, Pravesh K. },
booktitle = { 2020 IEEE 61st Annual Symposium on Foundations of Computer Science (FOCS) },
title = {{ Outlier-Robust Clustering of Gaussians and Other Non-Spherical Mixtures }},
year = {2020},
volume = {},
ISSN = {},
pages = {149-159},
abstract = { We give the first outlier-robust efficient algorithm for clustering a mixture of k statistically separated d - dimensional Gaussians ( k-GMMs). Concretely, our algorithm takes input an eps-corrupted sample from a k-GMM and outputs an approximate clustering that misclassifies at most kO(k)(eps+η) fraction of the points whenever every pair of mixture components are separated by 1-exp(-poly(k/η)) in total variation distance. This is the statistically weakest possible notion of separation and allows, for e.g., clustering of mixtures with components with the same mean with covariances differing in a single unknown direction or separated in Frobenius distance. The running time of our algorithm is dpoly(k/η). Such results were not known prior to our work, even for k=2. More generally, our algorithms succeed for mixtures of any distribution that satisfies two well-studied analytic assumptions - sum-of-squares certifiable hypercontractivity and anti-concentration. As an immediate corollary, they extend to clustering mixtures of arbitrary affine transforms of the uniform distribution on the d-dimensional unit sphere. Even the information theoretic clusterability of separated distributions satisfying our analytic assumptions was not known and is likely to be of independent interest. Our algorithms build on the recent flurry of work relying on certifiable anti-concentration first introduced in [1], [2]. Our techniques expand the sum-of-squares toolkit to show robust certifiability of TV-separated Gaussian clusters in data. This involves giving a low-degree sum-of-squares proof of statements that relate parameter (i.e. mean and covariances) distance to total variation distance by relying only on hypercontractivity and anti-concentration. },
keywords = {Clustering algorithms;TV;Linear regression;Approximation algorithms;Transforms;Standards;Mixture models},
doi = {10.1109/FOCS46700.2020.00023},
url = {https://doi.ieeecomputersociety.org/10.1109/FOCS46700.2020.00023},
publisher = {IEEE Computer Society},
address = {Los Alamitos, CA, USA},
month =Nov}

@inproceedings{Moitra_Perry_Wein_2016, address={New York, NY, USA}, series={STOC ’16}, title={How robust are reconstruction thresholds for community detection?}, ISBN={978-1-4503-4132-5}, url={https://doi.org/10.1145/2897518.2897573}, DOI={10.1145/2897518.2897573}, abstractNote={The stochastic block model is one of the oldest and most ubiquitous models for studying clustering and community detection. In an exciting sequence of developments, motivated by deep but non-rigorous ideas from statistical physics, Decelle et al. conjectured a sharp threshold for when community detection is possible in the sparse regime. Mossel, Neeman and Sly and Massoulie proved the conjecture and gave matching algorithms and lower bounds.  Here we revisit the stochastic block model from the perspective of semirandom models where we allow an adversary to make `helpful’ changes that strengthen ties within each community and break ties between them. We show a surprising result that these `helpful’ changes can shift the information-theoretic threshold, making the community detection problem strictly harder. We complement this by showing that an algorithm based on semidefinite programming (which was known to get close to the threshold) continues to work in the semirandom model (even for partial recovery). This suggests that algorithms based on semidefinite programming are robust in ways that any algorithm meeting the information-theoretic threshold cannot be.  These results point to an interesting new direction: Can we find robust, semirandom analogues to some of the classical, average-case thresholds in statistics? We also explore this question in the broadcast tree model, and we show that the viewpoint of semirandom models can help explain why some algorithms are preferred to others in practice, in spite of the gaps in their statistical performance on random models.}, booktitle={Proceedings of the forty-eighth annual ACM symposium on Theory of Computing}, publisher={Association for Computing Machinery}, author={Moitra, Ankur and Perry, William and Wein, Alexander S.}, year={2016}, pages={828–841}, collection={STOC ’16} }

@inproceedings{Kothari_Vempala_Wein_Xu_2023, title={Is Planted Coloring Easier than Planted Clique?}, ISSN={2640-3498}, url={https://proceedings.mlr.press/v195/kothari23a.html}, abstractNote={We study the computational complexity of two related problems: recovering a planted q-coloring in G(n,1/2), and finding efficiently verifiable witnesses of non-q-colorability (a.k.a. refutations) in G(n,1/2). Our main results show hardness for both these problems in a restricted-but-powerful class of algorithms based on computing low-degree polynomials in the inputs.The problem of recovering a planted q-coloring is equivalent to recovering q disjoint planted cliques that cover all the vertices — a potentially easier variant of the well-studied planted clique problem. Our first result shows that this variant is as hard as the original planted clique problem in the low-degree polynomial model of computation: each clique needs to have size k >> sqrt(n) for efficient recovery to be possible. For the related variant where the cliques cover a (1-epsilon)-fraction of the vertices, we also show hardness by reduction from planted clique.Our second result shows that refuting q-colorability of G(n,1/2) is hard in the low-degree polynomial model when q >> n^{2/3} but easy when q << n^{1/2}, and we leave closing this gap for future work. Our proof is more subtle than similar results for planted clique and involves constructing a non-standard distribution over q-colorable graphs. We note that while related to several prior works, this is the first work that explicitly formulates refutation problems in the low-degree polynomial model.The proofs of our main results involve showing low-degree hardness of hypothesis testing between an appropriately constructed pair of distributions. For refutation, we show completeness of this approach: in the low-degree model, the refutation task is precisely as hard as the hardest associated testing problem, i.e., proving hardness of refutation amounts to finding a “hard” distribution.}, booktitle={Proceedings of Thirty Sixth Conference on Learning Theory}, publisher={PMLR}, author={Kothari, Pravesh and Vempala, Santosh S. and Wein, Alexander S. and Xu, Jeff}, year={2023}, month=july, pages={5343–5372}, language={en} }

@article{Barak_Hopkins_Kelner_Kothari_Moitra_Potechin_2019, title={A Nearly Tight Sum-of-Squares Lower Bound for the Planted Clique Problem}, volume={48}, ISSN={0097-5397}, DOI={10.1137/17M1138236}, abstractNote={In the first part of the paper we provide a new classification of incompressible fluids characterized by a continuous monotone relation between the velocity gradient and the Cauchy stress. The considered class includes Euler fluids, Navier--Stokes fluids, classical power-law fluids as well as stress power-law fluids, and their various generalizations including the fluids that we refer to as activated fluids, namely, fluids that behave as an Euler fluid prior activation and behave as a viscous fluid once activation takes place. We also present a classification concerning boundary conditions that are viewed as the constitutive relations on the boundary. In the second part of the paper, we develop a robust mathematical theory for activated Euler fluids associated with different types of the boundary conditions ranging from no-slip to free-slip and include Navier’s slip as well as stick-slip. Both steady and unsteady flows of such fluids in three-dimensional domains are analyzed.}, number={2}, journal={SIAM Journal on Computing}, publisher={Society for Industrial and Applied Mathematics}, author={Barak, Boaz and Hopkins, Samuel and Kelner, Jonathan and Kothari, Pravesh K. and Moitra, Ankur and Potechin, Aaron}, year={2019}, month=jan, pages={687–735} }

@incollection{nesterov2000squared,
  title={Squared functional systems and optimization problems},
  author={Nesterov, Yurii},
  booktitle={High performance optimization},
  pages={405--440},
  year={2000},
  publisher={Springer}
}

@book{parrilo2000structured,
  title={Structured semidefinite programs and semialgebraic geometry methods in robustness and optimization},
  author={Parrilo, Pablo A},
  year={2000},
  publisher={California Institute of Technology}
}

@incollection{lasserre2001new,
  title={New positive semidefinite relaxations for nonconvex quadratic programs},
  author={Lasserre, Jean B},
  booktitle={Advances in Convex Analysis and Global Optimization: Honoring the Memory of C. Caratheodory (1873--1950)},
  pages={319--331},
  year={2001},
  publisher={Springer}
}

@article{jerrum1992large,
  title={Large cliques elude the Metropolis process},
  author={Jerrum, Mark},
  journal={Random Structures \& Algorithms},
  volume={3},
  number={4},
  pages={347--359},
  year={1992},
  publisher={Wiley Online Library}
}

@article{kuvcera1995expected,
  title={Expected complexity of graph partitioning problems},
  author={Ku{\v{c}}era, Lud{\v{e}}k},
  journal={Discrete Applied Mathematics},
  volume={57},
  number={2-3},
  pages={193--212},
  year={1995},
  publisher={Elsevier}
}

@inproceedings{brennan2018reducibility,
  title={Reducibility and computational lower bounds for problems with planted sparse structure},
  author={Brennan, Matthew and Bresler, Guy and Huleihel, Wasim},
  booktitle={Conference On Learning Theory},
  pages={48--166},
  year={2018},
  organization={PMLR}
}

@article{feige2000finding,
  title={Finding and certifying a large hidden clique in a semirandom graph},
  author={Feige, Uriel and Krauthgamer, Robert},
  journal={Random Structures \& Algorithms},
  volume={16},
  number={2},
  pages={195--208},
  year={2000},
  publisher={Wiley Online Library}
}

@inproceedings{mcsherry2001spectral,
  title={Spectral partitioning of random graphs},
  author={McSherry, Frank},
  booktitle={Proceedings 42nd IEEE Symposium on Foundations of Computer Science},
  pages={529--537},
  year={2001},
  organization={IEEE}
}

@article{feige2010finding,
  title={Finding hidden cliques in linear time},
  author={Feige, Uriel and Ron, Dorit},
  journal={Discrete Mathematics \& Theoretical Computer Science},
  number={Proceedings},
  year={2010},
  publisher={Episciences. org}
}

@article{ames2011nuclear,
  title={Nuclear norm minimization for the planted clique and biclique problems},
  author={Ames, Brendan P. W. and Vavasis, Stephen A},
  journal={Mathematical programming},
  volume={129},
  number={1},
  pages={69--89},
  year={2011},
  publisher={Springer}
}

@article{dekel2014finding,
  title={Finding hidden cliques in linear time with high probability},
  author={Dekel, Yael and Gurel-Gurevich, Ori and Peres, Yuval},
  journal={Combinatorics, Probability and Computing},
  volume={23},
  number={1},
  pages={29--49},
  year={2014},
  publisher={Cambridge University Press}
}

@article{deshpande2015finding,
  title={Finding hidden cliques of size N/e in nearly linear time},
  author={Deshpande, Yash and Montanari, Andrea},
  journal={Foundations of Computational Mathematics},
  volume={15},
  number={4},
  pages={1069--1128},
  year={2015},
  publisher={Springer}
}

@article{chen2016statistical,
  title={Statistical-computational tradeoffs in planted problems and submatrix localization with a growing number of clusters and submatrices},
  author={Chen, Yudong and Xu, Jiaming},
  journal={Journal of Machine Learning Research},
  volume={17},
  number={27},
  pages={1--57},
  year={2016}
}

@inproceedings{meka2015sum,
  title={Sum-of-squares lower bounds for planted clique},
  author={Meka, Raghu and Potechin, Aaron and Wigderson, Avi},
  booktitle={Proceedings of the forty-seventh annual ACM symposium on Theory of computing},
  pages={87--96},
  year={2015}
}

@inproceedings{deshpande2015improved,
  title={Improved sum-of-squares lower bounds for hidden clique and hidden submatrix problems},
  author={Deshpande, Yash and Montanari, Andrea},
  booktitle={Conference on Learning Theory},
  pages={523--562},
  year={2015},
  organization={PMLR}
}

@article{hopkins2018integrality,
  title={On the integrality gap of degree-4 sum of squares for planted clique},
  author={Hopkins, Samuel B and Kothari, Pravesh and Potechin, Aaron Henry and Raghavendra, Prasad and Schramm, Tselil},
  journal={ACM Transactions on Algorithms (TALG)},
  volume={14},
  number={3},
  pages={1--31},
  year={2018},
  publisher={ACM New York, NY, USA}
}

@inproceedings{hajek2015computational,
  title={Computational lower bounds for community detection on random graphs},
  author={Hajek, Bruce and Wu, Yihong and Xu, Jiaming},
  booktitle={Conference on Learning Theory},
  pages={899--928},
  year={2015},
  organization={PMLR}
}

@inproceedings{kothari2023planted,
  title={Is planted coloring easier than planted clique?},
  author={Kothari, Pravesh and Vempala, Santosh S and Wein, Alexander S and Xu, Jeff},
  booktitle={The Thirty Sixth Annual Conference on Learning Theory},
  pages={5343--5372},
  year={2023},
  organization={PMLR}
}

@article{feldman2017statistical,
  title={Statistical algorithms and a lower bound for detecting planted cliques},
  author={Feldman, Vitaly and Grigorescu, Elena and Reyzin, Lev and Vempala, Santosh S and Xiao, Ying},
  journal={Journal of the ACM (JACM)},
  volume={64},
  number={2},
  pages={1--37},
  year={2017},
  publisher={ACM New York, NY, USA}
}

@article{lee2025fundamental,
  title={The Fundamental Limits of Recovering Planted Subgraphs},
  author={Lee, Daniel and Pernice, Francisco and Rajaraman, Amit and Zadik, Ilias},
  journal={arXiv preprint arXiv:2503.15723},
  year={2025}
}

@article{feige2001heuristics,
  title={Heuristics for semirandom graph problems},
  author={Feige, Uriel and Kilian, Joe},
  journal={Journal of Computer and System Sciences},
  volume={63},
  number={4},
  pages={639--671},
  year={2001},
  publisher={Elsevier}
}

@inproceedings{blasiok2024semirandom,
  title={Semirandom planted clique and the restricted isometry property},
  author={B{\l}asiok, Jaros{\l}aw and Buhai, Rares-Darius and Kothari, Pravesh K and Steurer, David},
  booktitle={2024 IEEE 65th Annual Symposium on Foundations of Computer Science (FOCS)},
  pages={959--969},
  year={2024},
  organization={IEEE}
}

@inproceedings{brennan2020reducibility,
  title={Reducibility and statistical-computational gaps from secret leakage},
  author={Brennan, Matthew and Bresler, Guy},
  booktitle={Conference on Learning Theory},
  pages={648--847},
  year={2020},
  organization={PMLR}
}

@inproceedings{bresler2024thresholds,
  title={Thresholds for reconstruction of random hypergraphs from graph projections},
  author={Bresler, Guy and Guo, Chenghao and Polyanskiy, Yury},
  booktitle={The Thirty Seventh Annual Conference on Learning Theory},
  pages={632--647},
  year={2024},
  organization={PMLR}
}

@article{bresler2025partial,
  title={Partial and Exact Recovery of a Random Hypergraph from its Graph Projection},
  author={Bresler, Guy and Guo, Chenghao and Polyanskiy, Yury and Yao, Andrew},
  journal={arXiv preprint arXiv:2502.14988},
  year={2025}
}

@article{Alon_Krivelevich_Sudakov_1998,
  title={Finding a large hidden clique in a random graph},
  author={Alon, Noga and Krivelevich, Michael and Sudakov, Benny},
  journal={Random Structures \& Algorithms},
  volume={13},
  number={3-4},
  pages={457--466},
  year={1998},
  publisher={Wiley Online Library}
}

@inproceedings{bresler2023algorithmic,
  title={Algorithmic decorrelation and planted clique in dependent random graphs: the case of extra triangles},
  author={Bresler, Guy and Guo, Chenghao and Polyanskiy, Yury},
  booktitle={2023 IEEE 64th Annual Symposium on Foundations of Computer Science (FOCS)},
  pages={2149--2158},
  year={2023},
  organization={IEEE}
}

@article{nikoletseas2021maximum,
  title={Maximum cliques in graphs with small intersection number and random intersection graphs},
  author={Nikoletseas, Sotiris E and Raptopoulos, Christoforos L and Spirakis, Paul G},
  journal={Computer Science Review},
  volume={39},
  pages={100353},
  year={2021},
  publisher={Elsevier}
}

@article{behrisch2009coloring,
  title={Coloring random intersection graphs and complex networks},
  author={Behrisch, Michael and Taraz, Anusch and Ueckerdt, Michael},
  journal={SIAM Journal on Discrete Mathematics},
  volume={23},
  number={1},
  pages={288--299},
  year={2009},
  publisher={SIAM}
}

@article{Feige_Grinberg_2024, title={How to Hide a Clique?}, volume={68}, ISSN={1433-0490}, DOI={10.1007/s00224-024-10167-x}, number={4}, journal={Theory of Computing Systems}, author={Feige, Uriel and Grinberg, Vadim}, year={2024}, month=aug, pages={773–813}, language={en} }

@article{karp1975computational,
  title={On the computational complexity of combinatorial problems},
  author={Karp, Richard M},
  journal={Networks},
  volume={5},
  number={1},
  pages={45--68},
  year={1975},
  publisher={Wiley Online Library}
}

@article{Behrisch_Taraz_2006, series={Complex Networks}, title={Efficiently covering complex networks with cliques of similar vertices}, volume={355}, ISSN={0304-3975}, DOI={10.1016/j.tcs.2005.12.005}, number={1}, journal={Theoretical Computer Science}, author={Behrisch, Michael and Taraz, Anusch}, year={2006}, month=apr, pages={37–47}, collection={Complex Networks} }

@article{hastad99, title={Clique is hard to approximate within n1tildeeps}, volume={182}, ISSN={0001-5962, 1871-2509}, DOI={10.1007/BF02392825}, abstractNote={Acta Mathematica}, number={1}, journal={Acta Mathematica}, publisher={Institut Mittag-Leffler}, author={H\o{a}stad, Johan}, year={1999}, month=jan, pages={105–142} }

@article{agahi2011holder,
  title={H{\"o}lder and Minkowski type inequalities for pseudo-integral},
  author={Agahi, Hamzeh and Ouyang, Yao and Mesiar, Radko and Pap, Endre and {\v{S}}trboja, Mirjana},
  journal={Applied Mathematics and Computation},
  volume={217},
  number={21},
  pages={8630--8639},
  year={2011},
  publisher={Elsevier}
}

@inproceedings{zuckerman2006linear,
  title={Linear degree extractors and the inapproximability of max clique and chromatic number},
  author={Zuckerman, David},
  booktitle={Proceedings of the thirty-eighth annual ACM symposium on Theory of computing},
  pages={681--690},
  year={2006}
}

@article{schramm2022computational,
  title={Computational barriers to estimation from low-degree polynomials},
  author={Schramm, Tselil and Wein, Alexander S},
  journal={The Annals of Statistics},
  volume={50},
  number={3},
  pages={1833--1858},
  year={2022},
  publisher={Institute of Mathematical Statistics}
}

@article{abbe2017community,
  title={Community detection and stochastic block models},
  author={Abbe, Emmanuel},
  journal={arXiv preprint arXiv:1703.10146},
  year={2017}
}

@inproceedings{bordenave2015non,
  title={Non-backtracking spectrum of random graphs: community detection and non-regular ramanujan graphs},
  author={Bordenave, Charles and Lelarge, Marc and Massouli{\'e}, Laurent},
  booktitle={2015 IEEE 56th Annual Symposium on Foundations of Computer Science},
  pages={1347--1357},
  year={2015},
  organization={IEEE}
}

@inproceedings{abbe2015community,
  title={Community detection in general stochastic block models: Fundamental limits and efficient algorithms for recovery},
  author={Abbe, Emmanuel and Sandon, Colin},
  booktitle={2015 IEEE 56th annual symposium on foundations of computer science},
  pages={670--688},
  year={2015},
  organization={IEEE}
}

@article{Mao02102021,
author = {Xueyu Mao and Purnamrita Sarkar and Deepayan Chakrabarti},
title = {Estimating Mixed Memberships With Sharp Eigenvector Deviations},
journal = {Journal of the American Statistical Association},
volume = {116},
number = {536},
pages = {1928--1940},
year = {2021},
publisher = {ASA Website},
doi = {10.1080/01621459.2020.1751645},
URL = {https://doi.org/10.1080/01621459.2020.1751645},
eprint = {https://doi.org/10.1080/01621459.2020.1751645}
}

@inproceedings{montanari2016semidefinite,
  title={Semidefinite programs on sparse random graphs and their application to community detection},
  author={Montanari, Andrea and Sen, Subhabrata},
  booktitle={Proceedings of the forty-eighth annual ACM symposium on Theory of Computing},
  pages={814--827},
  year={2016}
}

@article{guedon2016community,
  title={Community detection in sparse networks via Grothendieck’s inequality},
  author={Gu{\'e}don, Olivier and Vershynin, Roman},
  journal={Probability Theory and Related Fields},
  volume={165},
  number={3},
  pages={1025--1049},
  year={2016},
  publisher={Springer}
}

@article{10.1214/17-AOS1545,
author = {Arash A. Amini and Elizaveta Levina},
title = {{On semidefinite relaxations for the block model}},
volume = {46},
journal = {The Annals of Statistics},
number = {1},
publisher = {Institute of Mathematical Statistics},
pages = {149 -- 179},
keywords = {Community detection, network, semidefinite programming, Stochastic block model},
year = {2018},
doi = {10.1214/17-AOS1545},
URL = {https://doi.org/10.1214/17-AOS1545}
}

@article{coja2010graph,
  title={Graph partitioning via adaptive spectral techniques},
  author={Coja-Oghlan, Amin},
  journal={Combinatorics, Probability and Computing},
  volume={19},
  number={2},
  pages={227--284},
  year={2010},
  publisher={Cambridge University Press}
}

@article{yun2016optimal,
  title={Optimal cluster recovery in the labeled stochastic block model},
  author={Yun, Se-Young and Proutiere, Alexandre},
  journal={Advances in Neural Information Processing Systems},
  volume={29},
  year={2016}
}

@article{mossel2015reconstruction,
  title={Reconstruction and estimation in the planted partition model},
  author={Mossel, Elchanan and Neeman, Joe and Sly, Allan},
  journal={Probability Theory and Related Fields},
  volume={162},
  number={3},
  pages={431--461},
  year={2015},
  publisher={Springer}
}

@article{krzakala2013spectral,
  title={Spectral redemption in clustering sparse networks},
  author={Krzakala, Florent and Moore, Cristopher and Mossel, Elchanan and Neeman, Joe and Sly, Allan and Zdeborov{\'a}, Lenka and Zhang, Pan},
  journal={Proceedings of the National Academy of Sciences},
  volume={110},
  number={52},
  pages={20935--20940},
  year={2013},
  publisher={National Academy of Sciences}
}

\appendix
\section{Lower bound on the spectral norm of the centered adjacency matrix}\label{sec:spectrallowerbound}

\renewcommand{\matA}{\ensuremath\mathbf{A}}
\newcommand{\vecx}{\ensuremath\mathbf{x}}
\newcommand{\vecy}{\ensuremath\mathbf{y}}
\renewcommand{\ind}{\ensuremath{\mathds{1}}}
\newcommand{\HighDegV}{\ensuremath{L}}

In this section we prove the lower-bound on the spectral norm of the centered adjacency matrix of a random intersection graph. 
For this, we will use the Berry--Esseen theorem (see e.g. \cite{klenke2008probability}).

\begin{theorem}[Berry--Esseen]\label{thm:BE}
Let, for $n\in\mathbb N$, $X_1,X_2\dots,X_n$ be i.i.d. random variables with $\Expectednop{X_1}=0$, $\Expectednop{X_1^2}=\sigma^2\in(0,\infty)$ and $\Expectednop{X_1^3}<\infty$. Let $X=\frac{1}{\sqrt{n\sigma^2}}(X_1+\dots+X_n)$ and let $\Phi$ be the distribution function of the standard normal distribution. Then,
\[
\sup\left|\Pr{S_n\leq x}-\Phi(x)\right|\leq\frac{0.8\Expectednop{X_1^3}}{\sigma^3\sqrt n}.
\]
\end{theorem}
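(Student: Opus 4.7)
The plan is to give the classical Fourier-analytic proof of Berry--Esseen, combining Esseen's smoothing inequality with a third-order Taylor expansion of the characteristic function of $X_1$. This is the standard route that yields an explicit absolute constant in one shot; sharpening it to $0.8$ requires some additional optimization on top.

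The first step is to establish Esseen's smoothing lemma: for any distribution function $F$ with characteristic function $\varphi_F$ and any $T>0$,
\[
\sup_x |F(x) - \Phi(x)| \;\le\; \frac{1}{\pi}\int_{-T}^{T} \frac{|\varphi_F(t) - e^{-t^2/2}|}{|t|}\,dt + \frac{c_1}{T},
\]
for an explicit constant $c_1$. The proof goes via convolution of $F-\Phi$ with a Fej\'er-type kernel whose Fourier transform has compact support in $[-T,T]$, which converts pointwise closeness of the two distributions into integrated closeness of their characteristic functions, and uses $\sup_x \Phi'(x) = 1/\sqrt{2\pi}$ to bound the residual smoothing error.

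Next, I apply this with $F = F_n$, the distribution function of $S_n = (X_1+\dots+X_n)/\sqrt{n\sigma^2}$. Writing $\varphi$ for the common characteristic function of the $X_i$, one has $\varphi_{F_n}(t) = \varphi\bigl(t/\sqrt{n\sigma^2}\bigr)^{n}$. A third-order Taylor expansion using $\E X_1 = 0$, $\E X_1^2 = \sigma^2$, and $\rho_3 := \E|X_1|^3 < \infty$ yields $\varphi(u) = 1 - \sigma^2 u^2/2 + R(u)$ with $|R(u)| \le \rho_3 |u|^3 / 6$. From this I would derive, for $|t| \le T := c_2\,\sigma^3\sqrt{n}/\rho_3$ with $c_2$ a small absolute constant ensuring the argument lies in the region where $\log\varphi$ is analytic and controlled, the estimate
\[
\bigl|\varphi_{F_n}(t) - e^{-t^2/2}\bigr| \;\le\; \frac{C\,\rho_3\,|t|^3}{\sigma^3\sqrt{n}}\,e^{-t^2/4}.
\]
Plugging this into the smoothing lemma, the integral evaluates to $O(\rho_3/(\sigma^3\sqrt{n}))$ because $\int t^2 e^{-t^2/4}\,dt = O(1)$, and the tail term $c_1/T$ is of the same order by our choice of $T$, giving the desired bound of order $\rho_3/(\sigma^3\sqrt{n})$.

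The main obstacle is \emph{not} the order of the bound, which falls out of the strategy above essentially mechanically once the smoothing inequality and Taylor estimate are in place, but optimizing the absolute constant to reach $0.8$. This requires sharper quantitative work: using the stronger characteristic-function bound $|\varphi(u)| \le \exp(-\sigma^2 u^2/3)$ in the relevant range rather than a crude estimate, a carefully chosen smoothing kernel (the one used originally by Esseen), and a tight accounting of all multiplicative constants when exponentiating the Taylor expansion. The value $0.8$ is due to van Beek and Zolotarev; still sharper constants (down to $\approx 0.4748$ by Shevtsova) have subsequently been obtained by refinements of the same method, and any of those proofs would suffice for the bound stated here.
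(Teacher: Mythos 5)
The paper does not prove this statement at all: it is quoted as a classical black-box result, cited to a standard probability textbook (Klenke), and the constant $0.8$ is simply imported from the literature (van Beek's refinement of Esseen's argument). Your sketch is the standard Fourier-analytic route --- Esseen's smoothing inequality plus a third-order Taylor expansion of the characteristic function --- and it is sound in outline: the smoothing lemma, the expansion $\varphi(u)=1-\sigma^2u^2/2+R(u)$ with $|R(u)|\le \E|X_1|^3|u|^3/6$, the choice of cutoff $T\asymp \sigma^3\sqrt{n}/\E|X_1|^3$, and the resulting bound of order $\E|X_1|^3/(\sigma^3\sqrt{n})$ are exactly how the classical proof goes. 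The one caveat, which you yourself flag, is that this argument as sketched only yields the bound with some unspecified absolute constant; the specific value $0.8$ is not derived but delegated to van Beek/Zolotarev (and Shevtsova's later $\approx 0.4748$). Since the paper likewise treats the constant as a citation rather than something to be proved, that is an acceptable resolution here, but be aware that a self-contained derivation of $0.8$ would require the substantially more delicate kernel choice and constant-tracking you allude to, not just the generic smoothing argument. Minor point in your favor: you correctly use the absolute third moment $\E|X_1|^3$, which is what the theorem actually requires and what the paper's application computes, even though the paper's statement sloppily writes $\E[X_1^3]$ (and also writes $\Pr(S_n\le x)$ where it defines $X$).
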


\SpectralLB*
\begin{proof}
To prove the lemma we show that $\matA$ has sufficiently many rows that have sufficiently many ``$+1$'' entries. In order to do that, we first calculate the probability that, for each $v$, we have sufficiently many elements in $M_v$. For $i\in[d]$ define $Y_i=\ind(i\in M_v)-\delta$, and observe that $\Expectednop{Y_i^2}=\delta(1-\delta)$ and $\Expectednop{|Y_i|^3}=\delta(1-\delta)(\delta^2+(1-\delta)^2)$. Now define 
\[X=\frac{1}{\sqrt{d\delta(1-\delta)}}\sum_{i\in[d]}Y_i=\frac{1}{\sqrt{d\delta(1-\delta)}}(|M_v|-\delta d).\] 
We can apply \cref{thm:BE} on $X$ and obtain that,

\begin{equation}\label{eq:BE-application}
\sup\left|\Pr{X\leq x}-\Phi(x)\right|\leq\frac{0.8\delta(1-\delta)(\delta^2+(1-\delta)^2)}{(\delta(1-\delta))^{3/2}\sqrt d}=\frac{0.8(\delta^2+(1-\delta)^2)}{\sqrt{d\delta(1-\delta)}}\in o(1),
\end{equation}
since $\delta\in O(d^{-1/2})$ for our choice of parameters and \Cref{lem:sizeofdelta}.

We now have that,
\[
\Pr{|M_v|\geq d\delta+\sqrt{d\delta}}=\Pr{X\geq\frac{1}{\sqrt{1-\delta}}}\geq 1-\Phi\left(\frac1{\sqrt{1-\delta}}\right)-o(1),
\]
where the last inequality follows from \eqref{eq:BE-application}. Observe that $\Phi(\frac1{\sqrt{1-\delta}})$ can be bounded above by a constant, yielding that, for all $v\in V$, $|M_v|\geq d\delta+\sqrt{d\delta}$ with constant probability. 

Let $\HighDegV$ be the set of vertices $v$ such that $|M_v|\geq d\delta+\sqrt{d\delta}$. Since each vertex has at least constant probability of being in $\HighDegV$, using Chernoff's bound on the set of vertices shows that w.h.p. over the draw of $G\sim RIG(n,d,1/2,0)$, $|\HighDegV|=\Theta(n)$. 

Next, we bound the number of neighbors of any $u \in \HighDegV$ into $\HighDegV$. To this end, we fix a vertex $u \in \HighDegV$ and consider a vertex $v \in \HighDegV \setminus \{u\}$. Then, revealing the labels of $v$ we obtain
$$
    \Pr{u \sim v \mid v \in \HighDegV} \ge \Pr{u \sim v} \geq 1 - (1-\delta)^{d\delta + \sqrt{d\delta}} \geq 1 - e^{-(d\delta^2 + \delta \sqrt{d \delta})}.
$$
Using that $\exp(\delta\sqrt{d\delta})=1+\Theta(\delta\sqrt{d\delta})$ in conjunction with $\delta = \sqrt{1 - 2^{1/d}}$ (for our choices $p = 1/2$ and $q=0$) yields that for any $v\in [n] \setminus \{u\}$, $$\Pr{u\sim v}=1/2+\Omega(\delta\sqrt{d\delta}).$$
Using a standard Chernoff bound in combination with a union bound, it now follows that w.h.p. for all $u \in \HighDegV$ it holds that $u$ has $(1/2 + \Omega(\delta\sqrt{d\delta}))|\HighDegV|$ neighbors in $L$.
 
 

Now, let $\vecy$ be the vector assigning 
\[
y_i=\begin{cases} |\HighDegV|^{-1/2} &\textrm{ if } i\in \HighDegV \\ 0 & \textrm{otherwise.} \end{cases}
\]
Clearly $\|\vecy\|=1$. For $i,j\in[n]$, let $A_{i,j}$ be the $i,j$-th entry of $\matA$ and let $r_i=\sum_{j\in \HighDegV }A_{i,j}$. As shown in the previous paragraph, we have $r_i = \Theta(n\delta\sqrt{d\delta})$ for all $i\in \HighDegV$. With this, we can bound 
\begin{align*}
\|\matA\| \geq \vecy^\top \matA\vecy \ge
\frac{1}{|\HighDegV|} \sum_{i \in \HighDegV} r_i = \Theta( n \delta \sqrt{\delta d}  ) = \Theta(  n  d^{-1/4} ),
\end{align*}
which yields the statement.
\end{proof}

\begin{remark}
    Note that the above proof equally applies to a model of planting equal-sized cliques: instead of each vertex $v$ drawing elements in $M_v$ from $[d]$ independently and with probability $\delta$, we can also consider the model where each element in $[d]$ is assigned to $k$ vertices of $G$ uniformly at random. Then, the application of the Berry--Esseen theorem holds also in this setting, and the rest of the proof follows with minor modifications.
\end{remark}

\section{Hardness for $k \ll \sqrt{n}$}\label{sec:hardness}

In this section, we provide a short proof that exact recovery in $\RIGone$ for $p, q = \Omega(1)$ is hard whenever $k \ll \sqrt{n}$, assuming the planted clique hypothesis. Our proof uses a simple average-case reduction, similar to the proof of Theorem 2.7 in \cite{kothari2023planted}

\begin{definition}[Planted Clique]\label{def:planted-clique}
For $p = \Theta(1)$, let $G' \sim \Gnp$, choose any subset of vertices $K \subseteq V(G')$ of size $|K| = k$ and let $G=(V(G'), E(G') \cup \binom{K}{2})$. Then the \emph{planted clique} problem is to recover $K$ from $G$.    
\end{definition}

\begin{conjecture}[Planted Clique Hardness]\label{con:planted-clique-hardness} Let $\eps > 0$ be any constant. Then for $k \leq n^{1/2 - \eps}$ the planted clique problem~(\Cref{def:planted-clique}) cannot be solved in polynomial time.
\end{conjecture}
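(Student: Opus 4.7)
The statement is the Planted Clique Hardness Conjecture itself, which is a longstanding open problem in average-case complexity. Any honest proof proposal must acknowledge upfront that a direct, unconditional proof is widely believed to be out of reach of current techniques, since such a proof would imply unconditional lower bounds against polynomial-time algorithms on a natural distributional problem --- something that would represent a major complexity-theoretic breakthrough. The plan I would pursue, therefore, is not to attempt an unconditional proof in one go, but to outline a program that either reduces the conjecture to a more tractable hardness assumption or, failing that, establishes the conjecture against increasingly broad classes of algorithms until essentially all ``reasonable'' polynomial-time procedures have been ruled out.

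The first step would be to attempt an average-case reduction \emph{to} planted clique from a problem whose worst-case hardness is better understood, e.g.\ refuting random $k$-SAT, approximating max-clique in the worst case, or a suitably randomized version of a cryptographic primitive (such as LWE-style assumptions). The ideal target would be a reduction of the form: if there is a polynomial-time recovery algorithm for $\Gnp$ with a planted $k$-clique for $k \le n^{1/2-\varepsilon}$, then one can efficiently solve an instance of a problem believed to be worst-case hard. This is the cleanest route to transferring worst-case hardness to an average-case statement; I would begin by exploring whether known worst-case-to-average-case reduction frameworks (e.g.\ in the style of Ajtai or Regev) can be adapted to produce distributions indistinguishable from $\Gnp$ with an independent clique plant, rather than a more structured planted object.

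The second prong would be to strengthen and unify the existing body of unconditional lower bounds. Concretely, I would try to match the $n^{1/2}$ barrier against a model of computation that is rich enough to capture ``all natural polynomial-time algorithms,'' then argue that a successful algorithm must lie outside that model. Candidate models to target include: degree-$n^{\Omega(1)}$ sum-of-squares relaxations (building on \cite{Barak_Hopkins_Kelner_Kothari_Moitra_Potechin_2019}), the full hierarchy of low-degree polynomial tests, statistical query algorithms, and message-passing / local algorithms, ideally simultaneously via a meta-theorem relating them. A plausible unification is to show that any algorithm whose output distribution depends smoothly enough on the input graph (in an appropriate norm) must fail below the $\sqrt{n}$ threshold; this would be the analog, in the average-case setting, of natural-proofs-style barriers.

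The main obstacle, and the reason this program has resisted decades of effort, is exactly the same obstacle that blocks unconditional circuit lower bounds: we have no known technique for ruling out \emph{arbitrary} polynomial-time algorithms on a distribution as simple as $\Gnp$ with a clique planted. All known lower bounds are either algorithm-specific or rely on reductions from assumptions of comparable strength (and often of less standing). Consequently, I expect that a complete proof will require either (i) a genuinely new average-case-to-worst-case reduction, whose form I do not currently see, or (ii) a substantial advance in unconditional lower-bound technology that works uniformly across computational models; short of that, the most one can realistically deliver is a conditional proof from an assumption at least as strong as the conjecture itself, which does not constitute progress on the statement as worded.
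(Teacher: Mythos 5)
You are right not to attempt a proof: this statement is the standard planted clique hypothesis, and the paper itself does not prove it --- it is stated as a \emph{conjecture} and used only as an assumption in the conditional lower bound (the reduction in \Cref{thm:hardness}). Your acknowledgment that the conjecture is open, together with the (necessarily speculative) survey of known restricted lower bounds and possible reduction routes, is the appropriate response and is consistent with how the paper treats the statement.
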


\begin{theorem}[Lower bound]\label{thm:hardness}
 Assume planted clique hardness \Cref{con:planted-clique-hardness} and let $k = \delta n = n^{1/2 - \Omega(1)}$, $q = \Theta(1)$. Then there is no polynomial time algorithm that given a graph $G \sim \RIGone$, outputs a set of $d$ cliques that agrees with $\{S_\ell\}_{\ell \in [d]}$ w.h.p. 
\end{theorem}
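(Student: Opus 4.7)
The plan is to show \Cref{thm:hardness} via an average-case reduction from planted clique to RIG recovery, in the spirit of Theorem~2.7 in~\cite{kothari2023planted}. The key idea is to embed a planted clique instance as one of the $d$ ground-truth cliques $S_\ell$ of a RIG sample, by sampling the remaining $d-1$ labels from fresh, independent randomness.

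Concretely, given a planted clique instance $G' \sim G(n,q)$ with a planted clique $K$ of size $k = n^{1/2 - \Omega(1)}$, I would construct $G$ as follows. First, for each $v \in [n]$, independently sample a set $M_v \subseteq \{2, \ldots, d\}$ by including each label with probability $\delta = \delta(d, p, q)$ (cf.~\Cref{lem:sizeofdelta}). Second, take $G$ to be the union of $G'$ and the set of edges $\{\{u,v\} : M_u \cap M_v \neq \emptyset\}$. The main distributional claim I would verify is that $G$ is distributed as $\RIGone$ conditioned on $S_1 = K$: edges inside $K$ are present in both distributions (forced by the plant, or by the shared label $1$); and for every pair $(u,v)$ not entirely inside $K$, the edge is present iff it either appears in $G'$ (an independent $\Bernoulli(q)$ event, matching RIG noise) or is forced by a shared label in $\{2, \ldots, d\}$ (whose joint distribution is exactly that of the remaining RIG labels). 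This matching holds conditional on the simulated labels, hence also unconditionally.

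Given this coupling, any polynomial-time RIG recovery algorithm $\mathcal{A}$ applied to $G$ would output, with high probability, a list of $d$ cliques that includes $K$. To extract $K$ from this list, I would identify the unique candidate that is itself a clique in the original input $G'$: for any $\ell \in \{2, \ldots, d\}$, the set $S_\ell$ is essentially independent of $G'$, so the probability that $G'[S_\ell]$ is a clique is at most $q^{\binom{(1-o(1))k}{2}} = n^{-\omega(1)}$, and a union bound over $d = \mathrm{poly}(n)$ rules out all spurious candidates. For the parameter constraints $p-q = \Theta(1)$ and $d = n^{\Omega(1)}$, I would use \Cref{lem:sizeofdelta} to check that setting $d = \Theta(n^{1+2\varepsilon})$ when $k = n^{1/2 - \varepsilon}$ is compatible with $\delta = \Theta(k/n)$ and $p - q = \Theta(1)$ for appropriate constants $p, q \in (0,1)$.

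The main obstacle is a mild distributional mismatch: in planted clique the size $|K|$ is fixed at $k$, whereas in $\RIGone$ the size $|S_1|$ is $\Bin(n,\delta)$ and hence $\Pr{|S_1| = k} = \Theta(1/\sqrt{k})$. The reduction therefore produces samples from $\RIGone$ conditioned on $|S_1| = k$, so the conditional failure probability of $\mathcal{A}$ is an $O(\sqrt{k})$-factor worse than its unconditional failure probability. I would handle this either by first amplifying $\mathcal{A}$ to inverse-polynomial failure probability via standard independent repetition and majority checking, or by invoking the fact that \Cref{con:planted-clique-hardness} is robust to randomizing the clique size (e.g., to $|K| \sim \Bin(n, \delta)$), in which case the reduction produces exactly $\RIGone$ without any conditioning.
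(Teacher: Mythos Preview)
Your proposal is correct and follows essentially the same reduction as the paper: embed the planted-clique instance as one of the $d$ ground-truth cliques by sampling the remaining $d-1$ labels freshly, run the hypothetical RIG recovery algorithm, and identify the planted clique as the unique returned set that is already a clique in the input graph. The only notable difference is how the random-vs-fixed clique size is handled: the paper plants a $\Bin(n,\delta)$-sized clique directly into $G(n,q)$ (so the resulting graph is exactly $\RIGone$ with no conditioning) and implicitly invokes that recovering a sub-$n^{1/2-\Omega(1)}$ clique is hard even when its size is random, whereas you start from the standard fixed-$k$ formulation and explicitly flag the resulting $O(\sqrt{k})$ conditioning factor, proposing amplification or the same robustness assumption to absorb it. Both routes are fine; yours is arguably more careful about this point.
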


\begin{proof}
   \textbf{Planting the clique.} Consider a graph $G' \sim \Gnp{q}$ and initialize $K_1 = \emptyset$. Then for each vertex $v \in V(G')$ flip a coin such that with probability $\delta$, $v$ is included in $K_1$. Notice that by assumption $\E{|K_1|} = \delta n \leq n^{1/2 - \Omega(1)}$ and since $|K_1|$ is a binomial random variable it follows by a Chernoff bound that $|K_1| \leq n^{1/2 - \Omega(1)}$ with probability $1-o(1)$. Thus, "planting" $K_1$ as a clique into $G'$ we obtain $G_1=(V(G'), E(G') \cup \binom{K_1}{2})$ and thus, assuming \Cref{con:planted-clique-hardness}, there is no polynomial time algorithm to recover $K_1$ from $G_1$ w.h.p. since $q$ is non-vanishing.

   \textbf{Algorithm.} Next, assume there is a polynomial time algorithm \textsc{Alg} that, given a graph $G \sim \RIGone$, outputs a set of $d$ cliques that agrees with $\{S_\ell\}_{\ell \in [d]}$ w.h.p. We show that this would contradict the planted clique conjecture~\Cref{con:planted-clique-hardness} as we could recover $K_1$ from $G_1$ w.h.p. by which our desired statement is proven.

   \textbf{Reduction.} To this end, recalling that $G_1$ is the graph $G' \sim \Gnp{q}$ with the planted clique $K_1$, we add edges to $G_1$ such that the resulting graph follows the distribution of $G\sim\RIGone$. We achieve this as follows: For all $\ell \in [2\ldots d]$, initialize $K_\ell = \emptyset$ and for rounds $\ell = 1, \ldots, d-1$, we flip in each round $\ell$ a coin for each vertex $v \in V(G')$. With probability $\delta$ include $v$ in $K_{\ell+1}$ and at the end of each round $\ell$, "plant" clique $K_{\ell+1}$ into $G_{\ell}$, i.e., $G_{\ell+1} := (V(G_{\ell}), E(G_{\ell}) \cup \binom{K_{\ell+1}}{2})$. The resulting graph $G_d$ after $d-1$ rounds is $G_1$ with additional $d-1$ cliques. Notice that the this procedure requires $O(d\cdot n)$ time and thus, if \textsc{Alg} solves planted clique on $G_d$ in polynomial time, we have a polynomial time algorithm for planted clique on $G_1$.

   \textbf{Solving planted clique via \textsc{Alg}.} Note that after the reduction outlined in the paragraph above, we obtain a graph $G_{d} \sim \RIGone$ where each clique $K_\ell$ represents the clique with label $\ell \in d$ and thus, $G_{d} \stackrel{d}{=} G \sim \RIGone$. Moreover, we observe that the initial planted clique $K_1$ in $G_1$ is also included in the set $\{S_\ell\}_{\ell \in [d]}$ of $G_d$. Thus, if \textsc{Alg} outputs a set of $d$ cliques that agrees with $\{S_\ell\}_{\ell \in [d]}$ w.h.p., it also outputs $K_1$ w.h.p. We then exploit this as follows: Fix any $\ell \in d$ and check if clique $K_\ell$ is included in $G_1$. We can do this in time $|E(K_\ell)| = O(n^2)$ by checking if all edges of $K_\ell$ exist in $G_1$. Moreover, since $d \in o(n)$ we can check for all $\ell \leq d$ cliques if $K_\ell$ is included in $G_1$ in time $O(d \cdot n^2) = o(n^{5/2})$. If the answer is "yes" then $G_1$ contains w.h.p. a planted clique, i.e., the clique $K_1$. Otherwise, the answer is "no" and $G_1 = G'$ as no clique was planted. Since we recover $K_1$ in $G_d$ in polynomial time via \textsc{Alg}, this finishes the proof. 
\end{proof}

\section{RIGs with small $p$}\label{sec:I-have-a-small-p}

\subsection{Coupling sparse and dense RIGs}\label{sec:coupling}

The following statement tells us that any $G \sim \RIGone$ can be coupled with a $G' \sim \text{RIG}(n, d, p', q')$ for any $p' > p$ and some $q' > q$ which is a function of $p, p', q$ in such a way that $G \subseteq G'$ while every ground-truth clique $S_\ell$ is identical in $G, G'$.

The idea is to simply ramp up the noise-levels in $G'$ as compared to $G$ while keeping $\delta$ intact. This is captured in the following lemma.
\begin{lemma}\label{lem:coupling}
    Consider $G \sim \RIGone$ and $G' \sim RIG(n,d, p', q')$ with $p < p', q \in [0,p)$ and $q' = 1 - \frac{(1-p')(1-q)}{1-p}$. Then $\delta(d,p,q) = \delta(d,p',q')$ and $q < q'$.
\end{lemma}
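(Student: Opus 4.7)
The plan is to reduce the lemma to a single algebraic identity obtained by solving the defining relation for $\delta$. Recall from \eqref{eq:delta} in the proof of \Cref{lem:sizeofdelta} that the probability $\delta$ is implicitly defined by $p = 1 - (1-q)(1-\delta^2)^d$. Solving this for $\delta^2$ gives
\[
\delta(d,p,q)^2 \;=\; 1 - \left(\tfrac{1-p}{1-q}\right)^{1/d},
\]
so $\delta$ depends on the pair $(p,q)$ only through the ratio $(1-p)/(1-q)$. The first claim therefore reduces to checking that $\frac{1-p}{1-q} = \frac{1-p'}{1-q'}$ for the chosen $q'$.

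To see this, I would simply substitute the given definition $q' = 1 - \frac{(1-p')(1-q)}{1-p}$, which yields $1-q' = \frac{(1-p')(1-q)}{1-p}$, and hence $\frac{1-p'}{1-q'} = \frac{1-p}{1-q}$ by direct computation. This establishes $\delta(d,p,q) = \delta(d,p',q')$.

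For the second claim $q < q'$, I would observe that the expression $1 - q' = (1-q)\cdot\frac{1-p'}{1-p}$ is a product of two positive factors, and since $p < p'$ (together with $p' < 1$) forces $0 < \frac{1-p'}{1-p} < 1$, we get $1-q' < 1-q$, i.e., $q' > q$. Note also that $q' \geq 0$ since the right-hand side is at most $1$, so the parameter $q'$ is a valid noise level, which is implicitly required for $G'$ to be well-defined.

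There is no real obstacle here: the lemma is a clean algebraic consequence of the closed form for $\delta$, and both claims fall out immediately from manipulating the ratio $(1-p)/(1-q)$. The substance of the coupling argument (namely, that one can realize $G \subseteq G'$ with matching ground-truth cliques) is carried out separately in \Cref{lem:couling-subset}; the present lemma only provides the consistent choice of parameters that makes such a coupling possible.
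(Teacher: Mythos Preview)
Your proposal is correct and follows essentially the same approach as the paper: both arguments observe that the closed form for $\delta$ in \eqref{eq:delta} depends on $(p,q)$ only through the ratio $(1-p)/(1-q)$, verify that this ratio is preserved under the given choice of $q'$, and then derive $q<q'$ from $1-p' < 1-p$. Your presentation is slightly more streamlined (you work directly with $(1-p)/(1-q)$ rather than carrying the exponential/logarithm form), but the content is identical.
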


\begin{proof}
    To show that $\delta(d,p,q) = \delta(d,p',q')$ we get in similar fashion to \eqref{eq:delta}
    \begin{align*}
        \delta(d,p',q') = \sqrt{1 - \exp\left( -\log\left(\frac{1-q'}{1-p'}\right)\frac{1}{d} \right)}.
    \end{align*}
    By our choice of $q'$ this yields

        \begin{align*}
        \delta(d,p',q') = \sqrt{1 - \exp\left( -\log\left(\frac{(1-p')(1-q)}{(1-p')(1-p)}\right)\frac{1}{d} \right)}.
    \end{align*}
    Then, again invoking \eqref{eq:delta}, it follows that
    \begin{align*}
         \delta(d,p',q') = \sqrt{1 - \exp\left( -\log\left(\frac{1-q}{1-p}\right)\frac{1}{d} \right)} =  \delta(d,p,q),
    \end{align*}
    as desired.
    Next, to see that $q < q'$, note that due to our choice of $q'$ it holds that
    $$
    (1-p)(1-q') = (1-p')(1-q).
    $$
    Moreover, by our assumption $p' > p$ it also holds $1 -p > 1 - p'$ and thus,
    $$
        (1 - p)(1-q') < (1-p)(1-q) \text{ implying } 1-q' < 1 - q, 
    $$
    that is, $q < q'$.
    
\end{proof}
With this at hand, our coupling follows.
\CouplingSmallP*
\begin{proof}
    Consider $G' \sim RIG(n,d, p', q')$ where  $q' = 1 - \frac{(1-p')(1-q)}{1-p}$. Then, by our choice of $q'$ and $p'$, we have $q' > q$ and $\delta(d,p,q) = \delta(d,p',q') =: \delta$ using \Cref{lem:coupling}. Note that, intuitively, $G$ can be considered a sub-graph of $G'$, where all ground truth cliques that are present in $G'$ are also present in $G$, while edges that exist due to noise in $G'$ only have a certain probability to carry over to $G$ and a non-edge in $G'$ also implies a non-edge in $G$. To make this more formal, draw $G$ and $G'$ as follows:
    \begin{itemize}
        \setlength\itemsep{.0001em}
        \item \textbf{1. Labels:} The labels underlying $G, G'$ are identical. This is a valid coupling due to \Cref{lem:coupling} and yields that all ground truth $S_\ell$ in $G, G'$ are identical too.
        \item \textbf{2. Noise for $G$:} For each pair of vertices $u, v \in \binom{[n]}{2}$ that are not adjacent after step 1, draw a number $r(u,v)$ from $(0,1)$ uniformly at random. If $r(u,v) \leq q$ then add the edge $\{u,v\}$.
        \item \textbf{3. Noise for $G'$:} For every pair of vertices $u, v \in \binom{[n]}{2}$ that does not have an edge after step 2 add the edge $\{u,v\}$ if $r(u,v) \leq q'$.
    \end{itemize}
    Observe that the resulting graph after step 2 follows the same distribution as $G$ and the graph after step 3 that of $G'$. Moreover, $G$ and $G'$ are identical after step 2. Finally, after step 3, $G \subseteq G'$ since we only add further edges to $G'$. 
\end{proof}



    
  

\subsection{Simple algorithm for exact recovery in sparse $\RIGone$ for $p \le n^{-\varepsilon}$}\label{sec:recoverysparse}
We show that exact recovery in the sparse case ($p \le n^{-\varepsilon}$) is not a very challenging problem (even for $n^{\varepsilon} \ll k \ll \sqrt{n}$) and can be accomplished by a simple, polynomial-time combinatorial algorithm. This phenomenon is in line with existing work on other variants of planted clique: it was observed e.g. in \cite{Feige_Grinberg_2024} that this also applies to ordinary planted clique. This suggests that the dense case is actually the only challenging regime. However, we remark that due to the fact that our algorithms work in the semirandom model (i.e. under presence of a monotone adversary), they also achieve recovery in sparse graphs that are arbitrarily ``sandwiched'' between the noiseless case and the dense, noisy case, and not captured by the simple algorithm we present in the following. 

Our algorithm essentially choses a random $t$-tuple, computes a neighbourhood reduction and checks if the entire neighbourhood is a clique. Very similar ideas were presented a long time ago in \cite{Behrisch_Taraz_2006}, however, the analysis there only covers a much more restrictive range of parameters, mainly because the authors did not have a single label clique theorem at hand. For the sake of completeness, we decided to add this section here, showing that the simple greedy algorithm succeeds whenever $p \le n^{-\varepsilon}$ for arbitrarily small $\varepsilon > 0$

We focus on the following small-$p$-regime to ensure that $k$ is still of size polynomial in $n$ as captured by the following lemma.
\begin{lemma}\label{lem:small-cliques}
    Let $\alpha \in (0,2), \varepsilon' \in (0, 2\alpha)$ and $k = \delta n$. Let  $G \sim \RIGone$ with parameters $d = n^{\alpha}, p = n^{-\varepsilon'}$ and  $q \in [0,p)$. Then, for all $\epsilon > 0$ any $\ell \in [d]$ it holds $(1-\varepsilon)k<|S_{\ell}|< (1+\varepsilon)k$ w.h.p. and $k = n^{\Omega(1)}$.
\end{lemma}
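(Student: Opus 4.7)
The plan is straightforward: first pin down the magnitude of $\delta$ (and hence of $k = \delta n$) from the parameter assumptions, and then derive concentration of $|S_\ell|$ around its mean via a standard Bernstein bound followed by a union bound over the at most $d = n^\alpha$ labels.

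For the first step I would invoke \Cref{lem:sizeofdelta} to write
\[
\delta \;=\; (1 \pm O(1/d))\sqrt{\log\!\left(\tfrac{1-q}{1-p}\right)\tfrac{1}{d}}.
\]
Since $p = n^{-\varepsilon'} = o(1)$ and $q \in [0,p)$, a Taylor expansion of $\log((1-q)/(1-p)) = \log(1 + (p-q)/(1-q))$ yields $\log((1-q)/(1-p)) = \Theta(p-q) = \Theta(n^{-\varepsilon'})$ under the natural reading that $q$ is bounded away from $p$ (e.g.\ $p - q = \Theta(p)$, including the important special case $q = 0$). Plugging $d = n^\alpha$ into the formula above then gives $\delta = \Theta(n^{-(\varepsilon'+\alpha)/2})$ and hence $k = \delta n = \Theta(n^{1 - (\varepsilon'+\alpha)/2})$. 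The hypothesis on $\varepsilon'$ and $\alpha$ ensures that the exponent $1 - (\varepsilon'+\alpha)/2$ is strictly positive, so $k = n^{\Omega(1)}$ as claimed.

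For the second step I fix any label $\ell \in [d]$ and observe that $|S_\ell| = \sum_{v \in [n]} \mathbf{1}[\ell \in M_v]$ is a sum of $n$ independent $\Bernoulli(\delta)$ random variables with expectation exactly $\delta n = k$. Applying Bernstein's inequality (\Cref{lem:bernstein}) with $t = \varepsilon k$ yields
\[
\Pr{\bigl||S_\ell| - k\bigr| \ge \varepsilon k} \;\le\; 2\exp\!\left(-\frac{\varepsilon^2 k^2}{2\left(n\delta(1-\delta) + \tfrac{1}{3}\varepsilon k\right)}\right) \;\le\; 2\exp(-\Omega(\varepsilon^2 k)).
\]
Since $k = n^{\Omega(1)}$ this probability is $n^{-\omega(1)}$, and a union bound over the $d \le n^\alpha \le n$ labels gives that $(1-\varepsilon)k < |S_\ell| < (1+\varepsilon)k$ simultaneously for every $\ell \in [d]$ with probability $1 - o(1)$.

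There is essentially no hard step here: the statement is a direct consequence of \Cref{lem:sizeofdelta} and a single Bernstein application. The one point that requires mild care is verifying that $\log((1-q)/(1-p))$ does not collapse to a sub-polynomial quantity, which is why one needs $q$ to be meaningfully separated from $p$ (otherwise $\delta$ itself could be sub-polynomial and the claim $k = n^{\Omega(1)}$ would fail); under the intended interpretation of the parameter regime this is immediate.
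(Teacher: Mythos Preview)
Your approach is identical to the paper's: invoke \Cref{lem:sizeofdelta} to show $k = n^{\Omega(1)}$, then a Chernoff/Bernstein bound plus a union bound over the $d$ labels. Your computation $\delta = \Theta(n^{-(\varepsilon'+\alpha)/2})$ is actually more careful than the paper's (which simply writes $\delta \ge (1-o(1))/\sqrt{d}$, glossing over the $o(1)$ factor $\log((1-q)/(1-p)) = \Theta(p)$); note however that the stated hypothesis $\varepsilon' \in (0, 2\alpha)$ does not by itself force $1 - (\varepsilon'+\alpha)/2 > 0$ (e.g.\ $\alpha = 1.5$, $\varepsilon' = 2$) --- this is almost certainly a typo for $\varepsilon' \in (0, 2-\alpha)$, under which your argument is complete.
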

\begin{proof}
    By our assumption that $q < p$ it holds via \Cref{lem:sizeofdelta} that $\delta \geq \frac{(1-o(1))}{\sqrt{d}}$. Making use of this we obtain for a fixed label $\ell \in [d]$ $$\mathbb{E}[|S_{\ell}|] = k = \delta n = n^{\Omega(1)},$$
    
    since $\alpha \in (0,2)$ and $\varepsilon' \in (0, 2\alpha)$. Two Chernoff bounds yield both desired concentrations for $|S_\ell|$ with probability $1 - n^{\-\omega(1)}$ since $k$ is of polynomial size. A union bound over all $d$ labels finishes the proof. 
\end{proof}
The following lemma tells us that for each ground-truth  $S_\ell$, there exists a $t$-tuple $T \subset [n]$ such that the neighbourhood of $T$ is exactly one of our cliques. 
\begin{lemma}\label{lem:wittness}
    Let $\alpha \in (0,2), \varepsilon' \in (0, 2\alpha)$ and $t = \lceil \frac{1 + 2\alpha}{\varepsilon'}\rceil$. Let  $G \sim \RIGone$ with parameters $d = n^{\alpha}, p = n^{-\varepsilon'}$ and  $q \in [0,p)$. Then, for all $\ell \in [d]$ there exists a $T \subset [n]$ of size $t$ such that $N_G(T) = S_{\ell}$ w.h.p.
\end{lemma}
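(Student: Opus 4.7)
The natural candidate is a random $t$-subset $T$ drawn uniformly from the ground-truth clique $S_\ell$. Because $S_\ell$ is a clique in $G$ by construction, every vertex of $S_\ell \setminus T$ is automatically adjacent to all of $T$, so $S_\ell \setminus T \subseteq N_G(T)$ (which, modulo whether one interprets $N_G(T) = S_\ell$ as $N_G(T) \cup T = S_\ell$, gives the ``inner'' inclusion for free). The entire content of the lemma is therefore the ``outer'' inclusion: no vertex $v \in [n] \setminus S_\ell$ is simultaneously adjacent to all of $T$.

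The plan is to prove this by a first-moment calculation over the random choice of $T$, combined with the uniform degree bound from \Cref{lem:degreebound}. Condition on the high-probability event from \Cref{lem:degreebound} that for every $\ell \in [d]$ and every $v \in [n] \setminus S_\ell$, we have $|N_G(v) \cap S_\ell| \le (1+o(1)) p k$; also condition on the high-probability event from \Cref{lem:small-cliques} that $|S_\ell| = (1 \pm o(1))k$. Under this conditioning, for a fixed $v \notin S_\ell$, the probability (over $T$) that $v$ is adjacent to every vertex of $T$ is purely combinatorial:
\begin{align*}
\Pr_T\big[T \subseteq N_G(v)\big] \;=\; \frac{\binom{|N_G(v) \cap S_\ell|}{t}}{\binom{|S_\ell|}{t}} \;\le\; \left( \frac{(1+o(1))pk}{(1-o(1))k - t + 1} \right)^{\!t} \;\le\; (2p)^{t}.
\end{align*}

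Summing this bound over the at most $n$ choices of $v \in [n] \setminus S_\ell$ and plugging in $p = n^{-\varepsilon'}$ together with $t = \lceil (1 + 2\alpha)/\varepsilon' \rceil$ gives
\begin{align*}
\E_T\big[\,\big|\{v \notin S_\ell : v \in N_G(T)\}\big|\,\big] \;\le\; n \cdot (2p)^{t} \;\le\; 2^{t}\, n^{1 - \varepsilon' t} \;\le\; 2^{t}\, n^{-2\alpha} \;=\; o(1),
\end{align*}
since $t$ is a constant. By Markov's inequality, a $(1-o(1))$-fraction of all $t$-subsets of $S_\ell$ witness $N_G(T) \cap ([n] \setminus S_\ell) = \emptyset$, and in particular at least one such $T$ exists. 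Taking a final union bound over all $d = n^{\alpha}$ labels and intersecting with the two high-probability conditioning events from the previous paragraph yields the conclusion for all $\ell \in [d]$ simultaneously.

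The only step with any subtlety is the degree bound used in the first-moment calculation: the proof crucially relies on $|N_G(v)\cap S_\ell|$ being sharply concentrated around $pk$ \emph{uniformly} over all $(v,\ell)$, but this is exactly what \Cref{lem:degreebound} provides, so the argument goes through without any further obstacle.
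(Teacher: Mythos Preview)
Your route differs from the paper's. The paper fixes one specific $T\subseteq S_\ell$ (a cell of a partition of $S_\ell$ into $t$-sets), reveals all labels in $[n]\setminus S_\ell$ first, and then uses the \emph{remaining randomness of $G$}---the yet-unrevealed labels of the vertices in $T$---together with \Cref{lem:labelconcentration} and \Cref{lem:pconcentration} to obtain $\Pr{v\sim T}\le ((1+o(1))p)^t$ directly for each $v\notin S_\ell$; a union bound over $v$ and then over~$\ell$ finishes. You instead freeze $G$ by conditioning on the degree bound and the clique-size bound and run a purely combinatorial first-moment argument over a uniformly random $T\subseteq S_\ell$. (Incidentally, your final ``union bound over $d$ labels'' is superfluous: once the two conditioning events hold---and both are already stated uniformly in $\ell$---the existence of a good $T$ for each $\ell$ is a deterministic counting fact.)

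There is, however, a real issue with invoking \Cref{lem:degreebound} here. That lemma is proved under the paper's standing dense convention, and its Chernoff step needs $pk\to\infty$ to deliver the stated $(1+o(1))pk$ bound. In the sparse regime of this appendix one can have $pk=n^{-\Omega(1)}$ for admissible parameters (for instance $\alpha=1$, $\varepsilon'=0.8$, $q=0$ give $k\approx n^{0.1}$ and $pk\approx n^{-0.7}$), and then ``$|N_G(v)\cap S_\ell|\le (1+o(1))pk$'' is false for any $v$ with even a single neighbour in $S_\ell$, so the event you condition on has probability $o(1)$ rather than $1-o(1)$. The repair is easy but regime-dependent: when $pk\to 0$, a direct first-moment bound over $G$ shows that $|N_G(v)\cap S_\ell|<t$ uniformly over $v,\ell$ w.h.p., after which your probability $\Pr_T[T\subseteq N_G(v)]$ is identically zero. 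The paper's phased-revelation argument sidesteps this entirely by never passing through a uniform degree bound.
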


\begin{proof}
    Fix a label $\ell \in [d]$. First, reveal the randomness of label $\ell$. Consider the clique $S_\ell$ which by \Cref{lem:small-cliques} has polynomial size w.h.p. Partition $S_\ell$ into $m = \lfloor|S_\ell|/t\rfloor$ disjoint subsets $T_1,..,T_m$ each of size $t$. Next, reveal all the missing labels $[d]\setminus \ell$ for the set of vertices $[n]\setminus S_{\ell}$. Using \Cref{lem:labelconcentration}, it holds for all $v \in [n]\setminus S_{\ell}$ that the number of labels is bounded by $|M_v| \leq (1+o(1))(d\cdot \delta)$ w.h.p. This implies by \Cref{lem:pconcentration} that for any pair of vertices $u \in T, v \in [n]\setminus S_{\ell}$ that $\Pr{u \sim v} \leq (1+o(1))p$ over the randomness of revealing the rest of labels for $T$. Thus, the probability that a vertex $v \in [n]\setminus S_{\ell}$ has an edge to all vertices in $T$ is 
    $$
    \Pr{v \sim T} \leq ((1+o(1))p)^t = O(n^{-(1+2\alpha)}),
    $$
    using that $t = \lceil\frac{1 +2\alpha}{\varepsilon'}\rceil$ and $p \leq n^{-\epsilon}$. Using a union bound over all vertices $v \in [n]\setminus S_{\ell}$ yields that $N_G(T_i) = S_{\ell}$ with probability $1 - O(n^{-2\alpha})$. Another union bound over all $d = n^{\alpha}$ labels gives the desired result, that is, for any label $\ell \in [d]$ there exists w.h.p. a $t$-tuple $T$ of size $t$ such that $N_G(T) = S_{\ell}$.
\end{proof}

We now use these $t$-tuples as witnesses to find all ground truth cliques in polynomial time.
\begin{proposition}\label{lem:small-cliques-algorithm}
    Let $\alpha \in (0,2)$ and $\varepsilon' \in (0, 2\alpha)$. Let  $G \sim \RIGone$ with parameters $d = n^{\alpha}, p = n^{-\varepsilon'}$ and  $q \in [0,p)$. Then, there is a polynomial-time algorithm for exact recovery that succeeds w.h.p. over the draw of $\RIGone$.
\end{proposition}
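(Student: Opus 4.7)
The plan is to directly exploit \Cref{lem:wittness}, which guarantees that for every label $\ell \in [d]$ there exists a $t$-tuple $T \subseteq [n]$ (with $t = \lceil (1+2\alpha)/\varepsilon' \rceil$, a constant) such that $T \cup N_G(T) = S_\ell$ w.h.p. Since $t$ is a constant, enumerating over all $\binom{n}{t} = n^{O(1)}$ such tuples is feasible in polynomial time, and these tuples serve as ``witnesses'' for the ground-truth cliques.

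Concretely, the proposed algorithm works as follows. For every $T \in \binom{[n]}{t}$, form the candidate set $K_T := T \cup N_G(T)$ and test whether $G[K_T]$ induces a clique of size at least $(1-\varepsilon)k$ for a sufficiently small constant $\varepsilon > 0$. Collect all such candidates into a list $\mathcal{L}'$, and then prune $\mathcal{L}'$ to its inclusion-maximal elements, yielding the output $\mathcal{L}$. Computing each $N_G(T)$ takes $O(n \cdot t)$ time, verifying the clique/size condition takes $O(k^2)$ time, and the final deduplication is at most quadratic in $|\mathcal{L}'|$, so the overall running time is $n^{O(1)}$.

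Correctness reduces to two checks. \emph{Completeness}: by \Cref{lem:wittness} applied to each $\ell \in [d]$ (via a union bound), for every $\ell$ there is a tuple $T$ with $K_T = S_\ell$ w.h.p., and by \Cref{lem:small-cliques} every $S_\ell$ has size $(1 \pm o(1))k$, so each ground-truth clique survives the size threshold and enters $\mathcal{L}'$. \emph{Soundness}: we invoke the single label clique theorem (\Cref{thm:slct}). Since $p = n^{-\varepsilon'} = o(1)$ and $q < p$, we have $p - q = o(1)$, so case (ii) of \Cref{thm:slct} applies for our parameter regime (note $k = n^{\Omega(1)}$ by \Cref{lem:small-cliques}, and $\alpha \in (0, 2)$ is in the permitted range for $d$). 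Therefore every clique of size at least $(1-\varepsilon)k$ in $G$ is contained in some $S_{\ell'}$, and consequently every $K_T$ that passes the test is a subset of some ground-truth clique. After restricting to maximal sets, $\mathcal{L}$ consists of precisely $\{S_\ell\}_{\ell \in [d]}$, since each $S_\ell$ is a maximal clique in $G$ of size $(1\pm o(1))k$ and no two distinct $S_\ell, S_{\ell'}$ are comparable under inclusion w.h.p.

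The main (minor) obstacle is simply verifying that the hypotheses of \Cref{thm:slct}(ii) are in force: this is almost automatic here because $p = o(1)$ forces $p - q = o(1)$. Beyond that, the proof is essentially immediate given the witness lemma and the identifiability statement, and no new technical machinery is required.
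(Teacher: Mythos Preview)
Your proposal is correct and takes essentially the same approach as the paper: enumerate all $t$-tuples with $t=\lceil(1+2\alpha)/\varepsilon'\rceil$, test whether the common neighbourhood induces a clique of the right size, and appeal to the single-label clique identifiability result for soundness. The only cosmetic differences are that you invoke \Cref{thm:slct} directly (the paper cites \Cref{cor:many-labels}) and add an explicit maximality-pruning step, which in fact makes your soundness argument slightly cleaner.
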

\begin{proof}
   Let $t = \lceil \frac{1 + 2\alpha}{\varepsilon'}\rceil$, $S = \emptyset$ and for each $T \subset [n]$ of size $|T| = t$ consider the following procedure:
   
          If $0.99 \cdot k \leq |N_G(T)| \leq 1.01 \cdot k$ and $N_G(T)$ is a complete graph do $S := S \cup N_G(T)$.  

We claim that with high probability the set $S$ contains all ground truth cliques while not containing any other set and that the above procedure takes polynomial time.

  \textbf{Runtime} Since procedure runs through $\binom{n}{t} \leq n^t$ $t$-tuples and we have to check $O(k^2)$ edges for each iteration, the runtime is $O(n^t \cdot k^2)$. Since $t = O(1)$ this is polynomial in $n$.

  \textbf{Correctness} First, we show that every ground truth clique is contained in $S_{\ell}$. By our choice of $t$ and using \Cref{lem:wittness} there indeed exists a $T$ such that $N_G(T) = S_{\ell}$ w.h.p. Moreover, using \Cref{lem:small-cliques} the size of $S_{\ell}$ is  $0.99 \cdot k \leq |S_{\ell}| \leq 1.01 \cdot k$ w.h.p. and thus, every $S_{\ell}$ is included in $S$ by our procedure. All that is left to show is that w.h.p. there is no set of vertices contained in $S$ that is not a ground truth clique. Note that in this case for a non-empty set $U \subseteq [n]\setminus S_\ell$ it holds $N_G(T) = S_{\ell} \cup U$ such that $S_{\ell} \cup U$ is a complete graph. However, using \Cref{cor:many-labels} no such  clique exists w.h.p. 
\end{proof}

\end{document}